\newcommand{\FPT}{{\sf FPT}}
\newcommand{\XP}{{\sf XP}}
\let\openbox\relax
\newcommand*{\openbox}{\leavevmode
\hbox to.77778em{%
\hfil\vrule
\vbox to.675em{\hrule width.6em\vfil\hrule}%
\vrule\hfil}}
\newcommand{\problem}[3]{
        \begin{center}
                \begin{boxedminipage}{\textwidth}
                        \textsc{{#1}}\\[2pt]
                        \begin{tabular}{ r p{0.85\textwidth}}
                                \textit{Instance:} & {#2}\\
                                \textit{Question:} & {#3}
                        \end{tabular}
                \end{boxedminipage}
        \end{center}
}
\xpatchcmd\thmt@restatable{%
\csname #2\@xa\endcsname\ifx\@nx#1\@nx\else[{#1}]\fi
}{%
\ifthmt@thisistheone\csname #2\@xa\endcsname\ifx\@nx#1\@nx\else[{#1}]\fi
\else
\csname #2\@xa\endcsname[{Restated}]
\fi}{}{}
\newenvironment{claimproof}[1][\proofname]
{\proof[#1]}
{\endproof}
\newtheorem{theorem}{Theorem}[section]
\newtheorem{lemma}[theorem]{Lemma}
\newtheorem{corollary}[theorem]{Corollary}
\newtheorem{claim}[theorem]{Claim}
\newtheorem{proposition}[theorem]{Proposition}
\newtheorem{definition}[theorem]{Definition}
\newtheorem{remark}[theorem]{Remark}
\newcommand{\problemMWC}{\textsc{Multiway Cut}\xspace}
\newcommand{\problemPMWC}{\textsc{Planar Multiway Cut}\xspace}
\newcommand{\problemEMWC}{\textsc{Edge Multiway Cut}\xspace}
\newcommand{\problemNMWC}{\textsc{Node Multiway Cut}\xspace}
\newcommand{\ie}{i.e.\xspace}
\newcommand{\eg}{e.g.\xspace}
\newcommand{\etal}{\emph{et al.}\xspace}
\newcommand{\WLOG}{Without loss of generality, \xspace}
\newcommand{\nph}{\textsc{NP}-hard\xspace}
\newcommand{\ptas}{\textsc{PTAS}\xspace}
\newcommand{\defi}[1]{\emph{#1}\xspace}        
\newcommand{\OO}{{O}}
\newcommand{\Oh}[1]{${O}(#1)$}
\newcommand{\Ostar}[1]{${O}^*(#1)$}
\newcommand{\one}{\alpha\xspace}
\newcommand{\two}{\beta\xspace}
\newcommand{\three}{\gamma\xspace}
\newcommand{\mst}{minimum Steiner tree\xspace }
\newcommand{\msts}{minimum Steiner trees\xspace}
\newcommand{\terminal}[2]{^{#1} t_{#2}\xspace}
\newcommand{\augvertex}[2]{^{#1} t^+_{#2}\xspace}
\newcommand{\augset}[1]{T^+_{#1}\xspace}
\newcommand{\bd}[1]{\ensuremath{\partial #1}}
\newcommand{\nooseone}{\ensuremath{\vec{\alpha}}}
\newcommand{\noosetwo}{\ensuremath{\vec{\beta}}}
\newcommand{\noose}{\ensuremath{\vec{\gamma}}}
\newcommand{\enc}{\ensuremath{\mathbf{enc}}}
\newcommand{\exc}{\ensuremath{\mathbf{exc}}}
\newcommand{\midset}{\ensuremath{\mathbf{mid}}}
\def\colin{Colin de Verdi\`{e}re\xspace}
\DeclareMathDelimiter{(}{\mathopen} {operators}{"28}{largesymbols}{"00}
\DeclareMathDelimiter{)}{\mathclose}{operators}{"29}{largesymbols}{"01}
\begin{document}

\title{Planar Multiway Cut with Terminals on Few Faces}

\author{Sukanya Pandey}
\email{pandey@algo.rwth-aachen.de}
\orcid{0000-0001-5728-1120}
\affiliation{%
  \institution{Dept.\ Computer Science, RWTH Aachen}
  \city{Aachen}
  \country{Germany}
}

\author{Erik Jan van Leeuwen}
\email{e.j.vanleeuwen@uu.nl}
\orcid{0000-0001-5240-7257}
\affiliation{%
  \institution{Dept.\ Information and Computing Sciences, Utrecht University}
  \city{Utrecht}
  \country{The Netherlands}
}

\newcommand{\rev}[1]{{#1}}

\begin{abstract}
    We consider the {\problemEMWC} problem on planar graphs. It is known that this can be solved in $n^{\OO(\sqrt{t})}$ time [Klein, Marx, ICALP 2012] and not in $n^{\OO(\sqrt{t})}$ time under the Exponential Time Hypothesis [Marx, ICALP 2012], where $t$ is the number of terminals. \rev{A stronger parameter} is the number $k$ of faces of the planar graph that jointly cover all terminals. For the related {\sc Steiner Tree} problem, an $n^{\OO(\sqrt{k})}$ time algorithm was recently shown [Kisfaludi-Bak \etal, SODA 2019]. By a completely different approach, we prove in this paper that {\problemEMWC} can be solved in $n^{\OO(\sqrt{k})}$ time as well.
    Our approach employs several major concepts on planar graphs, including homotopy and sphere-cut decomposition. We also mix a global treewidth dynamic program with a Dreyfus-Wagner style dynamic program to locally deal with large numbers of terminals.   
\end{abstract}

\begin{CCSXML}
    <ccs2012>
       <concept>
           <concept_id>10003752.10003809.10010052</concept_id>
           <concept_desc>Theory of computation~Parameterized complexity and exact algorithms</concept_desc>
           <concept_significance>500</concept_significance>
           </concept>
     </ccs2012>
\end{CCSXML}
    
\ccsdesc[500]{Theory of computation~Parameterized complexity and exact algorithms}

\keywords{Multiway Cut, Parameterized Algorithm, Planar Graphs}

\maketitle

\section{Introduction}
A graph with weighted edges, and a subset of its vertices called \emph{terminals}, is given. How would you pairwise disconnect the terminals by removing a subset of minimum possible weight of the edges of the graph? Widely known as the {\sc (Edge) Multiway Cut} problem, this question is a natural generalization of the popular minimum $(s,t)$-cut problem. A variant of the problem was first introduced by T.C.\ Hu in 1969 \cite{hu1969}. Formally, the {\sc Edge Multiway Cut} problem is follows:
\smallskip

\problem{\problemEMWC}{A graph $G$, a terminal set $T \subseteq V(G)$, and a weight function $\omega: E(G) \rightarrow \mathbb{Q}^{+}$.}{What is the minimum possible weight of the cut that pairwise separates the vertices in $T$?}

The study of the complexity of \problemEMWC was pioneered by Dahlhaus \etal in their seminal paper \cite{DJPSY94}. The authors showed that \problemEMWC is {\nph} on arbitrary graphs for any fixed $t \geq 3$, where $t$ is the number of terminals. They also showed that the problem is \rev{APX-hard} for all fixed $t \geq 3$ even when all weights are unit, while giving an \Oh{tnm \log(n^2/m)}-time approximation algorithm, which given an arbitrary graph and an arbitrary $t$ finds a solution within a $2-2/t$ ratio of the optimum. Following these hardness results, began a twofold quest: to find approximation algorithms with a better approximation ratio~ \cite{CKR00, Karger_approx} and to find exact algorithms that were more efficient than any brute-force approach \cite{Xiao, CCF14, CLL09, GUILLEMOT_NMWC, CPPW13}. 
Starting with the seminal work of Marx~\cite{tcs/Marx06}, significant research effort was also put in understanding the parameterized complexity of {\problemEMWC} for other parameters, such as the size of the solution~\cite{Xiao, CCF14, CLL09, GUILLEMOT_NMWC, CPPW13}.

\paragraph{Restriction to planar graphs.} Given the hardness of the problem on arbitrary graphs, even for small constant values of $t$, it was but natural to look for specific graph classes where the problem might be more tractable with respect to the number of terminals $t$. In fact, Dahlhaus \etal had considered the restriction to planar graphs, which we call {\problemPMWC}. When \rev{$t$ is part of the input}, the problem still is {\nph}, even if all the edges of the planar graph are of unit weight. However, they showed that \problemPMWC could be solved in polynomial time for any fixed $t$. For $t=3$, their algorithm runs in time \Oh{n^3 \log n} and for $t \geq 4$ in time \Oh{(4t)^{t} \cdot n^{2t-1} \log n}. In the parameterized complexity parlance, their result implied that \problemPMWC belongs to the class \textsc{XP}. The obvious next step was to find out if it was also \textsc{FPT}. In 2012, however, Marx~\cite{Ma12} showed that assuming ETH holds, the problem does not admit any algorithm running in time $f(t) \cdot n^{\OO(\sqrt{t})}$ and showed that it is W[1]-hard.

In a companion paper, Klein and Marx \cite{KM12} gave a subexponential algorithm for \problemPMWC with a run-time of $2^{\OO(t)} \cdot n^{\OO(\sqrt{t})}$. Later, \colin~\cite{CDV} showed that this result extends to surfaces of any fixed genus. He showed this even holds for the more general {\sc Edge Multicut} problem, where given a set of terminal pairs, we ask for the smallest set of edges, which when removed, disconnects all the given terminal pairs. This yielded an algorithm which solves the problem in time $(g+t)^{\OO(g+t)} n^{\OO(\sqrt{g^2+gt+t})}$, where $g$ is the genus of the surface. This bound is almost tight assuming ETH holds, as was recently shown by Cohen-Added~\etal\cite{Cohen-AddadVMM21}.

Recently, Bandhopadhyay \etal~\cite{BLLSX22} showed that for both \problemEMWC and \problemNMWC \rev{(in the restricted variant, where terminals may not be deleted)} the square-root phenomenon extends to the class of $H$-minor-free graphs. In particular, on this class, they show that \problemEMWC can be solved in time $2^{\OO(\sqrt{s}\log^4 s)}n^{\OO(1)}$ and \textsc{Restricted}~\problemNMWC in time $2^{\OO(t\sqrt{s}\log s)}n^{\OO(1)}$, \rev{where $s$ is the size (weight) of the multiway cut.}

\paragraph{Parameterization by terminal face cover number.}
Due to the intractability of \problemPMWC~when the number of terminals is part of the input, it is worthwhile to look for other parameters that might generalize $t$. In particular, we could look at imposing restrictions on the location of terminals in the input plane-embedded graph. An extensively explored such restriction is when all the terminals are present on a small number of faces of the planar graph. This restriction on the input graph was studied by Ford and Fulkerson in their classic paper~\cite{FF56}. The minimum number of faces of the input planar graph that cover all the terminals was termed the \emph{terminal face cover number} \rev{$\three(G,T)$} by Krauthgamer \etal~\cite{Krauthgamer1}. The terminal face cover number is of broad interest as a parameter and has been studied with respect to several cut and flow problems~\cite{Krauthgamer1, krauthgamer2020refined, Chen-Wu, Multi-commodity}, shortest path problems~\cite{Frederickson91, ChenXu_shortestpaths}, finding non-crossing walks~\cite{EricksonN11}, and the minimum Steiner tree problem~\cite{KNL20}. In particular, {\sc Planar Steiner Tree} has an algorithm running in time $2^{\OO(\three(G,T) \log \three(G,T))} n^{\OO(\sqrt{\three(G,T)})}$.

The case when $\three(G,T) = 1$ is known as an Okamura-Seymour graph. It was shown by Chen and Wu~\cite{Chen-Wu} that when $\three(G,T) = 1$ and $G$ is biconnected, the minimum multiway cut in $G$ forms a minimum Steiner tree in its \rev{augmented} planar dual\footnote{An augmented planar dual differs from the standard dual graph in that the outer face is represented by several vertices, one per interval of the boundary vertices between two consecutive terminals (see also Section~\ref{sec:connected-duals}.)}. Consequently, one can find a minimum Steiner tree in the dual graph using the algorithm by Erickson~\etal\cite{Erickson} (see also Bern~\cite{Bern}) for the case when all the terminals lie on the outer face boundary of the graph.  They also gave an approximation algorithm for the case when $\three(G,T)>1$, which runs in time $\max\{\OO(n^2 \log n \log \three(G,T)),\OO({\three(G,T)}^2 n^{1.5} {\log}^2n + tn)\}$ and finds a solution within a ratio of $2-2/t$ of the optimum multiway cut. However, for the case when $\three(G,T) > 1$, no \rev{polynomial-time exact} algorithm was known.

\subsection{Our contribution}
In this paper, we resolve the complexity of \problemPMWC parameterized by the terminal face cover number, hereafter referred to as $k$. Given an edge-weighted planar graph $G=(V, E)$ with a fixed embedding, a set of terminals $T \subseteq V$, and a collection of faces $\mathcal{F}$ that cover all the terminals in $T$, the goal is to find a minimum weight cut of $G$ that pairwise separates the terminals in $T$ from each other. \rev{We may assume that such a set $\mathcal{F}$ of size $k$ is given as part of the input; if not, it can be computed in time $2^{\OO(k)} n^{\OO(1)}$ by the algorithm of Bienstock and Monma~\cite{BienstockM88} (this will not affect the final running time of our algorithm).} Our main contribution is:

\begin{theorem}[restate=thmmain] \label{thm:main}
\problemPMWC can be solved in time $2^{\OO(k^2 \log k)} n^{\OO(\sqrt{k})}$, where $k$ is the terminal face cover number of the instance. 
\end{theorem}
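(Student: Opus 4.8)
The plan is to combine a homotopy-based structural analysis of the cut with a treewidth dynamic program, using the $k$ terminal faces to bound the combinatorial complexity of the "shape" of an optimal solution. First I would fix the embedding and, for each of the $k$ faces $f_1,\dots,f_k$ carrying terminals, select a reference vertex $r_i$ on $f_i$. An optimal solution $C$, viewed in the plane, separates the terminals into their singleton classes; the connected pieces of $G - C$ together with the faces give a planar structure whose "interface" with the $k$ special faces can be captured by a bounded-size topological object. Concretely, I would cut the sphere along short curves (through the faces) connecting the $r_i$'s into a tree, turning the instance into one where all terminals effectively lie on a single face at the cost of $k$ "portals", and then argue that the interaction of $C$ with these portals is described by a noose/homotopy type drawn from a set of size $2^{\OO(k^2\log k)}$ — one gets $\OO(k^2)$ curve-segments each with $\OO(\log k)$ bits of homotopy/routing information.

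The second ingredient is the decomposition. Having guessed the homotopy type of the boundary behavior of $C$ (the $2^{\OO(k^2\log k)}$ factor), I would take a sphere-cut decomposition of (a suitably modified, bounded-degree version of) $G$; since after the cutting operation the terminals sit on few faces, the relevant separators can be taken to be nooses of length $\OO(\sqrt{k})$, giving branchwidth $\OO(\sqrt{k})$ for the part of the problem that must be handled "globally". This is where the $n^{\OO(\sqrt{k})}$ comes from: the DP table over a noose of length $\ell$ has size $n^{\OO(\ell)}$ because we must record, for each vertex on the noose, to which side / which terminal class it is assigned, but the planar separator theorem for graphs with terminals on $k$ faces (as used for {\sc Planar Steiner Tree} in~\cite{KNL20}) caps $\ell$ at $\OO(\sqrt{k})$.

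The subtle point — and the main obstacle — is that along a single bag of the decomposition there may be unboundedly many terminals, so a naive treewidth DP would need to track a partition of a large terminal set, which is not $n^{\OO(\sqrt k)}$. Here I would interleave a Dreyfus–Wagner / Erickson-style dynamic program locally: within a piece whose boundary noose has been fixed, the minimum cut behaves like a minimum Steiner forest in the (augmented) dual restricted to that piece, so the many terminals on one face can be processed by the polynomial-time Steiner-tree-on-one-face subroutine rather than by enumeration. The verification obligations are then: (i) that an optimal cut is always homotopic to one of the enumerated types (an exchange/uncrossing argument on the curves realizing $C$ in the dual); (ii) that the global branchwidth after the modification is $\OO(\sqrt{k})$ (plug the $k$-face separator theorem into a sphere-cut decomposition, as in~\cite{KNL20}); and (iii) that the global treewidth DP and the local Dreyfus–Wagner DP compose correctly across a noose — i.e.\ the "boundary profile" passed between them (which side of the cut, which terminal component, plus $\OO(k)$ homotopy bits) is consistent and of size $n^{\OO(\sqrt k)} \cdot 2^{\OO(k^2\log k)}$. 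Multiplying the enumeration factor, the DP table size, and the $n^{\OO(1)}$ cost per table entry yields the claimed $2^{\OO(k^2\log k)} n^{\OO(\sqrt k)}$ bound.
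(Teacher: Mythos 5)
Your proposal correctly identifies the three main ingredients the paper uses---bounding the homotopy type of an optimum with a cut graph through the $k$ terminal faces, a sphere-cut-decomposition dynamic program for the global structure, and a Dreyfus--Wagner/Erickson-style routine to handle the many terminals on a single face locally. However, verification obligation (ii), which is where the $n^{\OO(\sqrt{k})}$ factor must come from, has a genuine gap. You propose to take a sphere-cut decomposition of ``(a suitably modified, bounded-degree version of) $G$'' and claim nooses of length $\OO(\sqrt{k})$ because ``after the cutting operation the terminals sit on few faces.'' This does not hold: the branchwidth of $G$ (or of any bounded-degree modification of it) is not controlled by $k$ at all---it can be $\Theta(\sqrt{n})$ even when $k=1$, and slicing the sphere along curves through the terminal faces does not reduce it. The Kisfaludi-Bak et al.\ result you invoke bounds the treewidth of the union of an \emph{optimal Steiner tree} with the terminal faces, i.e.\ a subgraph built around a hypothetical optimum, not the treewidth of the input graph.

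The $\OO(\sqrt{k})$ in the paper comes from a different object. One passes to the augmented planar dual $G^+$ of an optimum cut $C$, strips pendant pieces to get the \emph{skeleton} (the union of the $k$ face boundaries of $C^+$), and suppresses degree-$2$ vertices to get the \emph{shrunken skeleton}. By Euler's formula this is an abstract connected planar multigraph with $k$ faces and $\OO(k)$ vertices and edges, and it is \emph{this} $\OO(k)$-vertex graph---enumerated by brute force as part of the topology---that is fed into Theorem~\ref{thm:sc} to get nooses of length $\OO(\sqrt{k})$. The DP then records, per abstract skeleton vertex on a noose, which vertex of $G^+$ realizes it, plus $\OO(1)$ interval and nerve descriptors per crossed face; that is what makes the per-noose state $n^{\OO(\sqrt{k})}$, rather than your ``side/class per vertex of $G$'' encoding, which never gets a length-$\OO(\sqrt{k})$ noose to attach to. Without the guess-the-shrunken-skeleton step there is no small object to run the sphere-cut DP over, and the argument does not close. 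Two further issues your route would also hit: the shrunken skeleton can have bridges, which a sphere-cut decomposition cannot accommodate directly (the paper introduces an embedding-aware bridge-block tree and an outer DP layer for exactly this), and the dual of the optimum may be disconnected, requiring a separate $2^{\OO(k)}$-branching preprocessing step to reduce to the connected case.
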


Since $k \leq t$, an \rev{almost matching} lower bound immediately follows from Marx's result~\cite{Ma12}.

\begin{theorem}[implied by Marx~\cite{Ma12}]
\rev{Unless ETH fails, there can be no algorithm that solves \problemPMWC and runs in time $n^{\OO(\sqrt{k})}$, where $k$ is the terminal face cover number of the instance.} 
\end{theorem}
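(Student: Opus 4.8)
The plan is to derive this immediately from Marx's lower bound for \problemPMWC parameterized by the number of terminals~\cite{Ma12}, together with the trivial bound $k \le t$. Recall that Marx showed that, unless ETH fails, \problemPMWC admits no algorithm running in time $f(t)\cdot n^{o(\sqrt{t})}$ for any computable $f$, where $t=|T|$; in particular it admits no algorithm running in time $n^{o(\sqrt{t})}$.

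First I would observe that the terminal face cover number of any instance satisfies $k \le t$: for each terminal $u \in T$ pick an arbitrary face of the given embedding incident to $u$, obtaining a collection of at most $t$ faces that covers $T$. Next, suppose for contradiction that some algorithm $A$ solves \problemPMWC in time $n^{o(\sqrt{k})}$; writing the running time as $n^{g(k)}$ with $g(k)=o(\sqrt{k})$ and assuming (without loss of generality) that $g$ is non-decreasing, then on an instance with $t$ terminals and terminal face cover number $k \le t$ we obtain a running time of at most $n^{g(t)} = n^{o(\sqrt{t})}$. This contradicts Marx's lower bound, so no such algorithm $A$ can exist unless ETH fails.

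Since the argument is just a substitution of one parameter for a smaller one, there is essentially no technical obstacle; the only thing to check is that Marx's hard instances are planar graphs (hence come with an embedding, so that the terminal face cover number is well defined for them), and for those instances the crude inequality $k \le t$ already suffices, regardless of whether $k = \Theta(t)$ holds on them. Note finally that this lower bound does not quite match the $2^{\OO(k^2\log k)}\,n^{\OO(\sqrt{k})}$ upper bound of Theorem~\ref{thm:main}, there being a gap in the function of $k$ multiplying the $n^{\OO(\sqrt{k})}$ factor, which is why we describe the bound only as ``almost matching''.
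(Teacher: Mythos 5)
Your argument is exactly what the paper intends: the lower bound is stated as ``implied by Marx~\cite{Ma12},'' and the paper's only justification is the sentence preceding the theorem, ``Since $k \leq t$, an almost matching lower bound immediately follows from Marx's result.'' Your write-up fills in the same substitution (including the harmless technicality of taking $g$ non-decreasing), so it is correct and matches the paper's approach.
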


We note that while the running time of our algorithm is reminiscent of the algorithm by Kisfaludi-Bak \etal~\cite{KNL20} for {\sc Planar Steiner Tree} parameterized by the terminal face cover number, our algorithm is and needs to be substantially more involved. \rev{The intuition of their algorithm is to show that the union of a minimum Steiner tree for a set of terminals that can be covered by $k$ faces and the graph induced on the $k$ terminal faces, has bounded treewidth.} Then they can `trace' this tree by a recursive algorithm that guesses the vertices of the solution in a separator implied by the tree decomposition. While one can show that the dual of a minimum multiway cut is a subgraph of the dual of bounded treewidth, tracing a solution through such a tree decomposition is not straightforward, and we need significant assistance from tools from topology, particularly homotopy, which were not needed for {\sc Planar Steiner Tree}.

Intuitively, we describe the high-level topology that must be respected by the dual of some optimum solution. Part of this topology is a planar graph, of which we can find a sphere-cut decomposition of width $O(\sqrt{k})$. We then apply a dynamic programming routine on this decomposition to find the optimum solution. While this dynamic programming routine is sufficient to find the high-level structure, the number of terminals is too large for this to effectively deal with the local structure of the solution. We then merge the popular algorithm by Dreyfus-Wagner \cite{Dreyfus-Wagner} for finding \msts, which runs in polynomial time in cases relevant for us~\cite{Erickson,Bern}, with the algorithm of Frank-Schrijver \cite{Homotopy} to find shortest paths homotopic to a given prescription. We argue that this prescription can be efficiently guessed. This leads to a neat dynamic program used to find the local structure, and then finally, an optimum solution.

In this sense, our algorithm has more in common with the approaches of Klein and Marx~\cite{KM12} and \colin~\cite{CDV}, which extensively rely on topological arguments, and particularly homotopy. However, significant effort is needed to generalize from the parameter number of terminals employed in those works to the parameter number of faces covering terminals. This not only affects the structural results that employ homotopy, but also makes the final algorithms more involved.

We provide a more detailed overview of our algorithm in Section~\ref{sec:overview}. The full paper then follows.

\subsection{Related Work}
We discuss related work in some more detail, particularly on the parameterized complexity of \problemMWC. Until 2006, not much was known with regard to the parameterized complexity of \problemMWC. Then, \rev{Marx's highly influential result~\cite{tcs/Marx06} showed that} \problemNMWC, where one must remove a subset of vertices instead of edges of the input graph, is \textsc{FPT} parameterized by the size of the solution (hereby denoted by $s$). His algorithm \rev{runs} in time \Ostar{4^{s^3}}, \rev{where \Ostar{\cdot} absorbs any polynomial factors}. Since then, the \rev{running time} has been considerably improved~\cite{Xiao, CCF14, CLL09, GUILLEMOT_NMWC, CPPW13}, with the current best being \Ostar{1.84^{s}} for \problemEMWC~\cite{CCF14}, and \Ostar{2^{s}} for \problemNMWC~\cite{CPPW13}. It is a notoriously hard problem to determine if \problemMWC has a polynomial kernel. The best known kernel on general graphs has size $2^{\OO(s)}$, which follows from the \FPT~algorithm of Marx~\cite{tcs/Marx06}. Kratsch and Wahlst\"{o}m~\cite{KW12} presented a kernel that has $O(s^{t+1})$ vertices for \problemNMWC. Their result implies a polynomial kernel also for \problemEMWC. However, when $|T|$ is unbounded, the question whether a polynomial kernel exists for \problemMWC is far from resolved. Recently, Wahlst\"{o}m~\cite{Wa22} made progress towards answering this question by demonstrating a kernel of size \rev{$s^{\OO(\log^{3}s)}$}.  
On directed graphs, Chitnis \etal~\cite{CHM13} showed that \problemNMWC~and \problemEMWC~are equivalent and admit an \textsc{FPT} algorithm running in time \Ostar{2^{2^{\OO(s)}}}. 

For planar graphs, we also mention some other exact algorithms. In particular, Hartvigsen~\cite{Hartvigsen98} gave an $O(t4^tn^{2t-4} \log n)$ time algorithm, which improved on the original exact algorithm by Dahlhaus~\etal\cite{DJPSY94}. The simple algorithm by Yeh~\cite{Yeh01}, unfortunately, seems incorrect~\cite{CheungH10}. The mentioned algorithm of Klein and Marx~\cite{KM12} is faster than all of them. \rev{Bentz} also considered the generalization to {\sc Edge Multicut}, first with terminals only on the outer face~\cite{Bentz19} and for the general case~\cite{Bentz12}. Unfortunately, the latter general result seems to have several flaws~\cite{CDV}. The algorithm by \colin\cite{CDV} for this problem that we already mentioned is also faster and more general. Finally, for planar graphs and parameter $s$, Pilipczuk \etal\cite{pili} showed that even a polynomial kernel in $s$ exists for \problemEMWC, leading to an algorithm with running time \Ostar{2^{\OO(\sqrt{s}\log s)}}. This kernel was recently extended to \problemNMWC, \rev{again parameterized by solution size}, by Jansen \etal~\cite{EJ}.

There have been surprisingly few studies investigating the complexity of \problemMWC~on other hereditary \rev{graph classes}. Until recently, even the classical complexity of the problem was unknown on well-known hereditary \rev{graph} classes like chordal or interval graphs. Misra \etal~\cite{MPRSS20} showed that on chordal graphs, \problemNMWC admits a polynomial kernel parameterized by the solution size. Later, Bergougnoux \etal~\cite{BergougnouxPT22} showed that \problemNMWC is \FPT~parameterized by the rankwidth, and \XP~parameterized by the mimwidth of the input graph. Their result implies a polynomial time algorithm for the problem on interval graphs, permutation graphs, bi-interval graphs, circular arc and circular permutation graphs, convex graphs, and $k$-polygon and dilworth-$k$ graphs for fixed $k$. Recently, Galby \etal~\cite{GalbyMSST22} improved both the aforementioned results on chordal graphs by showing that \problemNMWC is polynomial-time solvable \rev{on this class}.

We also discuss further works on approximation algorithms. We already mentioned several constant-factor approximations for general graphs~\cite{BHKM11, CKR00, Karger_approx} and planar graphs~\cite{Chen-Wu}. Bateni~\etal\cite{BHKM11} presented a \ptas for \problemPMWC which combined the technique of brick-decomposition from \cite{brick-decomposition}, the clustering technique from \cite{BHM11}, and a technique to find short cycles enclosing prescribed amounts of weight from \cite{shortcycle}. The more general \textsc{Edge Multicut} problem is known to be APX-hard, even in trees. 

However, Cohen-Addad~\etal\cite{Multicut-approx} recently gave a $(1+\varepsilon)$-approximation scheme for \textsc{Edge Multicut} on graphs embedded on surfaces, with a fixed number of terminals. Their approximation scheme runs in time $(g+t)^{\OO((g+t)^3)}\cdot (1/\varepsilon)^{\OO(g+t)}\cdot n \log n$, \rev{where $g$ is the genus of the surface}. It also contains uses of the Dreyfus-Wagner and Frank-Schrijver algorithm in a combination not dissimilar from our own, but in different circumstances.

Finally, our work leans on a proper specification of the homotopy of the dual of the solution. This approach was first applied explicitly to {\sc Edge Multicut} (and to \problemEMWC) by \colin\cite{CDV} and is also evident in the work of Klein and Marx~\cite{KM12}. The understanding of homotopy and its uses in cut and flow problems on graphs on surfaces was built through a sequence of works, see \eg\cite{Chambers,ChambersEN09,EricksonFN12,EricksonN11a,EricksonN11}.

\section{Overview} \label{sec:overview}
At a high level, our approach is reminiscent of the one by Klein and Marx \cite{KM12} as well as \colin \cite{CDV} used for the weaker parameter $t$, the number of terminals. Therefore, we start by giving a short overview of their work, before discussing our own algorithm for parameter $k$.

\subsection{Previous Approaches}
The starting observation is that cuts in planar graphs correspond to cycles in the dual of the graph~\cite{Reif83}. Hence, it makes sense that almost the entire algorithms and structural descriptions of Klein and Marx and \colin, as well as ours, work with the planar dual. If the number $t$ of terminals is bounded, this quickly leads to the intuition that the planar dual of a multiway cut should be a planar graph with $t$ faces, one for each terminal~\cite{DJPSY94}. By dissolving vertices of degree~$2$ of this planar graph and applying Euler's formula, one obtains a planar graph $S$ with $O(t)$ faces, vertices, and edges. One can then guess what $S$ looks like by exhaustive enumeration and guess the vertices of the dual corresponding to the vertices of $S$. Then it remains to expand the edges of the graph back into shortest paths.

However, these paths are not just any shortest paths. Instead, they must contort themselves between the terminals in a particular way, such that they perform their job in separating the terminals. This is where topological arguments come in. The layout of these paths is described using the sequence of crossings with a Steiner tree on the terminals. The main thrust of the work of Klein and Marx and \colin~is to argue that in some optimal solution these crossing sequences are short, in the sense that their length depends only on $t$. One can then guess the crossing sequences of each of the paths in the optimum and find a shortest path following a particular crossing sequence in polynomial time using Frank and Schrijver's algorithm~\cite[Section 5]{Homotopy}.

The above leads to an $f(t) \cdot n^{\OO(t)}$ time algorithm. To improve to an $f(t) n^{\OO(\sqrt{t})}$ time algorithm, it suffices to observe that since $S$ is planar, it has treewidth $O(\sqrt{t})$, as follows from the planar separator theorem~\cite{LiptonTarjan}. One can then replace the guessing of the vertices of the dual corresponding to vertices of $S$ by a dynamic program on the tree decomposition.

\subsection{\texorpdfstring{Warm-up: An $n^{\OO(k)}$-time Algorithm}{Warm-up: An n\^{O(k)}-time Algorithm}}
We design our approach along the same lines, in that we show that the topology of the dual graph of the minimum multiway cut is constrained and then enumerate all the possible topologies. For a certain topology, we find the shortest solution respecting the topology. However, since we must deal with a huge number of terminals, possibly \Oh{n} many of them, this is not straightforward. Indeed, we need stronger structural properties of the solution to limit the number of diverse topologies.

\paragraph{Structure: Global and Local.} 
From a global perspective, however, the structure of an optimal solution looks very similar. If we think of a terminal face as a single terminal, then as in the previous works, we see that (some part of) the dual of the solution must separate the different terminal faces. We call this part the \emph{skeleton} of the dual of the solution\footnote{Our notion of skeleton should not be confused with the notion of skeleton as defined by Cohen-Addad~\etal\cite{Multicut-approx}. In particular, our notion of a (shrunken) skeleton is very reminiscent of the multicut dual of \colin\cite{CDV}, provides a high-level view of the solution, and has no weight restrictions. The skeleton of~\cite{Multicut-approx} instead can be seen as `orthogonal' to the solution; by controlling its weight, the skeleton can be portalized (as in Arora~\cite{Arora98}) and the solution can be approximated by combining Steiner trees of a particular homotopy between portals. This is different from our use and definition of a skeleton.}. The paths between its branching points are called \emph{bones}. \rev{See Figure~\ref{fig:broken bones}}. By dissolving dual vertices of degree~$2$ in the skeleton, we obtain the \emph{shrunken skeleton}; its edges are called \emph{shrunken bones}. The shrunken skeleton has $k$ faces and $O(k)$ vertices and edges by Euler's formula. It follows that the shrunken skeleton of the optimum can be guessed by exhaustive enumeration and has treewidth $O(\sqrt{k})$, a great starting point. This is discussed in more detail in Section~\ref{sec:Skeleton and Nerves}.

\begin{figure}[t]
    \centering
    \includegraphics[width=0.4\textwidth]{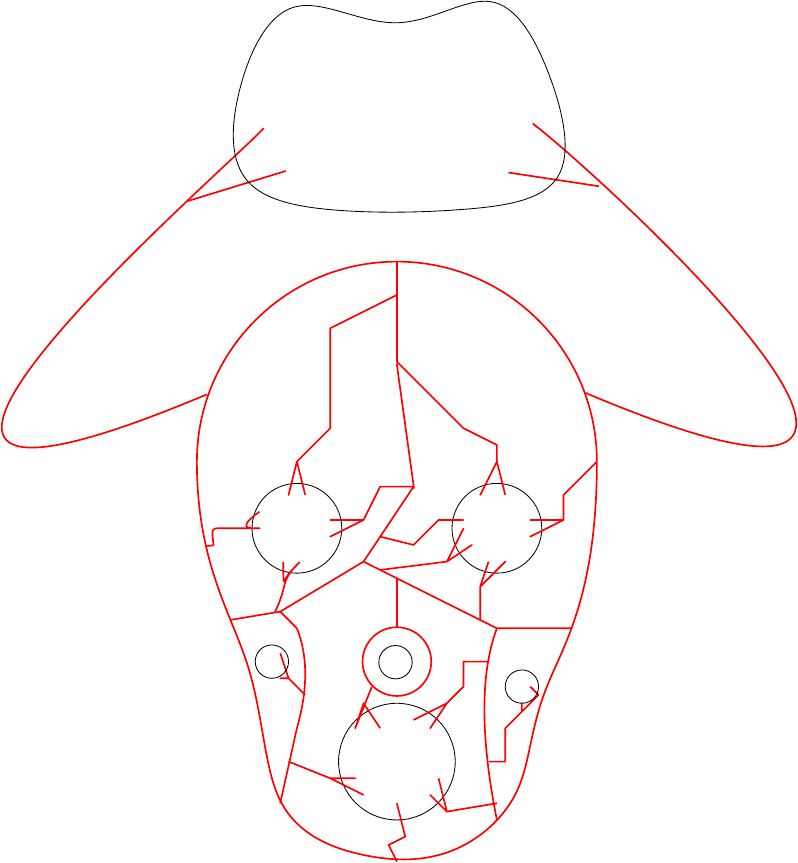}
    \caption{This figure illustrates the bones and nerves of the skeleton, all drawn in red. The terminal faces are drawn as closed black curves.}\label{fig:broken bones}
\end{figure}

Now consider the local parts of the solution, namely the part of the dual of the solution inside each of the $k$ faces of the skeleton. Chen and Wu~\cite{Chen-Wu} proved that when there is a single terminal face that is a simple cycle, then the dual of a minimum multiway cut is a minimum Steiner tree in the dual graph. To be more precise, this holds in the augmented dual graph. This graph splits the dual vertex $s$ of the face corresponding to the simple cycle into $r$ new vertices \rev{in the dual}, where $r$ is the number of terminals of the face. \rev{For each maximal subpath of the cycle that contains no terminals as inner vertices, one of the new vertices is made incident to all dual edges corresponding to the edges of this subpath.} Then, a minimum multiway cut is a minimum Steiner tree in the augmented dual with the new vertices as terminals. We generalize this argument to prove that inside each face of the skeleton, the solution is a forest of minimum Steiner trees on the augmented terminals. We call these trees \emph{nerves}. All nerves attach to the boundary of the face of the skeleton. Crucially, the augmented terminals belonging to each of these nerves form an \emph{interval} of the set of augmented terminals. Then, by applying the Dreyfus-Wagner algorithm~\cite{Dreyfus-Wagner}, any nerve can be found in polynomial time~\cite{Erickson,Bern}, if we know the interval \rev{and the attachment point on the boundary of the face}. See Section~\ref{sec:Skeleton and Nerves}.

\paragraph{Algorithm: Global and Local.}
As a warm-up, we now discuss how to find an algorithm with running time $f(k) n^{\OO(k)}$. First, guess the shrunken skeleton of the optimum solution by exhaustive enumeration. Then, for each shrunken bone of this shrunken skeleton, we note that it needs to expand to separate two terminal faces, say $F_\one, F_\two \in \mathcal{F}$, and some terminals on each of them. For both the terminal faces, we guess the intervals of augmented terminals $I_\one$ and $I_\two$ covered by nerves that attach to the corresponding bone. Since intervals are between two augmented terminals, there are $n^{\OO(k)}$ intervals, and we can guess the optimal intervals by exhaustive enumeration in the same time.

\rev{A major complication now is that we may need to remember a large number of crossing sequences for the subpaths between the numerous nerves.} While each of those paths individually again has a short crossing sequence, as can be argued by adapting the approach of Klein and Marx~\cite{KM12} and \colin~\cite{CDV}, there are too many of these paths to efficiently guess all these crossing sequences. Instead, we argue that we can group these crossing sequences while keeping them small, and that we do not need to guess the crossing sequences for all groups.

We start by discussing the grouping. \rev{We group consecutive nerves on the bone that span intervals of terminals on the same face.} We then prove that the union of the paths between the nerves in such a group has a bounded crossing sequence (see Section~\ref{sec:Bones and Homotopy}). Next, we argue that we only need few groups. We say that two groups of nerves \emph{alternate} if they are towards different terminal faces. If there are two alternating groups of nerves at the \rev{start} of the bone, and two alternating groups of nerves at the \rev{end} of the bone, then we can observe that these nerves effectively cordon off \rev{a region} of the plane. Inside this region, we can show that all terminals effectively lie on a cycle \rev{in the primal graph}. This enables us to use the ideas of Chen and Wu~\cite{Chen-Wu} again to describe an optimal solution.

\rev{To conclude our algorithm, we guess for each bone how many groups there are in the solution: one, two, three, or four or more. If there are at most three groups, then we guess their starting and ending nerves (defined by their intervals and attachment points) and their crossing sequences by exhaustive enumeration. We can then employ a dynamic program to find all nerves in between, while ensuring that the path between the attachment points of the nerves follows the guessed crossing sequence. 
\rev{In particular, we emply Frank and Schrijver's algorithm~\cite[Section 5]{Homotopy} that can find shortest paths that follow a particular crossing sequence in polynomial time. See Section~\ref{sec:splinting} for details.}
This will take care of all (augmented) terminals between the intervals.} 

\rev{If there are four or more groups, then from the preceding, there is a special region between the two alternating groups at each end of the bone. We can compute a solution for the terminals in that region in polynomial time, using ideas of Chen and Wu~\cite{Chen-Wu}, and determine the groups again as before, after guessing their starting and ending nerves (defined by their intervals and attachment points) and their crossing sequences.}

The total running time of this algorithm is indeed $f(k) \cdot n^{\OO(k)}$. The latter term originates from the guessing of the terminals corresponding to the branching points of the skeleton, the guessing of the intervals, and \rev{the guessing of the intervals and attachment points of the nerves at the bookends of the up to four groups we need to consider}. For each shrunken bone, we call this information a \emph{broken bone} (we effectively guess its parts) and the solution that fixes it a \emph{splint}.

\subsection{Towards the Final Algorithm}
We now develop the ideas for the final algorithm. To avoid guessing the broken bones globally, we aim to apply the fact that the shrunken skeleton has bounded treewidth. Using a tree decomposition directly is cumbersome and not very intuitive. Instead, we use sphere-cut decomposition.

A sphere-cut decomposition is a branch decomposition, which can be thought of as a set of separators of the graph organized in a tree-like fashion. The tree structure enables dynamic programming in the usual manner. The crux is that the vertices of each separator induce a noose in the planar graph. A \emph{noose} in this sense is a closed curve in the plane that intersects every face of the planar graph at most once and intersects the drawing only in the vertices of the separator. The tree-structure of the decomposition is organized in such a way that each separator has two child-separators and the symmetric difference of the corresponding two nooses is the noose of the parent-separator. A formal definition is in Section~\ref{sec:sphere-cut}.

It is known that there is a sphere-cut decomposition of a planar multigraph on $k$ vertices where all nooses (and separators) have $O(\sqrt{k})$ vertices~\cite{DornPBF10,MarxP15,PilipczukLW20}. This bound will lead to the $n^{\OO(\sqrt{k})}$ running time. This decomposition requires that the planar graph is connected, has no bridges, nor has self-loops. Unfortunately, our shrunken skeleton does not satisfy any of those demands out of the box. The issue of self-loops is quickly handled by not dissolving all vertices of degree~$2$ when obtaining a shrunken skeleton from a skeleton, but only those that do not lead to a self-loop.

Next, we ensure connectivity of the shrunken skeleton. To this end, we consider a connected component of the shrunken skeleton that is `innermost' in the embedding of the shrunken skeleton. We can guess which of the terminal faces are enclosed by this connected component. \rev{However, there may also be a terminal face for which all but one of its terminals are enclosed by the component}. We cannot guess this terminal (as there are $n$ choices), even though it is necessary to know the terminal to correctly guess the structure of the optimum solution. To circumvent this issue, we argue that we can pick a single terminal of this face as a representative of this `exposed' terminal. Thus we completely avoid knowing which terminal is exposed. This enables a $2^{\OO(k)} n^{\OO(1)}$ time subroutine to guess the components of the dual of an optimum solution and to reduce to the case where such duals are connected. We then argue that this implies that the shrunken skeleton is connected as well. See Section~\ref{sec:connected-duals} for details.

Then, we consider bridges of the shrunken skeleton. It seems hard to avoid them completely. Instead, we design another dynamic program (see Section~\ref{sec:bridge-block-reduction}). We use a bridge block tree of the skeleton to guide this dynamic program. Recall that a bridge block tree is a tree representation of the bridge blocks (bridgeless components) of a graph and the cut vertices between those bridge blocks, generalizing the more familiar block cut tree. To ensure this bridge block tree is suitable for the dynamic program, we need a modified version of a bridge block tree that organizes itself according to the embedding of the shrunken skeleton. To this end, we develop an embedding-aware bridge block tree (see Section~\ref{sec:eabb}), which may be of independent interest.

\rev{We can now indeed obtain the sphere-cut decomposition of the shrunken skeleton.} We now apply a dynamic program where we just maintain the broken bones for the shrunken bones of the faces intersected by a noose of the decomposition. In fact, this still is too much information, and instead we maintain only a relevant part of those broken bones, namely the first nerve that we encounter in either direction on the shrunken bones of each face that is intersected by the noose. We argue that this yields sufficient information to know all broken bones and obtain an optimum solution. See Section~\ref{sec:bridge-block} for details.

In conclusion, our final algorithm is as follows. First, we perform a subroutine to ensure that the dual of any minimum multiway cut is connected. Then we guess the structure of the dual of such a solution, namely its shrunken skeleton, how the nerves of each bone are grouped, and what the crossing sequences are of each path between nerves of the same group and between the groups. We call this a \emph{topology}.\footnote{The word topology has many well-known meanings, including a branch of mathematics. We use the term here in a cartographic sense, as an abstract map of the solution. This is in line with previous uses in the literature, \eg\cite{CDV}.} We guess the optimal topology by exhaustive enumeration. Consider its shrunken skeleton and define a dynamic program on its bridge blocks to combine partial solutions for each bridge block. For each bridge block, we show that it has a sphere-cut decomposition with nooses of bounded size, which enables us to find a partial solution using a dynamic program. The guessing of the topology and the dynamic programs combined lead to an algorithm running in time $2^{\OO(k^2 \log k)} n^{\OO(\sqrt{k})}$.
 
\section{Preliminaries}\label{Prelims}

\paragraph{Graphs.}
A \emph{graph} is a pair $G=(V, E)$, such that \rev{$E \subseteq {V \choose 2}$}. The elements of $V$ are called the vertices of the graph and the elements of $E$ its edges. The number of vertices of a graph is its order denoted by $|G|$. A \emph{subgraph} $G'= (V', E')$ of $G$, written as $G'\subseteq G$, is a graph such that $V'\subseteq V$ and $E'\subseteq E$. If $G'\subseteq G$ and $G'\neq G$, then $G'$ is a \emph{proper subgraph} of $G$. $G'$ is an \emph{induced subgraph} of $G$, if for all $x,y \in V'$, if $xy \in E$, then $xy \in E'$.

A \emph{path} is a non-empty graph $P=(V,E)$ of the form $V=\{x_0, x_1, \ldots, x_k\}$ and $E=\{x_0x_1, \ldots,$ $x_{k-1}x_k\}$, where all $x_i$ are distinct. The number of edges in a path is its \emph{length}. We use the notation $P[x_i,x_j]$ to denote the subpath of $P$ between vertices $x_i$ and $x_j$. For a tree $N$, we use $N[x,y]$ to denote the unique path in $N$ between the vertices $x$ and $y$.

The graph is \emph{connected} if there is a path between any two vertices in \rev{the} graph, and \emph{disconnected} otherwise.

The \emph{contraction} of an edge $e=(u,v)$ of $G$ is the operation of identifying $u$ and $v$, while removing any loops or parallel edges that arise. We use the notation $G/e$. This extends to sets $F \subseteq E(G)$ of edges, for which we can use the notation $G/F$.

Given a subset $T \subseteq V(G)$ (called terminals), a \emph{Steiner tree} on $T$ is a minimal connected subgraph $H$ of $G$ such that there is a path in $H$ between any two terminals in $T$.

\paragraph{Connectivity.}
A \emph{cut vertex} of a connected graph is a vertex whose removal yields a disconnected graph. A \emph{biconnected graph} is a \rev{connected} graph without cut vertices. A \emph{biconnected component} or \emph{block} of a graph is a maximal subgraph that is biconnected. 

A \emph{bridge} of a connected graph is an edge whose removal yields a disconnected graph. A \emph{bridgeless graph} is a graph that has no bridges. A \emph{bridgeless component} or \emph{bridge block} of a graph is \rev{either a single edge that is a bridge or} a maximal subgraph that is bridgeless. \rev{We call the former a \emph{trivial} bridge block and the latter (a bridge block that consists of more than one edge) a \emph{nontrivial} bridge block}. Bridge blocks are incident to each other at cut vertices of the graph. \rev{For simplicity, we call two bridge blocks \emph{neighboring} if they share a cut vertex. We can observe that any nontrivial bridge block neighbors only trivial bridge blocks.}

Given two disjoint vertex subsets $X,Y \subseteq V(G)$, an \emph{$(X,Y)$-cut} is a set of edges whose removal leaves no path between any vertex of $X$ and any vertex of $Y$.

Given a subset $T \subseteq V(G)$, a \emph{(edge) multiway cut} \rev{of $(G,T)$ (or $T$)} is a set $C \subseteq E(G)$ whose removal leaves no path between any pair of distinct vertices in $T$. \rev{A multiway cut of $(G,T)$ is \emph{inclusion-wise minimal} or simply \emph{minimal} if no proper subset of $C$ is also a multiway cut of $(G,T)$. A minimum multiway cut of $(G,T)$ is a smallest set of edges that is a multiway cut of $(G,T)$. If, additionally, a weight function $\omega$ on the edges is given, then we may speak of a (minimal/minimum) multiway cut of $(G,T,\omega)$; in particular, a minimum multiway cut of $(G,T,\omega)$ is a minimum-weight set of edges that is a multiway cut of $(G,T,\omega)$.}

\paragraph{Topology and Planar Graphs.} \label{sec:topology}
A Jordan \emph{arc} in the plane is an injective continuous map of $[0,1]$ to $\mathbb{R}^2$. A Jordan \emph{curve} in the plane is an injective continuous map of $\mathbb{S}^1$ to $\mathbb{R}^2$. If one of the points on this curve is special, we may also call this a closed arc on this point. 
Consider a \rev{finite} set $Z$ of Jordan (possibly closed) arcs in the plane. Let $P(Z)$ denote the union of the set of points of each arc of $Z$. Observe that $\mathbb{R}^2 \setminus P(Z)$ is an open set. A \emph{region} of $Z$ is a maximal subset $X$ of $\mathbb{R}^2 \setminus P(Z)$ that is \emph{arc-connected}; that is, there is a Jordan arc in $X$ (meaning all its points belong to $X$) between any pair of points in $X$.
If $\mathbb{R}^2 \setminus P(Z)$ has more than one region, then $Z$ is \emph{separating}.
Let $X$ be a region of a separating set $Z$. The \emph{boundary} $\bd{X}$ of a region $X$ is the set of all points $p \in \mathbb{R}^2$ such that every open disk around $p$ contains both a point of $X$ and of $\mathbb{R}^2 \setminus X$. The \emph{complement} of a region $X$ is the union of $Y \cup \bd{Y}$ for each region of $\mathbb{R}^2 \setminus (X \cup \bd{X})$. Note that the complement of a region is not necessarily a region itself, but possibly a union of regions (and their boundaries), particularly if $X$ has holes.

We say that a region $X$ \emph{encloses} a set $Y$ if $Y \subseteq X \cup \bd{X}$ and \emph{strictly encloses} $Y$ if $Y \subseteq X$.

For the definition of planar graphs, we follow Diestel~\cite{Di05}. A graph $G=(V,E)$ is \emph{plane} if $V$ corresponds to a set of points (vertex points) in the plane and $E$ corresponds to a set of arcs in the plane (edge arcs) between the points corresponding to its endpoints, such that the interior of each edge arc contains no vertex point and no point of any other edge arcs. We call the vertex points and edge arcs an \emph{embedding} of $G$. If $G$ admits an embedding, we call the graph a \emph{planar graph}. A \emph{face} of a plane graph is any region of the set of edge arcs. Exactly one face is \emph{unbounded}, also called the \emph{outer face}, whereas all other faces are \emph{bounded}.

\rev{We note that the boundary of each face is a closed walk. We call the length of this walk the \emph{length} of the face. Finally, we say that one face neighbors another if their boundaries share an edge.}

Two plane graphs are \emph{equivalent} if they are isomorphic and the circular order of the edges around each vertex is the same in both embeddings. In particular, this means that boundaries of the faces of the embeddings have the same edge sets and there is a bijection \rev{between the sets of faces such that each face is incident to the same set of edges as its image under the bijection}.

Let $G=(V,E,F)$ and $G^*=(V^*, E^*, F^*)$ be two plane graphs, where $V, E$ and $F$ ($V^*, E^*$ and $F^*$) denote the set of vertices, edges, and faces of $G$ ($G^*$). $G^*$ is called a \emph{plane dual} of $G$, if there exist bijections\\
\begin{math}
\begin{array}{c c c}

   f^*: V\rightarrow F^*    &    e^*: E\rightarrow E^*      &     v^*: F\rightarrow V^*\\
    v\rightarrow f^*(v)     &    e\rightarrow e^*(e)        &     f\rightarrow v^*(f)\\

\end{array}
\end{math}
satisfying the following conditions:
\begin{enumerate}[label= (\alph*.)]
    \item $v^*(f) \in f$, \rev{for all} $f\in F$
    \item $e$ and \rev{$e^*(e)$} intersect in exactly one point, which lies in the interior of both $e$ and \rev{$e^*(e)$}, \rev{for all $e\in E$}.
    \item $v \in f^*(v)$, \rev{for all} $v\in V$
\end{enumerate}

We note the following basic properties, which follow from Diestel~\cite{Di05}. We will often use them without explicitly referring to this proposition.

\begin{proposition}
If $G$ is connected, then the edges of $G$ bounding each face form a closed walk (also known as a face walk).
If $G$ is bridgeless, then for any edge, the faces on both sides are distinct. 
\end{proposition}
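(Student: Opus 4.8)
The plan is to prove the two assertions by separate, standard arguments: an induction on $|E(G)|$ for the first (formalizing the intuitive ``face-tracing'' procedure), and a direct appeal to the Jordan curve theorem for the second. Both facts are classical and can be cited from Diestel~\cite{Di05}; the sketch below is only meant to recall the mechanism.

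For the first assertion, the base case is $G$ acyclic. A connected acyclic plane (multi)graph is a tree, it has a single face, and that face's boundary is traced by the closed walk performing a depth-first traversal of the tree, in which each edge is used exactly twice; this is a closed walk. (If $G$ has no edges the claim is vacuous.) For the inductive step, assume $G$ has a cycle and choose an edge $e = uv$ on a cycle. Then $e$ is not a bridge, so $G - e$ is connected with one fewer edge, and by induction every face of $G - e$ has a closed boundary walk. The only face affected by reinserting the arc $e$ is the face $f'$ of $G-e$ in whose closure $e$ is drawn: the arc splits $f'$ into two faces $f_1, f_2$ of $G$ and correspondingly splits the closed walk $W'$ bounding $f'$, at the occurrences of $u$ and $v$ where $e$ attaches, into two closed walks $W_1, W_2$, each using $e$ once, bounding $f_1$ and $f_2$. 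All other faces of $G$ are faces of $G-e$ with unchanged boundary walks, closing the induction.

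For the second assertion, let $G$ be bridgeless and let $e$ be any edge. Then $e$ lies on a cycle $C$, whose drawing is a Jordan curve; by the Jordan curve theorem $\mathbb{R}^2 \setminus C$ has exactly two regions, an inside and an outside. Pick an interior point $p$ of the arc $e$: a sufficiently small disk around $p$ is split by $e$ into two half-disks, one lying in the inside and one in the outside of $C$, and these half-disks belong to the two faces of $G$ incident to $e$. Since every face of $G$ is an arc-connected set disjoint from $C$, each of these two faces is contained entirely in one region of $\mathbb{R}^2 \setminus C$; as they meet opposite sides of $C$ near $p$, they lie in different regions and are therefore distinct.

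The main obstacle is the bookkeeping in the inductive step: one must pin down the correct occurrences of $u$ and $v$ on the face walk $W'$ (a vertex may occur several times on a face walk) and verify, via the Jordan curve theorem applied to the union of the arc $e$ with an appropriate subarc of $\bd{f'}$, that the two resulting closed curves bound exactly $f_1$ and $f_2$. It is precisely here that connectivity of $G$ is used --- it guarantees that the traced walk exhausts the entire boundary of a face, which can fail for disconnected graphs, where a face boundary may split into several disjoint closed walks. The whole argument goes through unchanged for plane multigraphs (with loops and parallel edges), which is the setting used later in the paper for dual graphs.
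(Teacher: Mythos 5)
Your proof is correct, but the paper itself does not prove this proposition at all: it simply records these two facts as ``basic properties, which follow from Diestel~\cite{Di05}'' and uses them implicitly throughout. So there is no ``paper's proof'' to compare against; what you have written is a self-contained, elementary justification of the cited classical facts.

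For completeness, a few remarks on your argument. The induction on $|E(G)|$ for the first assertion is the standard face-tracing proof, and your identification of the subtle point --- selecting the \emph{correct} occurrences of $u$ and $v$ on the face walk $W'$, which are determined by the rotational position of $e$ at each endpoint --- is exactly where the bookkeeping lives; in the multigraph setting one must also allow $u=v$ (a loop), but the splitting argument goes through verbatim. You are also right that connectivity of $G$ is essential: for a disconnected plane graph the boundary of a face can consist of several disjoint closed walks, which is precisely why the statement is phrased conditionally. The second assertion's reduction to the Jordan curve theorem via a cycle through $e$ is the usual proof; the key observation that the two faces incident to $e$ are each arc-connected, disjoint from the cycle $C$, and meet opposite sides of $C$ near an interior point of $e$, is complete and correct. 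In short, your argument fills in precisely the content the paper delegated to Diestel, and does so by the standard route; had the paper chosen to include a proof it would very likely have looked like this.
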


Let $O$ be a set of points in the plane, called \emph{obstacles}. Consider two Jordan arcs $a,b$ between the same pair of points, such that neither arc contains a point of $O$. Then these arcs are \emph{homotopic} if and only if  there is a continuous deformation between $a$ and $b$ that does not cross a point of $O$. In the plane, this means that \rev{the union of the bounded regions induced by $a$ and $b$} does not contain any point of $O$.

\subsection{Sphere-cut Decompositions}\label{sec:sphere-cut}
\rev{\emph{Note to the reader: a reader only interested in an $n^{\OO(k)}$-time algorithm may skip this subsection.}}

A main component of our algorithm is a dynamic program over a planar graph of bounded treewidth. However, using a normal tree decomposition is rather cumbersome in this case, and it turns out to be much easier to instead use a sphere-cut decomposition, a branch decomposition especially suited for planar graphs. We define all necessary notions and state the relevant theorem below.

A \emph{branch decomposition} of a graph $G=(V,E)$ is a pair $(R,\eta)$ of a ternary tree $R$ and a bijection $\eta$ between the leaves of $R$ and the edges in $E$. For an edge $e$ of $R$, we define the \emph{middle set} $\midset(e)$ to be the set of vertices in $V$ for which an incident edge is mapped by $\eta$ to a leaf in the one component of $R-e$ and an incident edge is mapped by $\eta$ to a leaf in the other component. The \emph{width} of the branch decomposition is defined as the maximum size of the middle set of any edge of $R$. The \emph{branchwidth} of $G$ is then the minimum width of any branch decomposition of $G$.

Let $G$ be a (planar) graph with a fixed embedding on the sphere. Then a \emph{noose} $\noose$ (with respect to $G$) is a closed, directed curve in the sphere that meets the embedding of $G$ only in its vertices and that traverses each face at most once. The \emph{length} of the noose is equal to the number of vertices of $G$ traversed by it. If we enumerate the vertices of the noose, we implicitly assume that this enumeration follows the order of appearance on the noose, that is, following its direction. Note that a noose cuts the sphere into two regions, each homeomorphic to an open disk. The region bounded by and to the right when following the noose with its direction is denoted by $\enc(\noose)$ and the other by $\exc(\noose)$.

A sphere-cut decomposition of a graph $G$ with a fixed embedding on the sphere is a triple $(R,\eta,\delta)$ consisting of a branch decomposition $(R,\eta)$ and a mapping $\delta$ from the set of ordered pairs of adjacent vertices $x,y$ of $R$ to nooses (with respect to $G$) on the sphere, \rev{such that for all pairs of adjacent vertices $x,y$ of $R$}:
\begin{itemize}
\item $\delta(x,y)$ is the same noose as $\delta(y,x)$ but with the direction reversed. Note that then it holds $\enc(\delta(x,y)) = \exc(\delta(y,x))$;
\item $\delta(x,y)$ meets the embedding of $G$ exactly in the vertices of the middle set $\midset(x,y)$; moreover, $\enc(x,y)$ contains all the embeddings of all edges of the one component of $R-xy$ and $\exc(x,y)$ contains the embeddings of all other edges.
\end{itemize}
As noted by Dorn \etal\cite{DornPBF10} and Pilipczuk \etal\cite{PilipczukLW20}, we may assume that a sphere-cut decomposition is \emph{faithful}. That is, for every internal vertex $x$ of $R$ with adjacent vertices $y_1,y_2,y_3$, we may assume that $\enc(x,y_1)$ is equal to the disjoint union of $\enc(y_2,x)$, $\enc(y_3,x)$, and $(\delta(y_2,x) \cap \delta(y_3,x)) \setminus \delta(x,y_1)$. We also note that $\delta(y_2,x) \cap \delta(y_3,x) \cap \delta(x,y_1)$ consists of two points, each of which may (or may not) coincide with a vertex of $G$.

As described by Dorn \etal\cite{DornPBF10}, we can ``root'' any sphere-cut decomposition $(R,\eta,\delta)$ as follows. \rev{We first subdivide} an arbitrary edge $e$ of $R$. Let $u$ be the newly created vertex and $e',e''$ be the newly created edges. \rev{Note that, by definition,} $\midset(e')=\midset(e'')=\midset(e)$. Add a new vertex $r$ and connect it to $u$. \rev{By abuse of notation}, we set $\midset(ru) = \emptyset$ \rev{(note that $r$ is not mapped to an edge of $E$)}. We then direct the tree $R$ towards the root $r$. In the remainder, we assume our sphere-cut decompositions are rooted in this way.

The following result was observed by Dorn \etal\cite{DornPBF10} and follows from~\cite{GuT12,SeymourT94} (see also Marx and Pilipczuk~\cite{MarxP15} and Pilipczuk \etal~\cite{PilipczukLW20}).

\begin{theorem}\label{thm:sc}
Every $n$-vertex connected, bridgeless multigraph without self-loops but with a fixed embedding on the sphere has a faithful sphere-cut decomposition of width $\sqrt{4.5n}$. Moreover, such a sphere-cut decomposition can be found in $O(n^3)$ time.
\end{theorem}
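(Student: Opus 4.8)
The plan is to assemble this statement from three established results together with the faithfulness cleanup spelled out just above, in four steps: (i) bound the branchwidth of $G$; (ii) compute an optimal branch decomposition in $O(n^3)$ time; (iii) convert it to a sphere-cut decomposition of the same width; (iv) make it faithful. For (i) I would invoke the bound of Fomin and Thilikos that every simple planar graph on $n$ vertices has branchwidth at most $\sqrt{4.5n}$, and extend it to multigraphs without self-loops by the standard observation that, whenever this width is at least $2$, a branch decomposition of the underlying simple graph extends to one of $G$ of the same width: replace the leaf mapped to a pair $uv$ by a caterpillar whose leaves carry the parallel copies of $uv$, so that every new middle set lies inside $\{u,v\}$; since $\sqrt{4.5}>2$, the bound survives. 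For (ii) I would run the $O(n^3)$-time branch-decomposition algorithm of Gu and Tamaki~\cite{GuT12}, which builds on the ratcatcher of Seymour and Thomas~\cite{SeymourT94}; if $G$ has superlinearly many edges I would first collapse each parallel class, run the algorithm on the resulting $O(n)$-edge planar graph, and re-expand the classes as in step (i), which keeps the width at most $\sqrt{4.5n}$.

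Step (iii) is the technical heart, and the step I expect to be the main obstacle. For a branch decomposition $(R,\eta)$ and a tree edge $e$, the two sides of $R-e$ partition $E(G)$ into sets $E_1,E_2$ whose common vertex set is exactly $\midset(e)$. Using that $G$ is connected and bridgeless, I would argue --- following Seymour and Thomas~\cite{SeymourT94}, as made explicit by Dorn \etal~\cite{DornPBF10} --- that this edge partition is realized geometrically by a noose through exactly the vertices of $\midset(e)$, with the edge arcs of $E_1$ inside its enclosed region and those of $E_2$ outside. Bridgelessness is precisely what prevents a noose from traversing a face twice (a bridge borders the same face on both sides, which would force a double traversal), and connectivity makes the two sides behave as cells of the embedding. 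A re-routing argument that never enlarges a middle set then makes these nooses globally consistent across $R$ --- for each internal node with tree edges $e_1,e_2,e_3$ the noose of $e_1$ is the symmetric difference of the nooses of $e_2$ and $e_3$ --- which supplies the map $\delta$. Equivalently, and perhaps more transparently, one can route through the medial graph $\mathcal{M}(G)$: an optimal carving decomposition of $\mathcal{M}(G)$ translates directly into a sphere-cut decomposition of $G$ of half the width. Either way, this is the point at which the hypotheses on $G$ are genuinely used.

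For (iv) I would apply the massaging described immediately before the statement (Dorn \etal~\cite{DornPBF10}, Pilipczuk \etal~\cite{PilipczukLW20}): processing $R$ from the root downward, slightly re-route the three nooses at each internal node $x$ with neighbors $y_1,y_2,y_3$ so that $\enc(x,y_1)$ becomes exactly the disjoint union of $\enc(y_2,x)$, $\enc(y_3,x)$, and the part of $\delta(y_2,x)\cap\delta(y_3,x)$ not on $\delta(x,y_1)$; such local re-routings change no middle set. Collecting the pieces yields a faithful sphere-cut decomposition of width at most $\sqrt{4.5n}$, computed in $O(n^3)$ time overall.
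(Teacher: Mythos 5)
Your proposal is correct and takes essentially the same route the paper does: the paper states this theorem purely by citation to Dorn \emph{et al.}, Gu--Tamaki, Seymour--Thomas, Marx--Pilipczuk, and Pilipczuk \emph{et al.}, and your four-step assembly (Fomin--Thilikos branchwidth bound, Gu--Tamaki's $O(n^3)$ optimal branch decomposition, the Seymour--Thomas/Dorn \emph{et al.} branch-to-noose conversion using connectivity and bridgelessness, then the faithfulness re-routing) is exactly the chain those references provide. The medial-graph detour you mention as an alternative for step (iii) is likewise the standard Seymour--Thomas route, so both variants match the paper's intent.
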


\subsection{Embedding-Aware Bridge Block Trees}\label{sec:eabb}
\newcommand{\precp}{\prec_{P}}

\rev{\emph{Note to the reader: a reader only interested in an $n^{\OO(k)}$-time algorithm may skip this subsection.}}

An important aspect of our algorithm will be to deal with bridge blocks of a planar graph, as a sphere-cut decomposition can not. We define the following notion, which lends itself in a nice way to the dynamic programming algorithm we develop towards the end of the paper.

We can define a \emph{bridge block tree} of a graph $H$ as follows. Consider the graph $F$ that has a node for every bridge block (\ie a \rev{nontrivial} bridgeless component or a bridge), called the BB-nodes of $F$, and for every endpoint of a bridge, called the C-nodes of $F$. There is an edge between a BB-node corresponding to a bridge block $B$ and a C-node that corresponds to a cut vertex $v$ if $v \in V(B)$. It is immediate that \rev{this graph is a forest that is connected (a tree) if $H$ is connected and that is disconnected otherwise}. \rev{Recall that any nontrivial bridge block neighbors only trivial bridge blocks. Thus, we can observe that each C-node of a bridge block tree is adjacent to at most one BB-node that corresponds to a nontrivial bridge block.}

\begin{theorem}[{Tarjan~\cite{Tarjan74}}]\label{thm:bridge-block:tarjan}
A bridge block tree of a graph can be computed in linear time.
\end{theorem}

If $H$ is plane and connected, then we use an extension of this definition. An \emph{embedding-aware bridge block tree} (or \emph{EABB} tree) $L(H)$ of $H$ is formed from the bridge block tree $F$ as follows. Root $F$ at a BB-node $\ell(F)$ that corresponds to a bridge block that has an edge bordering the outer face of $H$. We now perform two operations on the BB-nodes.

\begin{figure}[t]
    \centering
    \includegraphics[width=\textwidth, keepaspectratio]{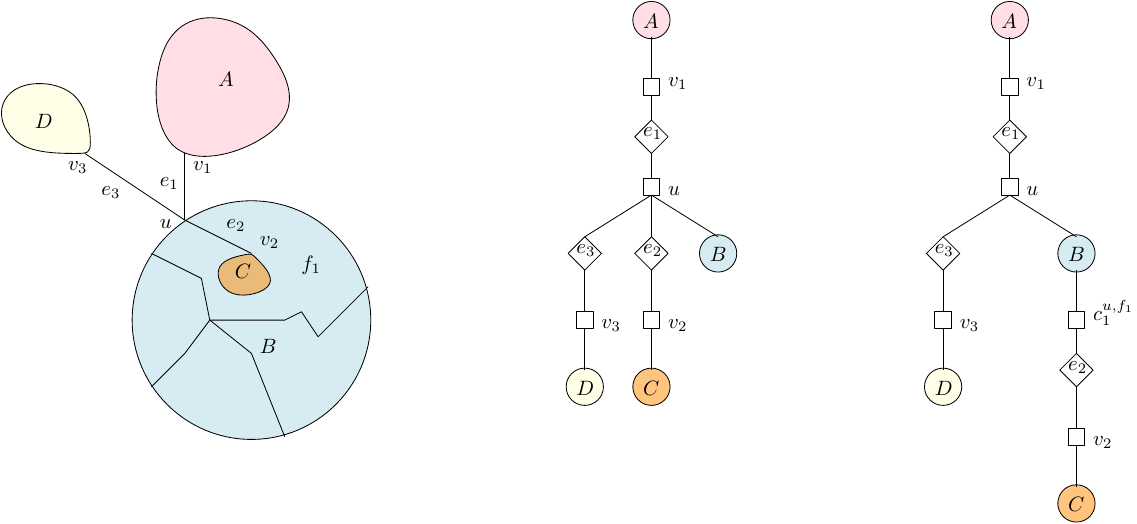}
    \caption{The leftmost image is that of a graph with bridge blocks $A, B, C$ and $D$. The bridges are denoted by $e_1, e_2$ and $e_3$, whereas the cut vertices are $u$, $v_1, v_2$, and $v_3$. The figure in the middle shows a bridge-block tree corresponding to the graph. The rightmost figure shows the resulting embedding-aware bridge block tree. \rev{In particular, the first of the two mentioned operations needed to be applied.}}
    \label{fig:eabb tree}
\end{figure}

First, for any nontrivial bridge block $B$ whose corresponding BB-node $b$ has a C-node parent $c$ in $F$ corresponding to a cut vertex $v$, note that all other BB-node children $b'$ of $c$ correspond to trivial bridge blocks. For each bounded face $f$ of $B$, create a new C-node child $c_1^{v,f}$ of $b$ corresponding to $v$ and $f$, and for any child $b' \not= b$ of $c$ corresponding to a bridge edge that is contained in $f$, make the subtree of $F$ rooted at $b'$ a child of $c_1^{v,f}$. We only add $c_1^{v,f}$ if there are any such children $b'$.

Second, for any nontrivial bridge block $B$ whose corresponding BB-node $b$ has a C-node child $c$ in $F$ corresponding to a cut vertex $v$, note that all BB-node children $b'$ of $c$ correspond to trivial bridge blocks. For each bounded face $f$ of $B$, create a new C-node child $c_2^{v,f}$ of $b$ corresponding to $v$ and $f$, and for any child $b'$ of $c$ corresponding to a bridge edge that is contained in $f$, make the subtree of $F$ rooted at $b'$ a child of $c_2^{v,f}$. We only add $c_2^{v,f}$ if there are any such children $b'$.

Perform these two operations on all nontrivial bridge blocks. Observe that the operations essentially apply to the children of C-nodes corresponding to cut vertices contained in a nontrivial bridge block. As nontrivial bridge blocks are not neighboring, the sets of cut vertices contained in nontrivial bridge blocks are pairwise disjoint. Hence, these operations do not interfere with each other and can be performed independently.

Then, finally, order the children of a C-node $p$ in the tree according to the order in which their edges appear around the corresponding cut vertex. The resulting tree is $L(H)$. \rev{See Figure~\ref{fig:eabb tree} for an example.}

\begin{lemma}\label{lem:bridge-block:build}
An embedding-aware bridge block tree of a plane, connected graph $H$ can be computed in polynomial time.
\end{lemma}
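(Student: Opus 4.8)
The plan is to follow the construction of the embedding-aware bridge block tree $L(H)$ literally and verify that each ingredient it uses is computable in polynomial time. The construction is a finite sequence of well-defined operations, so the proof is essentially an inventory of these operations together with their running times.

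First I would compute the ordinary bridge block tree $F$ of $H$; by Theorem~\ref{thm:bridge-block:tarjan} this takes linear time, and since $H$ is connected, $F$ is a tree. Next I would root $F$: we need a BB-node corresponding to a bridge block that has an edge bordering the outer face of $H$. Since $H$ is plane, we know its embedding, so we can read off the boundary walk of the outer face in linear time, pick any edge $e$ on it, identify the bridge block $B$ containing $e$ (each edge lies in exactly one bridge block, information we already have from the bridge block decomposition), and root $F$ at the BB-node of $B$; such a block exists because the outer face boundary is nonempty. Rooting and directing the tree towards the root is linear.

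Then I would perform the two face-splitting operations. For each nontrivial bridge block $B$, we need its bounded faces and, for each bounded face $f$, the set of bridge edges (children in $F$ of the relevant C-node) that are contained in $f$. Here I would observe that each nontrivial bridge block $B$, being a bridgeless plane subgraph, inherits an embedding from $H$, and its faces can be listed in time linear in $|B|$; the total size of all nontrivial bridge blocks is $O(|H|)$. To decide in which face $f$ of $B$ a neighboring trivial bridge block $b'$ lies, note that the bridge edge corresponding to $b'$ is incident to a cut vertex $v\in V(B)$, and the rotation around $v$ in the embedding of $H$ tells us which face of $B$ the bridge emanates into — again information obtainable in time proportional to the degree of $v$. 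Summing over all cut vertices contained in nontrivial bridge blocks (which, as noted in the excerpt, are pairwise disjoint across distinct nontrivial bridge blocks) this is $O(|H|)$. Creating the new C-nodes $c_1^{v,f}$, $c_2^{v,f}$ and re-attaching the appropriate subtrees of $F$ is then a bounded amount of pointer manipulation per affected node, hence polynomial (indeed linear) overall.

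Finally, ordering the children of each C-node $p$ according to the cyclic order in which their edges appear around the corresponding cut vertex is just a sort per C-node using the given embedding's rotation system, which is polynomial (linear with bucket sort along the rotation). Concatenating all these polynomial-time steps yields the claimed polynomial bound; I would not chase a tight exponent, since the excerpt only asks for ``polynomial time.'' The only mildly delicate point — the one I would flag as the place to be careful — is the correctness-flavoured bookkeeping that the construction is well-defined, namely that the two operations genuinely do not interfere (already argued in the text via disjointness of the relevant cut-vertex sets) and that ``the bridge edge contained in face $f$ of $B$'' is unambiguous; both are immediate from planarity and the structure of bridge block trees, so no real obstacle remains.
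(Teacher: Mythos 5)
Your proof is correct and takes essentially the same approach as the paper: invoke Tarjan's linear-time bridge block decomposition, then use the combinatorial embedding (DCEL/rotation system) to root the tree, perform the face-splitting operations, and order children. The paper's own proof is a two-sentence version of exactly this; you have merely spelled out what "find the information necessary to make it embedding-aware" amounts to.
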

\begin{proof}
First, we invoke Theorem~\ref{thm:bridge-block:tarjan} to obtain a bridge block tree of $H$. Using a Doubly-Connected Edge List (DCEL) of the embedding, we can then find the information necessary to make it embedding-aware in polynomial time.
\end{proof}

We now make several observations about $L(H)$. Let $B$ and $B'$ be distinct bridge blocks of a plane graph $H$. Since $H$ is plane (and ignoring possible intersections of the embedding on the cut vertex), $B$ is enclosed by a bounded face of $B'$, or vice versa, or $B$ and $B'$ are both in each other's outer face. We now define a strict partial order $\precp$ on the bridge blocks of $H$, where $B \precp B'$ if $B$ is embedded in a bounded face of $B'$.

\begin{lemma}\label{lem:bridge-block:precp}
Let $B$ and $B'$ be two bridge blocks of a connected plane graph $H$. If $B \precp B'$, then the node $b$ corresponding to $B$ is a descendant of the node $b'$ corresponding to $B'$ in $L(H)$.
\end{lemma}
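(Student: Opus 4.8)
The plan is to prove the statement by analyzing how the two operations that transform the bridge block tree $F$ into $L(H)$ interact with the embedding, and to argue that neither operation can ever move a node $b$ corresponding to $B$ out from under the node $b'$ corresponding to $B'$ when $B \precp B'$. First I would observe that $\precp$ is transitive, so it suffices to consider the case where $B$ is ``directly'' inside a bounded face $f$ of $B'$, meaning there is no bridge block $B''$ with $B \precp B'' \precp B'$; the general case then follows by chaining. In this direct case, the unique path in $H$ from $V(B)$ to $V(B')$ passes through a sequence of bridges and cut vertices, and since $B$ lies in the bounded face $f$ of $B'$, all of these intermediate bridges and cut vertices also lie in the closure of $f$ (otherwise the drawing would force an edge of $H$ to cross the boundary walk of $B'$, contradicting planarity).

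Next I would trace the path from $b'$ to $b$ in the original tree $F$: it alternates BB-nodes and C-nodes, starting at $b'$, going through a C-node $c$ for the cut vertex $v$ of $B'$ that lies on the boundary walk of $f$, then into trivial bridge blocks (bridges) and further C-nodes, eventually reaching $b$. The key point is that in $F$, $b$ is already a descendant of $b'$ (or $b'$ of $b$), because $F$ is a tree rooted at a BB-node on the outer face, and the root $\ell(F)$ lies on the outer face while $B$ is strictly inside a bounded face of $B'$ — so the root side of $B$ in $F$ goes through $B'$, making $b'$ an ancestor of $b$. I would make this precise by noting that removing the node $b'$ from $F$ disconnects $b$ from $\ell(F)$, which is exactly the statement that $b'$ is an ancestor of $b$. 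Then I need to check that the two rewiring operations preserve this ancestry. The first operation reattaches subtrees hanging off a C-node $c$ (child of a nontrivial BB-node $b'$) to new C-nodes $c_1^{v,f}$ that are themselves children of $b'$; so any descendant of $b'$ via $c$ remains a descendant of $b'$. The second operation reattaches subtrees hanging off a C-node child $c$ of a nontrivial BB-node to new C-nodes $c_2^{v,f}$ that are again children of that same BB-node; so again descendants are preserved. Crucially, these operations only ever move nodes to new positions that are still descendants of the nontrivial bridge block whose boundary face contains them, and by the planarity argument above, that nontrivial bridge block is on the path from $b'$ to $b$, so $b$ stays below $b'$. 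Finally the child-reordering step at the end changes nothing about ancestry.

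The main obstacle I expect is the bookkeeping around the first operation: it can take a subtree that was a grandchild of $b'$ (hanging off the C-node $c$ corresponding to the parent cut vertex $v$ of $b'$ in $F$) and re-root it under a freshly created C-node $c_1^{v,f}$ which is made a \emph{child} of $b'$ — so one has to be careful that we are genuinely talking about the bounded face $f$ of $B'$ that actually contains $B$, and that the subtree of $F$ rooted at the relevant trivial bridge block child $b'$ of $c$ is indeed ``contained in $f$'' in the sense required by the operation's definition. This is where the planarity fact — that the whole path from $B'$ to $B$ stays inside the closure of $f$ — does the real work: it guarantees the first bridge on that path lies in $f$, so the operation assigns it (and hence the whole subtree containing $b$) to $c_1^{v,f}$, which remains a child of $b'$. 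Once that is nailed down, transitivity of $\precp$ and a short induction on the length of the chain $B = B_0 \precp B_1 \precp \cdots \precp B_m = B'$ finishes the proof: $b_i$ is a descendant of $b_{i+1}$ for each $i$ by the direct case, so $b$ is a descendant of $b'$.
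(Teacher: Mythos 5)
Your central claim --- that $b$ is already a descendant of $b'$ in $F$, with the justification that ``removing the node $b'$ from $F$ disconnects $b$ from $\ell(F)$'' --- is false, and this is not a side issue: it is the crux of why the EABB construction needs the first rewiring operation at all. Planarity does force every $H$-path from $B$ to the outer face to pass through a cut vertex $v$ of $B'$, but the C-node for $v$ can perfectly well be the \emph{parent} $p$ of $b'$ in $F$ rather than a child. In that situation the trivial bridge block carrying the first bridge from $v$ toward $B$ is a \emph{sibling} of $b'$ (another BB-node child of $p$), so $b$ is a descendant of $p$ but not of $b'$, and deleting $b'$ leaves $b$ connected to $\ell(F)$ via $p$. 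Concretely: let $B'$ be a cycle with a pendant bridge $e_3$ at $v$ lying in the outer face, and a pendant path $e_1,e_2$ at $v$ lying inside a bounded face $f$ of $B'$ and ending in a small cycle $B$; rooting $F$ at the BB-node for $e_3$ puts $b$ in the subtree of $p$ disjoint from $b'$. The first operation is precisely what \emph{creates} the missing ancestry, by re-hanging the subtree rooted at the bridge $e_1$ under the new C-node $c_1^{v,f}$, a child of $b'$ --- it does not merely ``preserve'' ancestry as your argument assumes.

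Relatedly, your description of the first operation has the direction reversed: you say it reattaches subtrees hanging off a ``C-node $c$ (child of a nontrivial BB-node $b'$)'' and moves a subtree that was ``a grandchild of $b'$,'' but the first operation in the paper acts on the C-node \emph{parent} $c$ of the nontrivial BB-node, and the subtrees it moves are initially siblings of $b'$ under $c$, not descendants of $b'$. Your later ``obstacle'' paragraph does gesture at the correct mechanism (the first bridge on the path toward $B$ lies in $f$, so it gets re-hung under $c_1^{v,f}$), but that observation contradicts your earlier ``key point'' that ancestry already holds in $F$. The paper's proof avoids this tangle by only claiming that $b$ is a descendant of the \emph{parent C-node} $p$ of $b'$ in $F$, then splitting on whether $v$ corresponds to $p$ or to a child C-node of $b'$; the former, harder case is exactly where the first operation is invoked to establish the ancestry in $L(H)$.
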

\begin{proof}
Observe that $B'$ needs to be a nontrivial bridge block. Let $F$ be the bridge block tree of $H$, rooted at $\ell(F)$. Let $p$ denote the C-node parent of $b'$ in $F$, or $p = \ell(F)$ if $b'=\ell(F)$. We claim that $b$ is a descendant of $p$ in $F$. Indeed, suppose that $b$ is not a descendant of $p$.
Consider any path in $G$ from a vertex in $B$ to a vertex bordering the outer face. Since $B \precp B'$, any such path cannot avoid a vertex of $B'$. Hence, $b$ is a descendant of $p$ in $F$. Moreover, any such path must enter $B'$ at the same cut vertex $v$, which can correspond to $p \not= \ell(F)$ or a C-node that is a child of $b'$. We only consider the case when this cut vertex corresponds to the C-node $p \not= \ell(F)$; the other case is similar. Let $P$ be a path from a vertex in $B$ through the cut vertex $v$ corresponding to $p \not=\ell(F)$ to a vertex bordering the outer face. Then the edge of $P$ preceding $v$ must be a bridge in $H$, and thus a trivial bridge block $B''$. Let $b''$ be the corresponding node of $F$. The first operation ensures that $b''$ becomes a descendant of $b'$, and thus $b$ becomes a descendant of $b'$ in $L(H)$, as claimed.
\end{proof}

\begin{lemma}\label{lem:bridge-block:cnode}
Let $H$ be a connected plane graph. Let $c$ and $c'$ be two C-nodes in $L(H)$ corresponding to the same cut vertex $v$ of $H$. Then on the rev{unique} path between $c$ and $c'$ in the \rev{underlying undirected tree of} $L(H)$, all other C-nodes correspond to $v$.
\end{lemma}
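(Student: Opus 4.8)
The statement concerns the structure of the embedding-aware bridge block tree $L(H)$: if two C-nodes $c, c'$ correspond to the same cut vertex $v$, then every C-node on the path between them in $L(H)$ also corresponds to $v$. My plan is to analyze how C-nodes for a fixed cut vertex $v$ arise in $L(H)$. Recall that $v$ can be contained in several bridge blocks, and by the earlier observation, $v$ lies in at most one nontrivial bridge block; call it $B_v$ if it exists. In the original bridge block tree $F$, there is exactly one C-node for $v$; the duplication of C-nodes for $v$ happens only through the two operations, and only when $v$ is contained in a nontrivial bridge block $B_v$. So the first case is: if $v$ is not contained in any nontrivial bridge block, then $L(H)$ has a single C-node for $v$ and there is nothing to prove. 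Henceforth assume $B_v$ exists.

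\textbf{Key steps.} First I would pin down exactly which nodes in $L(H)$ correspond to $v$: the original C-node $c_0$ (the one already present in $F$), which is adjacent to $b_v$ (the BB-node of $B_v$) as either its parent or a child; plus the newly created C-nodes $c_1^{v,f}$ and $c_2^{v,f}$, one per bounded face $f$ of $B_v$ satisfying the nonemptiness condition. Crucially, every one of these new C-nodes is created as a \emph{child of $b_v$}. So all C-nodes for $v$ are neighbors of the single BB-node $b_v$ in $L(H)$: $c_0$ is adjacent to $b_v$, and each $c_i^{v,f}$ is a child of $b_v$. Second, I would argue that the path in the underlying undirected tree between any two C-nodes for $v$ must pass through $b_v$: since $L(H)$ is a tree and every C-node for $v$ is adjacent to $b_v$, the unique path between two such nodes $c, c'$ is exactly $c - b_v - c'$ (of length two), unless $c = c'$. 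Hence there are \emph{no} intermediate C-nodes at all on this path, which trivially implies that all intermediate C-nodes correspond to $v$.

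\textbf{Main obstacle.} The one subtlety to nail down carefully is the claim that \emph{all} C-nodes corresponding to $v$ are adjacent to $b_v$ — in particular that the operations never produce a C-node for $v$ that is a child of some other BB-node, nor a grandchild of $b_v$. For this I would trace through the definitions of the two operations: the first operation acts when $b_v$ has a C-node parent in $F$ corresponding to a cut vertex — but that cut vertex is the one via which $B_v$ attaches to the rest of $H$ from outside, and the newly created $c_1^{v,f}$ nodes are explicitly made children of $b_v$ (here $v$ denotes that parent cut vertex). Wait — I should be careful with notation: in the operation, the relevant cut vertex is the one on the parent/child C-node of $b_v$, and the new C-nodes $c_1^{v,f}$, $c_2^{v,f}$ correspond to \emph{that} vertex and are children of $b_v$. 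So indeed, for a fixed cut vertex $v$ lying in the nontrivial block $B_v$, every C-node of $L(H)$ for $v$ is incident to $b_v$. The only residual check is that no \emph{other} mechanism (e.g. the original C-node of a neighboring trivial block, or reordering of children) creates a second C-node for $v$ detached from $b_v$; but trivial blocks incident to $v$ are all neighbors of $b_v$ in $F$ and their shared cut vertex node with $b_v$ is exactly $c_0$ (or one of the $c_i^{v,f}$), so nothing new arises. Once this adjacency claim is established, the lemma follows immediately because the path between two neighbors of the same tree node $b_v$ has $b_v$ as its only internal vertex.
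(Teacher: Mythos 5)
Your proposal is correct and follows essentially the same route as the paper's proof: both rest on the observation that, because nontrivial bridge blocks are non-neighboring, at most one nontrivial block $B_v$ can be responsible for replicating the C-node for $v$, and every replica $c_i^{v,f}$ is created as a child of $b_v$ while the original C-node $c_0$ remains adjacent to $b_v$. You spell out the paper's terse final step (``from the construction, the property holds'') by concluding that all C-nodes for $v$ are neighbors of $b_v$, so the path between any two of them has $b_v$ as its sole internal vertex and the lemma holds vacuously.
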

\begin{proof}
This is immediate from the construction of $L(H)$. Indeed, the C-node corresponding to a cut vertex $v$ is only replicated in $c_1^{v,\cdot}$ or $c_2^{v,\cdot}$ for a particular nontrivial bridge block $B$. As nontrivial bridge blocks are not neighboring, the sets of cut vertices contained in nontrivial bridge blocks are pairwise disjoint. Hence, there can be at most one such nontrivial bridge block $B$ that is responsible for replicating the C-node corresponding to $v$. From the construction, the property set forth in the lemma holds.
\end{proof}

\section{Basic Properties and Connectivity of the Planar Multiway Cut Dual} \label{sec:connected}

Let $G=(V, E)$ be a connected simple undirected planar graph on $n$ vertices and $m$ edges with a fixed embedding. Let $T \subseteq V$ be a set of terminals. The set of faces covering all the terminals in $T$ is denoted by $\mathcal{F} = \{F_\one: 1 \leq \one \leq k\}$. \rev{Recall that we assume such a set to be part of the input; otherwsie, we may compute it using the algorithm of Bienstock and Monma~\cite{BienstockM88}}. We call these the \emph{terminal faces} of $G$. The edges of $E$ are weighted. By removing edges of weight $0$ or less and then scaling the weights of the remaining edges, we can obtain an equivalent instance with weights specified by the function $\omega : E \rightarrow [1,\ldots,W]$ for some integer $W$.
Note that during this transformation, possibly, the set of terminal faces changes, but there will still be at most $k$ of them. Moreover, the graph might become disconnected, but we can solve the instance associated with each connected component independently. Hence, by abuse of notation, we may assume that our instance is still defined by $G$, $T$, $\mathcal{F}$, and $\omega$ as defined previously.

In the remainder, we use edge weight $\infty$ to indicate undeletable edges. Instead of $\infty$, one could use $mW+1$, but using $\infty$ simplifies later notation. Note that the initial instance has no undeletable edges, and thus has a finite-weight solution. In future transformations and reductions, we shall always maintain the property that a finite-weight solution exists. By abuse of notation, we still use $\omega : E \rightarrow [1,\ldots,W] \cup \{\infty\}$ to denote the weights.

Arbitrarily assign each terminal $t \in T$ to a face of $\mathcal{F}$ that has $t$ on its boundary.
For each face $F_\one \in \mathcal{F}$, let $T_\one \subseteq T$ be the set of terminals on the boundary of $F_\one$ that are assigned to $F_\one$. Let $p_\one = |T_\one|$. We may assume that $p_\one > 0$ for each terminal face, or we could reduce the set of terminal faces. Observe that the sets $T_\one$ form a partition of $T$. Note that $\mathcal{F}$ can be partitioned into $\mathcal{F}_1 = \{F_\one \mid p_\one = 1\}$ (called the \emph{singular faces}) and $\mathcal{F}_2 = \mathcal{F} \setminus \mathcal{F}_1$ (called the \emph{plural faces}); possibly, one of these sets is empty. 

For a terminal face $F_\one$, we order the terminals in $T_\one$ as follows. Note that $G$ is connected and thus the boundary of $F_\one$ forms a closed walk. Pick an arbitrary starting vertex on this closed walk. Now traverse the walk in clockwise direction and add a terminal to the ordering at the first moment it is encountered. Index the terminals in $T_\one$ as $\terminal{\one}{1},\ldots,\terminal{\one}{p_\one}$ according to this ordering.

\rev{We now reduce the instance to a more structured instance, extending ideas of Chen and Wu~\cite[Lemma 8]{Chen-Wu}.}

\begin{definition} \label{def:transform}
\rev{An instance $(G,T,\omega,\mathcal{F})$ is \emph{transformed} if
\begin{itemize}
\item $G$ is bridgeless;
\item the faces of $\mathcal{F}$ are vertex disjoint;
\item all vertices of each face in $\mathcal{F}$ are terminals;
\item each face of $G$ that is not in $\mathcal{F}$ has length at most $3$;
\item each face of $G$ that has length~$3$ or is in $\mathcal{F}$ neighbors only faces of length~$2$.
\end{itemize}}
\end{definition}
\rev{The final condition here is only needed to deal with a particular edge case that appears much later in the paper (Remark~\ref{rem:problem}).}

\begin{lemma} \label{lem:transformnew}
\rev{An instance $(G,T,\omega,\mathcal{F})$ can be reduced in polynomial time to an equivalent transformed instance.}
\end{lemma}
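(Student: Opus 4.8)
The plan is to turn $(G,T,\omega,\mathcal{F})$ into a transformed instance by a short sequence of polynomial-time local modifications, each \emph{equivalent} in the strong sense that a minimum multiway cut of one can be recovered in polynomial time from a minimum multiway cut of the other. To keep track of the objective value effortlessly, I would first scale every weight of $G$ by a fixed polynomial factor $M$ larger than the total number of edges that will be added in the later steps; this is trivially equivalence-preserving and multiplies the optimum by $M$. Each edge added later will have weight $1$, so, since fewer than $M$ of them are added, the optimum of the eventual instance equals $M\cdot\mathrm{opt}(G,T,\omega)+r$ with $0\le r<M$, up to the constant factor introduced by the doubling step described below. Both $\mathrm{opt}(G,T,\omega)$ and a minimum multiway cut of $G$ are then recovered by elementary bookkeeping: integer division for the value, and for the cut, intersecting a minimum multiway cut of the eventual instance with the edges descended from $E(G)$ --- by the standard argument that if $G\subseteq G'$ and $C$ is a multiway cut of $G'$ then $C\cap E(G)$ is a multiway cut of $G$, this intersection is a multiway cut of $G$, whose weight, being a multiple of $M$ in $[M\cdot\mathrm{opt}(G),\,M\cdot\mathrm{opt}(G)+M)$, forces it to be minimum.

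Next I would carry out the terminal-face surgery, extending Chen and Wu~\cite[Lemma~8]{Chen-Wu}. For each terminal face $F_\alpha$, using the cyclic order $\terminal{\alpha}{1},\ldots,\terminal{\alpha}{p_\alpha}$ of the terminals assigned to $F_\alpha$ along $\partial F_\alpha$, I would draw inside $F_\alpha$, for every $i$ (indices mod $p_\alpha$), a weight-$1$ edge $g_i$ from $\terminal{\alpha}{i}$ to $\terminal{\alpha}{i+1}$ hugging the boundary sub-walk between them (reusing an existing boundary edge when those two terminals are already adjacent there). The $g_i$ then form a simple cycle $\widetilde F_\alpha$ near $\partial F_\alpha$ that splits $F_\alpha$ into $\widetilde F_\alpha$ and one ``collar'' face per sub-walk; taking $\{\widetilde F_\alpha\}$ as the new family of terminal faces makes the terminal faces pairwise vertex-disjoint (the $T_\alpha$ partition $T$) and all-terminal, which are the second and third conditions of Definition~\ref{def:transform}. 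Equivalence holds because each $g_i$ joins two distinct terminals and therefore lies in every multiway cut: deleting $g_i$ together with any multiway cut from $G+g_i$ leaves exactly $E(G)$ minus that cut, so stripping all added edges off a minimum multiway cut of the surgered instance yields a minimum multiway cut of $G$.

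The two remaining conditions I would obtain by triangulation and then doubling. Triangulation: while some face $f\notin\mathcal{F}$ --- including the collar faces just created --- has length more than $3$, add a weight-$1$ chord between two non-adjacent vertices of $\partial f$ that cuts $f$ into two strictly shorter faces; this halts after polynomially many additions with every non-terminal face of length at most $3$, and equivalence is immediate from the subgraph argument together with the scaling above (the chords contribute less than $M$ to the optimum), so no uncrossing argument is needed. Doubling: replace every edge by two parallel copies of the same weight drawn side by side; since a minimum multiway cut never deletes exactly one copy of a pair, this is equivalence-preserving (the optimum doubles), the multigraph becomes bridgeless (every edge lies in a parallel pair), and every face of the pre-doubling graph --- in particular every length-$3$ face and every face of $\mathcal{F}$ --- now neighbors only the new length-$2$ faces. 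Since triangulation and doubling leave the terminal faces untouched and doubling preserves the length of every old face while creating only length-$2$ faces, all conditions of Definition~\ref{def:transform} hold in the end, and the reduction is polynomial-time.

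The step needing the most care is the terminal-face surgery, in particular embedding the cycle $\widetilde F_\alpha$ when $G$ is not $2$-connected: then $\partial F_\alpha$ is a closed walk that may revisit cut vertices (terminals not in $T_\alpha$, or even terminals of $T_\alpha$ more than once), so the collar sub-walks need not have terminal-free interiors, and one must verify that the $g_i$ can still be drawn pairwise non-crossingly inside the disk bounded by $\partial F_\alpha$ and that $\widetilde F_\alpha$ comes out as a simple cycle through each vertex of $T_\alpha$ exactly once. The triangulation and doubling steps are routine by comparison, now that the scaling of the first paragraph removes any need to preserve the optimum exactly.
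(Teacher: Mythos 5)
The proposal is correct and follows essentially the same route as the paper's proof: terminal-face surgery to isolate each terminal face, triangulation of non-terminal faces, and edge doubling to force bridgelessness and length-$2$ neighbors, with weights scaled so the original edges dominate and the correspondence of optima can be recovered by rounding. The only differences are cosmetic --- you scale up front and double at the very end, where the paper folds the $3n$ weight blow-up into the doubling step and triangulates afterward, and you reuse an existing boundary edge where the paper instead subdivides it and adds a fresh parallel edge --- and neither of these affects the argument.
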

\begin{proof}
\rev{As a first step, we make sure that the faces of $\mathcal{F}$ are vertex disjoint and all vertices of each face in $\mathcal{F}$ are terminals.}
For each terminal face $F_\one \in \mathcal{F}$, add an edge of weight $1$ from $\terminal{\one}{i}$ to $\terminal{\one}{i+1}$ (indices modulo $p_\one$) for each $1 \leq i \leq p_\one$. If an edge $e$ from $\terminal{\one}{i}$ to $\terminal{\one}{i+1}$ already existed, first subdivide $e$ to obtain edges $e_1$ and $e_2$, and set the weight of $e_1$ and $e_2$ to $\omega(e)$; then add the new edge. Call the resulting graph $G'$ and the resulting weight function $\omega'$. Because $G$ is connected and thus the boundary of $F_\one$ forms a closed walk, the new edges can be embedded inside the corresponding terminal faces and thus $G'$ is planar. The embedding of $G$ can be extended to $G'$ in a natural way. In particular, a new face $F'_\one$ is created for each face $F_\one$ whose boundary consists of the vertices of $T_\one$ and the newly created edges. \rev{Recall that the sets $T_\one$ are disjoint by construction. Thus, the faces $F'_\one$ are pairwise vertex disjoint.} Hence, $G'$ and the set $\rev{\mathcal{F}' =} \{F'_\one \mid F_\one \in \mathcal{F}\}$ has \rev{the required two properties.}

To show equivalence, we observe that $(G,T,\omega)$ has a multiway cut of weight $K$ if and only if $(G',T,\omega')$ has a multiway cut of weight $K+\sum_{F_\one : p_\one > 1} p_\one$. Indeed, all newly added edges for faces with $p_\one > 1$ must be in any multiway cut of $(G',T,\omega')$. After removing those edges, the remaining graph is $G$; note that the subdivisions that were potentially performed do not affect anything.

\rev{By abuse of notation, we still denote the resulting instance by $(G,T,\omega, \mathcal{F})$. Let $n = |V(G)|$ as before. Let $\mathcal{G}$ denote the set of all faces of $G$. We now wish to triangulate $G$. To ensure that the new edges created during the triangulation do not disturb the optimal solution, we would ideally give them weight $0$. However, in our setting it is important that we work with positive weights. We also work with integral weights. Hence, we first make the original edges `heavy' instead. We proceed as follows.}

\rev{As a second step, add a copy $e'$ of each edge $e=(u,v)$ and give both $e$ and $e'$ weight $3n\cdot \omega(e)$. Embed $e'$ naturally (following the same `route' of $e$). Let $G'$ be the resulting plane graph. Observe that there is an obvious injection from the faces of $G$ to the faces of $G'$. Therefore, with some abuse, we may say that the faces of $\mathcal{G}$ also appear in $G'$.}

\rev{As a third step, triangulate each face in $\mathcal{G} \setminus \mathcal{F}$ in $G'$. Replace each new edge $f$ created during the triangulation by two new edges. Embed these naturally (following the same `route' of $f$). Let $G''$ be the resulting plane graph. Let $X$ denote the set of all newly created edges in this third step. Give each edge of $X$ weight~$1$. Let $\omega''$ be the resulting weight function of the edges of $G''$. Observe that $\omega''(X) < 6n$, because a simple planar graph can have at most $3n-6$ edges by Euler's formula.}

\rev{We first show that $(G,T,\omega)$ has a multiway cut of weight (under $\omega$) at most $K$ if and only if $(G'',T,\omega'')$ has a multiway cut of weight (under $\omega''$) strictly less than $6n(K+1)$. Suppose that $(G,T,\omega)$ has a multiway cut $C$ of weight (under $\omega$) at most $K$. Construct a set $C'' \subseteq C$ by for each edge $e \in C$ adding both $e$ and $e'$ to $C''$. Clearly, $C'' \cup X$ is a multiway cut of $(G'',T,\omega'')$ and has weight (under $\omega''$) at most $6nK + \omega''(X)$, which is strictly less than $6n(K+1)$.} 

\rev{For the converse, suppose that $(G'',T,\omega'')$ has a multiway cut $C''$ of weight (under $\omega''$) strictly less than $6n(K+1)$. We may assume that $C''$ is minimal. Then, for each edge $e \in E(G)$, both $e$ and $e'$ are in $C''$ or neither of them is. Let $C$ be the set of edges $e \in E(G)$ for which both $e$ and $e'$ are in $C''$. By the construction of the weights, each edge of $e \in C$ accounts for $6n \cdot \omega(e) \geq 6n$ weight of $C''$. As $\omega''(X) < 6n$ and $\omega''(C'') < 6n(K+1)$, it follows that $C$ has weight (under $\omega$) at most $K$. Moreover, any path in $G$ corresponds to a path in $G''$ that avoids the edges of $X$. Hence, $C$ is a multiway cut of $(G,T,\omega)$.}

\rev{Finally, we verify that $(G'',T,\omega'',\mathcal{F})$ satisfies all properties of Definition~\ref{def:transform}. $G''$ is bridgeless, because there are at least two paths between pair of adjacent vertices. The first step already ensured that the faces of $\mathcal{F}$ are vertex disjoint and all vertices of each face in $\mathcal{F}$ are terminals; this did not change during the construction of $G''$. Each face of $G''$ that is not in $\mathcal{F}$ has length at most~$3$, by the triangulation of the third step and the embedding of the newly created edges in the second and third. Finally, each face of $G$ that has length~$3$ or is in $\mathcal{F}$ neighbors only faces of length~$2$, by the duplication of edges in the second step and the added parallel triangulation edges in the third step.}
\end{proof}
We note that in a \rev{transformed instance, $G$ is not necessarily simple, but may} have parallel edges or self-loops.
In the remainder, we assume that the instance is transformed.

\subsection{Dual, Cuts, and Connectivity Properties}
Let $G^*$ be the dual of $G$. By definition, $G^*$ has an embedding in the plane such that each vertex of $G^*$ is embedded in the corresponding face and each dual edge crosses the corresponding primal edge exactly once and no other edges. For practical purposes, any time we consider a set $C^*$ of dual edges, we also denote by $C^*$ the subgraph of the dual induced by the edges in $C^*$. Then $C^*$ is again a planar graph with an embedding where each edge of $C^*$ is embedded as it is in the embedding of $G^*$. We denote by $C$ the set of edges in $G$ corresponding to the dual edges in $C^*$.

The following was observed by Dahlhaus \etal~\cite{DJPSY94}, based on the original observation of Reif~\cite{Reif83}.

\begin{proposition}\label{prp:dualfaces}
Let $C$ be a (minimum) multiway cut of $(G,T,\omega)$ and let $C^*$ be the set of corresponding dual edges. Then each face of $C^*$ encloses at most (exactly) one terminal.
\end{proposition}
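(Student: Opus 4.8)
The plan is to prove the two directions separately: first, that every face of $C^*$ encloses \emph{at most} one terminal when $C$ is any multiway cut, and second, that every face encloses \emph{exactly} one terminal when $C$ is a minimum (equivalently, inclusion-wise minimal) multiway cut. The unifying idea is the standard duality between cuts in $G$ and cycle spaces in $G^*$: a set $C \subseteq E(G)$ is an $(s,t)$-cut if and only if the corresponding dual edge set $C^*$ ``separates'' the dual vertices $v^*(f_s)$ and $v^*(f_t)$ for the faces $f_s, f_t$ on which $s$ and $t$ lie, in the sense that every face of the plane subgraph $C^*$ that contains one of them in its interior cannot contain the other. More concretely, I would use the fact (going back to Reif~\cite{Reif83} and used by Dahlhaus \etal~\cite{DJPSY94}) that two vertices $u,v$ of $G$ are separated by the edge set $C$ if and only if, in the embedding of the plane subgraph $C^*$, the points corresponding to $u$ and $v$ lie in different faces of $C^*$.

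For the first direction (``at most one''), suppose for contradiction that some face $\Phi$ of $C^*$ encloses two distinct terminals $u$ and $v$. Since $u$ and $v$ both lie in the open region $\Phi$ (or on its boundary — but terminals are vertices of $G$, and since $C$ separates all terminals, no terminal can lie on an edge arc of $C^*$, hence each terminal lies in the open interior of exactly one face of $C^*$), there is a Jordan arc inside $\Phi$ connecting the vertex point of $u$ to that of $v$ that crosses no edge of $C^*$. Translating back to the primal graph, this yields a walk in $G$ from $u$ to $v$ using no edge of $C$: whenever the arc passes from one face of $G$ to an adjacent face, it crosses a primal edge not in $C$ (because the dual edge is not in $C^*$), and we can route through the shared primal vertex. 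Hence $C$ fails to separate $u$ from $v$, contradicting that $C$ is a multiway cut. The main point to get right here is the precise correspondence between ``arc avoiding $C^*$'' and ``path in $G-C$''; this is routine but should be stated carefully, perhaps citing the embedding properties in Section~\ref{sec:topology} and Reif~\cite{Reif83}.

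For the second direction (``exactly one'', when $C$ is minimum — I would first note that a minimum multiway cut is inclusion-wise minimal, so it suffices to treat minimal $C$), I argue that every face of $C^*$ encloses at least one terminal. Suppose some face $\Phi$ of $C^*$ encloses no terminal. Consider the boundary $\bd{\Phi}$; since $C^*$ (viewed as a subgraph) need not be connected, $\bd{\Phi}$ may consist of several closed walks, but at least one of them, say $W^*$, is a closed walk in $C^*$ that, together with $\Phi$, bounds a region. The dual edges of $W^*$ correspond to a set $D \subseteq C$ of primal edges forming an edge cut in $G$ that separates the primal vertices strictly enclosed by (the region bounded on the $\Phi$-side of) $W^*$ from the rest. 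If $\Phi$ contains no terminal, then removing $D$ from $C$ still separates every pair of terminals: any path in $G - (C \setminus D)$ between two terminals would have to cross $W^*$, hence use an edge of $D$, but then it leaves and must re-enter the terminal-free region bounded by $W^*$, so it crosses $W^*$ again, contradicting that the two endpoints are terminals outside that region — more carefully, one shows a path between two terminals cannot have both endpoints outside the $\Phi$-region while passing through it only via $D$-edges unless it already used an edge of $C \setminus D$. This contradicts minimality of $C$. The delicate step — and the one I expect to be the main obstacle — is handling the possible disconnectedness of $C^*$ and making the ``closed walk on the boundary of $\Phi$ is an edge cut of $G$'' argument fully rigorous; one clean way is to pick a minimal closed subwalk of $\bd\Phi$ and invoke the fact that in a plane graph the boundary of a face, suitably decomposed, corresponds in the dual to a cut, i.e.\ use the cycle-space/cut-space duality of planar graphs directly (the dual of a minimal edge cut in $G$ is a simple cycle in $G^*$, and such a cycle bounds a disk one side of which is terminal-free).
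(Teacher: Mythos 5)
The paper gives no proof of this proposition; it attributes the fact to Dahlhaus \etal~\cite{DJPSY94} and Reif~\cite{Reif83}, so there is no internal argument of the paper to compare your proposal against. Your first direction (``at most one'') is the standard planar cut--face duality and is correct.

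The second direction has a genuine gap. You remove from $C$ the entire set $D$ of primal edges dual to a closed boundary walk $W^*$ of a terminal-free face $\Phi$ and claim $C - D$ remains a multiway cut because a terminal-to-terminal path would have to ``cross $W^*$ twice, contradicting that the endpoints lie outside.'' That is not a contradiction: a path may enter the $\Phi$-region through one $D$-edge and leave through a \emph{different} $D$-edge, ending in a different face of $C^*$, and such a path uses no edge of $C \setminus D$. Equivalently, deleting all of $W^*$ from $C^*$ can merge $\Phi$ with several other faces at once, each possibly holding a distinct terminal (for instance, if $C^*$ were two triangles sharing one edge and $\Phi$ the outer face, then $W^*$ is the outer simple $4$-cycle, and removing it collapses all three faces into one). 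Picking a ``minimal closed subwalk'' does not rescue this, since that $4$-cycle is already simple. The repair is to delete a \emph{single} dual edge $e^*$ on $\bd\Phi$: if $e^*$ is not a bridge of $C^*$, its removal merges $\Phi$ only with the one face $\Phi'$ on its other side, which by the ``at most one'' direction holds at most one terminal, so the merged face still does; if $e^*$ is a bridge, removal changes no enclosures (cf.\ the proof of Lemma~\ref{lem:dualbridgeless}). Either way $C \setminus \{e\}$ is still a multiway cut by the iff form of the cut--face duality, contradicting inclusion-wise minimality; and since the paper scales weights to be positive, a minimum multiway cut is inclusion-wise minimal, completing the ``exactly one'' direction.
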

We will often use this fact without explicitly referring to it. We also require the following structural properties of $C^*$. 

\begin{lemma}\label{lem:dualbridgeless}
Let $C$ be any inclusion-wise minimal multiway cut of $(G,T,\omega)$ and let $C^*$ be the set of corresponding dual edges. Then $C^*$ is bridgeless.
\end{lemma}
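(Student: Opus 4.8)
The plan is to argue by contradiction: suppose $C^*$ has a bridge $e^*$, corresponding to a primal edge $e \in C$. I would exploit the duality between bridges and the structure of faces. The key observation is that $e^*$ being a bridge of the subgraph $C^*$ means that removing $e^*$ disconnects $C^*$, so $e^*$ is not on any cycle of $C^*$. In the dual, a cycle of $C^*$ corresponds to a minimal edge cut in the primal, so the statement should reduce to showing that every edge of an inclusion-wise minimal multiway cut lies on some ``cycle'' of $C^*$, equivalently that $e$ is essential for separating two terminals that still need separating after the rest of $C$ is in place.

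More concretely, here is the route I would take. By minimality of $C$, the edge $e$ is needed: there exist two distinct terminals $t_1, t_2 \in T$ such that $C \setminus \{e\}$ leaves a path $P$ between $t_1$ and $t_2$ in $G$, and every such path uses $e$. So $e$ lies on every $(t_1,t_2)$-path in $G - (C \setminus \{e\})$; in other words, $e$ is a bridge separating $t_1$ from $t_2$ in the graph $G - (C \setminus \{e\})$. Now pass to the dual picture. Using Proposition~\ref{prp:dualfaces}, in $C^*$ each face encloses at most one terminal, and $t_1$ and $t_2$ lie in \emph{different} faces of $C^*$ (otherwise they would not be separated by $C$). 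The edge $e^*$ lies on the boundary between two faces $f_1^*, f_2^*$ of $C^*$; since $C^*$ is bridgeless would be the desired conclusion, I instead assume $e^*$ \emph{is} a bridge, so (by the basic proposition on bridges and faces) the two sides of $e^*$ in the embedding of $C^*$ belong to the \emph{same} face of $C^*$. I would then show this forces $C \setminus \{e\}$ to already separate $t_1$ from $t_2$, contradicting the necessity of $e$: because $e^*$ does not genuinely split a face, walking around that face gives a closed curve avoiding the rest of $C^*$ that can be pushed off $e^*$, hence a path in the primal from one side of $e$ to the other avoiding all of $C \setminus \{e\}$, which when spliced with the two halves of $P$ reconnects $t_1$ and $t_2$ using only edges of $C \setminus \{e\}$ — contradiction.

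I expect the main obstacle to be making the topological ``push off the bridge'' argument rigorous and correctly translating between (i) $e^*$ being a bridge of the embedded subgraph $C^*$, (ii) the local face structure of $C^*$ around $e^*$, and (iii) connectivity in the primal graph $G - (C \setminus \{e\})$. One has to be careful that $C^*$ is only a subgraph of the dual $G^*$, so its faces are unions of faces of $G^*$ (and vertices of $C^*$ are vertices of $G^*$, which are primal faces); the cleanest formulation is probably to say: the face of $C^*$ on both sides of $e^*$ corresponds, in the primal, to a connected region of $G$ whose edges all avoid $C$, and this region is incident to both endpoints of $e$ (viewing $e$'s endpoints as lying on the boundary between the primal faces dual to the endpoints of $e^*$). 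A slightly different but equivalent packaging: in $G - (C\setminus\{e\})$, the endpoints $u,v$ of $e$ lie in the same connected component (because the dual face on both sides of $e^*$ connects the two primal faces flanking $e$, hence connects $u$ and $v$ through edges outside $C$), so $e$ is not a bridge of $G-(C\setminus\{e\})$, so $C\setminus\{e\}$ already separates $t_1$ from $t_2$, contradiction.

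Alternatively — and this may be the shorter write-up — one can avoid terminals entirely and argue purely about cuts: an inclusion-wise minimal multiway cut $C$ is a disjoint union of \emph{minimal} edge cuts between the components of $G - C$; more precisely, for each $e \in C$ there are adjacent components $A, B$ of $G-C$ with $e$ the only edge of $C$ between them being unnecessary is false, rather $e$ is needed to separate two components each containing a terminal. The cleanest lemma to invoke is that in a plane graph, a set $C^*$ of dual edges is bridgeless if and only if the corresponding $C$ is a disjoint union of inclusion-minimal edge cuts (``cuts'' in the sense of bonds), and then observe an inclusion-wise minimal multiway cut has exactly this form. I would state this as a claim and prove it via the contradiction above. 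Either way, the heart of the matter is the dual correspondence \emph{bridge in $C^*$} $\iff$ \emph{the primal endpoints of the corresponding edge lie in the same component of $G - (C\setminus\{e\})$}, and once that is established the minimality of $C$ finishes the proof immediately.
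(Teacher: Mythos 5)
Your proposal is correct and is essentially the paper's proof: both arguments hinge on the same observation that a bridge $e^*$ of $C^*$ has the same face of $C^*$ on both sides, so deleting $e^*$ does not change which terminals any face of $C^*$ encloses, hence $C\setminus\{e\}$ is still a multiway cut, contradicting minimality. The paper phrases this directly in the dual (``removing $e^*$ does not change the set of vertices of $G$ enclosed by the face''), while you unfold the equivalent primal statement (the endpoints of $e$ remain connected in $G-(C\setminus\{e\})$), but it is the same argument.
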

\begin{proof}
Let $e^*$ be a bridge in $C^*$. Then there is a face of $C^*$ for which $e^*$ appears on the boundary twice. Removing $e^*$ from $C^*$ does not change the set of vertices of $G$ enclosed by the face. Hence, $C-\{e\}$ is a \rev{multiway cut of $(G,T,\omega)$}. This contradicts that $C$ is inclusion-wise minimal.
\end{proof}

\rev{Recall that we assume throughout that the instance is transformed.}

\begin{lemma}\label{lem:alldualincident}
Let $C$ be any inclusion-wise minimal multiway cut of $(G,T,\omega)$ and let $C^*$ be the set of corresponding dual edges. Then for any dual vertex $v_\one$ corresponding to a plural terminal face $F_\one \in \mathcal{F}$, all dual edges incident to $v_\one$ are in $C^*$.
\end{lemma}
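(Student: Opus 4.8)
The plan is to argue by contradiction: suppose some dual edge $e^*$ incident to $v_\one$ is \emph{not} in $C^*$, where $v_\one$ is the dual vertex of a plural terminal face $F_\one$. Recall that in a transformed instance the faces of $\mathcal{F}$ are vertex disjoint, all their vertices are terminals, and — crucially — in the first step of Lemma~\ref{lem:transformnew} we added, for each plural face, an edge of weight $1$ between every pair of consecutive terminals $\terminal{\one}{i}, \terminal{\one}{i+1}$ along the boundary walk. So the boundary of $F_\one$ is a cycle $\terminal{\one}{1} \terminal{\one}{2} \cdots \terminal{\one}{p_\one} \terminal{\one}{1}$ all of whose vertices are distinct terminals. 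The dual edges incident to $v_\one$ are precisely the duals of the boundary edges of $F_\one$. If $e^*$ is the dual of the boundary edge $\terminal{\one}{i}\terminal{\one}{i+1}$ and $e^* \notin C^*$, then the corresponding primal edge $\terminal{\one}{i}\terminal{\one}{i+1}$ survives in $G - C$, so $\terminal{\one}{i}$ and $\terminal{\one}{i+1}$ lie in the same connected component of $G - C$. But these are two distinct terminals, contradicting that $C$ is a multiway cut of $(G,T,\omega)$.

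I would phrase this cleanly as follows. By definition of the dual and of $C$, the primal edge $e$ corresponding to $e^*$ is not in $C$; hence its two endpoints are in the same component of $G - C$. Since $e$ is a boundary edge of the plural face $F_\one$, and (in a transformed instance) every boundary edge of a plural face joins two distinct terminals of $T_\one$, this immediately violates the multiway cut property. Therefore every dual edge incident to $v_\one$ must lie in $C^*$.

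The only point that needs a little care — and the main (minor) obstacle — is making sure that every boundary edge of a plural terminal face genuinely has two distinct terminal endpoints, rather than being a self-loop or a chord whose endpoints coincide. This is exactly what the transformation guarantees: after the first step of Lemma~\ref{lem:transformnew}, the boundary of each plural face $F'_\one$ consists solely of the freshly added edges $\terminal{\one}{i}\terminal{\one}{i+1}$ (with subdivision vertices inserted only when a parallel edge already existed, which does not change the fact that consecutive terminals are distinct), and $p_\one \ge 2$ for plural faces means these endpoints are genuinely distinct terminals. After the second and third steps, the boundary of each terminal face is untouched. So the lemma is really just unwinding the definition of "transformed" together with the definition of a multiway cut; I would keep the write-up to a few lines.
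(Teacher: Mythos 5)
Your proof is correct and follows essentially the same route as the paper's: both reduce the lemma to the observation that in a transformed instance every boundary edge of a plural terminal face joins two distinct terminals (a consequence of the first step of Lemma~\ref{lem:transformnew}), so any multiway cut must contain all of them; the paper states this in one line while you unfold it as a contradiction argument. One small caution in your parenthetical: the subdivision in the transformation is applied to the \emph{pre-existing} edge $e$, not to the newly added edge, so the boundary of $F'_\one$ consists solely of fresh, unsubdivided edges between distinct terminals --- your conclusion is right, but the phrasing suggests the subdivision vertex might sit on the boundary of $F'_\one$, which it does not (and if it did, the lemma's conclusion that \emph{all} dual edges at $v_\one$ lie in $C^*$ would actually fail).
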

\begin{proof}
Since the instance is transformed, any edge of $F_\one$ is between a pair of distinct terminals, and thus must be in $C$.
\end{proof}

\begin{lemma}\label{lem:dualcutvertex}
Let $C$ be any multiway cut of $(G,T,\omega)$ and $C^*$ the set of corresponding dual edges. Then no terminal face of $\mathcal{F}$ corresponds to a cut-vertex of $C^*$. 
\end{lemma}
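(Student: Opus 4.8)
The plan is to argue by contradiction using Proposition~\ref{prp:dualfaces}: if some $F_\alpha \in \mathcal{F}$ has its corresponding dual vertex $v_\alpha = v^*(F_\alpha)$ as a cut-vertex of $C^*$, then I would exhibit a single face of $C^*$ that encloses two distinct terminals, which is impossible.

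First I would dispose of the singular terminal faces. In a transformed instance a singular face is bounded by a single self-loop at its unique terminal, so $v_\alpha$ has degree $1$ in $G^*$, hence degree at most $1$ in $C^*$, and such a vertex is never a cut-vertex. So it suffices to treat a plural terminal face $F_\alpha$. Since the instance is transformed, the boundary of $F_\alpha$ is a cycle through the terminals $T_\alpha$ (a digon when $|T_\alpha| = 2$); in particular each edge on the boundary of $F_\alpha$ joins two \emph{distinct} terminals, and such an edge lies in \emph{every} multiway cut $C$ (otherwise its two endpoints would be adjacent in $G - C$, hence unseparated), not only in the inclusion-wise minimal ones. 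Consequently all dual edges incident to $v_\alpha$ belong to $C^*$, so the rotation of edges around $v_\alpha$ — and with it the angular ``corners'' at $v_\alpha$ — is identical in $C^*$ and in $G^*$. Each corner at $v_\alpha$ corresponds to exactly one vertex of the boundary cycle of $F_\alpha$, i.e.\ to one terminal of $T_\alpha$; since that boundary is a simple cycle (or a digon), distinct corners correspond to distinct terminals, and the terminal $w$ of a corner lies in whichever face of $C^*$ occupies that corner, because the dual face $f^*(w)$ of $G^*$ sits entirely inside that corner (as $w$ lies on $F_\alpha$'s boundary cycle there) and $w \in f^*(w)$.

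Next I would invoke the elementary fact that a vertex of a connected plane graph is a cut-vertex if and only if it occurs at least twice on the boundary walk of some face, applied to the connected component of $C^*$ containing $v_\alpha$ (other components of $C^*$ do not touch $v_\alpha$, so locally at $v_\alpha$ a face of this component coincides with a face of $C^*$). The assumption that $v_\alpha$ is a cut-vertex then yields a face $f^*$ of $C^*$ that is incident to $v_\alpha$ in at least two distinct corners; by the previous paragraph, $f^*$ therefore encloses two distinct terminals of $T_\alpha$, contradicting Proposition~\ref{prp:dualfaces} and completing the proof.

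The only real work is the bookkeeping linking three a priori different objects at $v_\alpha$: the angular corners of the embedding of $C^*$, the faces of $C^*$ meeting $v_\alpha$ there, and the vertices (terminals) of the boundary cycle of $F_\alpha$; once these are aligned the contradiction is immediate. Transformedness is exactly what makes the alignment clean — it forces singular terminal faces to be single self-loops, forces every boundary edge of a plural terminal face to join two distinct terminals (so the whole boundary of $F_\alpha$, and hence all corners at $v_\alpha$, survive into $C^*$ for an arbitrary multiway cut), and forces the boundary to be a simple cycle (or digon) so that distinct corners see distinct terminals. I do not anticipate any serious difficulty beyond this; in particular the small face $|T_\alpha| = 2$ needs no separate treatment, as the same argument shows a single face of $C^*$ occupies both corners at $v_\alpha$ and hence encloses both of its terminals.
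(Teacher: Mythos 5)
Your proof is correct but follows a genuinely different route from the paper's. The paper argues via the maximal biconnected components of $C^*$ meeting at $v_\alpha$: it picks the one whose bounding cycle encloses the smallest region, observes another such block must lie in its outer face, and shows that the two terminals just outside the bounding cycle share a face of $C^*$. You instead exploit the correspondence, available exactly because the instance is transformed, between the angular corners at $v_\alpha$ in $C^*$ (which coincide with those in $G^*$, since every boundary edge of a plural terminal face joins two distinct terminals and therefore belongs to any multiway cut, minimal or not) and the terminals of $T_\alpha$, and then invoke the standard characterization of a cut vertex of a connected plane graph by repeated occurrence on a face boundary walk. This is more elementary, the singular case becomes trivial, and the contradiction with Proposition~\ref{prp:dualfaces} is immediate. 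It also dodges a wrinkle in the paper's own argument: the paper treats each block in $\mathcal{B}^*$ as bounded by a simple cycle, which presupposes the block is nontrivial, yet for a non-minimal $C$ the dual $C^*$ can have bridges.

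One step of yours needs more than the parenthetical remark you give it. You apply the face-walk characterization to the connected component $H$ of $C^*$ containing $v_\alpha$ and obtain a face $f$ of $H$ incident to $v_\alpha$ at two corners $c_1,c_2$, then conclude there is ``a face $f^*$ of $C^*$ incident to $v_\alpha$ at two corners,'' citing only the local agreement of the pictures of $H$ and $C^*$ around $v_\alpha$. That local agreement places each $c_i$ in a well-defined face $g_i$ of $C^*$, but does not by itself give $g_1=g_2$: a face of $H$ is a union of faces of $C^*$, and some other component of $C^*$ embedded inside $f$ could a priori sit between the two corners. The claim is nonetheless true and closes with a short topological argument: $\partial f$ is a connected subset of $H$, every other component of $C^*$ is compact and disjoint from $H$, hence at positive distance from $\partial f$, so a thin collar of $\partial f$ inside $f$ is a connected set avoiding $C^* \setminus H$ and containing both $c_1$ and $c_2$, which therefore lie in the same face of $C^*$. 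Adding this sentence (or proving the face-walk characterization directly for $C^*$ at $v_\alpha$ via the same observation) finishes the proof.
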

\begin{proof}
Suppose that $v_\one$ is a cut vertex of $C^*$ that corresponds to the terminal face $F_\one \in \mathcal{F}$. Let $\mathcal{B^*}$ be the set of maximal biconnected components of $C^*$ that intersect exactly in $v_\one$. Since $v_\one$ is a cut vertex, $|\mathcal{B^*}| \geq 2$.
For any maximal biconnected component $B^* \in \mathcal{B^*}$, there is a simple cycle $X_{B^*}$ of $G^*$ that determines the outer face of ${B^*}$, because ${B^*}$ is biconnected. We call this the \emph{bounding cycle} of ${B^*}$. Note that all of ${B^*}$ is enclosed by $X_{B^*}$. The planarity of $G^*$ ensures that no two bounding cycles of biconnected components in $\mathcal{{B^*}}$ can cross, and in fact, they intersect exactly in $v_\one$.

Choose a biconnected component ${B^*} \in \mathcal{{B^*}}$ for which its bounding cycle encloses the smallest region in the plane among all biconnected components in $\mathcal{{B^*}}$. Since the bounding cycles of the biconnected components of $\mathcal{{B^*}} \setminus\{{B^*}\}$ do not cross the bounding cycle of ${B^*}$ nor can they be enclosed by it (by definition of ${B^*}$), there is a biconnected component $\tilde{{B^*}}$ of $\mathcal{{B^*}} \setminus\{{B^*}\}$ in the outer face of ${B^*}$.

Consider the two edges of $X_{B^*}$ that are incident to $v_\one$. Let $e$ and $e'$ be the edges of $G$ dual to these edges. Now, let $t$ and $t'$ be the endpoints of $e$ and $e'$ that are not enclosed by $X_{B^*}$. Observe that $t$ and $t'$ are distinct terminals of $T_\one$ by the existence of $\tilde{{B^*}}$; indeed, $\tilde{{B^*}}$ is in the outer face of ${B^*}$ and encloses at least one terminal of $T_\one$ as the instance is transformed.

We claim that $t$ and $t'$ are in the same face of $C^*$, contradicting that $C$ is a multiway cut. Since no two bounding cycles of blocks of $\mathcal{B}^*$ can cross each other, if there were a face of $C^*$ enclosing $t$ but not $t'$, its boundary would contain at least one edge of $X_{B^*}$, namely the one dual to $e$. This, however, contradicts that $X_{B^*}$ is a bounding cycle.
\end{proof}

\rev{Using that the instance is transformed, we can observe the following fact.}

\begin{corollary}
\rev{Let $C$ be any multiway cut of $(G,T,\omega)$ and $C^*$ the set of corresponding dual edges. Then $C^*$ has no cut vertices.}
\end{corollary}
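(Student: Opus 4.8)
The plan is to combine the two preceding results, namely Lemma~\ref{lem:dualcutvertex} and Corollary~\ref{cor:...} (the one immediately before this statement, showing that no terminal face corresponds to a cut vertex of $C^*$). Wait — actually Lemma~\ref{lem:dualcutvertex} already establishes that no \emph{terminal face} corresponds to a cut vertex. So the remaining work is to rule out cut vertices of $C^*$ that correspond to \emph{non-terminal} faces of $G$. The key point, as the preamble "Using that the instance is transformed" signals, is the structural rigidity forced by Definition~\ref{def:transform}: every non-terminal face of $G$ has length at most $3$, and every face of length $3$ (as well as every terminal face) neighbors only faces of length $2$.

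First I would set up the argument by contradiction: suppose $v$ is a cut vertex of $C^*$ corresponding to a non-terminal face $f$ of $G$ of length at most $3$. By Proposition~\ref{prp:dualfaces}, each face of $C^*$ encloses at most one terminal, so I want to derive a contradiction with this, exactly as in the proof of Lemma~\ref{lem:dualcutvertex}. The degree of $v$ in $C^*$ is at most the degree of $v$ in $G^*$, which equals the length of $f$, so at most $3$. A cut vertex must have degree at least $2$, so $v$ has degree $2$ or $3$ in $C^*$. If $v$ has degree $2$ in $C^*$ it cannot be a cut vertex of the \emph{subgraph} $C^*$ unless $v$ is a cut vertex with both edges in the same block — but with only two incident edges the only way $v$ is a cut vertex is if those two edges lie in different blocks, i.e. $v$ is the meeting point of two pendant-ish pieces; this is possible. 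Hmm, so I actually need to handle degree $2$ and degree $3$ cases, and the transformed structure is what saves me.

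The cleanest route, I expect, is to mimic the proof of Lemma~\ref{lem:dualcutvertex} almost verbatim but observe that the final contradiction there only used that the region "outside" the smallest bounding cycle contained at least one terminal. In Lemma~\ref{lem:dualcutvertex} that followed because the instance is transformed and terminal faces are vertex-disjoint; here, for a non-terminal face, I instead want to exploit that $f$ neighbors only length-$2$ faces (a consequence of the length-$3$ or terminal-face condition). A length-$2$ face of $G$ corresponds to a multi-edge, i.e. a pair of parallel dual edges; such a configuration cannot create a cut vertex because the two parallel dual edges form a cycle through $v$, keeping $v$ inside a block. So I would argue: the at most three dual edges incident to $v$ cross the at most three edges bounding $f$; each such primal edge lies on a face of length $2$ on its other side, hence is parallel to another edge, hence its dual edge is parallel to another dual edge incident to the other endpoint — and more to the point, consecutive dual edges around $v$ are joined by the dual of a length-$2$ face, forming short cycles that force $v$ into a single block of $C^*$. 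The main obstacle, and where I would spend the most care, is making this "short cycles around $v$ prevent a cut vertex" argument fully rigorous: I need to show that whenever two dual edges incident to $v$ are both in $C^*$, they lie in a common cycle of $C^*$ (using the length-$2$ neighboring faces and minimality of $C$), and hence in a common block, so $v$ cannot be a cut vertex. Combined with Lemma~\ref{lem:dualcutvertex} handling the terminal-face case, this yields that $C^*$ has no cut vertices at all.

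Concretely, the steps in order: (1) By Lemma~\ref{lem:dualcutvertex}, no terminal face corresponds to a cut vertex of $C^*$, so assume for contradiction that a non-terminal face $f$ does. (2) Note $f$ has length $\le 3$ and neighbors only length-$2$ faces, so $\deg_{G^*}(v) \le 3$ and each primal edge on $\partial f$ is parallel to another edge (bounding the length-$2$ face on its far side). (3) Take a minimal $C$ (we may, since cut vertices of $C^*$ for some $C$ are what we must exclude; restrict to minimal $C$, then observe a general $C$ contains a minimal one whose dual is a subgraph, and a subgraph of a bridgeless graph... — actually I should be careful and just state it for minimal $C$ and then note the corollary as stated quantifies over \emph{any} $C$, so I may need: if $C^*$ has a cut vertex then so does the dual of any minimal $C' \subseteq C$... this needs a small extra argument, or I restrict the corollary's proof to the minimal case matching Lemma~\ref{lem:dualbridgeless}'s phrasing). (4) For the minimal case: any two dual edges incident to $v$ that are in $C^*$ are connected by a path in $C^*$ avoiding $v$, obtained via the length-$2$ faces, so $v$ lies in a single block — contradiction with $v$ being a cut vertex. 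Hence $C^*$ has no cut vertices. The subtlety in step (3)/(4) about which multiway cuts the statement ranges over is the one place I would double-check against the exact phrasing before finalizing.
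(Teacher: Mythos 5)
Your proposal heads in the wrong direction after a good start: you correctly reduce to a non-terminal face $f$ (via Lemma~\ref{lem:dualcutvertex}) and correctly observe that the corresponding dual vertex $v$ has degree at most~$3$ in $C^*$. But then you try to close the argument by exploiting the ``neighbors only length-$2$ faces'' clause of Definition~\ref{def:transform} to manufacture short cycles through $v$. That step does not hold up. A length-$2$ face $f'$ that neighbors $f$ across a primal edge $e$ corresponds in the dual to a degree-$2$ vertex $v'$ adjacent to $v$ via $e^*$; the other edge of $v'$ crosses the parallel copy of $e$ and leads to whatever face sits on the far side of that parallel edge --- there is no reason for it to return to $v$ or to a neighbor of $v$. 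So ``consecutive dual edges around $v$ are joined into short cycles'' is simply not established, and the claim that ``the length-$2$ faces force $v$ into a single block'' has no support. That condition in Definition~\ref{def:transform} is not used by the paper here at all; it only serves Remark~\ref{rem:problem} much later.

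The missing key idea is elementary and far more robust: by Lemma~\ref{lem:dualbridgeless}, $C^*$ is bridgeless, and a cut vertex of a bridgeless graph must have degree at least~$4$ (removing it leaves at least two components, and if only one edge went to some component, that edge would be a bridge). Since, in a transformed instance, the only dual vertices of degree more than~$3$ are those of terminal faces, and those are excluded by Lemma~\ref{lem:dualcutvertex}, there is nothing left. This avoids the degree-$2$/degree-$3$ case split entirely (those cases are instantly killed by bridgelessness) and makes no use of the length-$2$ neighboring property. Your side remark in step~(3) about the quantifier (``any'' vs.\ ``minimal'' multiway cut) is a fair observation: the paper's proof does invoke Lemma~\ref{lem:dualbridgeless}, whose hypothesis is minimality, so the corollary is really being established for minimal cuts; that, however, is a wrinkle in the paper's phrasing rather than something your alternative route would have repaired.
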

\begin{proof}
\rev{Suppose that $C^*$ has a cut vertex $v$. As $C^*$ is bridgeless by Lemma~\ref{lem:dualbridgeless}, $v$ must have degree at least~$4$ in $C^*$. As the instance is transformed, only dual vertices corresponding to faces in $\mathcal{F}$ can have degree more than~$3$. But this is not possible by Lemma~\ref{lem:dualcutvertex}. Hence, $v$ is not a cut vertex of $C^*$.}
\end{proof}

\subsection{Reduction to Connected Duals}\label{sec:connected-duals}
We \rev{show that, using $2^{\OO(k)}$ time}, we can reduce to the case where we may assume that the graph induced by the set of edges dual to the edges of any minimum multiway cut is connected. Intuitively, we want to focus on connected components of $C^*$ that are `innermost' in the embedding: none of its bounded faces encloses another connected component of $C^*$. We aim to solve a simplified instance for such a connected component separately (by an algorithm we describe in subsequent sections) and then solve the remaining instance recursively. To make the formalities work, we need an expansive definition of `innermost component', \rev{which we describe in a moment}.

\rev{Our approach extends the work of Klein and Marx~\cite[Lemma 3.1]{KM12}. The essence of both our approach and theirs is to consider the biconnected components of the dual of a solution. However, we consider them jointly, to build an idea of the `innermost' components of the dual of a solution. Klein and Marx instead treat them individually and branch on the subset of terminals covered by one. Then in their sub-instances, for a subset $X$ of terminals, they not only need to pairwise separate the terminals of $X$, but also need to separate $X$ from $T \setminus X$, \rev{leading to complex sub-instances}. We use the planarity of the solution to effectively argue that the latter condition is not necessary. Hence, our sub-instances are `normal' instances of {\sc Planar Multiway Cut}, without further constraints. This requires more involved topological tools and an involved defintion of ``innermost''. Regardless, we need substantially more effort to deal with plural faces.}

\subsubsection{Internal Sets}
We formalize the intuition of an `innermost component' of a plane graph with the following notion. \rev{This notion is more general than we need in this section, but it will be useful later to define it in this manner.}

\begin{definition}
Let $H$ be any plane graph. Let $J$ be the union of a subset of the biconnected components of $H$ \rev{and let $\zeta(H,J)$ denote the subgraph of $H$ formed from $H$ by removing all vertices and edges that are in $J$ but not in $H$. Then $J$ is an \defi{internal set} if there is a single face $f$ of $\zeta(H,J)$ that encloses $J$ and there is a single face $f'$ of $J$ that encloses $\zeta(H,J)$. We call $f$ the \defi{enclosing region} and $f'$ the \defi{exclosing region}.} The intersection of the enclosing region and exclosing region is called the \defi{middle region}.
\end{definition}

\begin{figure}[t]
    \centering
    \includegraphics[width=0.75\textwidth]{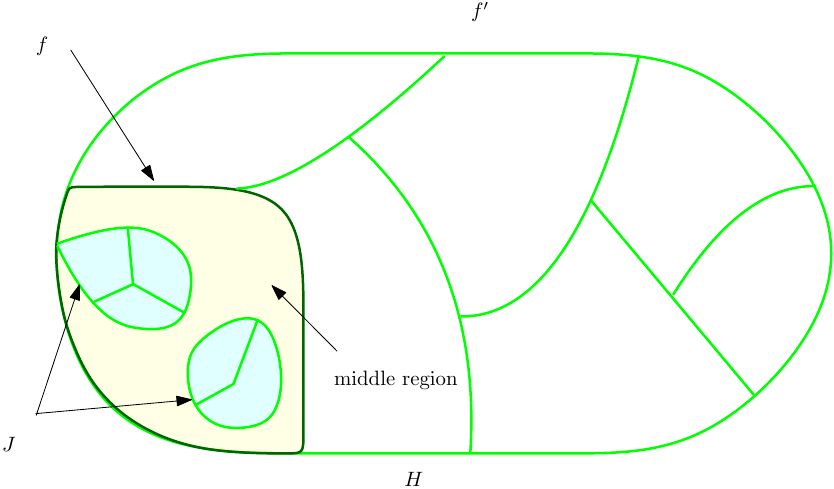}
    \caption{\rev{An example of an internal set. The plane graph $H$ is drawn in green and dark green. Two biconnected components, with faces shaded in blue, are indicated as $J$. Note that $J$ shares a cut vertex with $\zeta(H,J)$. The region of the plane enclosed by $f$ (the dark green cycle) is the enclosing region. The exclosing region $f'$ is the plane minus the two regions shaded in blue. The middle region is the region shaded in yellow.}}\label{fig:disconnectedsol}
\end{figure}

Refer to Figure~\ref{fig:disconnectedsol} for an illustration of the above definition.

Observe that neither the enclosing region nor the exclosing region of \rev{an internal set} is necessarily homeomorphic to a disk, nor are they necessarily equal. Also, the middle region is a face of $H$. In particular, \rev{either the enclosing or the exclosing region} is the outer face of $J$ or \rev{$\zeta(H,J)$}. Crucially, if \rev{$H$} is not connected, then any connected component of $H$ for which none of its bounded faces encloses another connected component of $H$ is an internal set. This corresponds to our earlier intuition that `innermost' connected components are internal sets (but the definition of internal sets is much more general).

\begin{remark}
By abuse of terminology, $H$ itself can be viewed as an internal set, with the complement of the outer face of $H$ as its enclosing region and the outer face as its exclosing and middle region. In particular, when $H = C^*$, where $C^*$ is the set of dual edges corresponding to \rev{some} multiway cut of $(G,T,\omega)$, the highly relevant Lemma~\ref{lem:notwoboundedfaces} below holds when using $C^*$ as the internal set.
\end{remark}

We now prove a useful property of internal sets.

\begin{proposition}\label{prp:internalcomp}
Let $J$ be an internal set of a plane graph $H$. Then \rev{$\zeta(H,J)$} is also an internal set.
\end{proposition}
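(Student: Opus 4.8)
The claim is symmetric in flavour: we already know $J$ is an internal set of $H$ (with enclosing region $f$, exclosing region $f'$), and we want to conclude that $\zeta(H,J)$ is also an internal set of $H$. By definition this means we must exhibit (i) a single face of $\zeta(H, \zeta(H,J))$ that encloses $\zeta(H,J)$, and (ii) a single face of $\zeta(H,J)$ that encloses $\zeta(H, \zeta(H,J))$. The first thing I would do is unwind the notation $\zeta(H, \zeta(H,J))$: since $\zeta(H,J)$ is the subgraph of $H$ obtained by deleting exactly the vertices and edges that lie in $J$ but not in $\zeta(H,J)$, the graph $\zeta(H, \zeta(H,J))$ is obtained from $H$ by deleting exactly the vertices/edges of $\zeta(H,J)$ that are not in $J$. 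In other words $\zeta(H, \zeta(H,J))$ is essentially $J$ together with (possibly) some cut vertices shared with $\zeta(H,J)$ — so up to the shared cut vertices, $\zeta(H, \zeta(H,J)) = J$. This is the key bookkeeping step; once it is in place the proposition almost reads off from the definitions with the roles of "enclosing'' and "exclosing'' swapped.

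\textbf{Key steps.} First, establish the identity $\zeta(H, \zeta(H,J)) = J$ (modulo shared cut vertices, which are harmless for the enclosure statements since they lie on the relevant face boundaries). Second, I would check that the exclosing region $f'$ of $J$ — which is the unique face of $J$ enclosing $\zeta(H,J)$ — is also the unique face of $\zeta(H,\zeta(H,J))$ that encloses $\zeta(H,J)$; this is immediate from the identity just proved, because "the unique face of $J$ that encloses $\zeta(H,J)$'' is literally the required object in (i). Third, for (ii) I would use the enclosing region $f$ of $J$ — the unique face of $\zeta(H,J)$ that encloses $J$ — and argue it is also the unique face of $\zeta(H,J)$ that encloses $\zeta(H,\zeta(H,J)) = J$; again this is the same object by the identity. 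The enclosing and exclosing regions of $\zeta(H,J)$ as an internal set are thus $f'$ and $f$ respectively — the roles are swapped — and the middle region (their intersection) is unchanged, which is a sanity check, since it should still be the single common face of $H$ that sits between $J$ and $\zeta(H,J)$.

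\textbf{Where the care is needed.} The main obstacle is not any deep topology but rather being careful about the shared cut vertices and about the fact that neither $f$ nor $f'$ need be homeomorphic to a disk (as the authors explicitly warn). When $J$ and $\zeta(H,J)$ share a cut vertex $v$, the graph $\zeta(H,\zeta(H,J))$ retains $v$ even though $v \in \zeta(H,J)$, so the identity "$\zeta(H,\zeta(H,J)) = J$'' is only true up to isolated-ish extra vertices like $v$; I would phrase this precisely as: $\zeta(H,\zeta(H,J))$ has the same edge set as $J$ and the same vertex set as $J$ together with the (finitely many) cut vertices shared with $\zeta(H,J)$, and such a vertex already lies on $J$ (being an endpoint of edges of $J$) so it does not create any new face and does not affect which face encloses what. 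With that observation recorded, the enclosure clauses transfer verbatim. A second small point: I should confirm that $f$ being a face of $\zeta(H,J)$ that encloses $J$ does indeed make it a face of $\zeta(H,\zeta(H,J))$ — i.e. that deleting from $\zeta(H,J)$ the part not in $J$ does not split $f$ into several faces or merge it with others in a way that destroys the "single face'' property; but since everything we delete lies strictly inside $f$ (it is $J$ minus its shared cut vertices, all enclosed by $f$), the face of $\zeta(H,\zeta(H,J))$ containing the exterior side of $f$ is exactly $f$ again, now possibly enlarged to absorb the interior, and it still encloses all of $J$. So the proof is essentially: unwind $\zeta(H,\zeta(H,J))$, observe it equals $J$ up to shared cut vertices, and read the two required enclosure statements off the hypothesis with enclosing/exclosing interchanged.
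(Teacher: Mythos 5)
Your proof is correct and follows essentially the same approach as the paper: the key observation in both is that the enclosing and exclosing regions of $J$ become, respectively, the exclosing and enclosing regions of $\zeta(H,J)$. The paper states one thing you leave implicit --- that $\zeta(H,J)$ is itself a union of biconnected components of $H$, which is the precondition for the definition of internal set to apply --- but this is immediate, and otherwise the paper's argument is a terser version of yours, skipping the explicit unwinding of $\zeta(H,\zeta(H,J))=J$.
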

\begin{proof}
Since $J$ is the union of a set of biconnected components, so is its complement \rev{$\zeta(H,J)$}. The enclosing region of \rev{$\zeta(H,J)$} is the exclosing region of $J$ and the exclosing region of \rev{$\zeta(H,J)$} is the enclosing region of $J$. 
\end{proof}

\rev{Finally, we recall that, for a plane graph $H$ and a set $J \subseteq E(H)$, we may abuse notation and use $J$ to denote the subgraph of $H$ induced by the edges of $J$ as well. In that case, for $J, J' \subseteq E(H)$ that both form a union of biconnected components of $H$, the graph $J-J'$ (the subgraph of $H$ induced by the edges in the set $J-J'$) is equivalent to the subgraph $\zeta(J,J')$.}

\subsubsection{Internal Sets and Multiway Cuts}
We now prove a useful property of internal sets of a minimal multiway cut.

\begin{definition}
Let $C$ be any multiway cut of $(G,T,\omega)$ and $C^*$ the set of corresponding dual edges. Let $D^*$ be an internal set of $C^*$.
We say that a point in the plane is \defi{covered} by an internal set $D^*$ if it is enclosed by the complement of the exclosing region of $D^*$. 
\end{definition}

In our intuition of internal sets being `innermost' connected components, a covered point is enclosed by a bounded face of the component.

\begin{lemma}\label{lem:notwoboundedfaces}
Let $C$ be any \rev{minimum} multiway cut of $(G,T,\omega)$ and $C^*$ the set of corresponding dual edges. Let $D^*$ be an internal set of $C^*$ and $D$ the set of corresponding edges in $G$. Let $F_\one \in \mathcal{F}$ be any terminal face for which a terminal of $T_\one$ is covered by $D^*$. Then $C-D$ does not contain any edge of $F_\one$ and at least $\max\{1,p_\one-1\}$ terminals of $T_\one$ are covered by $D^*$. Moreover, there is at most one terminal face $F_\one$ for which this number is $p_\one-1$.
\end{lemma}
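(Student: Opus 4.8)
The plan is to dispatch singular terminal faces directly and, for a plural face $F_\one$, to analyse the picture of $C^*$ in a small disk around the dual vertex $v_\one$ of $F_\one$. For the singular case: if $p_\one=1$ then $\max\{1,p_\one-1\}=1$, which the hypothesis already supplies, and under the hypothesis a singular face has exactly $p_\one$ (not $p_\one-1$) covered terminals, so it is never the exceptional face; moreover, since the instance is transformed, every vertex on the boundary of $F_\one$ is a terminal of $T_\one$, so $F_\one$ has a single boundary vertex and every edge of $F_\one$ is a self-loop, whose dual is a bridge of $G^*$ and thus cannot lie in the bridgeless graph $C^*$ (Lemma~\ref{lem:dualbridgeless}); hence $C$, and a fortiori $C-D$, contains no edge of $F_\one$. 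So assume henceforth that $F_\one$ is plural.

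First I would show that the whole biconnected component $B^*_\one$ of $C^*$ containing $v_\one$ lies in $D^*$. Since the instance is transformed, the boundary of $F_\one$ is a cycle on $T_\one$, so each boundary edge runs between two distinct terminals and hence lies in $C$; the corresponding dual edges are exactly the edges of $G^*$ at $v_\one$, so all of them lie in $C^*$. By Lemma~\ref{lem:dualcutvertex} $v_\one$ is not a cut vertex of $C^*$, so these edges all belong to one biconnected component $B^*_\one$ of $C^*$. If $B^*_\one$ were not contained in $D^*$, it would be part of $\zeta(C^*,D^*)$, which is enclosed by the exclosing region of $D^*$; as $v_\one$ is not a cut vertex it would then lie in the interior of that region, so a small disk around $v_\one$ — and hence every face of $C^*$ incident to $v_\one$, each of which contains one terminal of $T_\one$ — would lie in the exclosing region, so no terminal of $T_\one$ would be covered, contradicting the hypothesis. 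Thus $B^*_\one\subseteq D^*$, whence all dual edges at $v_\one$, i.e.\ all edges of $F_\one$, lie in $D$, so $C-D$ contains no edge of $F_\one$.

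Next I would count covered terminals of $T_\one$. Since $v_\one$ is not a cut vertex of $C^*$ and $C^*$ is bridgeless, there are exactly $p_\one$ pairwise distinct faces of $C^*$ incident to $v_\one$, and by Proposition~\ref{prp:dualfaces} they contain the $p_\one$ terminals of $T_\one$ one each; write them $\phi_1,\dots,\phi_{p_\one}$ with $\terminal{\one}{i}\in\phi_i$. Because $B^*_\one\subseteq D^*$ and $D^*$ is bridgeless (a union of biconnected components of the bridgeless graph $C^*$) with $v_\one$ not a cut vertex of it, there are likewise exactly $p_\one$ pairwise distinct faces $\psi_1,\dots,\psi_{p_\one}$ of $D^*$ at $v_\one$ with $\phi_i\subseteq\psi_i$, hence $\terminal{\one}{i}\in\psi_i$. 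A terminal $\terminal{\one}{i}$ is uncovered exactly when $\psi_i$ equals the single exclosing region of $D^*$; hence at most one terminal of $T_\one$ is uncovered, i.e.\ at least $p_\one-1\ge 1$ are covered.

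Finally, suppose $F_\one$ is exceptional, with unique uncovered terminal $\terminal{\one}{a}$, so $\psi_a$ is the exclosing region and $\phi_a\subseteq\psi_a$ lies in the exclosing region. On the other hand $v_\one\in D^*$, and $v_\one$ is not on the boundary of the enclosing region of $D^*$ (that boundary lies in $\zeta(C^*,D^*)$, which does not contain $v_\one$), so $v_\one$, a small disk around it, and therefore $\phi_a$, lie in the enclosing region. Thus $\phi_a$ lies in the middle region of $D^*$, which is a single face of $C^*$ and so contains exactly one terminal by Proposition~\ref{prp:dualfaces}, necessarily $\terminal{\one}{a}$. If two plural terminal faces $F_\one,F_\two$ were both exceptional, the unique uncovered terminal of each would be this single terminal of the middle region, contradicting that the terminal faces of a transformed instance are vertex disjoint; combined with the fact that singular faces are never exceptional, this yields the last assertion. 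The step that needs the most care is the repeated inference ``$v_\one$ is enclosed by a region of $D^*$ or of $\zeta(C^*,D^*)$ $\Rightarrow$ the faces of $C^*$ at $v_\one$ are enclosed by that region'', which rests on $v_\one$ being a non-cut vertex of $C^*$ and on every face of $C^*$ lying inside a single face of $D^*$ and of $\zeta(C^*,D^*)$; one also uses that the middle region is a single face of $C^*$.
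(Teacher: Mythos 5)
Your proof is correct and follows essentially the same strategy as the paper's: establish that $v_\one$ and all its incident dual edges lie in $D^*$ (so $C-D$ avoids $F_\one$), then observe that any uncovered terminal of a face satisfying the hypothesis must sit in the single face of $C^*$ that is the middle region of $D^*$, so there is at most one uncovered terminal per such face and at most one exceptional face overall. Your write-up is somewhat more explicit than the paper's about the correspondence between the $p_\one$ faces of $C^*$ and of $D^*$ around $v_\one$ and about handling the singular case via the self-loop/bridge observation, but the underlying argument is the same.
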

\begin{proof}
When $p_\one=1$, trivially all terminals of $T_\one$ are covered by $D^*$. Moreover, $C$ does not contain the edge of $F_\one$, because the corresponding dual edge is a bridge in $G^*$ and $C^*$ is bridgeless by Lemma~\ref{lem:dualbridgeless}.

So assume that $p_\one > 1$. Let $v_\one$ denote the dual vertex corresponding to $F_\one$. Following Lemma~\ref{lem:alldualincident}, $C^*$ contains all (at least two) dual edges incident to $v_\one$. As a terminal of $T_\one$ is covered by $D^*$, there is an edge of $D^*$ incident to $v_\one$. By Lemma~\ref{lem:dualcutvertex}, $v_\one$ is not a cut vertex of $C^*$ and thus the definition of an internal set implies that $D^*$ contains $v_\one$ and all its incident dual edges. Hence, $C^*-D^*$ does not contain a dual edge incident to $v_\one$ and thus, $C-D$ does not contain an edge of $F_\one$.

Now let $t$ and $t'$ be two distinct terminals of $T_\one$ not covered by $D^*$. Then both $t$ and $t'$ are in the exclosing region $D^*$. As no dual edges incident to $v_\one$ belong to $C^*-D^*$ and $D^*$ is a union of biconnected components, $t$ and $t'$ are in the enclosing region of $D^*$. Hence, they are in the same face of $C^*$, which contradicts that $C$ is a multiway cut. Hence, at least $p_\one-1$ terminals of $T_\one$ are covered by $D^*$.

A similar contradiction holds for any two distinct terminals on distinct terminal faces which are not covered by $D^*$. Hence, there is at most one terminal face of which a terminal is covered by $D^*$ and a terminal is not fully covered by $D^*$.
\end{proof}

The intuition of our approach is now as follows. If a terminal of $T_\one$ for a terminal face $F_\one \in \mathcal{F}$ is covered by $D^*$, then it follows from Lemma~\ref{lem:notwoboundedfaces} that almost all terminals of $T_\one$ are covered by $D^*$. We then say that this component \defi{covers} this terminal face. Then we could guess (by exhaustive enumeration) in $2^k$ time the subset $\mathcal{F}_{D^*}$ of terminal faces covered by $D^*$ and obtain our simplified instance.

However, there is an important exception to this intuition, namely the unique terminal $t$ in the middle region of $D^*$. This terminal $t$ might not be on a face of $\mathcal{F}_{D^*}$ \rev{or is on a face of $\mathcal{F}_{D^*}$ but not covered by $D^*$} (the possible existence of such a terminal is hinted at by Lemma~\ref{lem:notwoboundedfaces}). Knowing this terminal is important to the algorithm, but guessing (by exhaustive enumeration) this terminal could require $O(n)$ guesses. Since $C^*$ has at most $k$ terminal faces, this would lead to an undesirable $O(n^k)$ running time overall. We show that knowing the precise terminal $t$ is actually unnecessary. By some small modifications to the weights, we argue that we can pick an arbitrary terminal of $T_\one$ that functions as a representative for $t$, avoiding the $O(n)$ guesses altogether.

\begin{lemma}\label{lem:disentangleA}
Let $C$ be any inclusion-wise minimal multiway cut of $(G,T,\omega)$ and $C^*$ the set of corresponding dual edges. Let $D^*$ be an internal set. Let $T'$ be the set of terminals covered by $D^*$ and let $t$ be a terminal in the middle region of $D^*$. Let $F_\one \in \mathcal{F}$ be the unique terminal face that contains $t$. If $F_\one$ is a plural face and $T' \cap T_\one = \emptyset$ (none of the terminals of $T_\one$ are covered by $D^*$), let $\omega'$ be obtained from $\omega$ by setting the weight of each edge of $F_\one$ to $\infty$ and let $t'$ be any terminal of $F_\one$. Otherwise, let $t'=t$ and $\omega'=\omega$. Then \rev{for the triple $(G,T'\cup\{t'\},\omega')$, it holds that}:
\begin{enumerate}[label= \emph{\alph*.}]
\item\label{lem:disentangleA:a} for any finite-weight \rev{multiway cut} $B$ of $(G,T'\cup\{t'\},\omega')$, $B \cup (C-D)$ is a \rev{multiway cut} for $(G,T,\omega)$;
\item\label{lem:disentangleA:b} $D$ is a \rev{minimum multiway cut} for $(G,T'\cup\{t'\},\omega')$, and thus $(G,T'\cup\{t'\},\omega')$ has a finite-weight \rev{multiway cut};
\item\label{lem:disentangleA:c} for any \rev{minimum multiway cut} $B$ for $(G,T'\cup\{t'\},\omega')$, $B^*$ is an internal set of $B^* \cup (C^*-D^*)$.
\end{enumerate}
\end{lemma}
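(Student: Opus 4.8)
The plan is to handle the two cases (plural-face exception vs.\ the trivial case $t'=t$, $\omega'=\omega$) largely in parallel, exploiting the geometry of internal sets established in Lemma~\ref{lem:notwoboundedfaces}. The key geometric fact I would set up first is this: $D^*$ is a union of biconnected components of $C^*$, so it is enclosed by a single face $f'$ (the exclosing region) of $D^*$ and $\zeta(C^*,D^*) = C^*-D^*$ lives inside $f'$, while $D^*$ lives inside the enclosing region $f$ of $C^*-D^*$; the middle region $f\cap f'$ is a single face of $C^*$ containing exactly one terminal $t$ (by Proposition~\ref{prp:dualfaces}). Every terminal other than $t$ is thus either \emph{covered} by $D^*$ (enclosed by the complement of $f'$) or is strictly outside $D^*$, i.e.\ separated from $D^*$ by $C^*-D^*$. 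Intuitively, $D$ is exactly the ``local'' cut that pairwise separates the covered terminals plus $t$, and $C-D$ is the ``outer'' cut that separates everyone else and separates the covered-part from the outside.

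For part~\ref{lem:disentangleA:b}, I would argue $D$ is a minimum multiway cut of $(G,T'\cup\{t'\},\omega')$. First, $D$ \emph{is} a multiway cut of $(G,T'\cup\{t'\},\omega)$: any two terminals of $T'\cup\{t\}$ lie in the closed region bounded by $f'$, and within that region the faces of $D^*$ coincide with the faces of $C^*$, which separate all terminals pairwise (Proposition~\ref{prp:dualfaces}); in the exceptional case $t'$ is a vertex of $F_\one$ and Lemma~\ref{lem:notwoboundedfaces} tells us $D^*$ contains $v_\one$ and all its incident dual edges, so $D$ contains all edges of $F_\one$ and hence separates $t'$ from the rest exactly as it would $t$. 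Under $\omega'$ the weight of $D$ is still finite: in the exceptional case the edges of $F_\one$ have weight $\infty$ but they are all in $D$, so $\omega'(D)=\omega(D)<\infty$. For minimality, suppose some multiway cut $B$ of $(G,T'\cup\{t'\},\omega')$ has $\omega'(B)<\omega'(D)=\omega(D)$. Then by part~\ref{lem:disentangleA:a} (which I prove independently below), $B\cup(C-D)$ is a multiway cut of $(G,T,\omega)$ of weight $\omega(B)+\omega(C-D) \le \omega'(B)+\omega(C-D) < \omega(D)+\omega(C-D)=\omega(C)$, contradicting minimality of $C$. (Here $D$ and $C-D$ are edge-disjoint because $D^*$ and $C^*-D^*$ are, so $\omega(D)+\omega(C-D)=\omega(C)$.)

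For part~\ref{lem:disentangleA:a}, take any finite-weight multiway cut $B$ of $(G,T'\cup\{t'\},\omega')$ and consider $B\cup(C-D)$ in $G$; I must show any two distinct terminals $a,b\in T$ are separated. If both $a,b$ lie outside $D^*$ (i.e.\ strictly enclosed by the exclosing region $f'$, equivalently not covered and not $t$), then $C-D$ already separates them, since in the region enclosed by $f'$ the faces of $C^*-D^*$ coincide with faces of $C^*$. If both $a,b\in T'\cup\{t\}$, then $B$ separates them — unless we are in the exceptional case and one of them is $t$: but then $\omega'$ gives all edges of $F_\one$ weight $\infty$, so a finite-weight $B$ cannot contain them, yet $B$ must separate $t'$ from every other terminal of $T'$, and since $t$ and $t'$ both lie inside $F_\one$ and $F_\one$'s boundary edges are undeletable, any $a$--$t$ path can be rerouted through $F_\one$ to an $a$--$t'$ path, so $B$ separates $a$ from $t$ too. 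The remaining case is $a$ covered (or $a=t$) and $b$ outside $D^*$: here I use that $C-D$ contains a face-walk of $C^*-D^*$ separating the inside of $f$ (hence all of $D^*$ and $t$) from everything strictly outside $f'$, so $C-D$ already separates them. The one subtlety is the edge case $a=t$, $b$ outside, in the \emph{non}-exceptional branch where $T'\cap T_\one\neq\emptyset$: then some terminal of $T_\one$ is covered, but $t\in F_\one$ is on the middle face, and by Lemma~\ref{lem:notwoboundedfaces} $D^*$ contains $v_\one$ with all incident edges, so $t$ is also cut off from the outside by $C-D$.

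Part~\ref{lem:disentangleA:c} is the main obstacle. Given any minimum multiway cut $B$ of $(G,T'\cup\{t'\},\omega')$, I must show $B^*$ is an internal set of $B^*\cup(C^*-D^*)$: that there is a single face of $\zeta(B^*\cup(C^*-D^*),B^*)=C^*-D^*$ enclosing $B^*$, and a single face of $B^*$ enclosing $C^*-D^*$. Here the plan is: since $B$ is a minimum (hence inclusion-wise minimal) multiway cut, $B^*$ is bridgeless (Lemma~\ref{lem:dualbridgeless}), and each of its faces encloses at most one terminal of $T'\cup\{t'\}$; I would argue that $B^*$ must be ``concentrated'' geometrically where $D^*$ was — specifically, $B$ only needs to use edges in the closed region bounded by the exclosing face $f'$ of $D^*$, because $T'\cup\{t'\}$ all live there and any edge of $B$ outside that region is redundant (minimality of $B$, plus the fact that $C-D$ shows the region enclosed by $f'$ can be ``sealed off'' independently). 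Then $B^*$ sits inside the enclosing region $f$ of $C^*-D^*$ just as $D^*$ did, so the single face $f$ of $C^*-D^*$ encloses $B^*$. For the other direction, the outer face of $B^*$ must enclose $C^*-D^*$: because $t'$ (or $t$) and all terminals strictly outside $f'$ are pairwise unseparated by $B$ alone if $B$ lies inside $f'$, so all of $C^*-D^*$ lies in one face of $B^*$ — the one containing the terminal $t'$/$t$ in the non-exceptional case, or in the exceptional case the face structure forced by $B$ containing all of $F_\one$'s edges. I anticipate the delicate points to be (i) formally justifying that a minimum $B$ can be taken to live inside the region bounded by $f'$ — this needs an exchange argument using $C-D$ as a ``template'' for the outer part; and (ii) the bookkeeping in the exceptional plural-face case, where $t'\ne t$ and one must track that making $F_\one$ undeletable forces $B^*$ to contain $v_\one$ with all incident edges, mirroring the structure of $D^*$, so that the enclosing/exclosing faces line up. These are exactly the topological subtleties the paper flags; I would dispatch them with the face-walk and biconnected-component machinery already in place.
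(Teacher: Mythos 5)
Your parts~(a) and~(b) run along the same lines as the paper (for~(a), arguing face-by-face instead of terminal-pair-by-terminal-pair, but with the same underlying facts; for~(b), the same reduction to~(a) plus optimality of~$C$). The substantive concern is part~(c), where the proposal has a gap.

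You claim, as a stepping stone, that a minimum $B$ lives in ``the closed region bounded by the exclosing face $f'$'' (the closure of the complement of $f'$) because ``$T'\cup\{t'\}$ all live there.'' That premise is false: in the non-exceptional case $t'=t$, and $t$ is the terminal in the \emph{middle region} $f\cap f'$, which is inside $f'$, not in the complement of $f'$. The correct containment (and the one the paper uses) is only the weaker $B^*\subseteq \overline{f}$, where $f$ is the enclosing region of $D^*$ -- a face of $C^*-D^*$ that also contains the middle region and hence $t$. (You also at one point call $f$ ``the enclosing region of $C^*-D^*$''; by Proposition~\ref{prp:internalcomp} that role is played by $f'$, not $f$ -- the two get swapped.) Establishing $B^*\subseteq\overline{f}$ cannot be done by invoking ``minimality of $B$'' in isolation: $B$ is minimum \emph{for its own instance}, and an edge of $B$ outside $\overline{f}$ need not be redundant for $B$ itself. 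The actual exchange is: part~(a) shows that deleting every edge of $B^*$ not enclosed by $f$ (or shared with $\partial f$) still makes $(B-X)\cup(C-D)$ a multiway cut of $(G,T,\omega)$; combined with $\omega(B)=\omega(D)$ from~(b), this produces a multiway cut of $(G,T,\omega)$ cheaper than $C$, contradicting the optimality of $C$, not of $B$. That appeal to $C$'s optimality is the load-bearing step, and your plan leaves it implicit.

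Finally, even with $B^*\subseteq\overline{f}$ in hand, showing that $B^*$ is an \emph{internal set} of $A^*=B^*\cup(C^*-D^*)$ requires more than ``a single face of $C^*-D^*$ encloses $B^*$ and a single face of $B^*$ encloses $C^*-D^*$.'' The definition also requires $B^*$ to be a union of biconnected components of $A^*$ -- equivalently, that the middle region between $B^*$ and $C^*-D^*$ is a single face of $A^*$, not a union of several. Your plan never addresses this. The paper handles it by noting that if the middle region split into several faces of $A^*$, at most one of them could contain a terminal while $A$ is a minimum multiway cut (every face of $A^*$ must enclose exactly one terminal), giving a contradiction. You should add this step; without it, the conclusion of~(c) as stated does not follow.
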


\begin{proof}
 
\begin{enumerate}[label = \alph*.]
\item Let $B$ be a finite-weight \rev{multiway cut} of $(G,T'\cup\{t'\},\omega')$. We claim that $B \cup (C-D)$ is a \rev{multiway cut} for $(G,T,\omega)$. Let $B^*$ be the set of dual edges corresponding to $B$. Since $B$ has finite weight, the construction of $\omega'$ and the definition of $t'$ ensures that either $t=t'$ or the face of $B^*$ that encloses $t'$ also encloses $T_\one$. Hence, in either case there is a face of $B^*$ that encloses $t$ and no other terminal of $T'$.

Consider the enclosing region $f$ of $D^*$. Every terminal of $T' \cup \{t\}$ is enclosed by $f$.
Then the intersections of each face of $B^*$ with $f$ yields a set of regions in the plane that each contain at most one terminal of $T' \cup \{t\}$. Since each terminal of $T-T'-\{t\}$ is contained in a face of $C^*-D^*$ enclosed by the complement of $f$, it follows that each face of $B^* \cup (C^*-D^*)$ contains at most one terminal (note that any edge of $B^*$ not enclosed by $f$ only partitions the faces of $C^*-D^*$ and is not relevant to the feasibility of $B \cup (C-D)$).

\item We now prove that $D$ is a \rev{minimum multiway cut} for $(G,T'\cup\{t'\},\omega')$. We first argue that $D$ is a \rev{multiway cut} for $(G,T'\cup\{t'\},\omega')$ of finite weight. If $p_\one = 1$, then $F_\one$ consists of a single loop and the optimality of $C$ ensures that $C$ (and thus $D$) does not contain this loop. $D$ has finite weight by construction. Moreover, $t'=t$ is in the exclosing region of $D^*$ by definition, and thus, $D$ is a \rev{multiway cut}.

Otherwise, if $p_\one > 1$, consider two cases. If $D^*$ covers a terminal of $T_\one$, then $\omega' = \omega$ and $D$ trivially has finite weight. Moreover, $t'=t$ is in the exclosing region of $D^*$ by definition, $D$ is a \rev{multiway cut}. If none of the terminals of $T_\one$ are covered by $D^*$, then all terminals of $T_\one$ are in the exclosing region of $D^*$. Then $D^*$ does not contain the dual vertex $v_\one$ (and therefore, no edge incident to $v_\one$). Thus, $D^*$ has finite weight. As $T_\one$ is in the exclosing region of $D^*$, so is $t'$. Since all terminals of $T'$ are not in the exclosing region of $D^*$, $D$ is a \rev{multiway cut}.

To complete the argument, let $B$ be a \rev{minimum multiway cut} for $(G,T'\cup\{t'\},\omega')$ of weight strictly smaller than $D$. By the feasibility of $D$, $B$ must have finite weight. Moreover, by the preceding, $B\cup (C-D)$ is a \rev{multiway cut} for $(G,T,\omega)$. Then it has smaller weight (with respect to $\omega$) than $C$, a contradiction. Hence, $D$ is an a \rev{minimum multiway cut} for $(G,T'\cup\{t'\},\omega)$.

\item Next, suppose that $B$ is a \rev{minimum multiway cut} for $(G,T'\cup\{t'\},\omega')$. We prove that $B^*$ is an internal set with respect to $A^*$, where $A^*$ is the set of dual edges corresponding to $A = B \cup (C-D)$. Note that $A$ is a \rev{minimum multiway cut} for $(G,T,\omega)$ by the preceding. We now argue that $B^*$ is enclosed by the enclosing region $f$ of $D^*$ and only possibly shares vertices with $f$. Suppose that $B^*$ is not enclosed by the enclosing region $f$ of $D^*$ or shares edges with the boundary of $f$. As mentioned earlier in the proof, dual edges of $B^*$ that are not enclosed by $f$ are irrelevant to the feasibility of $A$. On the other hand, shared dual edges are effectively counted twice in $A$. Hence, let $X^*$ be the set of dual edges in $B^*$ that are either not enclosed by $f$ or shared with the boundary of $f$. Let $X$ be the corresponding set of edges in $G$. Note that $X^* \not= \emptyset$ by assumption and thus $\omega(X) > 0$. We have just argued that $(B-X) \cup (C-D)$ is still a \rev{multiway cut}. Moreover, $\omega(B) = \omega(D)$ and thus $\omega(B-X) < \omega(D)$. Then $\omega((B-X) \cup (C-D)) < \omega(D \cup (C-D)) = \omega(C)$, a contradiction to the optimality of $C$. Hence, $B^*$ is enclosed by $f$ and possibly shares only vertices with $f$. Hence, we can define $f$ as the enclosing region of $B^*$ and there is a single face of $B^*$ that encloses $A^*-B^*$ that is the exclosing region.

Finally, suppose that for this \rev{minimum multiway cut} $B$ for $(G,T'\cup\{t'\},\omega')$, $B^*$ is not a union of biconnected components of $A^*$. Since $B^*$ itself is trivially a union of biconnected components, this means that the intersection of the enclosing region and exclosing region of $B^*$ with respect to $A^*$ is in fact a union of at least two regions, and not a single middle region. Each of these regions is a face of $A^*$. Since at most one of these regions (faces) can contain a terminal by Lemma~\ref{lem:notwoboundedfaces}, $A$ is not an \rev{minimum multiway cut}, a contradiction. 
\end{enumerate}
\end{proof}

\subsubsection{Algorithm}
We are now ready to develop the algorithm. Let $\mathcal{A}\rev{(G,T,\omega,\mathcal{F})}$ be any algorithm for {\sc Planar Multiway Cut} that always outputs a \rev{multiway cut of $(G,T,\omega)$}, but is only guaranteed to find a \rev{minimum multiway cut} if for \emph{all} \rev{minimum multiway cuts} $C$ it holds that $C^*$ is connected. We show in later sections that we can find such an algorithm $\mathcal{A}$ with the claimed running time bounds. Using it as a black box for now, we can give a recursive algorithm for {\sc Planar Multiway Cut}.

Let $(G,T,\omega, \rev{\mathcal{F}})$ be a transformed instance of {\sc Planar Multiway Cut} with $\mathcal{F}$ a set of $k$ terminal faces. Recall that $\mathcal{F}$ can be partitioned into the set $\mathcal{F}_1$ of singular faces and the set $\mathcal{F}_2$ of plural faces; possibly, one of these sets is empty.

In the first phase of the algorithm, consider two new types of sub-instances. For the first type, if $\mathcal{F}_2 \not=\emptyset$, consider each $F_\two \in \mathcal{F}_2$ and each set $\tilde{T}$ that is the union of the sets $\{T_\one \mid F_\one \in \tilde{\mathcal{F}}\}$ for a subset $\tilde{\mathcal{F}}$ of $\mathcal{F} \setminus \{F_\two\}$, and solve recursively on the transformed version of the sub-instance $(G,T'',\omega'',\rev{\tilde{\mathcal{F}} \cup \{F_\two\}})$, where $T'' = \tilde{T} \cup \{\terminal{\two}{1}\}$, and $\omega''$ is equal to $\omega$, except it is set to $\infty$ for all edges of $F_\two$.
For the second type, consider each set $T''$ that is the union of the sets $\{T_\one \mid F_\one \in \tilde{\mathcal{F}}\}$ for a strict subset $\tilde{\mathcal{F}}$ of $\mathcal{F}$, and solve recursively on the transformed version of the sub-instance $(G,T'',\omega'',\rev{\tilde{\mathcal{F}}})$, where $\omega'' = \omega$.

Let $B$ be the \rev{multiway cut} that is recursively found for any such sub-instance $(G,T'',\omega'',\rev{\mathcal{F}''})$ and let $B^*$ denote the set of corresponding dual edges. In the second phase of the algorithm, let $T'$ be the set of terminals in the outer face of $B^*$. If there is a \rev{single} terminal face $F_\one$ for which $\emptyset \subset T'\cap T_\one \subset T_\one$, then consider the sub-instance $(G,(T' \setminus T_\one) \cup \{\terminal{\one}{1}\},\omega', (\mathcal{F} \setminus \mathcal{F}'') \cup \{F_\one\})$, where $\omega'$ is obtained from $\omega''$ by setting the weight of each edge of $F_\one$ to $\infty$. Otherwise, consider the sub-instance $(G,T',\omega,\mathcal{F} \setminus \mathcal{F}'')$. Let $Z_B$ be the \rev{multiway cut} that is recursively found for such a sub-instance. Among all such combinations \rev{$B \cup Z_B$} that are \rev{a multiway cut for $(G,T,\omega)$}, \rev{and the multiway cut found by} $\mathcal{A}(G,T,\omega,\rev{\mathcal{F}})$, return \rev{one} of minimum weight.

This finishes the description of the algorithm. We prove the following result.

\begin{lemma}\label{lem:reduce-to-connected}
Let $\mathcal{A}\rev{(G,T,\omega,\mathcal{F})}$ be any algorithm for {\sc Planar Multiway Cut} with a given set $\rev{\mathcal{F}}$ of $k$ terminal faces that always outputs a \rev{multiway cut of $(G,T,\omega)$}, but is only guaranteed to find an \rev{minimum multiway cut} if for all minimum multiway cuts $C$ to the instance it holds that $C^*$ is connected. Let $T(n,k)$ be a monotone function that describes the running time of $\mathcal{A}$ on instances with $n$ vertices and $k$ terminal faces. Then {\sc Planar Multiway Cut} can be solved in $O(2^{\OO(k)} T(n,k) \mathrm{poly}(n))$ time.
\end{lemma}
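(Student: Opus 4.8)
We argue by induction, treating $\mathcal{A}$ as a black box and using Lemma~\ref{lem:disentangleA}, Lemma~\ref{lem:notwoboundedfaces} and Proposition~\ref{prp:internalcomp} for the structural part. Since every candidate $B\cup Z_B$ the algorithm considers is explicitly tested to be a multiway cut of $(G,T,\omega)$ and $\mathcal{A}$ always returns one, the output is always a multiway cut; it remains to bound its weight by that of a minimum multiway cut $C$. We induct on the measure $\mu(G,T,\omega,\mathcal{F})$ equal to the number of terminal faces plus the number of plural faces (so $\mu\le 2k$): collapsing a plural face $F_\two$ to its representative $\terminal{\two}{1}$ and making its edges undeletable turns $F_\two$ singular, and the type-two sub-instances drop at least one face, so every first-phase sub-instance has smaller $\mu$; the same holds for the second-phase sub-instances, with degenerate ones (no finite-weight solution) discarded by the feasibility tests. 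The recursion is thus well founded, bottoming out at trivial instances where the empty cut is optimal.

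\textbf{Inductive step.} Fix a minimum multiway cut $C$ with dual $C^*$. If every minimum multiway cut has connected dual, $\mathcal{A}(G,T,\omega,\mathcal{F})$ returns an optimum. Otherwise choose $C$ with $C^*$ disconnected and let $D^*$ be a connected component of $C^*$ none of whose bounded faces encloses another component; as observed after the definition of internal sets, $D^*$ is an internal set of $C^*$. Let $T'$ be the terminals covered by $D^*$, let $t$ be the unique terminal in the middle region of $D^*$ (on face $F_\one$, if it exists), and let $\mathcal{F}_{D^*}$ be the faces covered by $D^*$; by Lemma~\ref{lem:notwoboundedfaces}, $T'$ is the union of the $T_\two$ over $F_\two\in\mathcal{F}_{D^*}$, with at most the single face $F_\one$ being an exception. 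According to whether $F_\one$ is a plural face with $T_\one$ entirely uncovered — then take the type-one sub-instance with distinguished face $F_\two:=F_\one$, $\tilde{\mathcal{F}}:=\mathcal{F}_{D^*}$, edges of $F_\one$ set to $\infty$, and $t':=\terminal{\one}{1}$ — or not — then take the type-two sub-instance with $\tilde{\mathcal{F}}:=\mathcal{F}_{D^*}$ (possibly augmented by $F_\one$) and $t':=t$ — Lemma~\ref{lem:disentangleA}(b) gives that $D$ is a minimum multiway cut of exactly this first-phase sub-instance; by the induction hypothesis the recursive call returns one, call it $B$, and since finite-weight solutions avoid $\infty$-edges, $\omega(B)=\omega(D)$. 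Now set $A:=B\cup(C-D)$. By Lemma~\ref{lem:disentangleA}(a) it is a multiway cut of $(G,T,\omega)$, and $\omega(A)\le\omega(B)+\omega(C-D)=\omega(C)$ forces $A$ minimum, $B\cap(C-D)=\emptyset$, and $A-B=C-D$. By Lemma~\ref{lem:disentangleA}(c), $B^*$ is an internal set of $A^*$, and since it lies inside the enclosing region of $D^*$ it is innermost, so the terminals in the outer face of $B^*$ are exactly those not covered by $B^*$; a second application of Lemma~\ref{lem:notwoboundedfaces} and Lemma~\ref{lem:disentangleA} to the internal set $A^*-B^*=C^*-D^*$ of $A^*$ (internal by Proposition~\ref{prp:internalcomp}) shows that the algorithm's second-phase sub-instance is well defined — its split-face branch absorbing the single exceptional face — and that $C-D$ is a minimum multiway cut of it. By induction the recursive call returns $Z_B$ with $\omega(Z_B)=\omega(C-D)$, so $B\cup Z_B$ is a feasible multiway cut of weight $\omega(D)+\omega(C-D)=\omega(C)$, and the algorithm returns a multiway cut of weight at most, hence equal to, $\omega(C)$.

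\textbf{Running time.} Each invocation makes $O(k\,2^{k})$ recursive calls: the first phase ranges over at most one distinguished plural face and a subset of the $\le k$ terminal faces, and the second phase contributes one further call per choice. Because the reduction of Lemma~\ref{lem:transformnew} preserves the terminal faces up to relabeling, every instance occurring anywhere in the recursion is specified by a subset of the original $\le k$ terminal faces together with a subset of them flagged as collapsed (edges undeletable, only the first terminal of the face retained); there are only $2^{O(k)}$ such instances, and the $O(k)$ nested transformations enlarge the graph by only a $2^{O(k)}$ factor. Memoizing over these sub-instances and spending, per sub-instance, one call to $\mathcal{A}$ together with polynomial bookkeeping (duals, outer faces, the transformation, feasibility checks) yields the bound $O(2^{O(k)}\,T(n,k)\,\mathrm{poly}(n))$.

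\textbf{Main obstacle.} The hard part is the middle of the inductive step: since the recursively found $B$ need not equal $D$, one must argue entirely through the replacement minimum multiway cut $A=B\cup(C-D)$, invoke Lemma~\ref{lem:disentangleA}(c) to retain that $B^*$ is an internal set of $A^*$, and then verify precisely that the algorithm's second-phase sub-instance — including the handling of the single exposed/collapsed terminal face via the first-terminal representative device — is exactly the sub-instance for the internal set $C^*-D^*$ to which Lemma~\ref{lem:disentangleA} applies; a related point is to choose $C$ so that this decomposition is always available, ensuring that the exhaustive enumeration of the first phase contains one realizing the optimum.
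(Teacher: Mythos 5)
Your proposal follows the same approach as the paper's proof: pick a disconnected minimum multiway cut dual, take an innermost connected component $D^*$ as an internal set, apply Lemma~\ref{lem:disentangleA} (parts~a,b,c) together with Lemma~\ref{lem:notwoboundedfaces} and Proposition~\ref{prp:internalcomp} to match $D$ and $C-D$ to first- and second-phase sub-instances, and then memoize over the $2^{\OO(k)}$ sub-instances, each of which is determined by a subset of terminal faces plus a flag for which plural faces have been collapsed and made undeletable. The only cosmetic difference is that the paper fixes $C^*$ with the maximum number of connected components rather than an arbitrary disconnected one, but this plays no essential role in either argument.
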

\begin{proof}
\rev{Let $(G,T,\omega, \rev{\mathcal{F}})$ be a transformed instance of {\sc Planar Multiway Cut} with $\mathcal{F}$ the set of $k$ terminal faces.} We prove that the above algorithm solves a given instance of {\sc Planar Multiway Cut} in the claimed running time. Note that the described algorithm always returns a \rev{multiway cut of $(G,T,\omega)$}, because $\mathcal{A}$ \rev{does}. To prove that the algorithm returns a \rev{minimum multiway cut}, let $C^*$ be a \rev{minimum multiway cut} of $(G,T,\omega)$ with the maximum number of connected components. If $C^*$ is connected, then $\mathcal{A}$ will find a \rev{minimum multiway cut}. Otherwise, let $D^*$ be a connected component of $C^*$ for which none of its bounded faces encloses another connected component of $C^*$. Then $D^*$ is an internal set.

We observe that the \rev{triple constructed by} Lemma~\ref{lem:disentangleA} with respect to $D^*$ and $C^*$ \rev{corresponds to} one of the sub-instances considered in the first phase of the algorithm.
Indeed, we note that $D^*$ is merely a connected component of $C^*$ (in particular, not equal to $C^*$) and by Lemma~\ref{lem:notwoboundedfaces}, there is at most one terminal $t$ of the terminal faces covered by $D^*$ such that $t$ is not covered by $D^*$. Hence, the set of terminal faces covered by $D^*$ is a strict subset of $\mathcal{F}$.
If $D^*$ does not cover all terminals of each terminal face covered by $D^*$, then $t$ exists and is the terminal in the middle region of $D^*$. \rev{In this case, the terminal set of the triple constructed by Lemma~\ref{lem:disentangleA} for $D^*$} simply consists of the terminals on the subset of terminal faces covered by $D^*$. \rev{The corresponding set of terminal faces} is a strict subset of $\mathcal{F}$, and thus this triple \rev{corresponds to} a sub-instance \rev{considered} in the second type of instances of phase one of the algorithm.
If $D^*$ does cover all terminals of each terminal face covered by $D^*$, then the terminal $t$ in the middle region of $D^*$ belongs to a terminal face \rev{$F_\one$} not covered by $D^*$. If \rev{$F_\one$} is a singular face, then there is another terminal face not covered by $D^*$ (in the part of the exclosing region of $D^*$ that is not the middle region), \rev{and thus the set of terminal faces covered by $D^*$ is a strict subset of $\mathcal{F}$. Thus}, the \rev{triple constructed by} Lemma~\ref{lem:disentangleA} \rev{for $D^*$ corresponds to a sub-instance considered} in the second type of instances of phase one of the algorithm. If \rev{$F_\one$} is a plural face, then the \rev{triple constructed by} Lemma~\ref{lem:disentangleA} \rev{for $D^*$ corresponds to a sub-instance considered} in the first type of instances of phase one of the algorithm.

Consider the \rev{minimum multiway cut} $B$ found by the algorithm for the \rev{sub-instance that corresponds to the triple} constructed by Lemma~\ref{lem:disentangleA} with respect to $D^*$ and $C^*$. Let $A = B \cup (C-D)$. Let $B^*$ and $A^*$ denote the sets of corresponding dual edges. \rev{By Lemma~\ref{lem:disentangleA}\ref{lem:disentangleA:b}, this sub-instance has a finite-weight multiway cut, and thus $\omega(B) = \omega(D)$. By Lemma~\ref{lem:disentangleA}\ref{lem:disentangleA:a}, $A$ is a \rev{minimum multiway cut} of $(G,T,\omega)$. By Lemma~\ref{lem:disentangleA}\ref{lem:disentangleA:c}, $B^*$ is an internal set of $A^*$}. Hence, $C^*-D^*$ is an internal set of $A^*$ by Proposition~\ref{prp:internalcomp}. \rev{This fact, combined with the fact that $A$ is a multiway cut  of $(G,T,\omega)$, implies that there is at most one terminal $t$ of the terminal faces covered by $C^*-D^*$ that is not covered by $B^*$. In particular, there is at most one terminal face $F_\one$ for which $\emptyset \subset T' \cap T_\one \cap T_\one$, where $T'$ is the set of terminals in the outer face of $B^*$.} Repeating the same arguments as in the previous paragraph, we see that the \rev{triple constructed by} Lemma~\ref{lem:disentangleA} with respect to $C^*-D^*$ and $A^*$ \rev{corresponds to a sub-instance} considered in phase two of the algorithm. 

Consider the \rev{minimum multiway cut} $Z_B$ found by the algorithm for the \rev{sub-instance that corresponds to the triple} constructed by Lemma~\ref{lem:disentangleA} with respect to $C^*-D^*$ and $A^*$. \rev{By Lemma~\ref{lem:disentangleA}\ref{lem:disentangleA:b},\ref{lem:disentangleA:a}}, $\omega(Z_B) = \omega(C-D)$ and thus $Z_B \cup B$ is a \rev{minimum multiway cut} for $(G,T,\omega)$. Hence, the algorithm will return a \rev{minimum multiway cut for $(G,T,\omega)$}.
Here we note that if the sub-instance is of the first type, then $F_\one = F_\two$ in the second phase by the fact that $B^*$ is an internal set.

Finally, to see the running time, note that each sub-instance is formed by a set of terminals $T_1 \cup T_2$, where $T_1$ is the union of the terminals of a subset of the terminal faces of $\mathcal{F}$ and $T_2$ is the set of first terminals of a subset $\mathcal{F}'$ of the plural faces of $\mathcal{F}$, such that these subsets of terminal faces are disjoint. The weight of all edges of the terminal faces in $\mathcal{F}'$ is set to $\infty$ and then the instance is transformed. It follows that there are at most $3^k$ distinct sub-instances ever considered by the algorithm. We also observe that for each sub-instance, compared to the input instance, the number of terminal faces decreases or else a plural face becomes singular. Hence, instead of the above top-down recursive algorithm, we can apply a bottom-up dynamic programming algorithm. Here we look at the instances in order of increasing total number of terminal faces plus number of plural faces. To fill each table entry, we need to consider at most $(k+1) 2^k$ table entries in the first phase and exactly one in the second phase. For each combination, we need polynomial time for feasibility check and to run $\mathcal{A}$. Hence, the total running time is as claimed.
\end{proof}

Following Lemma~\ref{lem:reduce-to-connected}, we need to develop the required algorithm $\mathcal{A}$. Hence, in the remainder, we assume that the dual of any optimum solution to our instance of {\sc Planar Multiway Cut} is connected.

\section{Augmented Planar Multiway Cut Dual: Skeleton and Nerves}\label{sec:Skeleton and Nerves}
\rev{As before, we are given an instance $(G,T,\omega,\mathcal{F})$. Recall that we assume that the instance is transformed and that the dual of any optimum solution is connected.}
An important challenge for our algorithm is to find the part of an optimum solution that cuts the many terminal pairs incident to a single terminal face (of course, in concert with cutting terminal pairs on other faces). In the case that $k=1$, Chen and Wu~\cite{Chen-Wu} proposed the notion of an augmented dual in order to reduce to an instance of {\sc Planar Steiner Tree} wherein all the terminals lie on a single face. This, in turn, can be solved in polynomial time~\cite{Erickson,Bern}. Such a simple reduction does not work in our case, but we do borrow and extend the notion of an augmented dual.

\begin{definition}
The \defi{augmented dual} graph $G^+$ of $G$ with respect to $\mathcal{F}$ is constructed as follows. Starting with the planar dual of $G$, for each face $F_{\one} \in \mathcal{F}$, replace the corresponding dual vertex $v_\one$ by a set of vertices $T^+_{\one}= \{\augvertex{\one}{1}, \augvertex{\one}{2}, \ldots, \augvertex{\one}{p_\one}\}$ (called the \defi{augmented terminals}), \rev{each being an end point of some edge that was incident} to $v_\one$. For $1 \leq j \leq p_\one$, each $\augvertex{\one}{j}$ and its incident edge represents the edge $(^\one t_{j-1}, ^\one t_{j})$ on the boundary of $F_{\one}$. The weight function on the edges of $G^+$ is denoted by $\omega^+$; all the edges in the augmented dual graph have the same weight as their corresponding edge in $G$ under $\omega$.
\end{definition}

\begin{figure}[t!]
    \centering
    \includegraphics[width= 0.75\textwidth]{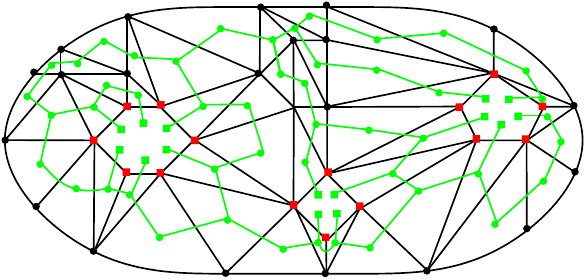}
    \caption{The figure shows the graph $G$ in black (\rev{for the sake of this illustration, we do not show parallel edges}). The red blocks represent terminals, while the green blocks depict the augmented vertices. The green dots represent the dual vertices for each non-terminal face in the augmented dual. The green curves are the edges of $C^+$.}\label{fig:Solution dual}
\end{figure}

Observe that there is still a one-to-one correspondence between edges of $G$ and edges of \rev{of the augmented dual} $G^+$, as there is between edges of $G$ and edges of \rev{the dual} $G^*$. \rev{We call the edges of the $G^*$ \emph{augmented dual edges} and the vertices of $G^*$ \emph{augmented dual vertices}.}
We also note that an alternative construction of $G^+$ would be to subdivide each dual edge of $G^*$ incident to the dual vertex $v_\one$ corresponding to a terminal face $F_\one \in \mathcal{F}$ and subsequently removing $v_\one$.

\begin{lemma} \label{lem:augmenteddualdegree}
\rev{Every augmented dual vertex has degree at most~$3$.}
\end{lemma}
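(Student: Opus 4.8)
The plan is to classify the vertices of $G^+$ according to how it is built and bound the degree of each class. Recall that $G^+$ is obtained from the planar dual $G^*$ by, for every terminal face $F_\one\in\mathcal{F}$, replacing the dual vertex $v_\one$ with the augmented terminals of $T^+_\one$, where each $\augvertex{\one}{j}$ inherits exactly one of the edges that was incident to $v_\one$ (equivalently, by subdividing every dual edge incident to $v_\one$ and then deleting $v_\one$). So every vertex of $G^+$ is either (i) a dual vertex $v^*(f)$ for a face $f$ of $G$ with $f\notin\mathcal{F}$, or (ii) an augmented terminal $\augvertex{\one}{j}$. I would bound the degree of each type separately.

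For type (ii) the bound is immediate from the construction: each $\augvertex{\one}{j}$ is the endpoint of precisely one edge of $G^+$, the one representing the boundary edge $(\terminal{\one}{j-1},\terminal{\one}{j})$ of $F_\one$, so it has degree $1\le 3$. The only degenerate case to keep in mind is a dual edge that originally joined two distinct terminal-face vertices $v_\one$ and $v_{\one'}$: the subdivide-and-delete steps for $F_\one$ and $F_{\one'}$ then turn it into an edge between two degree-$1$ augmented terminals, so the bound still holds.

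For type (i) I would argue that the augmentation leaves the degree of $v^*(f)$ unchanged and then invoke the transformed instance. Using the standard fact that in $G^*$ the degree of $v^*(f)$ equals the length of the face $f$, i.e.\ the length of its boundary walk counted with multiplicity (read with the usual conventions for parallel edges and self-loops, which a transformed $G$ may contain), and using that the instance is transformed with $f\notin\mathcal{F}$, the face $f$ has length at most $3$, hence $\deg_{G^*}(v^*(f))\le 3$. Going from $G^*$ to $G^+$ only subdivides edges incident to the vertices $v_\one$ (which does not affect the degrees of their other endpoints) and removes the vertices $v_\one$ together with their incident edges; since $f\notin\mathcal{F}$ we have $v^*(f)\ne v_\one$ for every $\one$, so $\deg_{G^+}(v^*(f))=\deg_{G^*}(v^*(f))\le 3$. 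Combining the two cases proves the lemma. The whole argument is bookkeeping about the construction of $G^+$; the only points to be careful with are checking that no operation in the construction inflates the degree of a surviving dual vertex and reading ``dual degree $=$ face length'' correctly in the presence of parallel edges and self-loops, so I expect no real obstacle.
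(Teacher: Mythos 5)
Your proof is correct and follows essentially the same route as the paper: the paper's two-sentence argument is exactly the combination of your type~(i) bound (non-terminal faces have length at most~$3$ in a transformed instance, so their dual vertices have degree at most~$3$) and your type~(ii) bound (augmented terminals have degree~$1$ by construction). Your version is simply more explicit about why the subdivide-and-delete construction does not change the degree of a surviving dual vertex, and the edge case of a dual edge between two terminal-face vertices that you flag cannot actually occur in a transformed instance (terminal faces are vertex-disjoint and, by the last condition of Definition~\ref{def:transform}, neighbor only faces of length~$2$), so it is vacuous but harmless.
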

\begin{proof}
\rev{Since the instance is transformed, each face of $G$ that is not in $\mathcal{F}$ has length at most~$3$. By construction, each of the augmented terminals has degree~$1$. Hence, each augmented dual vertex has degree at most~$3$.}
\end{proof}

\rev{The set of edges of $G^+$ corresponding to a minimum multiway cut $C$ of $(G,T,\omega)$ is denoted by $C^+$.} For practical purposes, any time we consider a set $C^+$ of edges of the augmented dual, we also denote by $C^+$ the subgraph of the dual induced by the edges in $C^+$. Figure~\ref{fig:Solution dual} illustrates the structure of $C^+$.

\begin{lemma}\label{lem:augmentedconnected}
Let $C$ be any minimum multiway cut of $(G,T,\omega)$ and $C^+$ the set of corresponding edges of the augmented dual.
Then $C^+$ is a connected planar graph with $k$ faces, one per terminal face in $\mathcal{F}$. 
\end{lemma}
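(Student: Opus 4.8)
The plan is to show that $C^+$ is obtained from the dual $C^*$ of the multiway cut by a controlled local surgery at the terminal faces, and that this surgery neither disconnects the graph nor creates faces beyond the $k$ we expect. The starting point is that $C^*$ (the set of dual edges corresponding to $C$) is connected by the assumption introduced at the end of Section~\ref{sec:connected-duals}, and bridgeless by Lemma~\ref{lem:dualbridgeless}. Recall the alternative construction of $G^+$ noted just before Lemma~\ref{lem:augmenteddualdegree}: $G^+$ is obtained from $G^*$ by subdividing every dual edge incident to a terminal-face vertex $v_\one$ and then deleting $v_\one$. The subgraph $C^+$ is exactly the image of $C^*$ under this operation, restricted to the edges corresponding to $C$.

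First I would establish connectivity. By Lemma~\ref{lem:alldualincident}, for every \emph{plural} terminal face $F_\one$, all dual edges incident to $v_\one$ lie in $C^*$; for a \emph{singular} face $F_\one$, the single loop at $v_\one$ is a bridge in $G^*$, hence not in $C^*$ by Lemma~\ref{lem:dualbridgeless}, so $v_\one$ is an isolated vertex of $C^*$ unless some non-loop edge of $C^*$ is incident to it. Either way, $v_\one$ is not a cut vertex of $C^*$ by Lemma~\ref{lem:dualcutvertex} (equivalently the Corollary: $C^*$ has no cut vertices). Subdividing edges never affects connectivity, and deleting a non-cut vertex $v_\one$ leaves the remaining graph connected; doing this for all $k$ faces of $\mathcal{F}$ in turn (they are distinct vertices, and none becomes a cut vertex after earlier deletions, since the subdivided edge-ends $\augvertex{\one}{j}$ are pendant and irrelevant to connectivity of the rest) yields a connected graph $C^+$. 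One must also check that $C^+$ is nonempty and actually retains the incident subdivided stubs: each $\augvertex{\one}{j}$ survives precisely when the corresponding edge of $F_\one$ is in $C$, which for plural faces is always (transformed instance: every edge of $F_\one$ joins two distinct terminals, hence is cut), and for singular faces is never—consistent with the face count below.

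Next I would count faces. Passing from $G^*$ to $G^+$ by subdividing edges incident to $v_\one$ does not change the face structure at all (subdivision preserves faces); the only change is the deletion of $v_\one$, which merges all faces of $C^*$ that had $v_\one$ on their boundary into a single new face, while leaving all other faces untouched. By Proposition~\ref{prp:dualfaces}, the faces of $C^*$ around $v_\one$ are precisely those enclosing the terminals of $T_\one$ assigned to $F_\one$, together with possibly one more face (an "outer" side); when $v_\one$ is deleted these coalesce into exactly one face of $C^+$, which one naturally associates with the terminal face $F_\one$. Faces of $C^*$ not touching any $v_\one$ each enclose at most one terminal, but such a terminal lies on \emph{some} terminal face, whose dual vertex $v_\one$ therefore borders that face — so in fact every bounded face of $C^*$ borders some $v_\one$, and after deleting all $k$ of them we are left with exactly the $k$ merged faces plus the outer face; more carefully, using Euler's formula $|V|-|E|+|F|=2$ and tracking that each deletion of $v_\one$ (of degree $d_\one$ in $C^*$) removes $1$ vertex, that the $d_\one$ subdivisions each add one vertex and one edge, and that deleting $v_\one$ then removes its $d_\one$ incident edges and merges $d_\one$ faces into $1$, one sees the net effect on $|F|$ is to replace $d_\one$ faces by $1$, i.e. decrease the face count by $d_\one-1$ per terminal face while keeping the graph connected; the bookkeeping then pins the final number of faces at exactly $k$ (counting, or not counting, the outer face according to the paper's convention — I would match whichever Proposition~\ref{prp:dualfaces} and Figure~\ref{fig:Solution dual} use).

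\textbf{The main obstacle} I anticipate is the face-counting argument, specifically justifying that the faces of $C^*$ incident to a single $v_\one$ merge into \emph{exactly one} face of $C^+$ and that no bounded face of $C^*$ is left ``homeless'' (incident to no $v_\one$). The first point requires that the edges of $C^*$ incident to $v_\one$ appear consecutively enough around $v_\one$ that their removal does not split the ambient region — here one uses that $v_\one$ is not a cut vertex, so $C^*-v_\one$ is connected and the cyclic sequence of faces around $v_\one$ becomes one face. The second point is where transformedness is essential: since the faces of $\mathcal{F}$ are vertex-disjoint and every vertex of each $F_\one$ is a terminal, and by Proposition~\ref{prp:dualfaces} each bounded face of $C^*$ encloses exactly one terminal of the minimum cut, that terminal lies on its assigned terminal face $F_\one$, whose vertex $v_\one$ is then on the boundary of that face of $C^*$. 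I would make this precise and then conclude the stated connectivity and face count.
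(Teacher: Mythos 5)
Your connectivity argument has a genuine gap. You delete the terminal-face vertices $v_\one$ one at a time and claim ``none becomes a cut vertex after earlier deletions,'' but the justification offered (that the subdivided stubs become pendants and do not affect connectivity) does not address the actual issue: Lemma~\ref{lem:dualcutvertex} (and its corollary) says $v_\two$ is not a cut vertex of $C^*$, but after removing $v_{\one}$ and its incident edges you are in a different graph, and biconnectivity is not preserved under vertex deletion. A priori $v_\two$ could very well separate $C^*-v_\one$. What actually makes the argument work --- and is the content of the paper's proof --- is that for each $v_\one$ one can explicitly exhibit a bypass inside $C^+$: the $p_\one$ faces of $C^*$ incident to $v_\one$ are exactly the faces enclosing the terminals of $T_\one$, their bounding cycles chain the neighbors of $v_\one$ together, and (crucially) those cycles cannot pass through any other $v_\two$, because the faces incident to $v_\two$ are the ones enclosing terminals of $T_\two$, which are disjoint from $T_\one$. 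This last observation is what lets all deletions be performed simultaneously, and it is missing from your proof; stating that ``$v_\one$ is not a cut vertex'' does not yield it.

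Two secondary points. First, the claim that ``every bounded face of $C^*$ borders some $v_\one$'' is false for singular faces $F_\one$: there $v_\one$ has a single incident edge in $G^*$, which is a bridge and hence not in $C^*$, so $v_\one$ is not a vertex of $C^*$ at all and the face of $C^*$ enclosing the lone terminal of $T_\one$ borders no terminal-face vertex. This does not hurt the face count --- such faces contribute $p_\one - 1 = 0$ --- but the stated reason for the count is wrong. Second, the Euler-formula bookkeeping you sketch is a correct and slightly more systematic route to the number $k$ than the paper's geometric description, provided the connectivity (and hence that deleting each plural $v_\one$ merges exactly $p_\one$ faces into one) has been established; as written it inherits the same gap, since the face-merge accounting uses that $v_\one$ is not a cut vertex of the \emph{current} graph at each step.
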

\begin{proof}
Let $C^*$ be the set of corresponding dual edges. By assumption and Proposition~\ref{prp:dualfaces}, we know that $C^*$ is a connected subgraph of the dual, each face of which encloses exactly one terminal.

First, we show that $C^+$ is connected. Let $v_\one \in V(C^*)$ be the dual vertex representing the terminal face $F_\one$, for any $F_\alpha \in \mathcal{F}$. Suppose that there exists a path $P$ in $C^*$ from vertex $a^*$ to $b^*$, such that $P$ passes through $v_\one$. Let $x^*$ and $y^*$ be the vertices on $P$ adjacent to $v_\one$. Note that in $C^+$, $x^*$ and $y^*$ are the sole neighbors of two of the augmented terminals of $F_\one$. Without loss of generality, let these augmented terminals be $\augvertex{\one}{j}$ and $\augvertex{\one}{k}$. \rev{Since $v_{\one}$ lies at the intersection of the cycles in $C^*$ enclosing each terminal on $F_{\one}$, there still exists a path from $\augvertex{\one}{j}$ and $\augvertex{\one}{k}$ in $C^+$. Moreover, this path does not use the dual vertex of any other terminal face.} To go from $a^*$ to $b^*$, we can follow the path from $a^*$ to $\augvertex{\one}{j}$, then go along the path from $\augvertex{\one}{j}$ to $\augvertex{\one}{k}$ and finally from $\augvertex{\one}{k}$ to $b^*$. Since \rev{$a^*$, $b^*$, and $v_\one$} were arbitrary, this holds for any path passing through the dual vertex of any terminal face. Since no modification was made to any other path of $C^*$, this proves that $C^+$ is connected.

Next, we show that $C^+$ contains exactly $k$ faces, each enclosing a terminal face of $\mathcal{F}$. For all \rev{$F_\one \in \mathcal{F}$}, observe that its dual vertex $v_\one$ lies at the intersection of face boundaries enclosing each terminal on the boundary of $F_\one$. None of these face boundaries intersect a dual vertex corresponding to any face $F_\two$ , for $\two \neq \one$, or else $C^*$ would enclose more than one terminal in that face, contradicting its feasibility. After splitting $v_\one$ into augmented terminals, by the preceding paragraph, we get a connected graph. In particular, each pair of consecutive augmented terminals is connected in $C^+$ by a path. The union of these paths bounds a region in the plane enclosing the face $F_\one$. Therefore, $C^+$ has exactly $k$ faces.
\end{proof}
 
\subsection{Skeleton}
Considering Lemma~\ref{lem:augmentedconnected}, we note in particular that $C^+$ has $k$ faces. However, each of these $k$ faces can be highly complex. By `zooming out', we can show that $C^+$ in fact has a very nice global structure with $k$ faces that are much easier to describe. To this end, we prove the following.

We define the action of \emph{dissolving} a vertex $v$ of degree~$2$ with \emph{distinct} neighbors $u$ and $w$, as removing $v$ from the graph and adding an edge from $u$ to $w$. If we drop the constraint that $u,w$ are distinct, we speak of \emph{strongly dissolving}. We explicitly mention that while dissolving a vertex we retain any parallel edges that may arise. Moreover, no self-loops may arise while dissolving a vertex, while this is possible when \rev{strongly} dissolving a vertex.

\begin{definition}
Let $C$ be any multiway cut of $(G,T,\omega)$ and $C^+$ the set of corresponding edges of the augmented dual. \rev{Then the subgraph of $C^+$ that remains after exhaustively removing all the edges incident to a vertex of degree~$1$ of $C^+$ is called the \defi{skeleton} of $C^+$, denoted by $S^+$. The graph that remains of $S^+$ after dissolving all vertices of degree~$2$ from $S^+$ is called the \defi{shrunken skeleton} of $C^+$.}
\end{definition}

\begin{lemma}\label{lem:skeleton}
Let $C$ be any minimum multiway cut of $(G,T,\omega)$ and $C^+$ the set of corresponding edges of the augmented dual.
\rev{Then the skeleton and shrunken skeleton of $C^+$ have minimum degree~$2$ and maximum degree~$3$.}
\rev{Moreover,} the shrunken skeleton of $C^+$ is a connected planar multi-graph without self-loops and with $k$ faces, such that each of its faces strictly encloses exactly one terminal face of $\mathcal{F}$.
\end{lemma}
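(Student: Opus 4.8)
The plan is to prove the three claims in order: (1) minimum degree~$2$ and maximum degree~$3$ of the skeleton, (2) the same for the shrunken skeleton, and (3) the global topological structure of the shrunken skeleton. First I would observe that the skeleton $S^+$ is obtained from $C^+$ by exhaustively stripping degree-$1$ vertices. Since we only ever remove edges incident to a vertex of degree~$1$, and we stop only when no such vertex remains, the resulting graph $S^+$ has minimum degree~$2$ (every remaining vertex that was not removed ends up with degree $\geq 2$, since if at some point it dropped to degree~$1$ it would have been removed; an easy induction formalizes this). For the maximum degree, note that by Lemma~\ref{lem:augmenteddualdegree} every vertex of the augmented dual $G^+$ has degree at most~$3$, hence so does every vertex of $C^+$, and removing edges can only decrease degrees; thus $S^+$ has maximum degree~$3$. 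For the shrunken skeleton, dissolving a vertex of degree~$2$ leaves the degrees of all other vertices unchanged (we delete the degree-$2$ vertex and reconnect its two distinct neighbors by a new edge), so minimum degree~$2$ and maximum degree~$3$ are preserved; and since dissolving is never applied when it would create a self-loop, the shrunken skeleton has no self-loops. I would remark that after dissolving all degree-$2$ vertices, no degree-$2$ vertices remain that have distinct neighbors, so the shrunken skeleton's vertices all have degree exactly~$3$, though we only need degree at most~$3$ here.

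Next I would establish connectivity and the face count. By Lemma~\ref{lem:augmentedconnected}, $C^+$ is connected with exactly $k$ faces, one per terminal face in $\mathcal{F}$. Stripping a degree-$1$ vertex from a connected graph keeps it connected (it is the removal of a leaf edge), so $S^+$ is connected; and it does not change which region of the plane each terminal lies in, since we are only deleting a pendant edge whose two faces are the same face. Hence $S^+$ still has exactly $k$ faces, each enclosing exactly one terminal face of $\mathcal{F}$. Here I would be careful: $C^+$ may have pendant subtrees hanging off it (the ``nerves''), and removing them could in principle merge faces — but it cannot, because a pendant edge has the same face on both sides, so its removal does not merge two distinct faces. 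Dissolving a degree-$2$ vertex also preserves connectivity and the set of faces (it merely reroutes a boundary through one fewer vertex), so the shrunken skeleton is connected with $k$ faces as well. That each face \emph{strictly} encloses exactly one terminal face requires slightly more care: after removing nerves and dissolving, the boundary of each face is a cycle of the augmented dual that used to separate terminal face $F_\alpha$ from the rest; since the instance is transformed, the faces of $\mathcal{F}$ are vertex-disjoint and their vertices are all terminals, so no boundary cycle of the shrunken skeleton can pass through a vertex on a terminal face, giving strict enclosure.

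The main obstacle I anticipate is verifying that dissolving degree-$2$ vertices does not accidentally destroy a face or create a self-loop in a way that breaks the ``$k$ faces, one terminal face each'' conclusion — in particular, one must check that each face of $S^+$ really does contain a genuine degree-$2$-free boundary cycle after shrinking, i.e., that two distinct terminal faces never get ``collapsed'' together. The key leverage is that $C^+$ already separates the $k$ terminals faces into distinct faces (Lemma~\ref{lem:augmentedconnected}), that a pendant edge has equal faces on both sides so deleting it merges nothing, and that dissolving a degree-$2$ vertex with distinct neighbors is a homeomorphism of the plane graph onto a plane graph with the same face structure. The no-self-loop property is then immediate from the definition of the shrunken skeleton, where we simply refrain from dissolving any degree-$2$ vertex whose two neighbors coincide (equivalently, only perform ordinary dissolving, never strong dissolving). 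Assembling these observations yields all the stated properties.
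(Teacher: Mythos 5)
Your proof follows essentially the same plan as the paper's, and the arguments for the degree bounds, connectivity, face count, and strict enclosure (via Lemma~\ref{lem:augmenteddualdegree}, the observation that a pendant edge has the same face on both sides, and the fact that augmented terminals have degree~$1$ and therefore vanish from the skeleton) are all sound in substance, though your strict-enclosure paragraph slightly conflates primal and dual vertices --- what does the work is that the augmented terminals are the only augmented dual vertices in $F_\alpha$ and they are stripped from the skeleton, not anything about the vertices of $G$ on $\partial F_\alpha$.

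There is, however, one genuine gap: the no-self-loop claim. You correctly note that the dissolving operation (which by definition is applied only to degree-$2$ vertices with two \emph{distinct} neighbors) never creates a self-loop. But you never rule out the possibility that $C^+$, and hence the skeleton $S^+$, already contains a self-loop \emph{before} any stripping or dissolving. Removing pendant edges cannot remove a self-loop (a self-loop gives its vertex degree~$2$), and a degree-$2$ vertex carrying a self-loop cannot be dissolved (it has only one ``neighbor'', namely itself), so such a self-loop would survive into the shrunken skeleton. The paper closes this with a separate argument: a self-loop in $C^+$ comes from a self-loop in $C^*$, i.e.\ a dual edge whose primal edge is a bridge of $G$; but the transformed instance guarantees $G$ is bridgeless, so no such self-loop exists. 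This step is needed and is missing from your write-up.

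A minor further inaccuracy, which you flag as unnecessary: you claim the shrunken skeleton has all degrees exactly~$3$. That is false in general --- a degree-$2$ vertex whose two edges are parallel (both going to the same neighbor) cannot be dissolved without creating a self-loop and therefore persists. The paper's Lemma~\ref{lem:skeleton-size} explicitly accounts for exactly such degree-$2$ vertices. Since you do not rely on this claim it causes no harm, but it should be deleted rather than left as an ``I would remark''.
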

\begin{proof}
\rev{By Lemma~\ref{lem:augmenteddualdegree}, each augmented dual vertex of $C^+$ has maximum degree~$3$. The same holds for the skeleton and shrunken skeleton of $C^+$ by construction. The construction of the skeleton ensures that both the skeleton and shrunken skeleton have minimum degree~$2$.}

Following Lemma~\ref{lem:augmentedconnected}, $C^+$ has $k$ faces.
Recall that for any bridge of a planar graph, the faces on both sides of the bridge are the same. Hence, removing a bridge does not decrease the number of faces. It follows immediately that the skeleton $S^+$ has $k$ faces, one per terminal face in $\mathcal{F}$. Since each augmented terminal has degree one, its incident edge is a bridge of $G^+$. Hence, each face of $S^+$ strictly encloses exactly one terminal face of $\mathcal{F}$. Since $C^+$ is connected by Lemma~\ref{lem:augmentedconnected}, by construction, $S^+$ is also connected. Note that, while dissolving vertices of degree~$2$, we can maintain essentially the same embedding as of $S^+$. Hence, the shrunken skeleton has the same properties.

Finally, we consider the possibility of self-loops. Since self-loops cannot arise by dissolving vertices of degree~$2$ or removing edges incident to a vertex of degree~$1$, this means that $C^+$ and thus even $C^*$ must have a self-loop. This self-loop encloses a terminal by the minimality of $C$. Then the edge of $G$ corresponding to this self-loop is a bridge of $G$. However, $G$ is bridgeless by \rev{the fact that the instance is transformed}.
\end{proof}

It is crucial to note that while the skeleton is connected, it is not necessarily bridgeless. This has important algorithmic consequences discussed later.

\begin{definition}
\rev{Let $C$ be any minimum multiway cut of $(G,T,\omega)$ and $C^+$ the set of corresponding edges of the augmented dual.} Let $F_\one \in \mathcal{F}$ and let $f_\one$ be the corresponding face of the skeleton of $C^+$. We call the set of edges of the augmented dual of $f_\one$ the \defi{spine} of $F_\one$ with respect to $C^+$.
\end{definition}
Note that a spine is not necessarily isomorphic to a cycle, but only to a closed walk, because the skeleton is not necessarily bridgeless.

\begin{definition}
Let $F_\one \in \mathcal{F}$. An \defi{enclosing cut} is a cut that separates $T_\one$ from $T \setminus T_\one$ and also separates each terminal in $T_\one$ from every other terminal.
\end{definition}
\begin{remark}
Chen and Wu~\cite{Chen-Wu} use the term \defi{island cut} to denote the $(T_\one, T\setminus T_\one)$-cut for each $F_\one \in \mathcal{F}$. Since they find a 2-approximate solution instead of an exact one, it suffices to separate the graphs into islands and then find a multiway cut for each island.
\end{remark}

\begin{corollary}\label{cor:skeleton}
Let $C$ be any minimum multiway cut of $(G,T,\omega)$ and $C^+$ the set of corresponding edges of the augmented dual. Consider a face $f_\one$ of the skeleton of $C^+$ and let $F_\one \in \mathcal{F}$ be the single terminal face strictly enclosed by it. Then the set of edges of $G$ corresponding to the spine of $F_\one$ is a $(T_\one, T\setminus T_\one)$-cut and the set of edges of $G$ corresponding to the edges of the augmented dual of $C^+$ enclosed by $f_\one$ is an enclosing cut.
\end{corollary}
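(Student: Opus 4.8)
The plan is to derive Corollary~\ref{cor:skeleton} as a direct consequence of Lemma~\ref{lem:skeleton} together with the feasibility of $C$ as a multiway cut. First I would set up the two objects in play: the spine of $F_\one$ is the closed walk bounding the face $f_\one$ of the skeleton $S^+$, and the "enclosed" set is all of $C^+$ drawn inside (and on the boundary of) $f_\one$. By Lemma~\ref{lem:skeleton}, $f_\one$ strictly encloses exactly the one terminal face $F_\one$, hence strictly encloses exactly the terminals of $T_\one$ (recall the instance is transformed, so all vertices of $F_\one$ are the terminals of $T_\one$) and no terminal of $T \setminus T_\one$.

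For the claim that the primal edges dual to the spine form a $(T_\one, T \setminus T_\one)$-cut, the key step is to pass through the planar duality: a set of dual edges separates two primal vertex sets iff, in the drawing, every primal curve from one set to the other must cross one of those dual edges. Since the spine is precisely the boundary walk of $f_\one$, and $T_\one$ lies strictly inside $f_\one$ while $T \setminus T_\one$ lies in its complement (outside $\bd{f_\one}$), any path in $G$ between a vertex of $T_\one$ and a vertex of $T \setminus T_\one$ is a curve from the inside of $f_\one$ to the outside, and therefore crosses the boundary of $f_\one$; each such crossing is with a primal edge dual to a spine edge. I would phrase this carefully because $S^+$ need not be bridgeless, so $\bd{f_\one}$ is a closed walk rather than a simple cycle; but a closed walk still separates its strict interior from its exterior in the plane, so the argument goes through. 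This gives the first half.

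For the second half, the edges of $G$ dual to the augmented-dual edges of $C^+$ enclosed by $f_\one$ form an enclosing cut of $F_\one$: these edges contain the spine (so they already give the $(T_\one, T\setminus T_\one)$-separation by the previous paragraph), and additionally, since $C$ is a multiway cut of $(G,T,\omega)$, its dual $C^*$ (equivalently $C^+$) has the property that each of its faces encloses at most one terminal (Proposition~\ref{prp:dualfaces}, lifted to $C^+$ via the correspondence in Lemma~\ref{lem:augmentedconnected}). Restricting attention to the part of $C^+$ inside $f_\one$: this subgraph together with $\bd{f_\one}$ cuts the interior of $f_\one$ into regions, each a sub-region of a face of $C^+$, hence each containing at most one terminal of $T_\one$. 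So no two distinct terminals of $T_\one$ lie in the same region of this restricted edge set, which is exactly the condition that the corresponding primal edges separate every pair of terminals of $T_\one$; combined with the $(T_\one, T\setminus T_\one)$-separation this is precisely an enclosing cut by Definition~\ref{def:transform}'s neighboring... rather, by the definition of enclosing cut.

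The main obstacle I anticipate is making the topological separation argument fully rigorous in the presence of bridges in the skeleton: when $\bd{f_\one}$ is a non-simple closed walk, one must be careful that "strictly enclosed" still behaves like "in the bounded component", and that an edge traversed twice by the walk does not create a spurious passage between interior and exterior. I would handle this by invoking the already-established fact (Lemma~\ref{lem:skeleton}) that $f_\one$ is a genuine face of the plane graph $S^+$ strictly enclosing $F_\one$, so that the region $f_\one$ is arc-connected and its boundary is exactly the spine; then any arc leaving $f_\one$ meets $\bd{f_\one}$, and the crossing-parity/duality correspondence (Proposition on plane duals, condition (b.)) finishes it. The rest is bookkeeping with the $G$–$G^+$ edge correspondence noted after the definition of $G^+$.
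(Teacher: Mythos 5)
Your proof is correct and takes essentially the same approach as the paper's, which dispatches the corollary in a single sentence by noting that it follows immediately from Lemma~\ref{lem:skeleton} (each face of the skeleton strictly encloses exactly one terminal face). You simply spell out the planar-duality and boundary-crossing details that the paper leaves implicit, including the care needed when the spine is a non-simple closed walk due to bridges.
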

\begin{proof}
This is immediate from the fact that, by Lemma~\ref{lem:skeleton}, each face of the skeleton strictly encloses exactly one terminal face of $\mathcal{F}$.
\end{proof}

The following is proved as in Dahlhaus \etal~\cite[Lemma~2.2]{DJPSY94} with some modifications.

\begin{lemma}\label{lem:skeleton-size}
Let $C$ be any minimum multiway cut of $(G,T,\omega)$ and $C^+$ the set of corresponding edges of the augmented dual.
Then the shrunken skeleton of $C^+$ has at most $4k$ vertices and at most $12k$ edges.
\end{lemma}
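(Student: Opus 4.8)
The plan is a counting argument via Euler's formula, using the degree bounds of Lemma~\ref{lem:skeleton} together with a small structural observation about parallel edges. Write $S$ for the shrunken skeleton of $C^+$; by Lemma~\ref{lem:skeleton} it is a connected planar multigraph without self-loops, has exactly $k$ faces, minimum degree~$2$, and maximum degree~$3$. The degenerate case $k=1$ is immediate: a connected planar graph with a single face is a tree, and a tree with at least one edge has a vertex of degree~$1$, so here $S$ is empty and the bounds hold. So assume henceforth that $k \ge 2$.

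First I would determine the number of degree-$3$ vertices exactly. Let $V, E$ be the numbers of vertices and edges of $S$, and let $n_2, n_3$ count the vertices of degree~$2$ and degree~$3$, so $V = n_2 + n_3$ and $2E = 2n_2 + 3n_3$ by handshaking. Substituting into Euler's formula $V - E + k = 2$ collapses everything to $k - n_3/2 = 2$, i.e.\ $n_3 = 2k - 4$. It then only remains to bound $n_2$, and this is the one step that needs an idea beyond Dahlhaus \etal, whose analogous graph is simple and hence has no degree-$2$ vertices at all.

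The observation is that any degree-$2$ vertex $v$ that survives in $S$ must have both incident edges leading to a single neighbor $u$ — otherwise $v$ has two distinct neighbors and would have been dissolved — so $v$ and $u$ span a digon. If $S$ has exactly two vertices the stated bounds are trivial; otherwise $\deg(u) = 3$, since $\deg(u)=2$ would force $\{u,v\}$ (together with the digon) to be a connected component, hence all of $S$. Moreover, no degree-$3$ vertex can be the partner of two distinct degree-$2$ vertices, as two vertex-disjoint digons at $u$ would require $\deg(u) \ge 4$. Thus $v \mapsto u$ injects the degree-$2$ vertices into the degree-$3$ vertices, giving $n_2 \le n_3 = 2k-4$. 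Finally $V = n_2 + n_3 \le 2(2k-4) = 4k-8 \le 4k$, and $E = V + k - 2 \le 5k - 10 \le 12k$ by Euler's formula. I expect the handling of parallel edges and digons to be the only point needing care; the rest is the standard Euler-formula computation as in Dahlhaus \etal
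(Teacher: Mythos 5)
Your proof is correct and takes a genuinely different route from the paper's. The paper's argument modifies the shrunken skeleton by inserting auxiliary vertices and edges (replacing each digon face by four triangles) to obtain a simple planar graph of minimum degree~$3$, then applies Euler's formula together with $e\le 3v-6$ to that modified graph, and finally observes that the bounds transfer back since the modification only added vertices and edges. Your argument works directly on the shrunken skeleton: combining Euler's formula with the handshake lemma pins down $n_3=2k-4$ exactly, and the structural observation---that every surviving degree-$2$ vertex must have both edges to a single neighbor (else it would have been dissolved), that this neighbor has degree~$3$ whenever $|V(S)|>2$, and that no degree-$3$ vertex can host two such digon partners---yields the injection $n_2\le n_3$. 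Both proofs ultimately rest on the same key fact (undissolved degree-$2$ vertices sit in digons), but yours avoids the graph surgery entirely, handles the $k=1$ and two-vertex edge cases cleanly, and in fact produces the sharper bounds $|V(S)|\le 4k-8$ and $|E(S)|\le 5k-10$ for $k\ge 2$. The paper's route is arguably more robust to changes in the degree bound, whereas yours exploits the degree-$3$ cap of Lemma~\ref{lem:skeleton} to get the count directly; the tradeoff is a matter of taste here, since both deliver $O(k)$.
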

\begin{proof}
\rev{Consider the shrunken skeleton of $C^+$. It has no vertices of degree~$1$ and no self-loops, is connected, and has $k$ faces by Lemma~\ref{lem:skeleton}. However, it might have parallel edges and, therefore, vertices of degree~$2$. Let $F$ be a face with exactly two vertices on its boundary, namely $u$ and $v$. Let $e_1,e_2$ be the two parallel edges that bound $F$. We add an additional edge $uv$ and embed it inside $F$. This creates two faces $F_1$ and $F_2$. Now, we subdivide $uv$, $e_1$, and $e_2$; this create three new vertices, say $x$, $y$, and $z$ respectively. Add the edges $xy$ and $xz$ and embed them inside $F_1$ and $F_2$ respectively. Note that this modification strictly reduces the number of parallel edges and that the newly created vertices have degree at least~$3$. By iteratively modifying each face with exactly two vertices on its boundary in the same way, we ensure that there are no parallel edges and no vertices of degree~$2$ in the shrunken skeleton. This modification at most doubles the number of existing faces and edges.}

\rev{Let $v$, $e$, and $f$ be the number of vertices, edges, and faces of the after applying this modification to the shrunken skeleton. This graph has no vertices of degree~$1$ and $2$; hence, $e \geq 3v/2$. By Euler's formula, $v-e+f = 2$, and thus $v-3v/2+f \geq 2$. Then $v \leq 2(f-2) \leq 2f$. We also recall that any simple connected planar graph satisfies $e \leq 3v-6$ by Euler's formula, and thus $e \leq 6f$. Since our modification at most doubles the original number of faces, $f \leq 2k$. Hence, $v \leq 4k$ and $e \leq 12k$. Since the modifications only increase the number of vertices and edges, these bounds also hold for the original shrunken skeleton (prior to modification).}
\end{proof}

\subsection{Single Face}
Per Corollary~\ref{cor:skeleton}, for each face $F_\one \in \mathcal{F}$ and corresponding face $f_\one$ of the skeleton, the spine of $F_\one$ is an island cut, and the set of edges of the augmented dual of $C^+$ enclosed by $f_\one$ is an enclosing cut. We show that the enclosing cut has a very specific structure.
\rev{Intuitively, it is a cycle with trees attached to them that separate the terminals of $T_\one$. These terminals form intervals along $F_\one$, a notion that we define rigorously first. Then, we prove the structure of more detail, showing that the trees are Steiner trees on an interval and a vertex on the bounding cycle, called nerves. Understanding this structure is helpful when we later design a dynamic programming algorithm that finds a solution with this structure.}

\subsubsection{Intervals}
We need the following definitions.

\begin{definition}\label{def:interval}
Let $F_\one \in \mathcal{F}$.
An \defi{interval} of augmented terminals of $T^+_\one$ is the set $\{\augvertex{\one}{i} \mid i \in I\}$, where $I \subseteq \{1,\ldots,p_\one\}$ is any set such that either $|I| = p_\one$ or for each $i \in \{1,\ldots,p_\one\}$ except exactly one, it holds that there is a $j \in I$ where $j=i+1 \mod p_\one$. We say that the interval is \emph{between $i$ and $j$} ($1 \leq i,j \leq p_\one$) if $i \leq j$ and $I = \{i, \ldots, j\}$, or $i > j$ and $I = \{i,\ldots,p_\one,1,\ldots,j\}$.
\end{definition}

It is important to note that we treat intervals as ordered sets of terminals. In particular, the intervals $\{\augvertex{\one}{1},\ldots,\augvertex{\one}{p_\one}\}$ and $\{\augvertex{\one}{2},\ldots,\augvertex{\one}{p_\one},\augvertex{\one}{1}\}$ are distinct.

This definition invites several additional definitions that will prove useful in later parts of the paper.
Two intervals $I,I'$ of the same terminal face $F_\one$ are \emph{consecutive} if they are disjoint and there is a third (possibly empty) interval $I''$, disjoint from $I$ and $I'$, such that $I \cup I' \cup I'' = \augset{\one}$.

Given two consecutive intervals $I= \{\augvertex{\one}{1},\ldots,\augvertex{\one}{a}\}$ and $I'= \{\augvertex{\one}{a+1},\ldots,\augvertex{\one}{b}\}$ of the same terminal face $F_\one$, a terminal $t \in T_\one$ is \emph{inbetween} $I$ and $I'$ if $t =$ $\terminal{\one}{a}$. Note that there is exactly one terminal inbetween two consecutive intervals, unless $I \cup I' = \augset{\one}$, in which case there are exactly two.
A terminal $t \in T_\one$ is \emph{between} $I = \{\augvertex{\one}{1},\ldots,\augvertex{\one}{a}\}$ if $t \in \{t_1,\ldots,t_{a-1}\}$. 

A \emph{subinterval} of an interval $I = \{\augvertex{\one}{1},\ldots,\augvertex{\one}{a}\}$ is any subset $\{\augvertex{\one}{b},\ldots,\augvertex{\one}{c}\}$ of $I$ such that $1 \leq b \leq c \leq a$ or is the empty interval. A \emph{prefix} of an interval $I = \{\augvertex{\one}{1},\ldots,\augvertex{\one}{a}\}$ is any subinterval of $I$ containing $\augvertex{\one}{1}$ or is the empty interval. A \emph{suffix} of an interval $I = \{\augvertex{\one}{1},\ldots,\augvertex{\one}{a}\}$ is any subinterval of $I$ containing $\augvertex{\one}{a}$ or is the empty interval.

(Note that we have started the intervals at $\augvertex{\one}{1}$ only for simplicity and to avoid modulo-calculus, but the definitions extend in the obvious manner.)

\subsubsection{Enclosing Cut on your Nerves}
Using the notion of intervals, we can describe the enclosing cut. To this end, we rely on the following result.

\begin{theorem}[{Chen and Wu~\cite[Lemma~3]{Chen-Wu}}]\label{thm:chen-wu-full}
Let $(G,T,\omega)$ be an instance of {\problemEMWC} where $G$ is a biconnected plane graph such that all terminals of $T$ lie on the outer face. Then a set of edges is a minimal multiway cut for this instance if and only if the corresponding edges of the augmented dual form a minimal Steiner tree in the augmented dual on the augmented terminals.
\end{theorem}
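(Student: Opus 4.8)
The plan is to prove this as a restatement/citation of the known Chen–Wu result, but to give enough of a self-contained sketch that the reader trusts the translation into the augmented-dual language used in this paper. The core object is: $G$ a biconnected plane graph with all terminals on the outer face, and its augmented dual $G^+$, where the single outer-face dual vertex is split into augmented terminals $^\one t^+_1,\dots,^\one t^+_{p_\one}$, one per boundary interval between consecutive terminals. I would set up the correspondence that every edge of $G$ has a unique mate in $G^+$ and that cutting edges of $G$ corresponds exactly to selecting the mate edges of $G^+$.

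First I would prove the forward direction: if $C$ is a minimal multiway cut of $(G,T,\omega)$, then $C^+$ (its mate edges in $G^+$) is a Steiner tree on the augmented terminals. Acyclicity follows from minimality: if $C^+$ contained a cycle, that cycle would enclose some region; the primal edges dual to this cycle form a cut whose removal is unnecessary — more precisely, an edge of $C^+$ on a cycle corresponds, by the usual cut–cycle duality in planar graphs (Reif~\cite{Reif83}, as used in Proposition~\ref{prp:dualfaces}), to a primal edge whose two endpoints are separated by the rest of $C$ anyway, so $C$ was not minimal. For connectivity and spanning all augmented terminals: each terminal $t\in T$ must be separated from all others, so in $C^*$ (ordinary dual) the terminal $t$ lies in its own face; in $G^+$ the splitting of the outer vertex means the two boundary arcs adjacent to $t$ belong to distinct augmented terminals, and the portion of $C$ separating $t$ from its neighbors on the outer walk forces a path in $C^+$ between the corresponding augmented terminals. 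Running this over all terminals, and using that $G$ is biconnected so the outer boundary is a simple cycle, shows $C^+$ touches every augmented terminal and is connected; being acyclic and connected and minimal, it is a Steiner tree.

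For the converse I would argue: if $C^+$ is a minimal Steiner tree in $G^+$ on the augmented terminals, then its mate set $C$ in $G$ is a minimal multiway cut. The key point is that a Steiner tree connecting all augmented terminals, together with the outer boundary walk, bounds the plane into regions, and each region contains exactly one terminal of $T$ — because consecutive augmented terminals are separated in $G$ exactly by one terminal, and the tree paths between them "close off" each terminal into its own face. Hence deleting $C$ from $G$ separates all terminal pairs, so $C$ is a multiway cut; minimality of $C$ follows from minimality of the Steiner tree (removing any edge disconnects the tree, hence merges two faces, hence fails to separate some terminal pair). The main obstacle — and the place I would be most careful — is the bookkeeping around the augmented terminals and the biconnectivity hypothesis: one must make sure the outer walk really is a simple cycle so that the "one terminal per face" accounting is exact, and that the degree-one augmented terminals behave as leaves of the Steiner tree in the way the region-counting argument needs. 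Since this is verbatim Chen and Wu~\cite[Lemma~3]{Chen-Wu}, in the paper I would simply cite it and defer the detailed region-counting to their proof, noting that the only translation needed is from their "augmented dual" terminology to the one fixed in Section~\ref{sec:Skeleton and Nerves}, which is immediate.
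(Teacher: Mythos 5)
The paper does not prove this theorem; it is stated verbatim as a citation of Chen and Wu~\cite[Lemma~3]{Chen-Wu}, so your ultimate decision to cite and defer is exactly the paper's approach. Your accompanying sketch is in the right spirit, but the acyclicity argument in the forward direction is slightly off: it is not true that removing one edge $e$ of a dual cycle $Y^+$ leaves the endpoints of the corresponding primal edge separated by $C\setminus\{e\}$ (they are joined by $e$ itself); the correct reason is that $Y^+$ cannot pass through any degree-one augmented terminal, so the region it bounds encloses no terminal, and removing an edge of $Y^+$ merges a terminal-free face with another face of $C^+$, preserving the ``at most one terminal per face'' invariant. Since you defer to the citation, this slip does not affect the outcome, but if you keep the sketch you should replace the ``endpoints already separated'' reasoning by the face-merging argument.
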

In this result, the biconnectivity of $G$ ensures that the augmented dual $G^+$ is connected. For the conclusion of the above theorem to hold, it suffices that $G^+$ is connected.

\begin{corollary}\label{cor:connected aug dual sufficient}
Let $(G,T,\omega)$ be an instance of {\problemEMWC} where $G$ is a plane graph such that all terminals of $T$ lie on the outer face and $G^+$ is connected. Then a set of edges is a minimal multiway cut for this instance if and only if the corresponding edges of the augmented dual form a minimal Steiner tree in $G^+$ on the augmented terminals.
\end{corollary}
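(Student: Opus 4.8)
The plan is to derive the corollary by revisiting the proof of Theorem~\ref{thm:chen-wu-full}. As the remark following that theorem records, the hypothesis that $G$ is biconnected is used there for exactly one purpose: to guarantee that the augmented dual $G^+$ is connected. Since we now \emph{assume} $G^+$ connected, the task reduces to checking that every other step of Chen and Wu's argument survives dropping biconnectivity. I would carry this out by first isolating the combinatorial core of their proof as a self-contained claim and then verifying the claim directly.

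The core is the following duality. Because all terminals lie on the outer face, $G^+$ is obtained from the planar dual $G^*$ by splitting the dual vertex $v_0$ of the outer face into the augmented terminals $T^+$ --- one per maximal terminal-free arc of the outer boundary walk --- and distributing the dual edges of the boundary edges of $G$ among them. For $C \subseteq E(G)$ with corresponding edge set $C^+ \subseteq E(G^+)$, the claim is: $C$ is a multiway cut of $(G,T,\omega)$ if and only if the subgraph of $G^+$ induced by $C^+$ has a connected component containing all of $T^+$. For the forward direction, each face of $C^*$ encloses at most one terminal, so two boundary arcs that separate distinct terminals cannot lie in a common face of $C^+$; hence the augmented terminals naming those arcs lie in one component of $C^+$. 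For the converse, if all augmented terminals lie in one component $K$ of $C^+$, then drawing $K$ inside the closed disk bounded by the outer boundary of $G$ shows that $K$ separates the arcs pairwise, so $C$, which contains the primal edges of $K$, separates every pair of terminals. Neither direction uses biconnectivity; the only role of ``$G^+$ connected'' is that such a spanning component can always be found within $G^+$, so that the optimization in the theorem indeed ranges over subgraphs of $G^+$. Given the claim, a minimal multiway cut corresponds to an inclusion-minimal $C^+$ whose induced subgraph connects $T^+$: inclusion-minimality forces $C^+$ to be acyclic and to have every leaf in $T^+$, i.e.\ to be a (minimal) Steiner tree on $T^+$ in $G^+$; conversely any minimal Steiner tree on $T^+$ yields a minimal multiway cut. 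This is precisely the asserted equivalence.

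The step I expect to require the most care is the edge-case bookkeeping that biconnectivity would otherwise hide. Bridges of $G$ correspond to self-loops of $G^*$, which after the split may become loops or parallel edges of $G^+$; and a vertex of $G$ that appears several times on the outer boundary walk can cause the dual edge of a boundary edge to have both endpoints among the augmented terminals (possibly the same one). In each such case one must confirm that the ``component spanning $T^+$'' characterization of a multiway cut still holds and that inclusion-minimality still forces acyclicity. The observation that keeps this routine is that a self-loop of $C^+$ enclosing no terminal, or a redundant parallel edge, can always be deleted without losing feasibility and hence never occurs in a \emph{minimal} solution; so on minimal solutions the characterization and the ``minimal $\Leftrightarrow$ Steiner tree'' equivalence hold exactly as in the biconnected case.
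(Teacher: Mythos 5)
Your proposal is correct and rests on exactly the same key observation the paper makes in the remark preceding the corollary: in the proof of Theorem~\ref{thm:chen-wu-full}, biconnectivity of $G$ is invoked only to guarantee that $G^+$ is connected, so the conclusion survives under the weaker hypothesis that $G^+$ is connected directly. The only difference in presentation is that the paper stops there and defers the rest to Chen and Wu's proof of Lemma~3 in~\cite{Chen-Wu}, while you additionally sketch a self-contained re-derivation of that lemma (the ``multiway cut $\Leftrightarrow$ component spanning $T^+$'' duality and the minimality bookkeeping); your sketch is consistent with their argument and introduces no new gaps.
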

We can also obtain the following corollary.

\begin{corollary}\label{cor:chen-wu}
Let $(G,T,\omega)$ be an instance of {\problemEMWC} where $G$ is a plane graph such that the outer face is bounded by a cycle and all terminals of $T$ lie on this cycle. Then a set of edges is a minimal multiway cut for this instance if and only if the corresponding edges of the augmented dual form a minimal Steiner tree in the augmented dual on the augmented terminals.
\end{corollary}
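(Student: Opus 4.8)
The plan is to reduce this statement directly to Corollary~\ref{cor:connected aug dual sufficient}. The only thing that separates the present hypothesis from that corollary is the connectivity of the augmented dual $G^+$: Corollary~\ref{cor:connected aug dual sufficient} assumes $G^+$ is connected, whereas here we are only told that the outer face of $G$ is bounded by a \emph{cycle} (rather than asking $G$ to be biconnected or $G^+$ to be connected directly). So the whole argument amounts to showing: if the outer face of a plane graph $G$ is bounded by a cycle, then the augmented dual $G^+$ of $G$ (with respect to the single terminal face, the outer face) is connected. Given this, the corollary follows verbatim from Corollary~\ref{cor:connected aug dual sufficient}.

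To see the connectivity of $G^+$, recall that $G^+$ is obtained from the planar dual $G^*$ by replacing the dual vertex $v$ of the outer face with one augmented terminal per maximal terminal-free subpath of the outer cycle, each attached to the dual edges crossing that subpath. It is a standard fact that the planar dual $G^*$ of any connected plane graph is connected; more relevantly, $G^* - v$ is connected whenever $v$ is not a cut vertex of $G^*$, i.e.\ whenever the boundary walk of the outer face of $G$ is a simple cycle (a face whose boundary is a simple cycle corresponds to a non-cut vertex of the dual, since a cut vertex of the dual would force the boundary walk to revisit a vertex). Since splitting $v$ into the augmented terminals only refines $v$ and each augmented terminal inherits at least one incident edge, $G^+$ is connected iff $G^*-v$ is connected together with the fact that each augmented terminal is incident to $G^*-v$ through its one edge — which holds because every edge of the outer cycle has its other endpoint (its other dual face) inside $G^*-v$. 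Hence $G^+$ is connected.

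The key steps, in order, are therefore: (i) identify the outer face as the unique terminal face and observe $G^+$ is built from $G^*$ by splitting its dual vertex $v$; (ii) show that the outer face being bounded by a simple cycle implies $v$ is not a cut vertex of $G^*$, so $G^* - v$ is connected; (iii) conclude that splitting $v$ into augmented terminals, each retaining an incident edge whose far endpoint lies in $G^*-v$, keeps the graph connected, so $G^+$ is connected; (iv) invoke Corollary~\ref{cor:connected aug dual sufficient} to finish.

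I expect step (ii) to be the only place requiring any real care: one must argue cleanly that a face bounded by a simple cycle corresponds to a non-cut vertex of the dual. This is essentially the duality between cut vertices of $G^*$ and faces of $G$ whose boundary walk is not a simple cycle; it can be proved by noting that if removing $v$ disconnected $G^*$, then the edges around $v$ would split into two nonempty arcs lying in different components, which in the primal would force the outer face boundary to touch the same primal vertex from two sides, contradicting that it is a simple cycle. Everything else is a direct appeal to the already-established Corollary~\ref{cor:connected aug dual sufficient}.
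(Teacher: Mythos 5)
Your proof is correct and follows essentially the same route as the paper: reduce to Corollary~\ref{cor:connected aug dual sufficient} by establishing that $G^+$ is connected, via the observation that the dual vertex of the outer face is not a cut vertex of $G^*$. The one difference is how that cut-vertex fact is justified: the paper cites Lemma~\ref{lem:dualcutvertex}, which is stated for the dual $C^*$ of a multiway cut in a transformed instance and whose proof uses those hypotheses, whereas you give a direct topological argument from the premise that the outer face is bounded by a simple cycle (any two bounded faces can be joined by an arc inside the closed disk bounded by the cycle, avoiding vertices, yielding a walk in $G^*-v$). Your version is self-contained and arguably more careful than the paper's terse appeal to Lemma~\ref{lem:dualcutvertex}; step (ii), as you anticipated, is the only place needing care, and the disk-walk argument is the cleanest way to make it rigorous, more so than the arc-splitting phrasing in your last paragraph.
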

\begin{proof}
We observe that in $G^*$, the vertex representing the outer face is not a cut vertex of the solution by Lemma~\ref{lem:dualcutvertex}. Hence, $G^+$ is connected and the result follows from Corollary~\ref{cor:connected aug dual sufficient}.
\end{proof}
It is important to note that the mentioned cycle might be two parallel arcs or a self-loop.

These results suggest that Steiner trees in the augmented dual are important for a multiway cut. This is indeed the case, although the situation is substantially more involved when $|\mathcal{F}|>1$.

\begin{definition}\label{def:nerve}
\rev{Given a plural face $F_\one \in \mathcal{F}$, $v \in V(G^+)$, and integers $i,j$ with $1 \leq i,j \leq p_\one$, a \defi{nerve} for $F_\one$ and $(v,i,j)$ is a minimum Steiner tree in $G^+$, with the terminal set being the attachment point $v$ and the interval of augmented terminals between $i$ and $j$, that has a single augmented dual edge incident to $v$. The vertex $v$ is called the \defi{attachment point} of the nerve.}
\end{definition}
For a nerve $(v,i,j)$, we may speak of the interval between $i$ and $j$ as simply the interval of the nerve. In this way, we can extend the definitions of consecutive, between, and in between, originally defined for intervals, to nerves in the straightforward manner. Note that nerves are not necessarily unique for a triple $(v,i,j)$; we deal with this issue later.

A Steiner tree, as in Definition~\ref{def:nerve} might not necessarily exist, but we argue that our solution is built only from such Steiner trees.

\begin{lemma}\label{lem:nerves}
Let $C$ be any minimum multiway cut of $(G,T,\omega)$ and $C^+$ the set of corresponding edges of the augmented dual. Consider a face $f_\one$ of the skeleton $S^+$ of $C^+$ and let $F_\one \in \mathcal{F}$ be the single terminal face strictly enclosed by it. \rev{Let $X^+$ be the set of edges of the augmented dual enclosed by $f_\one$. Then $X^+$ is a union of nerves, such that the union of all intervals of the nerves is $T^+_\one$.}
\end{lemma}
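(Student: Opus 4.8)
The plan is to fix a face $f_\one$ of the skeleton $S^+$, let $F_\one$ be the terminal face it strictly encloses, and analyze the edge set $X^+$ enclosed by $f_\one$. First I would set up the local instance: let $c_\one$ be the spine of $F_\one$ (the closed walk bounding $f_\one$), and consider the subgraph $G_\one$ of $G$ consisting of all edges whose augmented dual edges are enclosed by $f_\one$ (equivalently, drawn inside the closed curve $f_\one$), together with $F_\one$ itself. By Corollary~\ref{cor:skeleton}, the edge set of $X^+$ corresponds to an enclosing cut, i.e. it separates $T_\one$ from $T\setminus T_\one$ and separates the terminals of $T_\one$ pairwise. Dually, $X^+$ is a connected planar graph (a subgraph of the connected $C^+$) whose faces are exactly the faces of $C^+$ that lie inside $f_\one$; by Proposition~\ref{prp:dualfaces} each such face encloses at most one terminal, and since $X^+$ together with the spine bounds the region enclosing $F_\one$, the faces inside $f_\one$ collectively enclose exactly the terminals of $T_\one$, one per face.

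Next I would invoke the Chen–Wu machinery. The key observation is that, inside the region bounded by $f_\one$, the terminals of $T_\one$ effectively lie on a single face: after contracting everything outside $f_\one$, the "outer boundary" of the local instance is the spine $c_\one$, and all of $T_\one$ is on it (since $F_\one$ is strictly enclosed by $f_\one$ and $G$ is transformed, so all vertices of $F_\one$ are terminals of $T_\one$). This is precisely the hypothesis of Corollary~\ref{cor:chen-wu} (allowing that the bounding "cycle" may be two parallel arcs or a self-loop, which is fine since the spine is only a closed walk, not necessarily a cycle). Hence, by the minimality of $C$ restricted to the local instance — which follows from an exchange argument: if $X^+$ were not a minimal/minimum local solution we could replace it and contradict optimality of $C$ globally — $X^+$ forms a minimal Steiner tree on the augmented terminals $T^+_\one$ in the local augmented dual, or more precisely a forest of such trees once we account for the attachment structure to the spine.

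Then I would establish the decomposition into nerves. The skeleton face boundary $f_\one$ consists of skeleton edges, which are in $C^+$; the edges of $X^+$ strictly interior to $f_\one$ form, after removing degree-$1$ vertices, trees hanging off the boundary $f_\one$. Each such tree $N$ attaches to $f_\one$ at exactly one vertex $v$ (if it attached at two points, it would together with a piece of the spine bound a face, forcing $v$ to be a cut vertex of $C^*$ corresponding to... no — rather, a second attachment point would create a cycle interior to $f_\one$, which would enclose a region; by Proposition~\ref{prp:dualfaces} that region's faces each enclose $\le 1$ terminal, but then the skeleton should have extended into that region, contradicting that $f_\one$ is a skeleton face). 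So each tree is a Steiner tree on $v$ together with the augmented terminals it reaches. I then argue the augmented terminals reached by a single tree form an \emph{interval}: the trees are drawn disjointly inside the disk bounded by $f_\one$, attaching to augmented terminals $\augvertex{\one}{1},\dots,\augvertex{\one}{p_\one}$ arranged in cyclic order around $F_\one$; planarity forces the terminal-leaf sets of distinct trees to be "non-crossing," hence nested or disjoint, and minimality (no tree reaches the same augmented terminal twice, and every augmented terminal is reached since $X^+$ separates all of $T_\one$ pairwise) forces them to partition $T^+_\one$ into intervals. Finally, minimality of each such tree as a Steiner tree on $\{v\}\cup\text{(its interval)}$, with a single edge incident to $v$ (it attaches to the spine at exactly one point), is exactly Definition~\ref{def:nerve}, so each tree is a nerve for $F_\one$.

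I expect the main obstacle to be the planarity/non-crossing argument showing the terminal sets of distinct trees form intervals rather than merely proving each piece is individually a Steiner tree. Care is needed because the spine $c_\one$ is only a closed walk (the skeleton may have bridges), so "the disk bounded by $f_\one$" must be interpreted via the embedding rather than as a genuine Jordan domain; one must track the cyclic order of augmented terminals along the face walk of $F_\one$ and argue that a tree attaching at spine-vertex $v$ and reaching $\augvertex{\one}{i}$ and $\augvertex{\one}{j}$ must, by the Jordan curve theorem applied to the cycle formed by the tree's path from $\augvertex{\one}{i}$ to $\augvertex{\one}{j}$ together with a boundary arc, enclose all augmented terminals between them — and that these enclosed augmented terminals cannot be separated off into another tree without a crossing. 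The exchange argument for local optimality is routine by comparison with the argument already used in Lemma~\ref{lem:disentangleA}, so I would keep that brief.
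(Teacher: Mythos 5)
Your overall plan follows the paper's proof closely: pass to the local instance inside $f_\one$ via Corollary~\ref{cor:skeleton}, invoke Corollary~\ref{cor:chen-wu} (essentially the contraction $G^+/Y^+$ of the spine) to obtain a Steiner-tree structure, decompose into subtrees hanging off the spine, and use planarity/non-crossing to get the interval structure. That skeleton is right, and your single-attachment-point observation and the Jordan-curve argument for intervals are both in line with what the paper does (the paper dispatches the interval claim in three lines; it is not where the real work is).

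The genuine gap is that you never argue each such subtree is a \emph{minimum Steiner tree in $G^+$} on its attachment point and its interval, which is what Definition~\ref{def:nerve} actually demands. What you get essentially for free from Chen--Wu and minimality of $C$ is that $X^+$ is a minimum Steiner tree in the contracted graph $G^+/Y^+$ on $T^+_\one \cup \{y^+\}$, and hence that each subtree is optimal \emph{in that contracted graph}. But a minimum Steiner tree in the full $G^+$ on the same terminal set could be strictly cheaper, by passing through edges of $Y^+$ or leaving the region bounded by $f_\one$ entirely; there is no a priori reason the spine-constrained optimum coincides with the global one. Ruling this out is where the technical content of the paper's proof lives: one assumes a cheaper $W^+$ exists, distinguishes whether it crosses $Y^+$ or reuses edges of $Y^+$, and in each case produces either a face enclosing no terminal (contradicting Proposition~\ref{prp:dualfaces}) or a feasible multiway cut strictly cheaper than $C$ (see the $W^+$ replacement and Figure~\ref{fig:no expose}). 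You characterize the exchange step as ``routine by comparison with Lemma~\ref{lem:disentangleA}'' and name the planarity/interval step as the main obstacle; this misplaces the difficulty, and without the $W^+$ argument your proof establishes only that the subtrees are spine-local minimum Steiner trees, not nerves. (Separately, the phrase ``after removing degree-$1$ vertices'' in your decomposition step should be dropped --- the degree-$1$ vertices of $X^+$ are precisely the augmented terminals the nerves must reach --- and the single-attachment argument is cleaner via the paper's observation that a second attachment point yields a cycle in $C^*$ bounding a terminal-free face, contradicting Proposition~\ref{prp:dualfaces}, rather than the somewhat circular ``the skeleton should have extended'' reasoning.)
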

\begin{proof}
\rev{Let $Y^+$ be the spine of $F_\one$.} 
By Corollary~\ref{cor:skeleton}, $G \setminus Y$ does not contain any path connecting the terminals of $T_\one$ to terminals of $T \setminus T_\one$. Therefore, the region of $G^+$ bounded by $Y^+$ and the boundary of $F_\one$ does not contain any terminal of $T\setminus T_\one$. The edges in $X$ thus correspond to a multiway cut of $T_\one$ in $G \setminus Y$. Moreover, this multiway cut is minimal in $G \setminus Y$: any edge that could be removed from $X$ while it remains a multiway cut in $G\setminus Y$ can also be removed from $C$ while it remains a multiway cut in $G$, which would contradict the minimality of $C$.

Now consider $G \setminus Y$. Note that the spine does not contain any augmented terminals, as those have degree~$1$ in $C^+$ and thus are not part of the skeleton by definition. Hence, $Y$ does not contain any edges of $F_\one$ and therefore, $F_\one$ persists in $G\setminus Y$. 
Let $Q$ be the component of $G \setminus Y$ that contains $F_\one$.
Since the instance is preprocessed by adding an edge between every pair of consecutive vertices on any terminal face of the input graph $G$, $F_\one$ is a simple cycle in $Q$. 
Moreover, $X$ is a minimal multiway cut for $T_\one$ in $Q$, as argued above.
Hence, by Corollary~\ref{cor:chen-wu}, $X^+$ is a minimal Steiner tree in the augmented dual of $Q$ on the terminal set $T^+_\one$. It is in fact of minimum weight; otherwise, we could replace $X^+$ by a minimum-weight Steiner tree on $T^+_\one$ and obtain a minimum multiway cut for $G$ of smaller weight using Corollary~\ref{cor:chen-wu} and the fact that $Y$ forms an island cut.

Now we consider what $X^+$ looks like in the augmented dual of $G$. We note that all edges of the augmented dual of $G \setminus Y$ also appear in the augmented dual of $G$. Hence, we can think of $X^+$ as a set of edges in $G^+$. Let $X$ be the corresponding set of edges of $G$. Deleting the set of edges in $Y$ from $G$ is equivalent to contracting the edges of $Y^+$ in the augmented dual. Let $y^+$ be the vertex formed by contracting all the edges in $Y^+$. Since $C^+$ is connected by Lemma~\ref{lem:augmentedconnected}, it follows that at least one edge of $X^+$ is incident to $y^+$.

Deleting $y^+$ from $X^+$ creates $\deg(y^+)$ many connected components of $X^+$. We treat each of these connected components as subsets of edges, which includes the unique edge incident to $y^+$. We claim that each of these components \rev{is a nerve, defined by $y^+$ and some interval of $T^+_\one$ in its vertex set}. Clearly, each of the components is a Steiner tree on $y^+$ and some subset of $T^+_\one$. We first show that the augmented terminals contained in each component form an interval of $T^+_\one$. Due to the planarity of $G^+$, no two tree edges cross each other. Therefore, if the terminals of any two connected components were to cross each other on $T^+_\one$, then they must intersect in some vertex. However, since each connected component is maximally connected, we reach a contradiction. Hence, the terminals form an interval.
 
 \begin{figure}[t]
     \centering
     \includegraphics[width=0.5\textwidth]{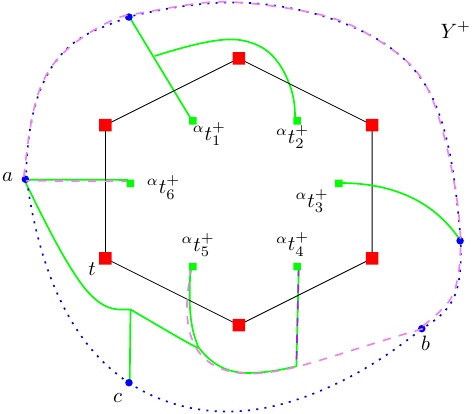}
     \caption{The boundary of the face $F_\one$ is shown in black. The terminals of $T_\one$ lying on the boundary are drawn as red boxes. The spine $Y^+$ of the minimum enclosing cut dual is drawn in blue dotted lines. The green trees are the minimum Steiner trees of $X^+ \setminus \{y^+\}$ in $G^+/Y^+$. The violet dashed lines represent the \mst~in $G^+$ on the terminal set $\{\augvertex{\one}{4}, \augvertex{\one}{5}, \augvertex{\one}{6}\}$, in which the path between $\augvertex{\one}{5}$ and $\augvertex{\one}{6}$ contains the subpath of $Y^+$ between the vertices $a$ and $b$ that does not enclose $t$.}\label{fig:no expose}
 \end{figure}
 
Due to the minimality of $X^+$ in $G^+/Y^+$, with respect to the terminal set $T^+_\one \cup \{y^+\}$, we claim that each of the components is a \mst~in $G^+/Y^+$ containing its respective interval of $T^+_\one$. Suppose that one of these trees is not a minimum Steiner tree in $G^+$. Then, there must exist in $G^+$ a Steiner tree $W^+$ of lower weight on the same terminal set, such that it either contains some edges of $Y^+$ or crosses $Y^+$. In the latter case, there is a subpath of this Steiner tree that intersects $Y^+$ in at least two vertices, leading to the creation of a cycle enclosing no terminal of $T_\one$. That would contradict the minimality of $X^+$. In the former scenario, suppose that we replace the concerned component of $X^+$ with $W^+$. We claim that all the previously enclosed terminals by the component of $X^+$, remain enclosed after the replacement. If for some terminal $t \in T_\one$ in the interval, $W^+$ does not contain a subpath enclosing it, then the path between the augmented vertices flanking it contains edges of $Y^+$. This path in $W^+$ crosses all the paths from $t$ to  $T_\one \setminus \{t\}$, and the segment of $Y^+$ between the points of intersection crosses all the paths from $t$ to terminals in $T \setminus T_\one$. Therefore, the region bounded by $W^+$ and $Y^+$ that contains $t$ cannot contain any other terminal of $T_\one$, and $t$ remains enclosed. The replacement would, thus, yield an enclosing cut of $T_\one$, strictly smaller than $X^+ \cup Y^+$. This, again, contradicts the minimality of $X^+$. Figure~\ref{fig:no expose} illustrates the argument.

Finally, suppose that some \mst~intersects the spine in more than one vertex. Then $X^+ \cup Y^+$ would contain a cycle that encloses no terminals, which contradicts Proposition~\ref{prp:dualfaces}. \rev{Following the above arguments, that we may assume that a single edge is incident to the attachment points.}
\rev{Therefore, $X^+$ is a union of nerves, such that the union of all intervals of the nerves is $T^+_\one$.}
\end{proof}

\rev{The proof of the previous lemma immediately implies the following:}

\begin{corollary} \label{cor:forest}
\rev{Let $C$ be any minimum multiway cut of $(G,T,\omega)$ and $C^+$ the set of corresponding edges of the augmented dual. Consider a face $f_\one$ of the skeleton $S^+$ of $C^+$ and let $F_\one \in \mathcal{F}$ be the single terminal face strictly enclosed by it. Let $X^+$ be the set of edges of the augmented dual enclosed by $f_\one$. Let $Z^+$ be any other set of nerves on the same attachment point and interval combinations as the nerves of $X^+$. Then $(C \setminus X) \cup Z$ is a minimum multiway cut of $(G,T,\omega)$.}
\end{corollary}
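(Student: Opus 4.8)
The plan is to read off everything we need from the proof of Lemma~\ref{lem:nerves}. Recall from that proof that, with $Y^+$ the spine of $F_\one$, the set $X^+$ is a union of \emph{pairwise edge-disjoint} nerves $N_1^+,\dots,N_r^+$, where $N_j^+$ is a nerve for $F_\one$ whose attachment point $v_j$ lies on the spine and whose interval is some $I_j \subseteq T^+_\one$ with $\bigcup_j I_j = T^+_\one$; moreover $Y^+$ is disjoint from $X^+$, so the primal spine $Y$ lies in $C\setminus X$. By Corollary~\ref{cor:skeleton}, $Y$ is a $(T_\one,T\setminus T_\one)$-cut. Let $M_1^+,\dots,M_r^+$ be the nerves of $Z^+$, indexed so that $M_j^+$ has attachment point $v_j$ and interval $I_j$.

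The weight bound is immediate. Since a nerve for a given attachment point and interval is, by Definition~\ref{def:nerve}, a minimum Steiner tree in $G^+$ on the corresponding terminal set, $\omega^+(M_j^+) = \omega^+(N_j^+)$ for every $j$. As the $N_j^+$ are edge-disjoint, $\omega(X) = \sum_j \omega^+(N_j^+) = \sum_j \omega^+(M_j^+) \ge \omega(Z)$ (the inequality only because the nerves of $Z^+$ could a priori share edges). Hence, using that $X$ and $C\setminus X$ partition $C$, $\omega((C\setminus X)\cup Z) \le \omega(C\setminus X) + \omega(Z) \le \omega(C\setminus X)+\omega(X) = \omega(C)$.

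Next I would show that $(C\setminus X)\cup Z$ is a multiway cut of $(G,T,\omega)$; by Proposition~\ref{prp:dualfaces} it suffices to check that no face of the augmented dual of $(C\setminus X)\cup Z$ encloses two terminals. The separation of any two terminals not both in $T_\one$ is inherited from $C\setminus X$: such a pair is separated either by the part of $C^+\setminus X^+$ outside $f_\one$ or by the spine $Y$, and neither of these is altered. For two terminals of $T_\one$, I would swap the nerves one index at a time and invoke, at each step, the replacement argument from the proof of Lemma~\ref{lem:nerves}: it shows that replacing a single component of $X^+$ by \emph{any} Steiner tree on the same terminal set---including one that uses edges of $Y^+$ or crosses $Y^+$---leaves every terminal of $T_\one$ that was enclosed by a face incident to the old component enclosed by a face of the augmented dual, so no two terminals of $T_\one$ come to share a face. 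Applying this for $j=1,\dots,r$ gives the claim.

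Combining the two steps, $(C\setminus X)\cup Z$ is a multiway cut of $(G,T,\omega)$ of weight at most $\omega(C)$, and since $C$ is a minimum multiway cut, equality holds and $(C\setminus X)\cup Z$ is a minimum multiway cut as well. I expect the feasibility step to be the only delicate point, specifically that the replacement argument of Lemma~\ref{lem:nerves} applies to a replacement nerve $M_j^+$ that may leave the region bounded by the spine or use spine edges; but that argument was written precisely for a Steiner tree that ``contains some edges of $Y^+$ or crosses $Y^+$'', so it transfers directly, and the remainder is bookkeeping.
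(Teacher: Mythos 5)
Your proposal is correct and follows the same approach the paper intends: the paper's ``proof'' is just the remark that the corollary is immediate from the proof of Lemma~\ref{lem:nerves}, and what you supply is exactly the bookkeeping that makes this precise --- equal weight because nerves are by definition minimum Steiner trees on the same terminal set, and feasibility by iterating the replacement argument already set up inside that proof (the spine $Y$ continues to separate $T_\one$ from $T\setminus T_\one$, and the $W^+$-replacement step shows each swapped-in nerve together with $Y^+$ still isolates every terminal of $T_\one$). One small caveat worth being aware of: the paper's treatment of a replacement tree that merely \emph{crosses} $Y^+$ (as opposed to sharing edges with it) is phrased as a contradiction to minimality rather than as a direct feasibility argument, so your blanket statement that the argument ``transfers directly'' for both cases is a mild overstatement; but since a same-weight replacement that produced a terminal-free cycle in $Y^+\cup M_j^+$ would again contradict minimality of $C$, this case cannot actually occur, and the conclusion stands.
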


We now argue that nerves can be computed in polynomial time.

\begin{lemma}\label{lem:nerves-compute}
Given \rev{an augmented dual vertex} $v$ and an interval of a terminal face $F_\one$ between $i$ and $j$, we can compute in polynomial time a nerve $(v,i,j)$, if it exists.
\end{lemma}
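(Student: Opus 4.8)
The plan is to reduce the task to a constant number of ordinary minimum Steiner tree computations in a planar graph all of whose terminals lie on the boundaries of at most two faces; such computations can be done in polynomial time by the send-and-split method of Erickson et al.~\cite{Erickson} (see also Bern~\cite{Bern} and Dreyfus and Wagner~\cite{Dreyfus-Wagner}). Write $I$ for the interval of augmented terminals of $\augset{\one}$ between $i$ and $j$. By Definition~\ref{def:nerve}, a nerve $N$ for $F_\one$ and $(v,i,j)$ is a minimum-weight Steiner tree on $\{v\}\cup I$ among those having a single augmented dual edge incident to $v$. Since $v$ is then a leaf of $N$, there is a neighbor $u$ of $v$ in $G^+$ such that $N$ is the edge $vu$ together with the tree $N-v$, which lies in $G^+-v$ and connects $\{u\}\cup I$; moreover $N-v$ must be a minimum Steiner tree on $\{u\}\cup I$ in $G^+-v$, since replacing it by a lighter one and re-adding $vu$ would contradict the minimality of $N$. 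Conversely, for any neighbor $u$ of $v$ and any minimum Steiner tree $N'_u$ on $\{u\}\cup I$ in $G^+-v$, the graph $N'_u + vu$ is a minimal Steiner tree on $\{v\}\cup I$ (adjoining the leaf $v$ to a minimal Steiner tree keeps it minimal) and has a single edge at $v$. Hence the minimum weight of a nerve for $(v,i,j)$, if one exists, equals $\min_{u}\bigl(\omega^+(vu)+\tau(u)\bigr)$, where $u$ ranges over the neighbors of $v$ and $\tau(u)$ denotes the weight of a minimum Steiner tree on $\{u\}\cup I$ in $G^+-v$ (with $\tau(u)=\infty$ when none exists); any $u$ attaining the minimum yields an actual nerve $N'_u+vu$.

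Since $v$ is an augmented dual vertex, Lemma~\ref{lem:augmenteddualdegree} bounds its degree by $3$, so there are at most three candidate neighbors $u$. It remains to compute, for a fixed $u$, a minimum Steiner tree on $\{u\}\cup I$ in $G^+-v$ in polynomial time. The key structural fact is that all augmented terminals of $\augset{\one}$, and in particular all of $I$, lie on the boundary of a single face of $G^+$: by construction $G^+$ arises from $G^*$ by subdividing the dual edges at the dual vertex $v_\one$ of $F_\one$ and then deleting $v_\one$, and deleting $v_\one$ merges all faces formerly incident to it into one face, on whose boundary the degree-one vertices $\augvertex{\one}{1},\dots,\augvertex{\one}{p_\one}$ appear in this cyclic order. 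Deleting the further vertex $v$ (and, if necessary, restricting to the connected component containing $\{u\}\cup I$) either leaves that face unchanged or merges it with a neighboring face; setting aside routine degenerate cases (such as $v$ being the unique neighbor of some $\augvertex{\one}{\ell}$ with $\ell\in I$, in which case no Steiner tree exists and $\tau(u)=\infty$), the terminal set $\{u\}\cup I$ lies on the boundaries of at most two faces of $G^+-v$---the face carrying $I$ and, possibly, a second face containing $u$. Hence a minimum Steiner tree on it can be found in polynomial time by~\cite{Erickson,Bern}.

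The algorithm follows at once: for each of the at most three neighbors $u$ of $v$, dispose of the degenerate cases directly, and otherwise compute a minimum Steiner tree $N'_u$ on $\{u\}\cup I$ in $G^+-v$ via~\cite{Erickson,Bern}; among the resulting candidate nerves $N'_u+vu$ output one of minimum weight, reporting that no nerve exists if there is no candidate. Correctness is exactly the equivalence established in the first paragraph, and the running time is polynomial. The only step requiring genuine care is the reduction to a few-faces Steiner tree instance---verifying that $I$ always lies on a common face of $G^+$ and accounting for the degenerate situations in which deleting $v$ disturbs this structure; everything else is routine bookkeeping.
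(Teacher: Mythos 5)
Your algorithm and the paper's are almost identical, but you have misread Definition~\ref{def:nerve}, and this leads to a real discrepancy. You take a nerve for $(v,i,j)$ to be ``a minimum-weight Steiner tree on $\{v\}\cup I$ \emph{among those} having a single augmented dual edge incident to $v$,'' i.e.\ the minimum is taken only over leaf-at-$v$ trees. The paper's intended reading (made unambiguous by its own proof) is that a nerve must be a minimum Steiner tree on $\{v\}\cup I$ over \emph{all} Steiner trees, which in addition happens to have a single edge at $v$; in particular, a nerve need not exist even when leaf-at-$v$ Steiner trees do. This is why the paper's algorithm first computes the unconstrained optimum $Z^+$ on $\{v\}\cup I$ and only declares $Z^+_u\cup\{(v,u)\}$ to be a nerve when $\omega(Z^+)=\omega(Z^+_u)+\omega(v,u)$; if $\omega(Z^+)<\min_u(\omega(Z^+_u)+\omega(v,u))$ it reports nonexistence. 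Your algorithm drops the computation of $Z^+$ and the comparison, so in exactly this case you would return a tree that is not a nerve instead of reporting that none exists. This matters downstream: the structural argument in Lemma~\ref{lem:nerves} shows that the Steiner trees occurring in an optimum solution achieve the unconstrained minimum, and the ``unique nerve'' is fixed as the output of this lemma, so the paper needs the nerve weight to equal $\omega(Z^+)$.

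Everything else in your argument is sound and in fact supplies a bit more detail than the paper: the observation that $v$ is necessarily a leaf of degree~$1$ with at most three candidate neighbors by Lemma~\ref{lem:augmenteddualdegree}, the exchange argument showing that $N-v$ must be optimal in $G^+-v$, and especially the explanation of \emph{why} the terminals $\{u\}\cup I$ lie on at most two faces of $G^+-v$ (via merging of faces upon deleting $v_\one$ and then $v$), which the paper leaves implicit when invoking Bern/Erickson~et~al. To repair your proof, add the unconstrained computation of $Z^+$ and only accept a candidate $N'_u+vu$ as a nerve when $\omega^+(vu)+\omega(N'_u)=\omega(Z^+)$; otherwise report that no nerve exists. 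The lemma also needs $Z^+$ itself to be computable on a two-face instance, which follows by the same face-merging argument you give for $G^+-v$, applied to $G^+$.
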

\begin{proof}
It follows from Bern~\cite{Bern} (or Kisfaludi-Bak \etal\cite{KNL20}) that a minimum Steiner tree on a set of terminals with at most two faces can be computed in polynomial time. We first compute a minimum Steiner tree $Z^+$ in $G^+$ on $v$ and the augmented terminals in the interval between $i$ and $j$. Then, for each neighbor $u$ of $v$, we compute a minimum Steiner tree $Z^+_u$ in $G^+ \setminus \{v\}$ on $u$ and the augmented terminals between $i$ and $j$. By the preceding, this can be done in polynomial time. Then we compare $\omega(Z^+)$ with $\omega(Z^+_u) + \omega(v,u)$ for all $u$. If there is a $u$ for which they are equal, then $Z^+_u$ combined with the edge $(v,u)$ is a nerve. Moreover, any nerve can be constructed in this way.
\end{proof}

\rev{Recall that nerves} are not necessarily unique for a triple $(v,i,j)$. From now on, we associate a unique minimum Steiner tree with each triple (if it exists), namely the one computed by Lemma~\ref{lem:nerves-compute}. We call this the \emph{unique nerve}; \rev{by abuse of notation, we denote this by $(v,i,j)$}. In the remainder, whenever we talk about a nerve, we mean the unique nerve on the same interval and attachment point. Following Corollary~\ref{cor:forest}, we may assume that minimum-weight multiway cuts are built from a skeleton and (unique) nerves.

\begin{definition}
Let $C$ be any minimum multiway cut of $(G,T,\omega)$ and $C^+$ the set of corresponding edges of the augmented dual. Then $C^+$ (and by extension, $C$) is called \defi{nerved} if the edges of $C^+$ that are not part of the skeleton each belong to a \rev{(unique)} nerve.
\end{definition}
Observe that following Corollary~\ref{cor:forest} and the discussion above, we can assume that \rev{we consider a minimum-weight multiway cut that is nerved}.

\section{Bones and Homotopy}\label{sec:Bones and Homotopy}
\rev{As before, we are given an instance $(G,T,\omega,\mathcal{F})$. Recall that we assume that the instance is transformed and that the dual of any optimum solution is connected.}
We now wish to describe the structure of the paths between the attachment points of nerves and to branching points of the skeleton. To this end, we start with the following definition.

\begin{definition}
\rev{Let $C$ be any multiway cut of $(G,T,\omega)$ and $C^+$ the set of corresponding edges of the augmented dual.} Each edge of the shrunken skeleton \rev{of $C^+$} is called a \defi{shrunken bone} and corresponds to a path of the skeleton, called a \defi{bone}. Any vertex of the shrunken skeleton \rev{of $C^+$} is called a \defi{branching point}. 
\end{definition}

By Lemma~\ref{lem:skeleton}, each (shrunken) bone is incident to one or two faces of the (shrunken) skeleton, and thus separates at most two faces of $\mathcal{F}$, called the \emph{separated terminal faces} of the (shrunken) bone. Note that both sides of a (shrunken) bone might be incident to the same face, as the shrunken skeleton can contain bridges.

\subsection{Orienting Nerves}

To further discussions in the remainder of the paper, it helps to define an orientation of the nerves, which specifies where a nerve goes with respect to a bone.
Let $C$ be any minimum multiway cut of $(G,T,\omega)$ and $C^+$ the set of corresponding edges in the augmented dual. For any augmented dual vertex $x$, there is a small ball centered on $x$ that contains $x$ and points of its incident augmented dual edges \rev{but no other point of the augmented dual graph}. Let $e$ and $e'$ be two edges of the augmented dual incident to $x$. Then this small ball is split into two parts by the union of $e$ and $e'$. An augmented dual edge is \emph{west} of $x$ with respect to $e$ and $e'$ if it is contained in the part of the ball clockwise from $e'$ to $e$ and \emph{east} otherwise. 

Now consider two edges of the augmented dual $e$ and $e'$ of the skeleton of $C^+$ that appear in counterclockwise order on the boundary of a face $f_\one$ of the skeleton of $C^+$ (so $e$ appears after $e'$). Let $x$ be their shared endpoint. \rev{If the boundary of $f_\one$ is not a simple cycle}, then we only consider the first appearance of an augmented dual edge to define the ordering.  Let $F_\one \in \mathcal{F}$ be the terminal face corresponding to $f_\one$. 
We then say that a nerve $(x,h,i)$ for $F_\one$ with $1 \leq h,i \leq p_\one$ \emph{extends west} of $x$ if it holds that the augmented dual edge incident to $x$ that belongs to the nerve is west of $x$. If the augmented dual edge incident to $x$ that belongs to the nerve is east of $x$, then the nerve \emph{extends east}. By definition, a nerve has only a single incident edge, and thus it either extends east or it extends west.

\subsection{Homotopy and the Optimum Solution}
We now want to describe the structure of bones. While it might seem that these are just shortest paths between certain branching and attachment points, this does not account for their role in separating pairs of terminals. In order to specify this role, we need assistance from homotopy. 
 
To facilitate the discussion of homotopy, we first need some definitions. First, we need the notion of crossings between two paths $P$ and $Q$ in a planar graph. \rev{We say that there is a \emph{crossing} between $P$ and $Q$ if, after contracting any common edges of $P$ and $Q$, there are two distinct edges of the remainder of $P$ and two distinct edges of the remainder of $Q$, all incident on the same vertex, such that a clockwise rotation around the vertex will see the two edges of $P$ and $Q$ alternately. We say that $P$ \emph{crosses} $Q$ if there is at least one crossing between them.}

Now we construct a \emph{cut graph} $K$ for $G^*$, which is a Steiner tree in $G^*$ with the set $V^*_{\mathcal{F}}$ of dual vertices of the terminal faces in $\mathcal{F}$ as terminal set. Here we follow Frank and Schrijver~\cite[Proposition~1]{Homotopy}. First, we compute a shortest path between each pair of dual vertices in $V^*_{\mathcal{F}}$. Let $\mathcal{P}$ denote the resulting set of paths. By a slight modification of the weights for the sake of this computation, we can assume all shortest paths in $G^*$ are unique. Then we can assume that any pair $P,Q$ of shortest paths in $\mathcal{P}$ either \rev{has at most one crossing}, or has a common endpoint and does not cross. Now build an auxiliary complete graph on $V^*_{\mathcal{F}}$ with the lengths of the paths in $\mathcal{P}$ as weights and find a minimum spanning tree in this auxiliary graph. The paths in $\mathcal{P}' \subseteq \mathcal{P}$ corresponding to this spanning tree form the cut graph $K$.

As explained by Frank and Schrijver~\cite[Proposition~1]{Homotopy}, we may assume the paths in $\mathcal{P}'$ do not cross. We call the paths in $\mathcal{P}'$ the \emph{spokes} of the cut graph. Note that $K$ has $k-1$ spokes, each of which can be associated with a unique identifier. We also orient each spoke in an arbitrary direction. Then any oriented path $P$ that crosses a spoke $Q$ can be said to cross $Q$ in a particular direction, depending on whether $P$ goes towards the west or east side of $Q$.

\begin{definition}
The \defi{crossing sequence} or \defi{homotopy string} of a path $P$ in $G^+$ is an ordered sequence of the spokes of $K$ crossed by it, along with an indication of the orientation of each crossing. Let $P$ and $P'$ be two paths in $G^+$. We say that $P$ is \emph{homotopic} to $P'$ if they have the same endpoints as well as the same crossing sequence with respect to $K$.
\end{definition}

Note that we perform a slight abuse here by considering crossings of a path in $G^+$ with paths in $G^*$. However, we will never consider crossings of paths in $G^+$ that have an endpoint in an augmented dual terminal, alleviating any possible cause for confusion.

Using this precise definition of homotopy, we now fix a particular solution to the instance. Among all possible nerved minimum multiway cuts $C$ of $(G, T, \omega)$, we assume henceforth that $C$ is one of which the skeleton of $C^+$ has the minimum number of crossings with $K$\rev{, where $C^+$ is again the set of corresponding edges of the augmented dual. Note that $C$ is inclusion-wise minimal. Moreover,} the skeleton contains no vertices of the augmented dual, as those have degree~$1$ in $G^+$. 
From now on, we fix this solution as `the' optimal solution or `the' optimum.

Our main goal in the next subsections will be to show that (a sufficient part of) the bones of the skeleton of the optimum solution have short homotopy strings and that replacing such parts by \rev{short} parts with the same homotopy string leads to another optimum solution.

\subsection{Nerve Path}
We now consider a subpath of a bone that starts and ends at attachment points of two distinct nerves and contains attachment points only of nerves that extend towards the same direction. In particular, the nerves extend towards the same terminal face. We call such a subpath a \defi{nerve path}. We aim to prove the following crucial property of nerve paths, namely that any nerve path has a homotopy string of bounded length.

\begin{definition}
\rev{Let $P$ be a path in $G^+$ that does not contain any augmented terminals and has some nerves attached to it that only extend towards terminal faces $F_\one,F_\two \in \mathcal{F}$ (possibly $\one=\two$).
Let $N_1,N_2$ be two nerves that attach to $u,v$ respectively. Let $f_\one, f_\two$ be the faces of the skeleton separated by $b$ and let $F_\one, F_\two \in \mathcal{F}$ be the corresponding terminal faces. 
Then $P$ is a \emph{nerve path with bookends $N_1,N_2$} if all nerves that attach to $P\setminus\{u,v\}$, as well as $N_1,N_2$, extend towards the same terminal face (say $F_\two$); if $\one=\two$, then moreover, all these nerves extend towards the same direction.}
\end{definition}

\rev{Note that the bookends are not uniquely determined by the ends of a nerve path, as there may be several attaching to $u$ or $v$ towards the same face. However, in our upcoming analysis, there is only ever one possibility.}

\begin{lemma}\label{lem:singular face bone}
\rev{Any nerve path in $C^+$ that is a subpath of a bone of the skeleton of $C^+$} crosses any path of $K$ \Oh{k} times.
\end{lemma}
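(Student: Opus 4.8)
The plan is to use an exchange argument in the spirit of Klein and Marx~\cite{KM12} and \colin~\cite{CDV}: if some nerve path $P$ crossed a spoke $Q$ of $K$ many times, then $P$ would contain a subpath $P'$ between two consecutive crossings with $Q$ that together with a subpath $Q'$ of $Q$ bounds a region $R$ containing no terminal face of $\mathcal{F}$; we then reroute $P$ along $Q'$ to reduce the number of crossings, contradicting the minimality of the chosen optimum $C$ (which was selected to minimize the number of crossings of the skeleton of $C^+$ with $K$). The nontrivial point is to control what happens to the nerves attached to $P$ and to the rest of $C^+$ during the reroute, and to bound the residual number of crossings by $O(k)$ rather than eliminate them entirely.

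First I would set up the combinatorics of crossings. Orient $Q$ and walk along $P$, recording its crossings with $Q$. Two crossings of $P$ with $Q$ are \emph{opposite} if they are in opposite directions (one towards west, one towards east of $Q$) and \emph{parallel} otherwise. Between two consecutive opposite crossings, the subpath $P'$ of $P$ and the intervening subpath $Q'$ of $Q$ form a closed curve bounding a disk $R$. Since $Q \subseteq K$ is a Steiner tree on the dual vertices $V^*_{\mathcal{F}}$ and the spokes of $K$ do not cross each other, the only way $R$ can contain a dual vertex $v_\three$ of some terminal face $F_\three$ is if a spoke enters $R$; but a spoke can only enter $R$ by crossing $P'$ (it cannot cross $Q'$, and the endpoints of spokes are the $v_\three$'s). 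So if $R$ contains a terminal face, some spoke of $K$ crosses $P'$. Here is where I must be careful: $P$ is a subpath of a \emph{bone}, i.e.\ of the skeleton of $C^+$, and the skeleton has $O(k)$ edges in its shrunken version but the bones themselves may be long. However, the number of spokes is $k-1$, and each spoke crossing $P$ is a crossing that the skeleton makes with $K$. The total number of such crossings, summed over the whole skeleton, is what the chosen solution minimizes — so I cannot a priori bound it. Instead, I would argue \emph{locally}: for the specific path $P$, consecutive opposite crossings with the \emph{same} spoke $Q$ that bound an empty region can be removed by the reroute; hence in the optimum, between any two consecutive opposite crossings of $P$ with $Q$ whose bounded region is empty of terminal faces, the reroute strictly decreases the total crossing count, contradiction. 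Therefore every pair of consecutive opposite crossings of $P$ with $Q$ bounds a region containing a terminal face. Since there are only $k$ terminal faces and the bounded regions of consecutive-opposite-crossing pairs along $P$ are `nested or disjoint' in a controlled way (they sit along $P$), this bounds the number of opposite crossings by $O(k)$; and parallel crossings can likewise be reduced pairwise (two consecutive parallel crossings with nothing of interest between them form a region one can also reroute across), or are absorbed into the opposite-crossing count. Hence $P$ crosses $Q$ at most $O(k)$ times, and since $Q$ was an arbitrary path of $K$ (with $k-1$ spokes), the bound follows.

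The reroute itself is the delicate engineering step, and I expect it to be the main obstacle. When we replace $P'$ by $Q'$ we must verify: (i) the result is still a valid dual of a multiway cut of $(G,T,\omega)$ of no greater weight — this uses that $Q'$ is a subpath of a shortest path and that $R$ encloses no terminal, so no terminal pair that was separated becomes reconnected; (ii) the nerves attached to the interior of $P'$ still do their job — this is where the definition of a \emph{nerve path} is essential, because all nerves attached to $P \setminus \{u,v\}$ extend towards the \emph{same} terminal face $F_\two$ (and, when $F_\one = F_\two$, towards the same direction), so they all sit on the same side of $P$ and can be transplanted onto $Q'$ (or their attachment points slid along), keeping the intervals of augmented terminals they cover intact; by Corollary~\ref{cor:forest} we may then replace them by the unique nerves on the same attachment-point/interval data without increasing weight; (iii) the reroute does not increase crossings of the \emph{rest} of the skeleton with $K$ — here one invokes the standard fact (as in~\cite{KM12,CDV}) that one may choose $Q'$ and perform the swap so that the new piece crosses $K$ no more than $P'$ did, exploiting again that spokes of $K$ pairwise don't cross and shortest paths in $G^*$ are unique after perturbing weights. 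Once (i)--(iii) are in hand, the contradiction with the minimality of the number of skeleton–$K$ crossings in our fixed optimum $C$ closes the argument, and the $O(k)$ bound is exactly the number of terminal faces that must be `protected' inside the bounded regions, since each empty bounded region would have permitted a strictly improving reroute.
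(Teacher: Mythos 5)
Your proposal captures the right high-level intuition — exchange arguments that shortcut a subpath of $P$ along a piece of the spoke $Q$, failing only when a terminal face blocks the shortcut — and correctly identifies the two hard points: transporting the attached nerves across the reroute, and turning the ``blocked'' configurations into an $O(k)$ bound. However, both of these steps contain genuine gaps that the paper's proof resolves with substantial machinery, and your sketch does not.

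The counting gap is the more fundamental one. You assert that since every pair of consecutive opposite crossings of $P$ with $Q$ must bound a region containing a terminal face, and there are only $k$ terminal faces, the number of opposite crossings is $O(k)$ because the bounded regions are ``nested or disjoint in a controlled way.'' This does not follow: nested regions can all contain the \emph{same} terminal face, so no finite bound emerges directly from this observation. The paper's actual proof contracts the crossing points of $P$ with $Q$ to a single basepoint, producing a system of loops, and then proceeds in three steps: first it shows (via a detailed case analysis, Figures~\ref{fig:CDV}--\ref{fig:Case2b}) that no loop is incident to two bigons; then it shows that between any two \emph{obstacle loops} there are at most four \emph{bad monogons} (regions on the wrong side of $Q$ whose attached nerves never cross $Q$); and only then does it invoke a black-box bound from Chambers~\etal on the number of obstacle loops in a system of loops ($O(\ell + g)$ with $\ell = k$, $g = 0$), multiplying by a constant to account for the removed bigons. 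Your sketch has no analogue of the ``at most four bad monogons between obstacle loops'' claim, which is what converts the obstacle-loop bound into a bound on \emph{all} crossings; without it the argument does not yield $O(k)$.

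The rerouting gap is also substantive. You note correctly that the nerves all extend toward the same terminal face $F_\beta$ (this is what makes the nerve-path hypothesis essential), and you propose to ``transplant them onto $Q'$ or slide their attachment points.'' But the behaviour of a nerve under the reroute depends on whether it extends \emph{into} the bounded region $R$ (in which case it must cross $Q'$ to reach $F_\beta$ and can be shortcut to an attachment point on $Q'$) or \emph{away} from $R$ (in which case the reroute strands it); the correct action also depends on which side $F_\beta$ lies relative to $R$, and on whether the bookend nerves $N_1, N_2$ themselves traverse a bigon (the paper's Case~2a). These are not corner cases — they are the main content of the paper's case analysis, and different configurations require genuinely different shortcuts. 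Claim~\ref{clm:narrow} in the paper, for instance, is a nontrivial preparatory lemma that rules out certain nerve-reaching configurations before the main case analysis can even begin. Your proposal acknowledges that this is ``the delicate engineering step'' but supplies none of it, so the reroute, and with it the contradiction, is not actually established.
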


\begin{proof}
We modify the proof of~\cite[Lemma 5.2]{CDV} to prove the claim above. Let $K$ be the graph defined above and $\gamma$ be a path in $K$. We treat the dual vertices of the terminal faces as obstacles \rev{(see Section~\ref{sec:topology})}. Let $u,v$ be two endpoints of a nerve path $P$ of $C^+$ \rev{that is a subpath of a bone $b$ of the skeleton of $C^+$}. Since $P$ is a path in the augmented dual $G^+$, $P$ does not pass through any vertex of $K$ that corresponds to a terminal face; in particular, it does not pass through the endpoints of $\three$.

Let $f_\one, f_\two$ be the faces of the skeleton separated by $P$ and let $F_\one, F_\two \in \mathcal{F}$ be the corresponding terminal faces. Denote the bookends of $P$ by $N_1$ and $N_2$. Note that these are uniquely defined: at most one nerve can attach to each augmented dual vertex on $P$, because $u,v$ have degree at least~$2$ in the skeleton by Lemma~\ref{lem:skeleton} and each augmented dual vertex has degree at most~$3$ by Lemma~\ref{lem:augmenteddualdegree}. Without loss of generality, we assume that both $N_1$ and $N_2$ (and all the other nerves on $P$, if they exist) extend towards $F_\two$. If $\one=\two$, then without loss of generality, we assume that all the nerves extend towards the west of $b$.
   
\rev{We now consider the following topological construction:} we push all the crossings of $P$ with $\three$ together to a single point \rev{$p$}. \rev{Precisely, we contract the dual edges on the maximal subpath of $\three$ that contains intersection points of $P$ and $\three$, into the point $p$}. Then the subpaths of $P$ between consecutive crossings of $P$ with $\three$ form cycles containing \rev{$p$}, which we call \emph{loops}. The corresponding system of loops is denoted by $P^*$. Then, $P^*$ is a set of pairwise disjoint loops $L$ intersecting at \rev{$p$}. The contracted point is referred to as the base point of the loops. A face of $L$ is called a \defi{monogon} if it is homeomorphic to an open disc and has only one copy of the base point on its boundary. Likewise, a \defi{bigon} is an open disc with two copies of \rev{$p$} on its boundary. In the bone $P$, a bigon occurs as a strip bounded by two subpaths of the path $\three$ and two subpaths of $P$. \rev{Since monogons and bigons are homeomorphic to an open disc, they enclose no obstacles and thus, in particular, enclose no terminal faces.} \rev{See Figure~\ref{fig:mono-bi-example} for an example of monogons or bigons}. To show that the number of crossings of $P$ with $\three$ is \Oh{k}, we need to show that the degree of \rev{$p$} is \Oh{k} in $P^*$. 
 
  \begin{figure}[tp]
    \centering
    \includegraphics[width=0.5\textwidth]{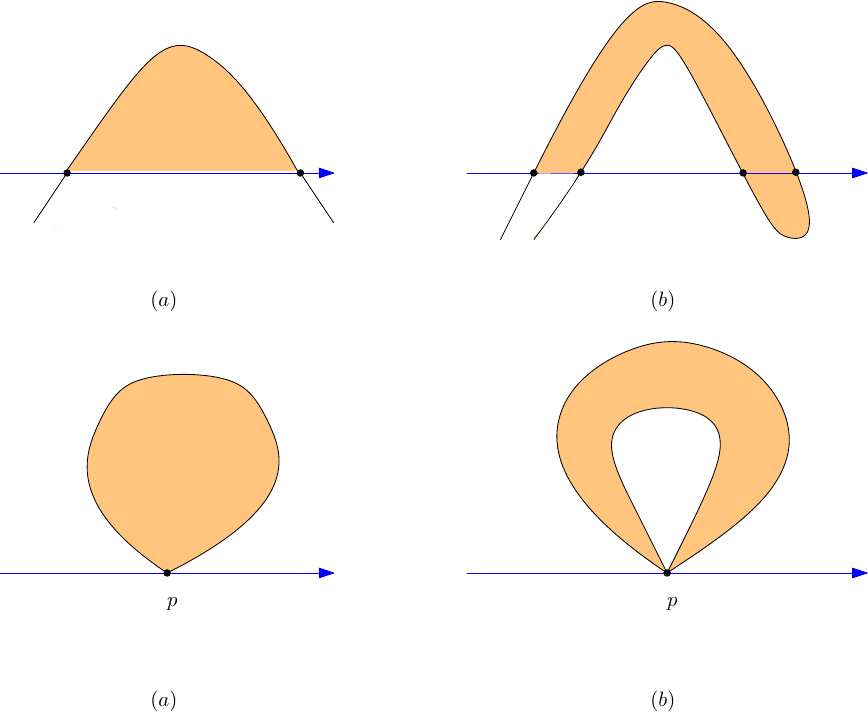}
    \caption{\rev{The figure shows an example of (a) a monogon and (b) a bigon formed by $P$ crossing $\three$.}}
    \label{fig:mono-bi-example}
  \end{figure}

 \begin{figure}[tp]
    \centering
    \includegraphics[width=\textwidth]{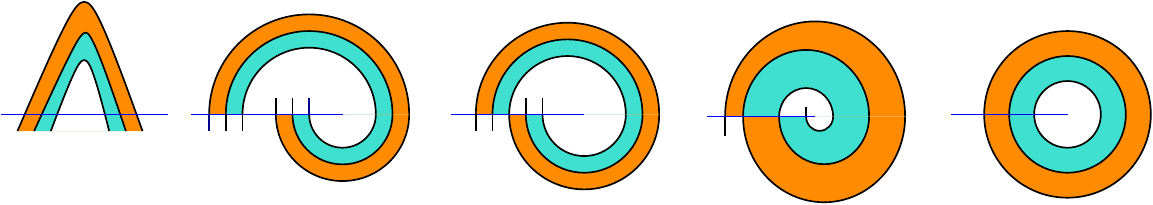}
    \caption{The figure shows all possible ways in which a loop in $L$ can be incident to two bigons. The two bigons are shaded in orange and turquoise. The blue line depicts $\three$.}
    \label{fig:CDV}
 \end{figure}

 Figure~\ref{fig:CDV}~(same as~\cite[Figure~2]{CDV}) shows an exhaustive list of configurations of any path crossing an edge of the cut graph $K$, which lead to the occurrence of a loop incident to two bigons in $L$. Note that the right-most configuration does not occur in our setting since $P$ is a path. 
 
 \rev{As a first step in our proof, we show} that no loop in $L$ can be incident to two bigons. Let us assume the contrary and \rev{consider two bigons $I_1,I_2$ that are incident on the same loop}. \rev{At the outset, we exclude some basic configurations (shown in Figure~\ref{fig:narrow}) by providing shortcuts. The following makes this rigorous:}

 \begin{figure}[tp]    
    \includegraphics[scale=0.7]{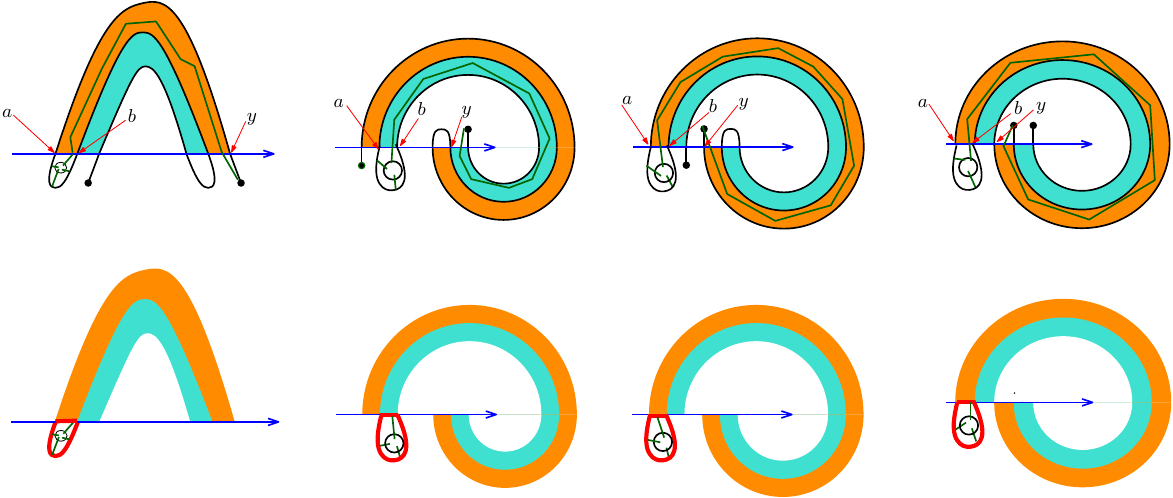}
    \caption{\rev{The figure shows four examples of the configuration described in Claim~\ref{clm:narrow} (top) and the result of the corresponding shortcut described in the claim (bottom).}}
    \label{fig:narrow}
 \end{figure}

\begin{claim}\label{clm:narrow}
    \rev{Let $a$ and $b$ be consecutive crossings of $P$ with $\three$ such that $\three[a,b]$ lies on the boundary of some bigon. Suppose $P[a,b]$ is not incident to $I_1$ or $I_2$ (i.e.\ does not contain any of the three loops defining the bigons). If $F_\two$ is contained in the region bounded by $\three[a,b]\cup P[a,b]$ and there exists a subpath $P[a,y]$, incident to at least one of $I_1,I_2$ such that $\three[a,b] \subseteq \three[a,y]$, then we contradict our choice of $C^+$.}
\end{claim}
\begin{claimproof}
    Any nerve reaching $F_\two$ must either cross $\three[a,b]$ or have its attachment point on $P[a,b]$. Therefore, we can remove $P[a,y]$ from $C^+$ and replace it with $\three[a,b]$. Since $\three[a,y]$ is a shortest path in $G^*$ and therefore in $G^+$, it has length at most the length of $P[a,y]$ and since $\three[a,b]$ is a subpath of $\three[a,y]$, we get a solution of weight at most that of $C^+$ with strictly fewer crossings. This is a contradiction.
\end{claimproof}

Consider the two bigons $I_1,I_2$, depicted as orange and turquoise strips in Figure~\ref{fig:CDV}. There are three loops that together bound the union of the two bigons, namely, the top-most loop incident to the orange bigon only, the middle loop incident to both the bigons and the bottom-most loop incident to the turquoise bigon only. As neither of the bigons contain any terminal and due to Claim~\ref{clm:narrow}, the terminal face $F_\two$ can be contained in the following two regions:
\begin{itemize}
    \item Region A : the region bounded by the bottom-most loop incident to the turquoise bigon and the subpath of $\three$ between its endpoints.
    \item Region B: the unbounded region incident to the union of top-most loop incident only to the orange bigon and the subpath of $\three$ between its endpoints.
\end{itemize} 
 
 We consider different configurations of $P$ creating a loop incident to two bigons and in each such configuration, we present an exchange argument to replace $P$ by a shorter path in $C^+$ that preserves the feasibility of $C^+$. The replacement is based on the intuition that any nerve that must cross $\three$ to reach $F_\two$ can be shortcut to have its attachment point on $\three$. However, there can be nerves that never cross $\three$. Then, we must preserve all subpaths of $P$ containing the attachment points of such nerves. 

 We say that a nerve \emph{traverses} a bigon if any path of the nerve from its attachment point to its leaves crosses each of the two distinct subpaths of $\three$ bounding the bigon an odd number of times.

\begin{figure}[pt]    
        \centering
        \includegraphics[width=0.9\textwidth]{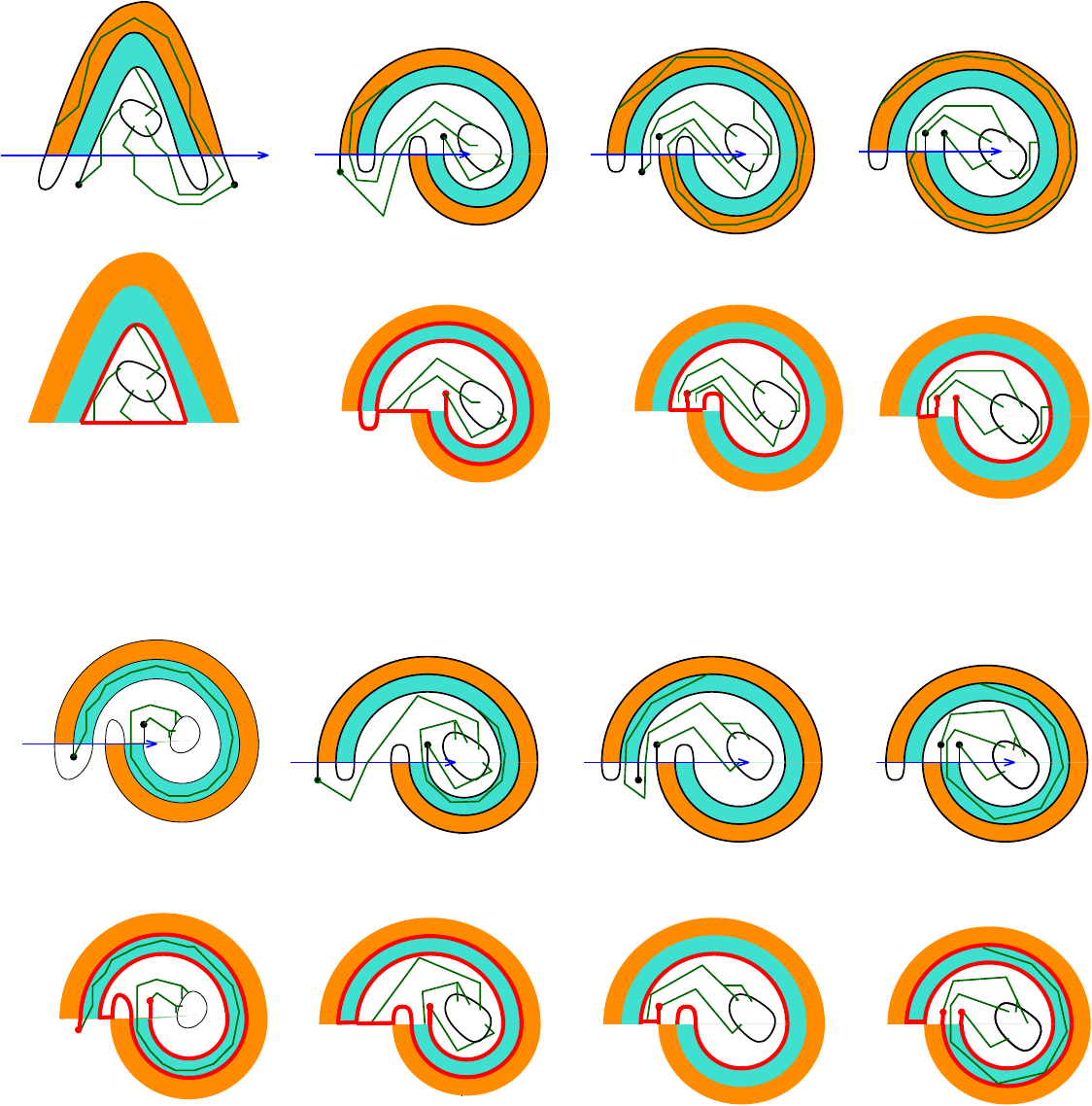}       
        \caption{Case 1: The terminal face is in the region A. The black curve shows the nerve path $P$. The directed blue line is the path $\three$. In dark green lines, we depict the nerves attached to $P$. The face $F_\two$ is drawn as a black splinegon in the outer face of $P^*$. In the top half of the figure, we consider the case where the nerves extend to the side of the orange bigon. In the bottom half of the figure, we consider the configurations where the nerves extend to the side of the turquoise bigon. \rev{For each half, the top row indicates a topological situation, and the bottom row the corresponding shortcut that contradicts the choice of $C^+$.}}\label{fig:Case 1}   
 
\end{figure}

\medskip

\noindent \textbf{Case 1: $F_\two$ is contained in Region A.}\\
Here, we consider two subcases, namely, whether subpaths of  nerves are contained in the orange or the turquoise bigon. Figure~\ref{fig:Case 1} shows the specific replacement for every configuration possible in this case (modulo symmetric configurations with the same replacement).

\medskip 
\noindent\textbf{Case 2: $F_\two$ is contained in Region B.}\\
We shall consider two subcases here:\\
\noindent\textbf{Case 2a.} \emph{Either $N_1$ or $N_2$ traverses one of the two bigons.}\\Suppose that $N_1$ traverses a bigon. Figure~\ref{fig:case2a} depicts this case. Suppose that the leftmost crossing of $N_1$ with $\three$ is $x$ and the rightmost crossing is $y$. Then $\three[x,y]$ is at most as long as $N_1[x,y]$, allowing us to shortcut the nerve.\\
\noindent\textbf{Case 2b.} \emph{Neither $N_1$ nor $N_2$ can traverse any bigon.}\\
Figure~\ref{fig:Case2b} depicts this case. The top figure shows the cases where the nerves extend toward the orange bigon whereas the bottom figure depicts the cases where the nerves extend towards the turquoise bigon.
\medskip

\begin{figure}[t!]
    \centering   
    \includegraphics[scale=0.7]{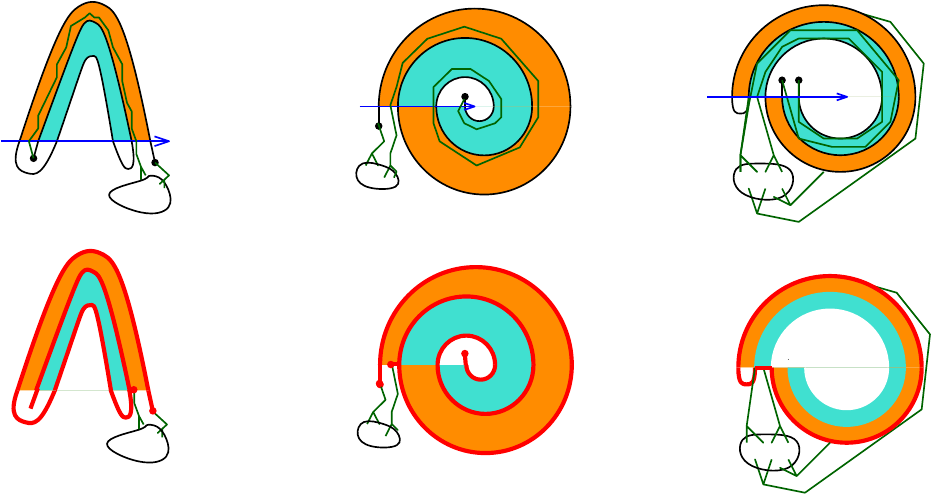}
     \caption{Case 2a: In the top row, we show configurations where $F_\two$ is contained in the outer face of $P^*$. Either $N_1$ or $N_2$ traverses one of the two bigons. The bottom row shows the \rev{shortcut that contradicts the choice of $C^+$} as bold red curves.}\label{fig:case2a}
\end{figure}

\begin{figure}[tp]
    \centering
    \includegraphics[scale=0.7]{ 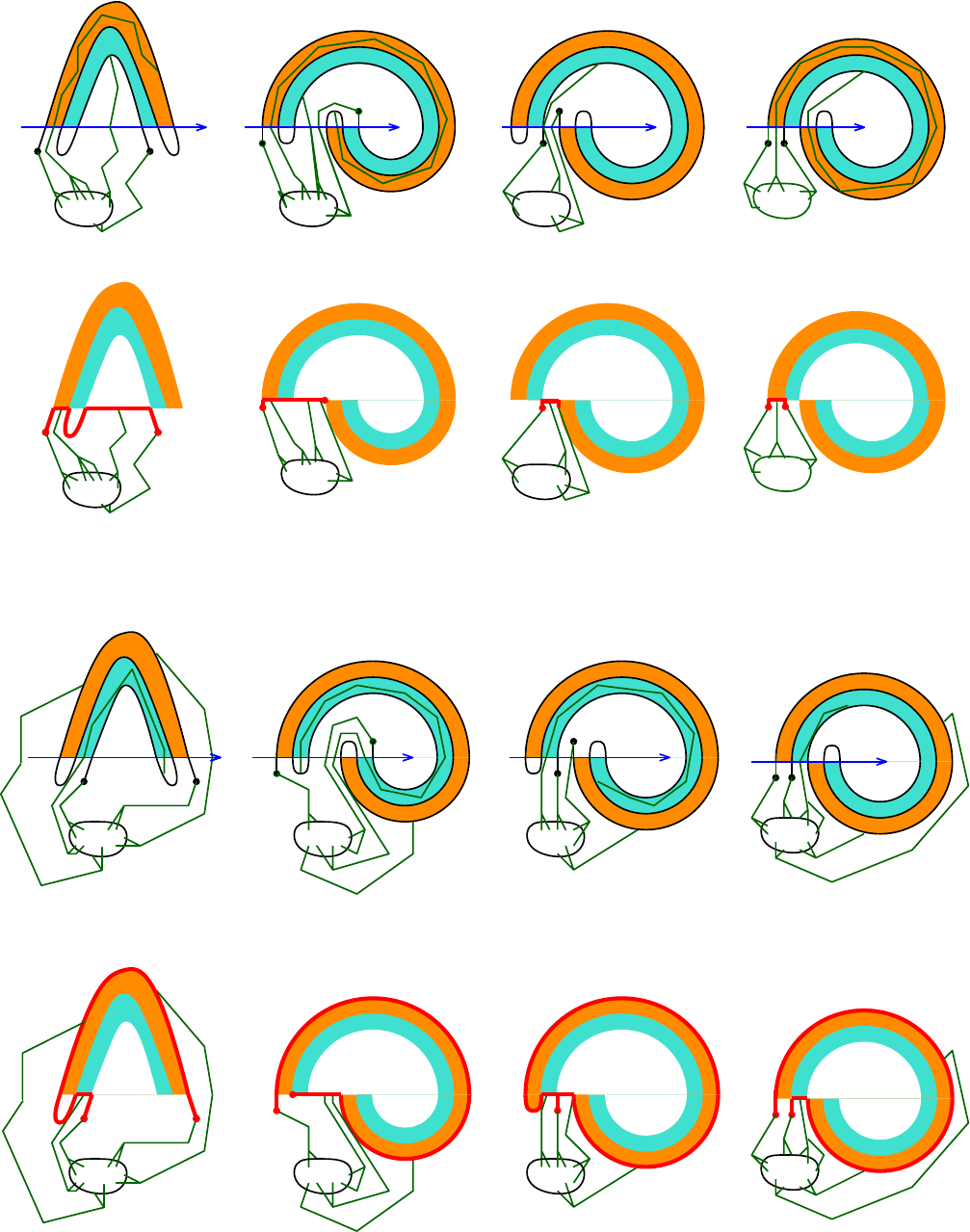}
    \caption{Case 2b: The top half of the figure depicts the case when the nerves attached to $P$ extend towards the orange bigon. The bottom half of the figure depicts the case when they extend towards the turquoise bigon. \rev{For each half, the top row indicates a topological situation, and the bottom row the corresponding shortcut that contradicts the choice of $C^+$.}}
    \label{fig:Case2b}
\end{figure}

Since, in each of the cases, \rev{we can obtain a multiway cut with weight smaller than $C$ or a multiway cut for which the skeleton of $C^+$ has fewer crossings with $K$}, we obtain a contradiction to the minimality of $C^+$. Therefore, no loop of $L$ can be incident to two bigons. \rev{This completes the first step of our proof.}

\bigskip

\rev{For the second step of our proof}, we deal with the monogons. Some monogons may have nerves attached to them that extend towards the nerve face without ever crossing $\three$. These are the monogons that we cannot replace by the subpath of $\three$ bounding them on one side. However, in what follows, we argue that only \Oh{k} of these monogons actually exist in $C^+$.

We orient $\three$ and $P$ arbitrarily. \WLOG~assume that the nerves extend to the west of $P$. Consider the system of loops $P^*$. Let \defi{good loops} be those that are present to the east (or west) of $\three$, possibly have nerves attached to them extending to the west (or east) of $P$, and border a monogon or enclose a bigon that it borders. We can remove the good loops from $C^+$ without contradicting its feasibility by shortcutting (see Figure~\ref{fig:good monogon}). Loops that enclose any terminal face (the end point of a path of $K$) in their interior are called \defi{obstacle loops}. Now consider loops that form monogons.\ \defi{Bad monogons} are those that are present on the west (or east) of $\three$ and contain nerves attached to them that extend towards the west (or east) of $P$. Figure~\ref{fig:good monogon} shows two bad monogons flanking a good loop. In what follows, we argue that between two obstacle loops, there can be at most four bad monogons.
  
  \begin{figure}[t]
      \centering
      \includegraphics{ 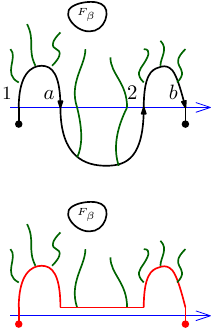}
      \caption{The blue line is the path $\three$ and the arrow indicates its orientation. The black curve shows $P$. The dark green curves are the nerves attached to $P$. The first and third monogons are bad, and the middle loop is good. The figure at the bottom shows the shortcut of the good loop as a thick red curve.}\label{fig:good monogon}
  \end{figure}
  \begin{figure}[t]
      \centering
      \includegraphics{ 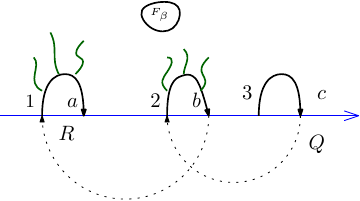}
      \caption{We assume that no two of the three consecutive bad monogons appear in the same order as on $P$.}\label{fig:wrong order}
  \end{figure}

  The \emph{direction} of a bad monogon depends on how its loop is oriented with respect to $\three$: either the head or the tail of the loop comes first on $\three$. 
  Two bad monogons are \emph{consecutive} if there exists no other bad monogon 
  between them. Consider any five consecutive bad monogons on $P$ that have no obstacle loop between them. \WLOG we assume that three of them are bad monogons in the same direction. If any two of these appear in the same order along $\three$ as their order on $P$, then we can shortcut $P$ as in Figure~\ref{fig:good monogon}, because the subpath of $P$ between them does not contain an obstacle loop and thus contains a good loop. So, we assume that no two of them appear in the same order. We denote the crossing points of these three monogons with $\three$ by $1,a,2,b,3,c$ in the order along $\three$. Let $Q$ be the subpath connecting $c$ to $2$ and $R$ be the subpath of $P$ connecting $b$ to $1$. Figure~\ref{fig:wrong order} illustrates the configuration. 
  
  We consider the following cases:
  
  \medskip

  \noindent\textbf{Case 1: $Q$ crosses $\three$ for the first time to the right of $c$.}\\
  In this case, we can shortcut the loop of $Q$ to the east of $\three$ from $c$ to the crossing. We can do so as this loop is good; indeed, it suffices to observe it is not an obstacle loop, as it is between two of the chosen bad monogons.
  
\begin{figure}[t]
    \centering
    \includegraphics{ 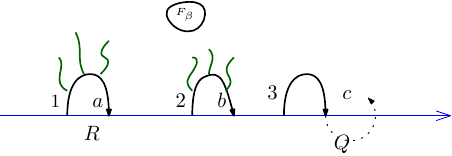}
    \caption{Case 1: $Q$ crosses $\three$ for the first time to the right of $c$.}\label{fig:mon_case1}
\end{figure}
  
  \noindent\textbf{Case 2: $Q$ crosses $\three$ between $b$ and $3$ and crosses $\three$ a second time to the right of the first crossing.}\\ As illustrated in Figure~\ref{fig:mon_case 2}, we either reach the next monogon of the sequence (as in the right figure) or reach a point $e$ to the right of $c$, and must cross $\three$ again to reach $2$. In the latter case, we follow the subpath of $Q$ that starts at $e$ and restart the case analysis from $e$ onwards. Then, in either of the two scenarios, due to the planarity of $C^+$, the loop $R$ from $b$ to $1$ must cross $\three$ to the right of $b$ to avoid crossing $Q$. Then, we get a good loop, and can shortcut $P$.
  
\begin{figure}[tp]
    \centering
    \includegraphics[width=\textwidth]{ 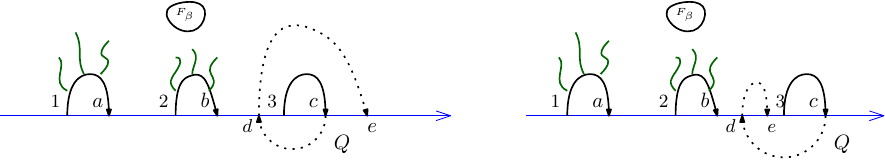}
    \caption{Case 2: $Q$ crosses $\three$ between $b$ and $3$ for the first time and to the right of the first crossing after that.}\label{fig:mon_case 2}
\end{figure}
  
 \noindent\textbf{Case 3: $Q$ crosses $\three$ between $b$ and $3$ and again to the left of its crossing point.}\\
  The second crossing could lie in between $b$ and the first crossing of $Q$, or to the left of $2$. In the former case, $Q$ forms a good loop, and we can shortcut it by the piece of $\three$ between the two crossings. In the latter case, there must not be any obstacle in the region enclosed by the loop $2$-$b$ and the one formed by $Q$, or else $Q$ would contain an obstacle loop. Therefore, we get a good loop formed by the loop of $Q$ between the two crossings. This is a contradiction to the loop between $2$ and $b$ being a bad monogon, and we can shortcut $P$.

  \begin{figure}[tp]
      \centering
      \includegraphics[width=\textwidth]{ 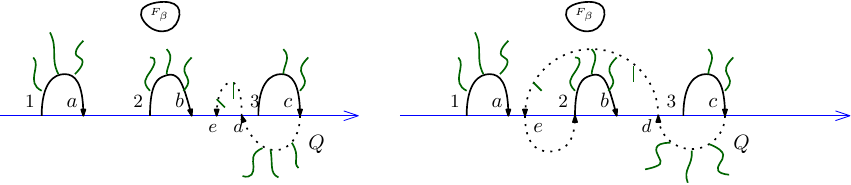}
      \caption{Case 3: $Q$ crosses $\three$ for the first time between $b$ and $3$ and then to the left of the first crossing.}\label{fig:mon_case3}
  \end{figure}
  
  \smallskip
  \noindent\textbf{Case 4: $Q$ crosses $\three$ the first time to the left of $2$.}\\
  Figure~\ref{fig:mon_case4} illustrates all the configurations of $Q$. In $A$ and $C$, we get a good loop and we can shortcut $P$ by the piece $\three[d,e]$ in $A$ and the piece of $\three$ between the crossing to left of $e$ and the right of $d$ in case $C$. In configurations $B$ and $D$, we treat $e$ as the first crossing of $Q$ with $\three$, and we land in the cases~$2$ or~$3$ based on where $Q$ crosses $\three$ after $e$. 

  \begin{figure}[tp]
      \centering
      \includegraphics[width=\textwidth]{ 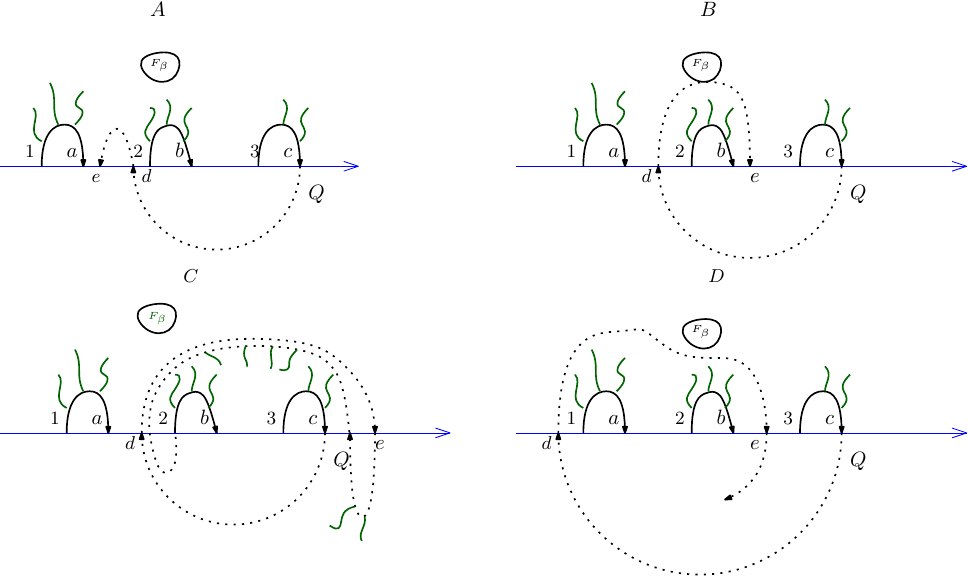}
      \caption{Case 4: First crossing of $Q$ is to the left of $2$.}\label{fig:mon_case4}
  \end{figure}
  \medskip
  
 We explicitly mention that these transformations hold good even if $\one = \two$ and $b$ is a bridge of the shrunken skeleton. So, we have argued that there are no good loops in $C^+$ and at most four consecutive bad monogons between two obstacle loops. 
 
 Now, we deal with the remaining bigons. Whenever we encounter a face of $L$ that is a bigon, we remove one of the two loops incident to it, and iterate until no bigons remain. Therefore, in the proof of~\cite[Lemma~2.1]{Chambers}, which essentially counts the number of obstacle loops, we can multiply the total number of crossings by 10 to get an upper bound on the total number of crossings of $P$ with $K$. The number of obstacle loops remaining in $L$ is at most $ 6\ell+2g - 3$, where $g$ is the genus of the surface on which $C^+$ is embedded and $\ell$ is the number of obstacles of the surface. We treat the obstacles we place on the dual vertices of the terminal faces as obstacles. Since there are $k$ terminal faces, and therefore $k$ ``obstacles'', there are $k$ obstacles in all. Therefore $\ell= k$ and $g=0$. This proves that the number of loops, and thus the degree of $p$ is \Oh{k}.
 \end{proof}

\subsection{Alternating Nerves}
Consider a shrunken bone $b$ of the shrunken skeleton of $C^+$. Then $b$ separates two terminal faces $F_\one,F_\two \in \mathcal{F}$ (possibly $F_\one = F_\two$ if $b$ is a bridge of the shrunken skeleton). While the previous section builds sufficient understanding \rev{about} the path corresponding to $b$ between consecutive nerves that extend towards the same terminal face, this does not help if two consecutive attachment points have nerves towards different terminal faces, or towards the same terminal face, but one extends east and one extends west. \rev{In particular, there may be many such nerves.} To this end, we need more elaborate methods.
\rev{Consider the following notion:}

\begin{definition}
\rev{Let $P$ be a path in $G^+$ that does not contain any augmented terminals and has some nerves attached to it that only extend towards terminal faces $F_\one,F_\two \in \mathcal{F}$ (possibly $\one=\two$).}
Let $x$ and $y$ be consecutive attachment points on $P$ (possibly $x = y$) such that $x$ is the attachment point for a nerve that extends towards $F_\one$ and $y$ is the attachment point for a nerve that extends towards $F_\two$. \rev{If $\one = \two$, then one of these nerves should extend west and the other east.} Then we call these nerves an \defi{alternating pair of nerves} \rev{with respect to $P$}.
\end{definition}
\rev{Note that the nerves are not necessarily unique in this definition (for example, there could be two nerves towards the same face in the same direction attached to $x$). However, in our upcoming analysis, there is only ever one possibility.}

\rev{Let $P$ denote the path in the skeleton of $C^+$ corresponding to the shrunken bone $b$ and let $u,v$ be the endpoints of $P$. Observe that at most one nerve can attach to each augmented dual vertex on $P$, as $u,v$ have degree at least~$2$ in the skeleton by Lemma~\ref{lem:skeleton} and each augmented dual vertex has degree at most~$3$ by Lemma~\ref{lem:augmenteddualdegree}. Conversely, the attachment point of a nerve on $P$ has degree~$3$.}

\rev{Assume first that there is an alternating pair of nerves with respect to $P$.
Let $N_u$ denote the alternating pair of nerves with respect to $P$ with attachment points $x_u,y_u$ such that $y_u$ is closest to $u$ along $P$ among all such alternating pairs of nerves and $x_u$ is on $P[u,y_u]$. Note that $x_u$ and $y_u$ must be distinct by the preceding. Let $N_v$ similarly denote the alternating pair of nerves with respect to $P$ with attachment points $x_v,y_v$ such that $y_v$ is closest to $v$ along $P$ among all such alternating pairs of nerves and $x_v$ is on $P[v,y_v]$. Again, $x_v$ and $y_v$ must be distinct.}

\rev{We first consider the case that $P[u,y_u]$ and $P[v,y_v]$ do not have a vertex in common. Then there may be many nerves attached to $P[y_u,y_v]$, both towards $F_\one$ and $F_\two$. However, we can show that this structure can be easily described and found. To that end, we first need to deal with one exceptional situation:}

\begin{figure}[t]
    \centering
    \includegraphics[width=\textwidth, keepaspectratio]{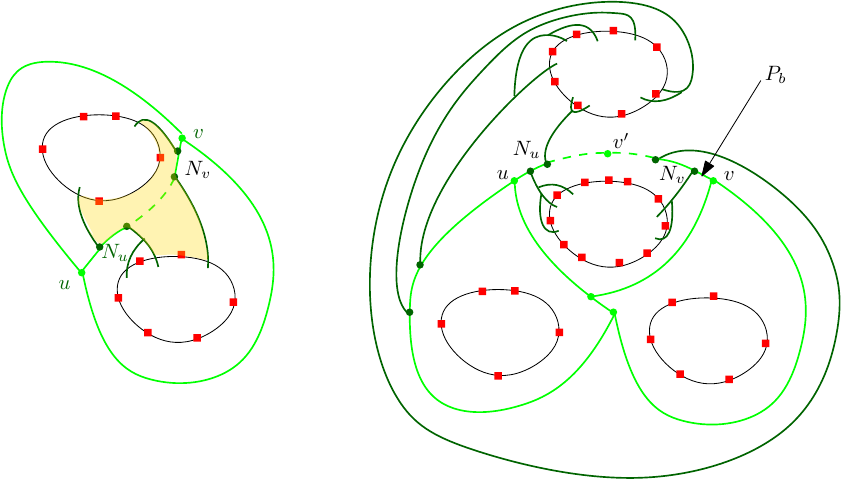}
    \caption{The figure shows the two situations when two alternating pairs of nerves occur. \rev{The situation in the right illustration is handled by Remark~\ref{rem:problem}. The situation in the left illustration is handled by Lemma~\ref{lem:MSTinside}. In the left illustration, the region $B$ exists and is shaded in yellow.}}
    \label{fig:problem}
\end{figure}

\begin{remark}\label{rem:problem}
\rev{Suppose that a subpath of $P[x_u,x_v]$ is incident to the outer face of $C^+$. Let $Q$ denote the maximal subpath of $P[x_u,x_v]$ incident to the outer face of $C^+$. Then we make one arbitrary augmented dual vertex $v'$ of $Q$ that has degree~$2$ an ``artificial'' branching point of the shrunken skeleton of $C^+$. Observe that such an augmented dual vertex of degree~$2$ exists. Indeed, $Q$ must start and end at the attachment points of two nerves, which have degree~$3$ in the augmented dual and thus correspond to a face of length~$3$ in $G$. However, since the instance is transformed, such a face only neighbors faces of length~$2$. Thus, there is an augmented dual vertex on $Q$ that has degree~$2$ in the augmented dual and is not an endpoint of $Q$. 
Adding $v'$ to the shrunken skeleton splits $P$ into two new bones. See the right illustration in Figure~\ref{fig:problem}. Note that this can happen for exactly one shrunken bone, and for the resulting bones, we cannot get into the current case (where $P[x_u,x_v]$ is incident to the outer face).}
\end{remark}

\rev{Having dealt with this exception, we consider the contraction of} the edges in $N_u$ and $N_v$ to form the vertices $u^*$ and $v^*$ respectively. \rev{This} is equivalent to deleting the corresponding edges in $G$. \rev{Since $P[y_u,y_v]$ is not incident to the outer face of $C^+$, this} forms a \rev{bounded}, connected region in the plane with pieces of the boundary of $F_\one$ and $F_\two$ bounding it on two sides. We denote this region by $B$ \rev{(see the left illustration in Figure~\ref{fig:problem})}. 

\begin{lemma}\label{lem:MSTinside}
The augmented minimum multiway cut dual $C^+$ restricted to $B$ comprises a \mst~containing the vertices $u^*, v^*$, and the augmented terminals corresponding to terminals on the pieces of the boundary of $F_\one$ and $F_\two$ bounding $B$.
\end{lemma}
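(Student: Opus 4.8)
The plan is to follow the template of the proof of Lemma~\ref{lem:nerves}, but applied to the region $B$ instead of to a single face of the skeleton, with Corollary~\ref{cor:chen-wu} as the black box. First I would pin down the primal sub-instance carved out by $B$. The contraction that forms $u^*$ and $v^*$ is, as noted, the deletion of the corresponding primal edges of $G$; let $G_B$ be the plane graph given by the component of the resulting graph that is adjacent to the two boundary arcs $\pi_\one \subseteq \partial F_\one$ and $\pi_\two \subseteq \partial F_\two$ of $B$. Since $f_\one$ strictly encloses only $F_\one$ and $f_\two$ strictly encloses only $F_\two$ by Lemma~\ref{lem:skeleton}, the region $B$ encloses no terminal face and is therefore a topological disk; together with Remark~\ref{rem:problem} (which ensures $B$ is bounded and $P[x_u,x_v]$ is not incident to the outer face of $C^+$), this means that $\partial B$ is a single closed curve which, traversed once, passes through $u^*$, the vertices of $\pi_\one$, the vertex $v^*$, and the vertices of $\pi_\two$. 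Re-embedding $G_B$ so that this curve bounds its outer face, every terminal of $G_B$ --- namely $u^*$, $v^*$, the terminals of $T_\one$ lying on $\pi_\one$, and the terminals of $T_\two$ lying on $\pi_\two$ --- lies on the outer face, which is exactly the situation of Corollary~\ref{cor:chen-wu}.

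Next I would show that $C^+$ restricted to $B$ is a \emph{minimal} multiway cut of this sub-instance. Its primal edges pairwise separate the terminals of $G_B$: two terminals both on $\pi_\one$ (symmetrically, both on $\pi_\two$, or one on each) are pairwise separated by $C$ in $G$, and since $B$ is a disk whose boundary carries all of these terminals, this separation is witnessed by edges of $C^+$ lying inside $B$; while $u^*$ and $v^*$ are reached by $C^+$ restricted to $B$ because the bone $P$ enters $B$ from both ends and $C^+$ is connected by Lemma~\ref{lem:augmentedconnected}, so $u^*$ and $v^*$ may be regarded as terminals that must be connected into the solution. Minimality is inherited from minimality of $C$: any edge that could be deleted from the restriction of $C^+$ to $B$ while the remainder is still a multiway cut of $G_B$ could, by Corollary~\ref{cor:chen-wu} and the fact that the contracted boundary of $B$ behaves like an island cut, also be deleted from $C$ while the remainder is still a multiway cut of $(G,T,\omega)$, contradicting the inclusion-wise minimality of $C$. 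Hence Corollary~\ref{cor:chen-wu} applies and the edges of $C^+$ restricted to $B$ form a minimal Steiner tree in the augmented dual of $G_B$ on the augmented terminals, which are precisely $u^*$ and $v^*$ and the augmented terminals lying on $\pi_\one$ and on $\pi_\two$.

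It remains to upgrade ``minimal'' to ``minimum-weight'', and this is the step I expect to be the main obstacle. I would argue by exchange, exactly as in the corresponding part of the proof of Lemma~\ref{lem:nerves} (cf.\ Figure~\ref{fig:no expose}). Suppose the restriction of $C^+$ to $B$ is not a \mst~on the set $T_B$ consisting of $u^*$, $v^*$, and the augmented terminals on $\pi_\one$ and $\pi_\two$, and let $W^+$ be a lighter Steiner tree on $T_B$. The crux is that substituting $W^+$ for this restriction keeps the global solution feasible. Here $W^+$ may be assumed not to cross $\partial B$ twice, as that would create a cycle enclosing no terminal, contradicting Proposition~\ref{prp:dualfaces}; and whenever $W^+$ fails to retain a subpath enclosing some terminal $t$ of $T_\one$ on $\pi_\one$, the path of $W^+$ between the two augmented terminals flanking $t$ must run along or cross $\partial B$, so that the arc of $\partial F_\one$ at $t$ together with $\pi_\two$ and the walls $u^*, v^*$ still cordons $t$ off from every other terminal of $T$; the case of $T_\two$ is symmetric. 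Hence replacing in $C$ the edges whose augmented dual edges lie in $B$ by the primal edges corresponding to $W^+$ yields a multiway cut of $(G,T,\omega)$ of strictly smaller weight, contradicting the choice of $C$. Making this feasibility argument rigorous --- tracking precisely how the replacement can interact with the two boundary arcs $\pi_\one, \pi_\two$ and with the contracted walls $u^*, v^*$ --- is the two-sided analogue of the one-face argument carried out in Lemma~\ref{lem:nerves}, and is where the real work lies.
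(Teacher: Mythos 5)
Your proof follows essentially the same approach as the paper's: carve out the primal sub-instance inside $B$, invoke a Chen--Wu-type primal/dual correspondence, and conclude that the restriction of $C^+$ to $B$ is a minimum Steiner tree. Two small remarks. You invoke Corollary~\ref{cor:chen-wu}, which requires that the outer face of $G_B$ be bounded by a simple cycle; the paper instead invokes Corollary~\ref{cor:connected aug dual sufficient}, whose only hypothesis is connectivity of the restricted augmented dual (immediate from connectivity of $C^+$ via Lemma~\ref{lem:augmentedconnected}), thereby sidestepping the possibility that, after deleting the primal edges dual to $N_u\cup N_v$, the boundary walk of $G_B$ repeats a vertex and is not a simple cycle. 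Second, the step you flag as ``where the real work lies'' is lighter than you suggest: the Chen--Wu correspondence is a weight-preserving bijection between minimal multiway cuts and minimal Steiner trees, so one only has to observe that the restriction of $C$ to $B$ is a \emph{minimum} multiway cut of the sub-instance --- which is exactly the replacement argument you and Lemma~\ref{lem:nerves} both make --- and ``minimal Steiner tree'' then upgrades to ``minimum Steiner tree'' for free, with no need to re-argue from scratch the feasibility of substituting an arbitrary Steiner tree $W^+$ into $C$.
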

\begin{proof}
The terminals embedded in the region $B$ are not connected to any terminal outside. They lie on the boundary of the outer face of $B$. Also note that the augmented dual restricted to the region $B$ is connected (because the restriction of $C^+$ to $B$ is connected) and equal to the augmented dual of the subgraph of $G$ restricted to $B$. By Corollary~\ref{cor:connected aug dual sufficient}, we know that the minimum multiway cut restricted to $B$ forms a \mst~in the augmented dual graph. The terminals of the \mst~are the augmented terminals corresponding to the intervals on the segments of $F_\one$ and $F_\two$ bounding $B$, along with $u^*$ and $v^*$. The vertices $u^*$ and $v^*$ serve as augmented vertices representing the intervals on the boundary of $B$ connecting $F_\one$ and $F_\two$. 
\end{proof}

We now consider the structure of the bone for the parts that are not in the region $B$, or what \rev{the bone looks} like when \rev{$P[u,y_u]$ and $P[v,y_v]$ do have a vertex in common or even that} there are no two alternating pairs of nerves. In the previous section, we already discussed what happens to nerve paths. Hence, we only need to consider paths between attachment points of alternating nerves and between an attachment point of a nerve and a branching point. The following result is immediate from~\cite[Lemma~5.2]{CDV} with the substitution $g=0$ and $t=k$.

\begin{lemma}\label{lem:homstring}
Each path on a bone of $C^+$ between the attachment points of a pair of alternating nerves on a bone, as well as the one from a branching point to its closest attachment point, crosses any path of $K$ \Oh{1} times.
\end{lemma}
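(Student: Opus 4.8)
The plan is to reduce the claim directly to the structural result of \colin~\cite[Lemma~5.2]{CDV}, exploiting the fact that the paths in question are ``clean'': they carry no nerve and no branching point in their interior. The first thing I would do is make this precise. If $P$ is a path on a bone between the attachment points $x,y$ of an alternating pair of nerves, then $x$ and $y$ are by definition \emph{consecutive} attachment points, so no augmented dual vertex in the interior of $P$ is the attachment point of a nerve. If instead $P$ runs from a branching point to its closest attachment point, then no attachment point lies in the interior of $P$ either, and no branching point does, since $P$ is a subpath of a single bone. In both cases $P$ also contains no augmented terminals, because it lies in the skeleton of $C^+$. Hence the interior of $P$ is free of nerves, branching points and augmented terminals, so $P$ is precisely a path of the type handled by~\cite[Lemma~5.2]{CDV}---an arc of the (reduced) multicut dual with nothing attached along it---with the dual vertices of the $k$ terminal faces of $\mathcal{F}$ playing the role of the obstacles.

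Next I would check that the hypotheses of~\cite[Lemma~5.2]{CDV} transfer to our setting. The cut graph $K$ is built exactly as in~\cite{CDV}, following Frank and Schrijver~\cite[Proposition~1]{Homotopy}, and the proof of~\cite[Lemma~5.2]{CDV} proceeds by exchange arguments that, for a fixed spoke $\gamma$ of $K$, replace a subpath of the arc between two consecutive crossings with $\gamma$ by the corresponding subpath of $\gamma$, strictly reducing the number of crossings with $\gamma$ without increasing the weight. I would observe that such a splice is legitimate here: away from its endpoints a spoke of $K$ avoids the dual vertices of all terminal faces and hence lies in $G^+$ as well as in $G^*$; and since the interior of $P$ carries no nerves, the replacement leaves every nerve of $C^+$ untouched, creates no new crossings elsewhere, and preserves the property of being a nerved minimum multiway cut. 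Thus any such exchange would contradict our choice of $C$ as a nerved minimum multiway cut whose skeleton has the minimum number of crossings with $K$, which is exactly the situation governed by~\cite[Lemma~5.2]{CDV}.

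Finally I would invoke~\cite[Lemma~5.2]{CDV} with the substitution $g = 0$ (our surface is the sphere) and $t = k$; after this substitution its bound on the number of crossings of such an arc with a single spoke of the cut graph is $\Oh{1}$. Since this holds for every spoke $\gamma$ of $K$, every such path $P$ crosses any path of $K$ at most $\Oh{1}$ times. As already noted in the proof of Lemma~\ref{lem:singular face bone}, these transformations are insensitive to whether the two separated terminal faces coincide and to whether the underlying shrunken bone is a bridge of the shrunken skeleton, so the argument applies uniformly.

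The step I expect to be the main obstacle is the transfer in the second paragraph: one must verify that nothing in the exchange argument of~\cite[Lemma~5.2]{CDV} relies on a property of the genuine dual $G^*$ that fails for the augmented dual $G^+$, and that nerves attached to the \emph{endpoints} of $P$ (but not its interior), together with possible bridges of the shrunken skeleton, never interfere with a splice. All of these verifications are strictly easier than, and run parallel to, the ones already carried out in the much more delicate proof of Lemma~\ref{lem:singular face bone} for nerve paths, where nerves genuinely attach to the interior and force the weaker bound $\Oh{k}$.
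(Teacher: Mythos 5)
Your proposal is correct and follows exactly the same route as the paper: the paper's entire proof of this lemma is a one-line citation of~\cite[Lemma~5.2]{CDV} with $g=0$ and $t=k$, and you have simply unpacked why that citation applies (the interior of the path is free of nerves and branching points, the cut graph and the minimality of crossings of the skeleton with $K$ play the role they play in~\cite{CDV}, and the splice arguments go through verbatim).
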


Finally, we argue that replacing certain subpaths of a bone by a homotopically equivalent path still yields an optimal solution. The proof of this lemma is reminiscent of the proof of~\cite[Lemma~7.2]{CDV}.

\begin{lemma}\label{lemma:shortest homotopic path}\label{lem:enclosinghomotopy}
Let $b$ be any shrunken bone of the shrunken skeleton of $C^+$ and let $P$ be the path that forms the corresponding bone. Let $x$ and $y$ be two vertices of the augmented dual on $P_b$ such that $P_b[x,y]$ contains no attachment points nor branching points, except possibly $x$ or $y$. 
Then there exists a minimum multiway cut dual which contains a shortest path in the plane homotopic to $P[x,y]$.
\end{lemma}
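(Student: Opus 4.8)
The statement is a standard "swap for the shortest homotopic path" argument, and I would prove it by an exchange argument that contradicts the minimality of the chosen solution $C^+$ (recall: among all nerved minimum multiway cuts, $C^+$ is one whose skeleton has the fewest crossings with the cut graph $K$). Let $R$ be a shortest path in the plane homotopic to $P_b[x,y]$ (with the dual vertices of the terminal faces as obstacles); such an $R$ exists and can be taken to follow the edges of $G^+$ by the algorithm of Frank and Schrijver~\cite[Section 5]{Homotopy}. The goal is to form $C'$ by replacing $P_b[x,y]$ with $R$ inside $C^+$ and argue (i) $C'$ is (the dual of) a multiway cut, (ii) $\omega(C') \le \omega(C)$, and (iii) $C'$ contradicts the choice of $C$ unless $P_b[x,y]$ was already a shortest homotopic path, in which case we are done (take the minimum multiway cut dual to be $C^+$ itself).

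First I would set up the topology. Since $P_b[x,y]$ contains no attachment points and no branching points except possibly at its ends, no nerve attaches to its interior and it is a "plain" subpath of the skeleton separating (at most) two terminal faces. Being homotopic to $P_b[x,y]$, the closed curve formed by $R \cup P_b[x,y]$ bounds a region containing no obstacle, i.e. no dual vertex of any terminal face; hence replacing $P_b[x,y]$ by $R$ does not change which terminal each face of the dual encloses, so by Proposition~\ref{prp:dualfaces} (and the discussion of how cuts correspond to dual subgraphs) $C'$ is again a multiway cut of $(G,T,\omega)$. Here one must be a little careful: $R$ may share vertices or even edges with the rest of $C^+$, and may share vertices with $P_b$ itself; as in the proof of~\cite[Lemma 7.2]{CDV}, if $R$ runs along existing edges of $C^+$ we simply identify them, and any resulting degree-$1$ pendant pieces or empty (terminal-free) faces created can be discarded without losing feasibility (using Lemma~\ref{lem:dualbridgeless}-style reasoning that terminal-free faces/pendants are removable). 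The homotopy class is preserved under these cleanups because we only remove terminal-free regions.

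Next, the weight and crossing count. Because $R$ is a shortest homotopic path, $\omega(R) \le \omega(P_b[x,y])$, so $\omega(C') \le \omega(C)$; since $C$ is a minimum multiway cut, equality holds and $C'$ is also a minimum multiway cut. It remains to see that $C'$ (after cleanup, and after re-nerving via Corollary~\ref{cor:forest} so that it is nerved) has a skeleton with no more crossings with $K$ than $C^+$, and strictly fewer if $\omega(R) < \omega(P_b[x,y])$ is impossible to conclude directly — so instead I would argue as in~\cite{CDV}: among all shortest paths homotopic to $P_b[x,y]$ we may choose $R$ to minimize the number of crossings with $K$, and a shortest homotopic path crosses each spoke of $K$ no more often than $P_b[x,y]$ does (a homotopic curve in the plane can be taken to cross each fixed arc at most as many times as any curve in its class after removing "backtracking" bigons, and $R$ being shortest has no such bigons). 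Then the skeleton of $(C')^+$ has at most as many crossings with $K$ as that of $C^+$. If $P_b[x,y]$ were not already a shortest homotopic path, then $\omega(R) < \omega(P_b[x,y])$, contradicting minimality of $C$; hence $P_b[x,y]$ is a shortest homotopic path, and $C^+$ itself is the desired minimum multiway cut dual. (Equivalently, stated as in the lemma: a minimum multiway cut dual containing a shortest path homotopic to $P[x,y]$ exists, namely $C^+$.)

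The main obstacle is the bookkeeping around $R$ interacting with the existing solution — shared edges, shared internal vertices with $P_b$, creation of new low-degree vertices or terminal-free faces, and ensuring that after all these local surgeries the result is still *nerved* (so that it is a legitimate competitor in the class over which $C^+$ is minimal) and that its skeleton's crossing count has genuinely not increased. I expect to handle this exactly as in~\cite[Lemma 7.2]{CDV}: decompose $R$ into maximal subpaths that are either disjoint from $C^+\setminus P_b[x,y]$ or coincide with it, push the former to be shortest and crossing-minimal, and verify face-by-face that no terminal is freed. The other routine-but-necessary check is that "shortest homotopic path has no more crossings with each spoke than the original," which is where the plane (genus $0$) hypothesis is used and where a short bigon-elimination argument, identical in spirit to the monogon/bigon reductions in the proof of Lemma~\ref{lem:singular face bone}, finishes the job.
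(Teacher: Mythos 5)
Your plan is in the right spirit — take a shortest path $P'$ homotopic to $P_b[x,y]$, replace $P_b[x,y]$ with $P'$, and argue that the resulting dual is still a multiway cut dual — but it is considerably more elaborate than the paper's argument and it glosses over the one step that genuinely needs care.

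First, the framing. The lemma only asks for the existence of \emph{some} minimum multiway cut dual containing a shortest homotopic path, so the paper simply sets $C' := (C^+ \setminus P_b[x,y]) \cup P'$, argues that $C'$ is still a multiway cut dual, notes $\omega(C') \leq \omega(C)$, and is done: $C'$ itself is the witness. Your detours about re-nerving $C'$, bounding its skeleton's crossings with $K$, choosing $R$ among shortest homotopic paths to minimize crossings, and the bigon-elimination sketch are all unnecessary: a weight comparison alone finishes the proof, and you do not need $C'$ to be a ``legitimate competitor'' in the tie-broken class (nerved, minimum crossings) over which $C^+$ was fixed, because any multiway cut of strictly smaller weight would already contradict minimality of $\omega(C)$. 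Your stronger conclusion that $C^+$ itself contains a shortest homotopic path is true, but not needed, and the contradiction framing obscures the short direct construction.

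Second, and more importantly, the step you compress — ``hence replacing $P_b[x,y]$ by $R$ does not change which terminal each face of the dual encloses'' — is where the actual content lies, and as written it is a genuine gap. From ``the region between $R$ and $P_b[x,y]$ contains no dual vertex of a terminal face'' you still need two facts. (a) That region contains no \emph{terminal} either; the paper gets this by observing that $P_b[x,y]$ and $P'$, being paths in the dual avoiding augmented terminals, are disjoint from the boundary walks of the terminal faces $F_\alpha, F_\beta$ separated by the bone, so if $t_1$ or $t_2$ lay in a bounded region of $P_b[x,y] \cup P'$, then its entire face boundary, and hence the obstacle, would too — contradiction. (b) That the swap cannot merge two terminal-containing faces; the paper runs the parity argument: any $t_1$--$t_2$ path cut only by $P_b[x,y]$ crosses it an odd number of times, yet enters and exits each bounded region of $P_b[x,y] \cup P'$ an even number of times, forcing a crossing with $P'$. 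Without spelling out (a) and (b), your ``hence'' asserts essentially the conclusion of the lemma rather than proving it.
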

\begin{proof}
Let $P'$ be a shortest path  with the same crossing sequence as $P=P_b[x,y]$. It is sufficient to prove that for any $t_1, t_2 \in T$, $P'$ disconnects all $t_1$-$t_2$ paths that are disconnected by $P$ and that are not disconnected by any other part of $C^+$.
 
Suppose that $R$ is a $t_1$-$t_2$ path that is crossed by $P$ but not by any other part of $C^+ \setminus P$. Let $f_1$ and $f_2$ be the faces of $C^+$ that enclose $t_1$ and $t_2$ respectively. Note that $P$ lies in the intersection of the boundaries of $f_1$ and $f_2$. Also observe that $R$ is fully contained in the region bounded by the union of $f_1$, and $f_2$. Moreover, since $t_1$ is in $f_1$ and $t_2$ is in $f_2$, $R$ must cross $P$ an odd number of times. 
Since $P$ and $P'$ have the same endpoints, the union of their drawings in the plane decomposes the plane into a set $X$ of two or more regions. Since $P$ and $P'$ are homotopic and both are disjoint from the boundaries of $F_\one$ and $F_\two$, neither $t_1$ nor $t_2$ lies in any bounded region of $X$. Hence, if $R$ enters a bounded region of $X$, then $R$ must also exit it, and thus $R$ intersects any bounded region of $X$ an even number of times. However, by the preceding argument, $R$ crosses $P$ an odd number of times. Hence, $R$ must cross $P'$. 
\end{proof}

\section{Towards an \texorpdfstring{$n^{\OO(k)}$}{n to the power O(k)}-time Algorithm}\label{Algorithms}
With the structure of the optimum solution in place, we are now ready to develop the algorithm. 
\rev{As before, we are given an instance $(G,T,\omega,\mathcal{F})$. Recall that we assume that the instance is transformed and that the dual of any optimum solution is connected. Moreover, when we speak of \emph{the} optimal solution, we mean the optimal solution specified in Section~\ref{sec:Bones and Homotopy}}.
Our guiding light will be the topology of the optimum solution, which consists of its skeleton and a compact description of its bones and is defined more formally below. As we do not know any optimum solution, we enumerate all possible topologies and argue that we find a feasible, minimum-weight solution for the topology that corresponds to an optimum solution.

\subsection{Topology}
We first define a topology and then show that all topologies can be efficiently enumerated.

\begin{definition}
A \defi{topology} consists of a triple $(S,s,h)$ \rev{for which}:
\begin{itemize}
\item $S$ is a connected plane multigraph (possibly containing parallel edges but no self-loops) with $k$ faces \rev{and at most $4k+1$ vertices}. This is the  \rev{\emph{shrunken skeleton}} of the topology. We direct the edges of the \rev{shrunken skeleton} in an arbitrary fashion to obtain a \rev{\emph{directed shrunken skeleton}}. \rev{By abuse of notation, we call the edges of $S$ the \emph{shrunken bones} of $S$};
\item \rev{there is a bijection that maps to each face of $S$, a unique corresponding face of $\mathcal{F}$};
\item for each shrunken bone $b$ of \rev{$S$}, separating faces $f_\one,f_\two$ of the skeleton (possibly $\one=\two$), a structural description, describing:
\begin{itemize}
\item a (possibly empty) ordered multisubset $s(b)$ of $\{\one,\one,\two,\two\}$ such that $s(b)[i]$ and $s(b)[i+1]$ are not equal for any $1 \leq i < |s(b)|$, except possibly for $i=2$ when $|s(b)|=4$;

\item for any $1 \leq i \leq 2|s(b)|+1$, $h(b)[i]$ is a homotopy string of length $O(k)$. When $|s(b)|=4$, then $h(b)[5]$ is unspecified.
\end{itemize}
\end{itemize}
\end{definition}
If $\one = \two$, then we use $\one$ or $\two$ to denote east and west respectively in $s(b)$.

The intuition behind the definition of a topology is as follows. The \rev{shrunken skeleton of the toplogy} guides the overall structure of the solution that will be found by our algorithm. \rev{By enumerating all possible topologies, we will encounter one that for which its shrunken skeleton} is equivalent to the shrunken skeleton of the optimal solution. \rev{By Lemma~\ref{lem:skeleton}, the shrunken skeleton of the optimal solution has $k$ faces and is connected, and by Lemma~\ref{lem:skeleton-size} and Remark~\ref{rem:problem}, it has at most $4k+1$ vertices. We thus require the same for the topology}. Moreover, each face of the shrunken skeleton \rev{of the toplogy} corresponds to a unique face in $\mathcal{F}$. The orientation of the shrunken skeleton of the topology will prove useful in later definitions and algorithms.

\rev{We seek to expand the shrunken skeleton of the topology into a multiway cut for $(G,T,\omega)$ that has the same shrunken skeleton and bijection with faces of $\mathcal{F}$ as the topology prescribes}. The multiset $s(b)$ describes the direction of maximal sets of consecutive nerves along the bone \rev{of this expansion} that have the same direction. By this intuition, it makes sense to force the directions to be alternating, as in the definition. We also note that we do not need to specify more than four such sets, because when we have four sets, we obtain two alternating pairs of nerves and can apply Lemma~\ref{lem:MSTinside} to the region between them.

For each set of consecutive nerves as described by $s(b)$, we use $h(b)$ to denote the homotopy of the (up to four) subpaths of the bone \rev{of the expansion} to which the nerves attach. We also use $h(b)$ to describe the homotopy of the (up to five) subpaths of the bone \rev{of the expansion} between these sets of nerves. Following Lemma~\ref{lem:singular face bone}, each of those homotopy strings has length $O(k)$ \rev{in the optimal solution}. When $|s(b)| = 4$, we do not (need to) specify the homotopy of the middle subpaths of the bone \rev{of the expansion}, as we handle this part as in Lemma~\ref{lem:MSTinside}.

Recall that $C$ is the optimal solution and $C^+$ is the set of corresponding edges of the augmented dual, \rev{fixed as discussed in Section~\ref{sec:Bones and Homotopy}}. 
We call a topology $(S,s,h)$ \emph{optimal} if the shrunken skeleton $S^+$ of $C^+$ is equivalent to (the underlying graph of) $S$, each face of $\mathcal{F}$ is assigned to the same face in $S^+$ and $S$, and for each \rev{shrunken} bone $b$ of the shrunken skeleton $S^+$, directed in an arbitrary fashion:
\begin{itemize}
\item there are $|s(b)|$ nonempty maximal sets of nerves, where nerves in the $i$-th set are towards $F_{s(b)[i]}$, and their attachment points are consecutive and uninterrupted on \rev{the bone of $C^+$ corresponding to} $b$; the nerves preceding a nerve from the $i$-th set belong to the $i$-th set or the $i-1$-th set, or if $|s(b)| = 4$ and $i=3$, can be arbitrary nerves to $F_\one$ or $F_\two$, until a nerve of the $i-1$-th set is hit;
\item the part of the bone \rev{of $C^+$ corresponding to $b$} between the nerves at the ends of the $i$-th set of nerves has homotopy string $h(b)[i]$;

\item the part of the bone \rev{of $C^+$ corresponding to $b$} between the $i$-th and $i+1$-th sets of nerves, for $0 \leq i \leq |s(b)|$ where we pretend the $0$-th nerve is the one end of the bone and the $|s(b)|+1$-th nerve is the other end of the bone, has homotopy string $h(b)[2i+1]$. When $|s(b)| = 4$, we do not have to satisfy this for $i=2$.
\end{itemize}
Note that since we fixed $C$, there is a unique optimal topology, which effectively describes $C$. We may thus speak of `the' optimal topology.

\rev{Recall that}, for each topology, we aim to \rev{expand this topology into a multiway cut for $(G,T,\omega)$, as mentioned above}. By considering all topologies, we ensure that we consider the optimal topology at some point. Then we argue that we find a multiway cut \rev{for $(G,T,\omega)$} of minimum weight. Before proceeding with that algorithm, we bound the number of topologies and show how to enumerate them. We require the following auxiliary lemma.

\begin{lemma}\label{lem:plane-size}
There are $2^{\OO(k \log k)}$ connected plane multigraphs with $O(k)$ vertices, $k$ faces and without self-loops. Moreover, they can be enumerated in the same time.
\end{lemma}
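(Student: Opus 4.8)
The plan is to bound the number of such graphs by a crude counting argument and then note that the counting can be made constructive. First I would recall that a connected plane multigraph with $v$ vertices and $f$ faces has, by Euler's formula, exactly $e = v + f - 2$ edges; since $v = O(k)$ and $f = k$, we get $e = O(k)$. So it suffices to bound the number of plane multigraphs on a bounded number of vertices and edges.

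Next I would encode such a plane multigraph combinatorially. A plane multigraph is determined up to equivalence by its underlying abstract multigraph together with, for each vertex, the cyclic (rotation) order of the edge-ends around it; the choice of outer face contributes at most another factor of $O(k)$ (number of faces). The underlying multigraph on at most $v = O(k)$ labelled vertices with at most $e = O(k)$ edges can be specified by listing, for each of the $e$ edges, its pair of endpoints: at most $\binom{v}{2}^{e} = (O(k^2))^{O(k)} = 2^{O(k\log k)}$ choices (self-loops are forbidden, so we only need unordered pairs of distinct vertices). Given the multigraph, a rotation system assigns to each vertex a cyclic order of its incident edge-ends; across all vertices there are at most $(2e)! = (O(k))! = 2^{O(k \log k)}$ such assignments. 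Multiplying by the $O(k)$ choices of outer face and by $v! = 2^{O(k \log k)}$ to account for vertex relabelling (we want graphs up to isomorphism, and over-counting is harmless for an upper bound), the total is $2^{O(k\log k)} \cdot 2^{O(k\log k)} \cdot O(k) \cdot 2^{O(k\log k)} = 2^{\OO(k \log k)}$. Not every such combinatorial structure is realizable as a plane graph (a rotation system must be consistent, i.e.\ correspond to a genus-$0$ embedding with the right face count), but discarding the non-planar ones only decreases the count, so the bound stands.

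For the enumeration claim, I would observe that each of the above choices is made over an explicit finite set of size $2^{\OO(k\log k)}$, so one can iterate over all of them in $2^{\OO(k \log k)}$ time; for each candidate, checking whether it yields a genuine connected plane multigraph with exactly $k$ faces and no self-loops (e.g.\ by walking the faces via the rotation system, counting them, and verifying Euler's formula and connectivity) takes $\mathrm{poly}(k)$ time. Hence the graphs can be listed in $2^{\OO(k\log k)}$ total time, as required.

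The main obstacle, such as it is, is bookkeeping rather than mathematical depth: one must be careful that the chosen combinatorial encoding (abstract multigraph plus rotation system plus marked outer face) actually captures \emph{equivalence} of plane graphs in the sense defined in the preliminaries, and that the number of faces is exactly $k$ and not merely at most $k$. Both are handled by the face-tracing check in the enumeration. Since we only need an upper bound on the count and a constructive enumeration, a generous over-count (e.g.\ not quotienting by automorphisms) is acceptable and keeps the argument short.
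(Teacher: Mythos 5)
Your proof is correct. It reaches the same bound but by a more self-contained route: the paper first invokes the known $2^{\OO(p\log p)}$ bound on labeled simple planar graphs, then inflates to multigraphs by choosing which edges become parallel and distributing multiplicities via integer partitions, and only afterwards multiplies in the count of rotation systems. You instead skip the detour through simple planar graphs entirely: you bound the number of abstract multigraphs directly by the crude ``choose both endpoints of each of the $O(k)$ edges'' encoding, giving $2^{\OO(k\log k)}$ outright, and then multiply by rotation systems, an outer-face choice, and a harmless $v!$ overcount. Your approach buys elementarity (no appeal to the planar-graph enumeration literature, no partition-counting), at the cost of a slightly more generous overcount that sweeps in many non-embeddable or higher-genus combinatorial structures; as you correctly note, discarding those in the face-tracing check only shrinks the final list. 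One small caution: the paper's notion of ``equivalent'' plane graphs is phrased purely in terms of rotation systems and does not mark an outer face, so your extra $O(k)$ factor for the outer-face choice is, if anything, a further overcount rather than a necessity --- but since you are only after an upper bound and a constructive listing, this is harmless.
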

\begin{proof}
A trivial bound shows that there are $2^{\OO(p \log p)}$ labeled simple planar graphs on $p$ vertices (although more precise bounds are known~\cite{GimenezN2009}). 
We can extend this to planar multigraphs by guessing the subset of edges of the simple planar graph that are parallel and partitioning the number of parallel edges over them. Note that a planar multigraph with $O(k)$ vertices and $k$ faces must have $O(k)$ edges. Since there are $2^{\OO(q)}$ partitions of the integer $q$, there are $2^{\OO(k \log k)}$ possible planar multigraphs with $p=O(k)$ vertices and $k$ faces and without self-loops. All such planar graphs can be trivially enumerated in $2^{\OO(k \log k)}$ time by enumerating all graphs and testing each for planarity~\cite{HopcroftT1974}.

We now consider embeddings. Recall that two embeddings of a planar graph are equivalent if the ordering of the edges around each of the vertices is the same. In our case, there are $2^{\OO(k \log k)}$ different such orderings. Hence, there are $2^{\OO(k \log k)}$ plane multigraphs with $O(k)$ vertices and $k$ faces and without self-loops. Since each such planar graph can be enumerated in $2^{\OO(k \log k)}$ time and embeddings can be enumerated in time linear in their number~\cite{Cai93}, the bound of $2^{\OO(k \log k)}$ follows.
\end{proof}

\begin{lemma}\label{lem:topology-enumerate}
There are $2^{\OO(k^2 \log k)}$ different topologies. Moreover, they can be enumerated in the same time.
\end{lemma}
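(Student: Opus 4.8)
The plan is to bound the number of topologies by counting the choices for each of the three components of a topology $(S,s,h)$ and then multiplying. First I would invoke Lemma~\ref{lem:plane-size} to count the shrunken skeletons: there are $2^{\OO(k\log k)}$ connected plane multigraphs with $O(k)$ vertices, $k$ faces, and no self-loops, and they can be enumerated in that time. To each such $S$ we must also attach the bijection between its $k$ faces and the $k$ faces of $\mathcal{F}$; there are $k!=2^{\OO(k\log k)}$ such bijections, and an arbitrary orientation of the $O(k)$ edges costs only a further $2^{\OO(k)}$ factor. So the number of choices for $(S,\text{bijection},\text{orientation})$ is $2^{\OO(k\log k)}$.

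Next I would count the structural descriptions $(s,h)$ attached to a fixed $S$. By Lemma~\ref{lem:skeleton-size} and the definition of a topology, $S$ has $O(k)$ shrunken bones. For each shrunken bone $b$, the multiset $s(b)$ is an ordered multisubset of $\{\one,\one,\two,\two\}$ of length at most $4$, so there are only $O(1)$ choices for $s(b)$, hence $2^{\OO(k)}$ choices for the whole of $s$ over all bones. The homotopy part is the dominant term: for each bone $b$ there are at most $2|s(b)|+1 = O(1)$ homotopy strings $h(b)[i]$, and each such string has length $O(k)$ over an alphabet of size $O(k)$ — there are $k-1$ spokes in the cut graph $K$, each crossable in one of two directions, so $2(k-1)$ symbols. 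The number of strings of length $O(k)$ over an alphabet of size $O(k)$ is $(O(k))^{O(k)} = 2^{\OO(k\log k)}$. Multiplying over the $O(1)$ strings per bone and the $O(k)$ bones gives $2^{\OO(k\log k)\cdot O(k)} = 2^{\OO(k^2\log k)}$ choices for $h$.

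Combining: the total number of topologies is $2^{\OO(k\log k)} \cdot 2^{\OO(k)} \cdot 2^{\OO(k^2\log k)} = 2^{\OO(k^2\log k)}$. For enumeration, I would enumerate the $S$ (with bijection and orientation) using Lemma~\ref{lem:plane-size} and a trivial enumeration of the $k!$ bijections and $2^{\OO(k)}$ orientations, and then for each fixed $S$ enumerate all choices of $s$ (constantly many per bone) and all choices of $h$ (all strings of length $O(k)$ over the $2(k-1)$-symbol alphabet, of which there are $2^{\OO(k)}$ per string and $2^{\OO(k^2\log k)}$ in total); each topology is produced in polynomial time once its components are fixed, so the overall enumeration runs in $2^{\OO(k^2\log k)}$ time.

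The only mildly delicate point — and the one I would be most careful about — is pinning down the alphabet and length bounds for the homotopy strings: the length bound $O(k)$ per string comes from Lemma~\ref{lem:singular face bone} (and Lemma~\ref{lem:homstring} for the short connecting pieces), and the alphabet size $2(k-1)$ comes from the cut graph $K$ having $k-1$ spokes, each of which can be crossed in two orientations. It is worth noting that we must count strings, not homotopy classes, but since we only ever use strings of bounded length this distinction does not affect the bound. With these two ingredients fixed, the arithmetic $(O(k))^{O(k)\cdot O(1)\cdot O(k)} = 2^{\OO(k^2\log k)}$ is routine, and nothing else in the argument presents an obstacle.
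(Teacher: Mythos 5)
Your proof is correct and follows essentially the same route as the paper: Lemma~\ref{lem:plane-size} bounds the shrunken skeletons, $k!$ bounds the face bijections, and for each of the $O(k)$ bones there are $O(1)$ choices of $s(b)$ and $(O(k))^{O(k)} = 2^{\OO(k\log k)}$ homotopy-string choices, yielding $2^{\OO(k^2\log k)}$ in total. (There is a small slip in your enumeration paragraph where you write ``$2^{\OO(k)}$ per string'' rather than the $2^{\OO(k\log k)}$ you correctly derived two sentences earlier, but this is a typo and does not affect the final bound.)
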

\begin{proof}
By Lemma~\ref{lem:plane-size}, there are $2^{\OO(k \log k)}$ different \rev{shrunken} skeletons for a topology. \rev{The number of bijections between the faces of a shrunken skeleton and $\mathcal{F}$ is $2^{\OO(k \log k)}$ as well.}
For each \rev{shrunken} bone $b$ of a skeleton, there are $O(1)$ choices for $s(b)$. Moreover, each homotopy string has length $O(k)$ and there are at most $8$ of them for every bone, meaning length $O(k)$ for each bone and $O(k^2)$ in total. Each entry of a homotopy string has $O(k)$ possible values by the definition of a cut graph. Hence, there are $2^{\OO(k^2 \log k)}$ topologies.

We note that by Lemma~\ref{lem:plane-size}, all \rev{shrunken} skeletons \rev{for a topology} can be enumerated in $2^{\OO(k \log k)}$ time. It is immediate from the preceding that all topologies can be enumerated in $2^{\OO(k^2 \log k)}$ time.
\end{proof}

\subsection{Broken Bones and Splints}
We now become more formal about our interpretation of topologies \rev{and their expansion}. Let $(S,s,h)$ be a topology. We define the following notion, which states how we interpret $s$ and $h$ for a shrunken bone $b$.

\begin{definition}\label{def:brokenbone}
Let $(S,s,h)$ be a topology. Let $b$ be a shrunken bone of the shrunken skeleton $S$, separating the faces $f_\one$ and $f_\two$, and directed from $u$ to $v$. Let $s(b) = \{\three_1,\ldots, \three_{|s(b)|}\}$, where each $\three_j \in \{\one, \two \}$. Let $F_\one$ and $F_\two$ be the terminal faces enclosed by the faces $f_\one$ and $f_\two$, respectively. A \defi{broken bone} is a tuple that consists of intervals $I_\one$ and $I_\two$ of augmented terminals lying on $F_\one$ and $F_\two$ respectively, vertices of the augmented dual $x^+_1,\ldots,x^+_{|s(b)|+1}$ and $y^+_0,\ldots,y^+_{|s(b)|}$ that are not augmented dual terminals, and nerves $N_1,\ldots,N_{2|s(b)|}$ towards $F_\one$ and $F_\two$, where:
\begin{itemize}
\item for each $1 \leq i \leq |s(b)|$, nerves $N_{2j-1}$ and $N_{2j}$ extend towards $\three_j$ and have root $x^+_j$ and $y^+_j$ respectively. These nerves might be the same, but are non-empty if $I_j$ is non-empty.
\item the interval of a nerve among $N_1,\ldots,N_{2|s(b)|}$ towards $F_\one$ (resp.~$F_\two$) is a subinterval of $I_\one$ (resp.~$I_\two$). Moreover, these subintervals appear in the order indicated by their indices on $I_\one$ and $I_\two$ (but do not necessarily cover $I_\one$ and $I_\two$ completely);
\item if $I_\one$ is non-empty, then a prefix of $I_\one$ is the interval of a nerve among $N_1,\ldots,N_{2|s(b)|}$. The same holds for a suffix of $I_\one$ (possibly, this is the same nerve). The same holds with respect to $I_\two$ if it contains at least two augmented terminals.
\end{itemize}
\end{definition}

Note that in the definition of a broken bone, it is not strictly necessary to specify $x^+_1,\ldots,x^+_{|s(b)|}$ and $y^+_1,\ldots,y^+_{|s(b)|}$, as their definition is implied by $N_1,\ldots,N_{2|s(b)|}$. Similarly, we do not actually need all the nerves we have specified when $|s(b)| = 4$. We still write this to streamline and simplify the definition (which admittedly is already quite complex).

We now define the notion of an optimal broken bone. We refer back to the definition of an optimal topology to recall the $|s(b)|$ maximal sets of nerves defined there. For the optimal solution $C$, the optimal topology, and a shrunken bone $b$ \rev{of the optimal topology}, we call a broken bone $x_1^+,\ldots,x^+_{|s(b)|+1}$, $y^+_0,\ldots,y^+_{|s(b)|}$, $N_1,\ldots,N_{2|s(b)|}$, $I_\one, I_\two$ \emph{optimal} if for the bone of $C^+$ corresponding to $b$, its ends are $y^+_0$ and $x^+_{|s(b)|+1}$ in the direction of the directed \rev{shrunken} skeleton, the $i$-th set of nerves has the nerves $N_{2i-1}$ and $N_{2i}$ at the ends of the set (where $N_{2i-1}$ and $N_{2i}$ are possibly equal), and the intervals $I_\one$ and $I_\two$ correspond to the union of the intervals of all nerves that have their attachment point on the bone. \rev{For nerves of $C^+$ that are attached to a branching point (this is possible if the branching point has degree~$2$ in the augmented dual), we need to decide in which broken bone we include it. We break such ties arbitrarily.}

The latter may lead to a nerve that attaches to a branching point and its corresponding interval to be assigned to two broken bones (for the two bones that meet at the branching point and whose shrunken bones bound the same face of the skeleton). In that case, we break ties arbitrarily, but in such a way that all nerves that attach to this branching point are assigned to the same bone and no terminal face is enclosed by the union of any two nerves assigned to the same bone in this way. This ensures that each augmented terminal is in some interval of some optimal broken bone and moreover, no terminal face and its broken bones `interrupt' the sequence of nerves of the bone.

\rev{We now describe how to `fix' a broken bone.}

\begin{definition}\label{def:split}
\rev{Let $(S,s,h)$ be a topology. Let $b$ be a shrunken bone of the shrunken skeleton $S$, separating the faces $f_\one$ and $f_\two$, and directed from $u$ to $v$. Let $s(b) = \{\three_1,\ldots, \three_{|s(b)|}\}$, where each $\three_j \in \{\one, \two \}$. Let $F_\one$ and $F_\two$ be the terminal faces enclosed by the faces $f_\one$ and $f_\two$, respectively. Consider a broken bone $I_\one$, $I_\two$, $x^+_1,\ldots,x^+_{|s(b)|+1}$, $y^+_0,\ldots,y^+_{|s(b)|}$, $N_1,\ldots,N_{2|s(b)|}$. Then a \defi{splint}, fixing this broken bone}, is a subset $D^+$ of the edges of the augmented dual graph $G^+$ with the following properties: 
\begin{itemize}
\item for each terminal $t$ between $I_\one$ (or between $I_\two$), there is a unique bounded face of $D^*$ that encloses $t$, where $D^*$ is the set of dual edges corresponding to the edges of $D^+$.
\item there is a path $P_b$ in $D^+$ between $y^+_0$ and $x^+_{|s(b)|+1}$. These two vertices of the augmented dual are called the \emph{ends} of the splint.
\item $x^+_1,\ldots,x^+_{|s(b)|+1}$ and $y^+_0,\ldots,y^+_{|s(b)|}$ are in $D^+$ and appear in the order $y^+_0$, $x^+_1$, $y^+_1$, $x^+_2$, $y^+_2$,~\ldots,~$x^+_{|s(b)|+1}$ on $P_b$.

\item for each $0 \leq i \leq |s(b)|$ (except $i=2$ when $|s(b)| = 4$), the (possibly empty) subpath of $P_b$ from $y^+_j$ to $x^+_{i+1}$ consists only of vertices that have degree $2$ in $D^+$ and has homotopy string equal to $h(b)[2j+1]$.

\item for each $1 \leq i \leq |s(b)|$, the only vertices on the subpath of $P_b$ between $x^+_j$ and $y^+_j$ that have degree more than~$2$ in $D^+$ are the attachment points of nerves towards $F_{\three_j}$; these nerves include $N_{2j-1}$ and $N_{2j}$. The subpath has homotopy string equal to $h(b)[2j]$.
\item if $|s(b)|=4$, then for the vertices on the subpath of $P_b$ between $y^+_2$ and $x^+_3$, the only vertices that have degree more than~$2$ in $D^+$ are the attachment points of nerves towards $F_\one$ or $F_\two$.
\item the intervals of all the aforementioned nerves jointly partition $I_\one$ and $I_\two$ with the exception that for each $1 \leq i \leq |s(b)|$, the nerves $N_{2j-1}$ and $N_{2j}$ are not necessarily distinct (but are distinct from all other nerves).
\end{itemize}
\end{definition}

We refer to Figure~\ref{fig:splinting} for an illustration of a splint and its broken bone.

\begin{figure}[t]
    \centering
    \includegraphics[width=0.85\textwidth]{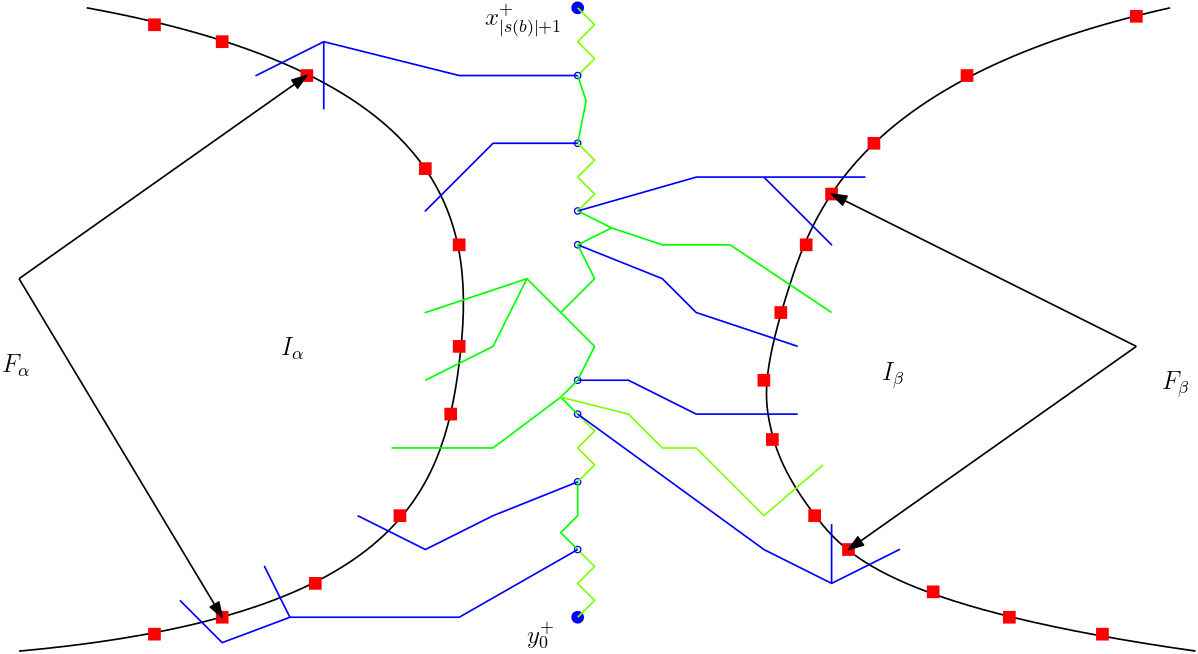}
    \caption{In this example, $s(b) = \{ \one, \two, \two, \one\}$. The blue lines depict the ``guessed'' nerves attached to the broken bone. In green, we depict the splint that is used to fix the broken bone. The arrows point to the first and last terminals of the intervals $I_\one$ and $I_\two$.}\label{fig:splinting}
\end{figure}

Finally, we call a splint \emph{optimal} if it fixes an optimal broken bone and all its nerves are the nerves of $C^+$ that were defined to belong to the bone. Again, an optimal broken bone and splint are unique with respect to $C$, so we may speak of `the' optimal broken bone and splint of a bone.

For a given topology and a particular shrunken bone of the topology, our goal is to enumerate all broken bones for this shrunken bone and find a minimum-weight splint for it. By enumerating all broken bones, we ensure that we consider the optimal broken bone. Then the minimum-weight splint that we find will have the same weight and structure of the optimal splint. We discuss that algorithm in a moment, but first argue that we can enumerate all possible broken bones efficiently.

\begin{lemma} \label{lem:enumeratebones}
There are $O(n^{26})$ distinct broken bones. \rev{They can be enumerated in the same time.}
\end{lemma}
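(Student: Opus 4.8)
The plan is to count the number of independent choices that go into specifying a broken bone, according to Definition~\ref{def:brokenbone}, and observe that each of them ranges over a set of size $O(n)$ (or $O(1)$), with a bounded total number of such choices. First, the shrunken bone $b$ is fixed, so the faces $f_\one, f_\two$ and the multiset $s(b)$ (of size at most $4$) are already given. The data to be specified are: the two intervals $I_\one$ and $I_\two$; the vertices $x^+_1,\ldots,x^+_{|s(b)|+1}$ and $y^+_0,\ldots,y^+_{|s(b)|}$ of the augmented dual; and the nerves $N_1,\ldots,N_{2|s(b)|}$. Since $|s(b)| \le 4$, there are at most $5+5 = 10$ vertices of the augmented dual to specify, and at most $8$ nerves. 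Each interval of a terminal face is determined by its two endpoints among the (at most $n$) augmented terminals on that face, so there are $O(n^2)$ choices for $I_\one$ and $O(n^2)$ for $I_\two$. Each $x^+_j$ and $y^+_j$ is one of the $O(n)$ vertices of the augmented dual, giving $O(n^{10})$ for all of them together. Finally, each nerve is, by Lemma~\ref{lem:nerves-compute} and the surrounding discussion, uniquely determined (we take the unique nerve) by its attachment point and its interval, i.e.\ by a triple $(v,i,j)$; the attachment point is one of the already-counted $x^+_j, y^+_j$ and the interval is a subinterval of $I_\one$ or $I_\two$, determined by two endpoints, so there are $O(n^2)$ choices per nerve and $O(n^{16})$ for all $8$ of them. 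Multiplying, and noting $2+2+10+16 = 30$ would be a crude bound, a slightly more careful accounting (the nerves' attachment points are already among the $x^+_j,y^+_j$, so those need not be recounted, and many of the listed nerves coincide or are redundant when $|s(b)|=4$, and the intervals $I_\one,I_\two$ together with the prefix/suffix conditions already pin down several of the nerve intervals) brings the exponent down to $26$.

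The key steps, in order, are: (1) observe $|s(b)| \le 4$ so all the indexed families in Definition~\ref{def:brokenbone} have bounded length; (2) count intervals as pairs of augmented terminals, $O(n^2)$ each; (3) count the $O(1)$ augmented-dual vertices, $O(n)$ each; (4) invoke the uniqueness of nerves (the ``unique nerve'' convention following Lemma~\ref{lem:nerves-compute}) so that specifying a nerve means specifying a triple (attachment point, interval endpoints), and note that attachment points are reused from step~(3); (5) multiply the bounds and check the exponent is at most $26$; (6) observe that the same enumeration — iterate over all interval-endpoint pairs, all vertices, all subinterval pairs — runs in time proportional to the count, times a polynomial overhead for the bookkeeping (and for computing each unique nerve via Lemma~\ref{lem:nerves-compute}), which is absorbed in the $O(n^{26})$ if one is slightly generous, or stated explicitly as $n^{O(1)}$ times $O(n^{26})$.

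The main obstacle is purely bookkeeping: making sure the exponent $26$ is actually correct (or at least an upper bound), i.e.\ tracking exactly which pieces of data are free and which are determined by others. In particular one must be careful that (i) the prefix/suffix and ordering conditions in the third bullet of Definition~\ref{def:brokenbone} do constrain but need not be used to \emph{reduce} the count — it suffices to enumerate a superset and discard invalid tuples; (ii) the nerves $N_{2j-1}$ and $N_{2j}$ may coincide, which only decreases the count; and (iii) when $|s(b)|=4$ some of the formally-listed nerves are not needed, as the paper itself notes. Since we only need an \emph{upper} bound, the safe route is to bound each of the at-most-$2$ intervals, at-most-$10$ augmented-dual vertices, and at-most-$8$ nerve-intervals crudely, observe the product is $n^{O(1)}$ with a small explicit constant, and then tighten the constant to $26$ by the observations above. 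The enumeration claim then follows since the counting argument is constructive: one loops over all choices of the underlying endpoints/vertices.
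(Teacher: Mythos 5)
Your proposal is correct and in essence follows the paper's approach: both count how many augmented-dual vertices are needed to specify a broken bone ($O(n)$ choices each) and multiply out a constant-degree polynomial bound. The paper is a bit more direct --- it observes up front that the intervals $I_\alpha, I_\beta$ and the middle vertices $x^+_1,\dots,x^+_{|s(b)|}$, $y^+_1,\dots,y^+_{|s(b)|}$ are all determined once the nerves $N_1,\dots,N_{2|s(b)|}$ are fixed (the attachment points by definition, and $I_\alpha, I_\beta$ via the prefix/suffix condition), so it simply tallies $8$ nerves at $3$ vertices each plus the $2$ free endpoints $y^+_0, x^+_{|s(b)|+1}$, giving $24+2=26$. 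Your route counts a superset of exponent $30$ and argues it down; note that of the three redundancies you list, only the third (the prefix/suffix coupling between $I_\alpha, I_\beta$ and the outermost nerve intervals) actually contributes the needed saving of $4$, since the attachment-point reuse is already built into your $O(n^{16})$ estimate and coincidence of nerves is not a deterministic saving for every $|s(b)|$. Because only an upper bound is required, this still reaches the same $O(n^{26})$, and the enumeration claim follows exactly as you say.
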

\begin{proof}
Using Euler's formula, $G^+$ has $O(n)$ vertices of the augmented dual.
We note that any interval can be specified by two vertices of the augmented dual and (thus) any nerve by three vertices of the augmented dual. Since $|s(b)| \leq 4$, there are at most $8$ nerves and two endpoints per broken bone. It follows that there are $O(n^{26})$ distinct broken bones. \rev{Enumerating them in this time is straightforward.}
\end{proof}

\subsection{Splinting Algorithm}\label{sec:splinting}

We now describe an algorithm that, given a topology $(S,s,h)$ and a broken bone $x_1^+,\ldots,x^+_{|s(b)|+1}$, $y^+_0,\ldots,y^+_{|s(b)|}, \\ N_1,\ldots,N_{2|s(b)|}, I_\one, I_\two$ for a \rev{shrunken} bone $b$ of $S$, finds a splint of minimum weight. 

For any $0 \leq i \leq |s(b)|$ (except $i=2$ when $|s(b)|=4$), to find the path from $y^+_i$ to $x^+_{i+1}$, we invoke the algorithm by Frank and Schrijver~\cite[Section~5]{Homotopy} to find the shortest path with the homotopy string $h(b)[2i+1]$. These paths do not contain any attachment points of nerves.

The following algorithm computes the weight of the part of a minimum weight splint between the vertices $x^+_i$ and $y^+_{i}$ of the augmented dual, for all $1 \leq i \leq |s(b)|$. Note that this part must be a nerve path. For simplicity, we consider only $i= 1$, as the other cases are similar. \WLOG~$s(b)[1] = \one$. We find the part of the splint starting at $x^+_1$, with its corresponding nerve $N_{1}$, and ending at $y^+_{1}$, with its corresponding nerve $N_{2}$. Also, we assume that $N_1$ spans the interval $\{\augvertex{\one}{\ell}, \ldots, \augvertex{\one}{\ell'}\}$ and $N_2$ spans the interval $\{\augvertex{\one}{j}, \ldots, \augvertex{\one}{j'}\}$. Then the interval covered by all nerves attaching to this nerve path is $I_1 = \{^\one t_{\ell}, \ldots, ^\one t_{j'}\}$.

We compute this part of the splint as follows. By $c[x^+, a, a']$, we denote the weight of the unique nerve on $x^+ \in V(G^+)$ and the interval $\{\augvertex{\one}{a}, \ldots, \augvertex{\one}{a'}\}$, which can be computed by Lemma~\ref{lem:nerves-compute}. By $c'[x^+, a, a', e]$ we denote the weight of a partial splint passing through the vertex $x^+ \in V(G^+)$, that encloses every terminal between the augmented terminals $\{\augvertex{\one}{\ell}, \ldots, \augvertex{\one}{a'}\}$, contains the unique nerve $(x^+,a,a')$, and the partial nerve path of the homotopy given by the prefix of $h(b)[2]$ of length $e$. \rev{We wish to compute $c'$.}

The dynamic programming algorithm \rev{to compute $c'$} is given in Algorithm~\ref{alg:splinting} below. We use $d_{h(b)[2](e',e]}(x'^+,x^+)$ to denote the length of a shortest path from $x'^+$ to $x^+$ with homotopy string equal to the substring of $h(b)[2]$ between indices $e'$ and $e$ (not including the symbol on index~$e'$). In other words, this substring is equal to the prefix of length $e$ minus the prefix of length $e'$. This shortest path length can again be computed by the algorithm by Frank and Schrijver~\cite[Section~5]{Homotopy}.

We also use $\stackrel{*}{\min}$ to denote that the minimum is only allowed over certain combinations. In particular, in Line~\ref{eq:recurrence}, the nerve $(x^+,a,a')$ being considered in combination with nerve $(x'^+,z,a-1)$ and the path between $x'^+$ and $x^+$ of homotopy string $h(b)[2](e',e]$ must define a region that only encloses $\terminal{\one}{a-1}$ (the terminal inbetween the two nerves). Moreover, the nerve $(x^+,a,a')$ itself must create a region for every terminal in $\{\terminal{\one}{a},\ldots,\terminal{\one}{a'-1}\}$. A similar constraint holds in Line~\ref{eq:return}. Furthermore, if $x^+$ lies on a path $\three$ of the graph $K$, then in Line~\ref{eq:recurrence}, we find the minimum over values of $e'< \three$ and enforce the next index on the homotopy string $h(b)[2]$ to be $\three$.

\begin{algorithm}[tb!]
    \caption{Splinting Algorithm}\label{alg:splinting}
    $c'[x^+_1, \ell, \ell', 0] = c[x^+_1, \ell, \ell']$\;
    $c'[x^+,a, a',e] = \infty$ for all $x^+ \not= x^+_1$, $e \not= 0$, $a \not= \ell$, or $a' \not= \ell'$\;
    \For{$\ell' < a' < j$}{
    \For{$\ell' < a \leq a'$}{
    \For{$x^+ \in V(G^+)$}{
    \For{$0 \leq e \leq |h(b)[2]|$}
    {
    $\displaystyle
       c'[x^+, a, a', e] =  \min_{\substack{\ell \leq z < a \\ x'^+ \in V(G^+)\\ 0 \leq e' \leq e}}^*
       \Biggl\{c'[x'^+, z, a-1, e'] + c[x^+, a, a']
        + d_{h(b)[2](e',e]}(x'^+, x^+)\Biggr\} $
    \label{eq:recurrence}
     
      }
     }
    }
    }
    
    \KwRet{$\displaystyle
    \min_{\substack{\ell \leq z < j \\ x^+ \in V(G^+) \\ 0 \leq e \leq |h(b)[2]|}}^* \Biggl\{ c'[x^+,z,j-1,e] + d_{h(b)[2](e,|h(b)[2]|]}(x^+, y^+_1) + c[y_1^+,j,j']\Biggr\}$ \label{eq:return}
    }
\end{algorithm}
The same algorithm is used to compute all the other parts of the splints, too. Note that even though the algorithm only computes an optimal value, it can be easily modified to return the optimal solution (nerves and nerve path).

Finally, if $|s(b)| = 4$, then we know from Lemma~\ref{lem:MSTinside} that embedded in the region bounded by the nerves attached to $y^+_1$ and $x^+_2$, and $y^+_3$ and $x^+_4$, along with the paths between them, as well as the segments of $F_\one$ and $F_\two$ bounding the region on either side is a \mst~with its terminals on the boundary of the region. Using the algorithm of Erickson~\etal\cite{Erickson} (see also Bern~\cite{Bern}), we find the \mst. This finishes the description of the splinting algorithm.

\begin{lemma}\label{lem:splinting-algo}
Given a topology $(S, s, h)$, a bone $b$ of $S$, and the optimal broken bone for $b$, we can fix the broken bone by the splint found through the splinting algorithm. \rev{This splint has weight equal to the optimal splint.} Moreover, the splinting algorithm runs in time $n^{\OO(1)}$.
\end{lemma}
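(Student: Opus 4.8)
The plan is to prove Lemma~\ref{lem:splinting-algo} in three parts: correctness of the inter-nerve paths, correctness of the dynamic program for each nerve path, and correctness of the handling of the $|s(b)|=4$ region, together with a running-time bound. The key point throughout is that the optimal broken bone already pins down the ``hard'' choices (the intervals $I_\one, I_\two$, the endpoints $x^+_i, y^+_i$, and the bookend nerves $N_1,\ldots,N_{2|s(b)|}$), so the algorithm only needs to fill in the local structure, and the optimality guarantees of Corollary~\ref{cor:forest} (nerves can be swapped for other nerves on the same attachment point and interval) and Lemma~\ref{lemma:shortest homotopic path} (a subpath of a bone can be replaced by a shortest path with the same homotopy string) guarantee that the locally optimal choices assemble into a global solution of the same weight as the optimal splint.

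First I would handle the inter-nerve paths. For each $0 \le i \le |s(b)|$ (excluding $i=2$ when $|s(b)|=4$), the subpath of $P_b$ from $y^+_i$ to $x^+_{i+1}$ contains no attachment points, so by Lemma~\ref{lemma:shortest homotopic path} there is a minimum multiway cut dual containing a shortest path in the plane homotopic to the optimal one, i.e.\ with homotopy string $h(b)[2i+1]$ (which has length $O(k)$, so Frank--Schrijver's algorithm~\cite[Section~5]{Homotopy} applies and runs in polynomial time). Next, the core is the dynamic program of Algorithm~\ref{alg:splinting} computing a minimum-weight nerve path between $x^+_i$ and $y^+_i$. I would argue correctness by induction on $a'$: the table entry $c'[x^+,a,a',e]$ correctly stores the minimum weight of a partial nerve path that starts at $x^+_1$ with nerve $N_1$ spanning $\{\augvertex{\one}{\ell},\ldots,\augvertex{\one}{\ell'}\}$, realizes nerves covering the prefix of terminals up through $\terminal{\one}{a'-1}$, ends with the unique nerve $(x^+,a,a')$, and whose final path segment has the length-$e$ prefix of $h(b)[2i]$ as homotopy string. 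The base case is the first line of the algorithm; the inductive step is Line~\ref{eq:recurrence}, where we split off the last nerve $(x^+,a,a')$, recurse on a shorter prefix ending in a nerve $(x'^+,z,a-1)$, and connect $x'^+$ to $x^+$ by a shortest path with homotopy string $h(b)[2i](e',e]$, again found by Frank--Schrijver. The starred minimum enforces exactly the feasibility constraints from Definition~\ref{def:split}: the region delimited by the two consecutive nerves and the connecting path encloses only the single in-between terminal $\terminal{\one}{a-1}$, and the nerve $(x^+,a,a')$ itself encloses every terminal strictly between its endpoints; the additional constraint for $x^+$ on a spoke $\three$ of $K$ ensures the homotopy string is respected at spoke crossings. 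That every optimal nerve path is captured follows because, by Corollary~\ref{cor:forest}, each constituent Steiner tree of the optimal solution may be replaced by the unique nerve on the same attachment point and interval without changing the weight, and by Lemma~\ref{lem:singular face bone} the relevant homotopy strings of the optimal solution have length $O(k)$, so they are among the strings $h(b)[2i]$ permitted by the topology. The return value (Line~\ref{eq:return}) simply appends the bookend nerve $N_2$ and the trailing homotopic segment.

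For $|s(b)|=4$, I would invoke Lemma~\ref{lem:MSTinside}: the region bounded by the nerves at $y^+_1,x^+_2$ and at $y^+_3,x^+_4$ (with the bone segments between them and the segments of $F_\one,F_\two$) contains a minimum Steiner tree on $u^*,v^*$ and the augmented terminals on those boundary segments; by Bern~\cite{Bern} and Erickson~\etal\cite{Erickson} this tree (terminals on at most two faces) is computable in polynomial time. Assembling: concatenating the inter-nerve shortest homotopic paths, the per-nerve-path subsolutions from the DP, and (if applicable) the central minimum Steiner tree yields a set $D^+$ of augmented dual edges; I would verify it satisfies every bullet of Definition~\ref{def:split} --- the path $P_b$ exists with the endpoints and interior vertices in the prescribed order, degrees and homotopy strings match, and the nerve intervals partition $I_\one,I_\two$ as required --- so it is a splint, and its weight equals that of the optimal splint because each piece was chosen of minimum weight subject to constraints that the optimal splint also satisfies. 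The running time is $n^{\OO(1)}$: the DP has $O(n) \cdot O(n) \cdot O(n) \cdot O(k)$ table entries, each computed by a starred minimum over $O(n)\cdot O(n)\cdot O(k)$ combinations with a polynomial-time Frank--Schrijver call and polynomial-time feasibility (region enclosure) checks, and there are only a constant number of nerve paths and at most one central Steiner tree per bone. The main obstacle I anticipate is the bookkeeping in the correctness argument for Line~\ref{eq:recurrence} --- precisely formalizing the ``region encloses only the in-between terminal'' condition in terms of the homotopy string and the embedded nerves, and checking that an optimal nerve path decomposes in exactly the way the recurrence assumes (in particular that intermediate attachment points have degree exactly~$3$ and the connecting segments have degree-$2$ interiors), which relies carefully on Lemma~\ref{lem:skeleton} and Lemma~\ref{lem:augmenteddualdegree}.
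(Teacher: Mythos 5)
Your proposal is correct and takes essentially the same approach as the paper: it splits the argument into the same three pieces (inter-nerve homotopic shortest paths via Lemma~\ref{lemma:shortest homotopic path} and Frank--Schrijver, correctness of the dynamic program in Algorithm~\ref{alg:splinting} by induction, and the central minimum Steiner tree for $|s(b)|=4$ via Lemma~\ref{lem:MSTinside} and Erickson/Bern), and uses the same running-time accounting. The only cosmetic difference is that you explicitly cite Corollary~\ref{cor:forest} to justify swapping optimal nerves for unique nerves, whereas the paper absorbs this into the standing assumption that the fixed optimum $C^+$ is nerved and instead appeals to Lemma~\ref{lem:nerves} to describe the regions; both reach the same inductive claim that $c'[x^+,a,a',e]$ stores the optimum partial-splint weight.
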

\begin{proof}
The correctness and optimality of finding the shortest path with the homotopy strings $h(b)[2i+1]$ between $y^+_i$ and $x^+_{i+1}$ for $0 \leq i \leq |s(b)|$ (except $i=2$ when $|s(b)|=4$) follows from Lemma~\ref{lemma:shortest homotopic path}. Moreover, if $|s(b)| = 4$, the correctness of finding a \mst~between the alternating nerves follows from Lemma~\ref{lem:MSTinside}.

Then, we show that the algorithm finds a feasible splint. This is immediate by the definition of $\stackrel{*}{\min}$, which ensures that the nerves cover all terminals between their intervals as well as inbetween the nerves.

Next, we argue that the algorithm finds a splint for the optimal broken bone that has the same weight as the optimal splint. We again only consider the part of the bone between $x^+_i$ and $y^+_i$ for $i=1$ and $s(b)[1] = \one$; the other cases are similar. Let $(x^+,a,a')$ be any nerve that is part of $C^+$ and where $x^+$ lies on $P_b$ between $x^+_1$ and $y^+_1$. From Lemma~\ref{lem:nerves} (see also Figure~\ref{fig:no expose}), it follows that it creates a set of regions that each enclose only a single terminal, each between the interval of the nerve.  Consider the prefix of the homotopy string of the subpath of $P_b$ between $x^+_1$ and $x^+$ and let it have length $e$. If $N_1 = (x^+,a,a')$ and (thus) $e=0$, then $c'$ contains the optimum weight of a partial splint for $x^+,a,a',e$, namely the weight of $N_1$. Otherwise, let $(x'^+,z,a-1)$ be the nerve preceding $(x^+,a,a')$ on $P_b$ and let $e'$ be the length of the homotopy string of the subpath of $P_b$ from $x^+_1$ to $x'^+$. Then the subpath of $P_b$ from $x'^+$ to $x^+$ has homotopy string $h(b)[2](e',e]$. By Lemma~\ref{lem:enclosinghomotopy}, replacing this subpath by any other path with the same homotopy string, still yields an optimal solution. Hence, the algorithm will consider and allow $z, x'^+, e'$ in the minimization for $x^+,a,a',e$ in Line~\ref{eq:recurrence} or~\ref{eq:return}. Using induction on $x'^+,z,a-1,e'$, it follows that $c'$ contains the optimum weight of a partial splint for $x^+,a,a',e$. Moreover, the algorithm returns the value of a minimum-weight splint.

It remains to argue the running time. 
Finding shortest paths with a specified homotopy using Frank and Schrijver's~\cite{Homotopy} algorithm takes $n^{\OO(1)}$ time. Finding all nerves takes polynomial time through Lemma~\ref{lem:nerves-compute}. Finding a \mst~in a planar graph, when all the terminals appear on the boundary of a single face, takes another $n^{\OO(1)}$ time using the algorithm of Erickson~\etal\cite{Erickson} (see also Bern~\cite{Bern}). Finally, we loop over all \Oh{n^5\cdot k^2} choices for $x^+,a,a',e,x'^+,z,e'$ to compute $c'$ in Line~\ref{eq:recurrence} of the algorithm and \Oh{n^2 \cdot k} choices in Line~\ref{eq:return}. Therefore, our algorithm runs in $n^{\OO(1)}$ time.  
\end{proof}

For the sake of intuition, we note that we have now gathered sufficient ideas to give a $2^{\OO(k^2 \log k)} n^{\OO(k)}$ time algorithm for {\sc Planar Multiway Cut}. In particular, we can enumerate all topologies in $2^{\OO(k^2 \log k)}$ time \rev{by Lemma~\ref{lem:topology-enumerate}} and then enumerate all \rev{combinations of} broken bones for any shrunken bone \rev{of the topology} in $n^{\OO(k)}$ time total \rev{by Lemma~\ref{lem:enumeratebones}}. \rev{For each combination of broken bones, we then compute a minimum-weight splint for each broken bone and take the union of these splints. Among all such unions that form a multiway cut for $(G,T,\omega)$, we return one of minimum weight.}
For the combination of the optimal topology and the optimal broken bones for each shrunken bone, the splinting algorithm delivers \rev{a splint for each of the optimal broken bones that has the same weight as the optimum splint by Lemma~\ref{lem:splinting-algo}}. \rev{The union of these splints forms an optimal multiway cut for $(G,T,\omega)$, since the homotopy of each nerve path and each path between branching points and the starts/ends of nerve paths are maintained by Lemma~\ref{lem:splinting-algo}. This implies a $2^{\OO(k^2 \log k)} n^{\OO(k)}$ time algorithm for {\sc Planar Multiway Cut}.} In the next section, we argue how to reduce the $n^{\OO(k)}$ factor down to $n^{\OO(\sqrt{k})}$.

\section{Algorithm Using Sphere-Cut Decomposition}\label{sec:sc}
\sectionmark{Sphere-cut decomposition}
\rev{As before, we are given an instance $(G,T,\omega,\mathcal{F})$. Recall that we assume that the instance is transformed and that the dual of any optimum solution is connected. Consider the optimal solution specified in Section~\ref{sec:Bones and Homotopy}. Now consider the optimal topology and the optimal broken bones and consider the splint computed by Lemma~\ref{lem:splinting-algo} for each optimal broken bone. Since the homotopy of each nerve path and each path between branching points and the starts/ends of nerve paths are maintained by Lemma~\ref{lem:splinting-algo}, it follows that we can replace each bone and attached nerves of the optimum solution by the minimum-weight splint computed by the algorithm for the corresponding optimal broken bone. In particular, the homotopy strings ensure that the number of crossings of the bones with the cut graph $K$ stays minimum. Throughout this section, we let $C$ denote this optimal solution (after the replacement by splints) and let $C^+$ denote the set of corresponding augmented dual edges. We call this the \emph{splinted} solution.}

We are now ready to discuss how we go from a topology to a multiway cut. If the topology is optimal, we argue that we find a minimum-weight multiway cut that has the same structure as the optimal multiway cut $C$. Our aim is to employ Theorem~\ref{thm:sc} on the shrunken skeleton to get a small sphere-cut decomposition and then apply a dynamic program. \rev{This dynamic program use as state (or indices) the tuples that describe parts of the broken bones, but only for those shrunken bones that are incident to the noose of the current edge of the sphere-cut decomposition. See e.g.\ Figure~\ref{fig:partial-stretcher}.} 

However, as already noted a few times, the shrunken skeleton might have bridges. Hence, Theorem~\ref{thm:sc} cannot be applied directly, but only on the bridge blocks of the shrunken skeleton. Therefore, we first develop a dynamic program that combines solutions of the bridge blocks of the shrunken skeleton, effectively reducing the problem.

\subsection{Reduction to Bridge Blocks}\label{sec:bridge-block-reduction}
When we want to combine solutions of different bridge blocks, it is natural to use the bridge block tree in a dynamic program. However, if we do this naively, we immediately run into the issue that in order to compute a solution for a non-trivial bridge block, we need to know the solutions for all bridge blocks contained in its bounded faces. This can be resolved by enforcing an ordering on the computation of the bridge blocks that depends on the embedding. To that end, we proposed the embedding-aware bridge block (EABB) tree in Section~\ref{sec:eabb}. We now show how to use it.

Let $(S,s,h)$ be a topology. Let $L = L(S)$ be the EABB tree for $S$ and let $\mathcal{B}$ denote the set of bridge blocks of $S$. For a BB-node $l$ of $L$, let $\mathcal{B}(l)$ denote the bridge block corresponding to $l$. Extending this notation, for a subtree $L'$ of $L$, we use $\mathcal{B}(L')$ to denote the set of all bridge blocks corresponding to BB-nodes in $L'$. For a node $l$ of $L$, we use $L_l$ to denote the subtree of $L$ rooted at $l$. In particular, $L_{\ell(L)} = L$, where we recall that $\ell(L)$ is the root of $L$.

\begin{lemma}\label{lem:eabbt:internal}
Let $l$ be any node of $L$. Let $v$ be the cut vertex corresponding to $l$'s closest $C$-node ancestor in $L$ (which may be $l$). Then $\mathcal{B}(L_l)$ is an internal set of $S$.
\end{lemma}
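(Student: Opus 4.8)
The plan is to prove that $\mathcal{B}(L_l)$ is an internal set by exhibiting the required enclosing and exclosing regions directly from the structure of the EABB tree. Recall that an internal set is a union of biconnected components $J$ of $S$ such that there is a single face $f$ of $\zeta(S,J)$ enclosing $J$ and a single face $f'$ of $J$ enclosing $\zeta(S,J)$. Since each bridge block is itself a union of biconnected components (trivial bridge blocks are single edges, hence biconnected, and nontrivial ones are maximal bridgeless subgraphs, which decompose into biconnected components), the set $J := \mathcal{B}(L_l)$ is certainly a union of biconnected components of $S$. So the only thing to check is the existence of the two single faces.

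First I would set $v$ to be the cut vertex of $l$'s closest $C$-node ancestor (if $l$ is the root $\ell(L)$, handle this as a degenerate case where $\mathcal{B}(L_l) = S$ and invoke the remark that $S$ itself is trivially an internal set). The key structural fact to establish is: every bridge block $B' \notin \mathcal{B}(L_l)$ lies (weakly) in the outer face of $J$ when we ignore the shared cut vertex $v$, and conversely $J$ lies in a single bounded face of $\zeta(S,J)$ incident to $v$. The intended mechanism is Lemma~\ref{lem:bridge-block:precp}: the partial order $\precp$ (``embedded in a bounded face of'') is reflected in the ancestor relation in $L(S)$. Concretely, if a bridge block $B$ with node $b \in L_l$ satisfies $B \precp B'$, then $b$ is a descendant of the node $b'$ of $B'$, so $b' \in L_l$ too, i.e. $B' \in \mathcal{B}(L_l)$. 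Contrapositively, no bridge block in $\mathcal{B}(L_l)$ is embedded in a bounded face of a bridge block outside $\mathcal{B}(L_l)$ — wait, that's the wrong direction. I need: no bridge block outside $\mathcal{B}(L_l)$ is embedded in a bounded face of a bridge block inside $\mathcal{B}(L_l)$. This follows similarly: if $B' \precp B$ with $b \in L_l$, then $b'$ is a descendant of $b$ by Lemma~\ref{lem:bridge-block:precp}, hence $b' \in L_l$, contradiction. So indeed all of $\zeta(S,J)$ lies in the outer face of $J$, giving the exclosing region $f'$ as that outer face (or, more precisely, the face of $J$ containing all of $\zeta(S,J)$ — since $S$ is connected and $J$ meets $\zeta(S,J)$ only at $v$, this is the outer face of $J$, hence unique). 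Symmetrically, by Proposition~\ref{prp:internalcomp}-style reasoning, $\zeta(S,J)$ is also a union of biconnected components, and all of $J$ lies in a single face of $\zeta(S,J)$ — namely the face incident at $v$ into which the subtree below $l$'s parent $C$-node is drawn; the EABB construction, in particular the first and second operations that split $C$-nodes according to the bounded faces of a nontrivial bridge block, is exactly what guarantees that all descendants of $l$ are drawn in one such face rather than spread across several.

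The key steps in order: (1) dispose of the trivial case $l = \ell(L)$; (2) observe $J = \mathcal{B}(L_l)$ is a union of biconnected components and identify $v$; (3) use Lemma~\ref{lem:bridge-block:precp} in both directions to show that $\zeta(S,J)$ and $J$ are ``unnested'' relative to each other — no bridge block of one sits in a bounded face of a bridge block of the other; (4) conclude that $\zeta(S,J)$ lies entirely in the outer face of $J$ (this is the exclosing region $f'$); (5) use the EABB splitting operations to argue that $J$ lies entirely in a single (bounded, unless $l$ is essentially the root) face of $\zeta(S,J)$ incident at $v$ (this is the enclosing region $f$); (6) note both faces are unique by connectivity of $S$ and the tree structure, completing the verification of the internal-set definition.

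The main obstacle I anticipate is step (5), establishing that $J$ is enclosed by a \emph{single} face of $\zeta(S,J)$ rather than being split across several faces incident at $v$. This is precisely the subtlety the EABB tree was designed to handle: at a cut vertex $v$ contained in a nontrivial bridge block $B$, the various bridges (trivial bridge blocks) hanging off $v$ can go into different bounded faces of $B$, and in the plain bridge block tree they would all be siblings, losing the face information. The $c_1^{v,f}$ and $c_2^{v,f}$ operations regroup these children by the face $f$ of $B$ they lie in. So to make step (5) rigorous I would need to trace through which $C$-node ancestor $l$ descends from: if that $C$-node is one of the $c_1^{v,f}$ or $c_2^{v,f}$ replicas, then by construction all of $L_l$ is drawn inside that particular face $f$; if it is an ``original'' $C$-node not touched by the operations (because its parent BB-node is trivial), then there is no branching issue at all since a bridge edge borders only the natural faces. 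Lemma~\ref{lem:bridge-block:cnode} may also be useful here to argue that all $C$-nodes between $l$ and this ancestor correspond to the same cut vertex $v$, so ``the cut vertex corresponding to $l$'s closest $C$-node ancestor'' is well-defined and consistent. Carefully combining these observations should yield that $J$ sits in exactly one face of $\zeta(S,J)$, which is the enclosing region, and the intersection of this with the exclosing region gives the middle region, a face of $S$ incident to $v$.
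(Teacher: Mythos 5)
Your proposal follows the same skeleton as the paper's proof: apply Lemma~\ref{lem:bridge-block:precp} to establish that $\zeta(S,J)$ sits entirely in the outer face of $J$ (giving the exclosing region), then argue that $J$ lies in a single face of $\zeta(S,J)$ (the enclosing region). Your self-corrected version of step~(3) is exactly what the paper does: the precp-lemma shows the subtree $L_l$ is downward-closed under~$\precp$, so nothing outside $\mathcal{B}(L_l)$ can sit in a bounded face of anything inside $\mathcal{B}(L_l)$, hence $\zeta(S,J)$ is in the outer face of~$J$.

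Where you differ from (and improve on) the paper is in flagging step~(5) as the genuine obstacle. The paper's proof moves from ``every block not in $\mathcal{B}(L_l)$ is in the outer face of each block in $\mathcal{B}(L_l)$'' directly to ``there is a single face of $\mathcal{B}-\mathcal{B}(L_l)$ that encloses $\mathcal{B}(L_l)$'' with a bare ``Hence.'' As you observe, this second implication is not automatic: when $l$ is a $C$-node with several BB-node children, $J - \{v\}$ can be disconnected, and its components could a priori enter different faces of $\zeta(S,J)$ incident at $v$. Your proposal correctly identifies that this is precisely what the $c_1^{v,f}$ / $c_2^{v,f}$ replica nodes of the EABB construction are for: they group the bridges hanging off $v$ by which bounded face $f$ of the surrounding nontrivial bridge block they enter, so that any subtree $L_l$ rooted below such a replica is confined to a single face. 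You also correctly handle the case where $l$ is a BB-node (then $J - \{v\}$ is connected, so there is nothing to argue) and the degenerate root case. Invoking Lemma~\ref{lem:bridge-block:cnode} to justify that ``the cut vertex corresponding to $l$'s closest $C$-node ancestor'' is unambiguous is a sensible precaution that the paper does not mention. In short, your route is the paper's route, but with the hard step made explicit rather than elided; carrying step~(5) out fully as you sketch would give a more rigorous proof than the one in print.
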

\begin{proof}
By Lemma~\ref{lem:bridge-block:precp}, if there is a bridge block $B$ and a bridge block $B' \in \mathcal{B}(L_l)$ such that $B \precp B'$, then $B \in \mathcal{B}(L_l)$. Hence, any bridge block not in $\mathcal{B}(L_l)$ is in the outer face of each of the blocks in $\mathcal{B}(L_l)$. Hence, there is a single face of $\mathcal{B}-\mathcal{B}(L_l)$ that encloses $\mathcal{B}(L_l)$ and there is a single face of $\mathcal{B}(L_l)$ that encloses $\mathcal{B}-\mathcal{B}(L_l)$. \rev{These faces meet at $v$}. It follows that $\mathcal{B}(L_l)$ is an internal set.
\end{proof}

We now perform a bottom-up dynamic programming with respect to $L$. The crux here is to understand the relation that a child $l$ has with its parent. This is formed by two parts. The first and easier part is the cut vertex shared by neighboring bridge blocks. The second, more complicated part, is the terminal face $F_l$ in the middle region induced by the internal set $\mathcal{B}(L_{l})$. The terminals in $T_l$ are covered jointly by $\mathcal{B}(L_{l})$, the blocks induced by siblings of $l$ in $L$, and by the block induced by $l$'s parent. Using a similar argument as in the proof of Lemma~\ref{lem:nerves}, we can see that each of these \rev{sibling blocks} is responsible for a single interval of $T_l$, \rev{while the parent block is responsible for all intervals in between those of the siblings}. To help in the computations, we additionally consider the first nerves that cover the prefix and suffix of this interval. We now expand on this intuition of the dynamic program and define the table more formally.

Let $l$ be a node of $L$. Define $w = w(l)$ as follows: if $l$ is a C-node, then let $w$ be the corresponding cut vertex; if $l$ is a BB-node and $l$ has a parent in $L$, then this parent is a C-node and we let $w$ be its corresponding cut vertex; otherwise, let $w$ be any vertex of $\mathcal{B}(l)$. Let $F_l \in \mathcal{F}$ denote the unique terminal face in the middle region of $\mathcal{B}(L_l)$.

\begin{definition}\label{def:stretcher}
Given a vertex $w^+$ of the augmented dual, an interval $I_l$ of $F_l$, and two (possibly empty) nerves $N^1_l,N^2_l$ towards $F_l$, a \defi{stretcher} is a set $D^+$ of edges of the augmented dual such that:
\begin{enumerate}[label= \emph{(\roman*)}]
\item\label{def:stretcher:intervals} the interval of $N^1_l$ is a prefix of $I_l$ and the interval of $N^2_l$ is a suffix of $I_l$. $N^1_l$ and $N^2_l$ are either both empty or both non-empty and cannot be empty if there is at least one terminal between $I_l$.
\item there is a (possibly empty) set of nerves in $D^+$  towards $F_l$ whose augmented terminal sets are intervals that jointly partition $I_l$. $N^1_l$ and $N^2_l$ are among those nerves; 
\item for any terminal $t \in T_l$ between $I_l$, there is a bounded face of $D^*$ that encloses only $t$ and no other terminals. Here $D^*$ is the set of dual edges corresponding to the edges of the augmented dual in $D^+$;

\item for any terminal $t \in T_\one$ of a bounded face $f_\one$ in the subgraph of $S$ induced by the bridge blocks of $L_{l}$, there is a bounded face of $D^*$ that encloses only $t$ and no other terminals;
\item $D^+$ contains a splint for each bone in $\mathcal{B}(L_l)$ and $N^1_l$ and $N^2_l$ are included in these splints;

\item $w^+$ is the vertex of $D^+$ that is an end of all splints for the bones in $\mathcal{B}(L_l)$ that are incident to $w$.
\end{enumerate}
We call $w^+$, $I_l$, $N^1_l$, and $N^2_l$ a \defi{binder} of the stretcher and a \emph{binder} for $l$. We say a binder is \emph{valid} if it adheres to~\ref{def:stretcher:intervals}.
\end{definition}

\begin{proposition}\label{prp:binders:number}
For each node $l$ of $L$, there are $n^{\OO(1)}$ binders. These can be enumerated in the same time.
\end{proposition}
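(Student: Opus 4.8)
The plan is to count the number of binders $(w^+, I_l, N^1_l, N^2_l)$ for a fixed node $l$ of $L$, and observe that each component of a binder is specified by a bounded number of vertices of the augmented dual $G^+$. First I would recall, via Euler's formula (as in Lemma~\ref{lem:enumeratebones}), that $G^+$ has $O(n)$ vertices. The vertex $w^+$ ranges over $V(G^+)$, so there are $O(n)$ choices for it. The interval $I_l$ of the terminal face $F_l$ is, by Definition~\ref{def:interval}, specified by its two endpoints among the augmented terminals of $T^+_l$, so there are $O(n^2)$ choices for $I_l$ (including the degenerate and empty intervals). Finally, each nerve $N^i_l$ is, by Definition~\ref{def:nerve}, determined by its attachment point $v \in V(G^+)$ together with the two indices $i,j$ delimiting its interval; but here the interval of $N^1_l$ must be a prefix of $I_l$ and that of $N^2_l$ a suffix of $I_l$, so once $I_l$ is fixed each of $N^1_l, N^2_l$ is determined by its attachment point and a single further index — at most $O(n^2)$ choices each. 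Multiplying these counts gives $n^{O(1)}$ binders in total.

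Concretely I would argue as follows. Fix $l$ and let $F_l \in \mathcal{F}$ be the unique terminal face in the middle region of $\mathcal{B}(L_l)$, which is well defined by Lemma~\ref{lem:eabbt:internal} (the set $\mathcal{B}(L_l)$ is an internal set of $S$, hence has a middle region, which is a face of $S$ and strictly encloses exactly one terminal face). A binder is a tuple $(w^+, I_l, N^1_l, N^2_l)$; a binder is valid precisely when it satisfies property~\ref{def:stretcher:intervals}, namely that the interval of $N^1_l$ is a prefix of $I_l$, the interval of $N^2_l$ is a suffix of $I_l$, and $N^1_l, N^2_l$ are either both empty or both non-empty, with non-emptiness forced when some terminal of $T_l$ lies between $I_l$. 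To enumerate all valid binders, iterate over the $O(n)$ choices of $w^+ \in V(G^+)$; for each, iterate over the $O(n^2)$ choices of an interval $I_l$ of $T^+_l$ (two endpoints, plus the empty interval); for each such $I_l$, iterate over the $O(n)$ choices of the attachment point of $N^1_l$ and the $O(1)$-to-$O(n)$ choices of where the prefix of $I_l$ spanned by $N^1_l$ ends, and symmetrically for $N^2_l$; discard the tuple if property~\ref{def:stretcher:intervals} fails (this is checkable in polynomial time, since a nerve and whether it is a prefix/suffix of $I_l$ are polynomial-time verifiable). The total is $O(n) \cdot O(n^2) \cdot O(n^2) \cdot O(n^2) = n^{O(1)}$, and the enumeration runs in the same time since each tuple is generated and checked in polynomial time.

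I do not expect any real obstacle here — this is a routine counting argument of the same flavour as Lemma~\ref{lem:enumeratebones}. The only mild subtlety is making sure the objects in a binder are indeed each specified by $O(1)$ vertices of $G^+$: an interval by two augmented terminals, and a nerve by three augmented-dual vertices (its attachment point and the two indices delimiting its interval), where we additionally use that $N^1_l$ must be a prefix and $N^2_l$ a suffix of $I_l$ to keep the count polynomial rather than merely noting that nerves are $O(n^3)$-many. Since we do not need a tight exponent, even the crude bound that a binder is determined by a constant number of augmented-dual vertices, each with $O(n)$ choices, already yields $n^{O(1)}$, and the same loop that realizes this bound enumerates the binders in $n^{O(1)}$ time.

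\begin{proof}
By Lemma~\ref{lem:eabbt:internal}, $\mathcal{B}(L_l)$ is an internal set of $S$, so its middle region is a single face of $S$, which strictly encloses exactly one terminal face; thus $F_l \in \mathcal{F}$ is well defined.

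Using Euler's formula, $G^+$ has $O(n)$ vertices of the augmented dual. We count the valid binders $(w^+, I_l, N^1_l, N^2_l)$ for $l$ and show they can be enumerated in $n^{\OO(1)}$ time.

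There are $O(n)$ choices for the vertex $w^+ \in V(G^+)$. By Definition~\ref{def:interval}, an interval $I_l$ of $T^+_l$ is determined by its two endpoints among the augmented terminals of $T^+_l$ (or is the empty interval), giving $O(n^2)$ choices for $I_l$. For a valid binder, the interval of $N^1_l$ is a prefix of $I_l$ and the interval of $N^2_l$ is a suffix of $I_l$; hence, once $I_l$ is fixed, each of $N^1_l$ and $N^2_l$ is determined by its attachment point in $V(G^+)$ together with a single further index delimiting the prefix (resp.\ suffix) of $I_l$, i.e.\ $O(n^2)$ choices each. In total there are at most $O(n) \cdot O(n^2) \cdot O(n^2) \cdot O(n^2) = n^{\OO(1)}$ tuples to consider.

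To enumerate all binders, we iterate over all such tuples and, for each, check in polynomial time whether it is a valid binder, that is, whether it satisfies property~\ref{def:stretcher:intervals}: that the interval of $N^1_l$ is a prefix of $I_l$, that of $N^2_l$ is a suffix of $I_l$, and that $N^1_l$ and $N^2_l$ are either both empty or both non-empty, with non-emptiness enforced whenever there is a terminal of $T_l$ between $I_l$. Each such check takes polynomial time. Hence there are $n^{\OO(1)}$ binders, and they can be enumerated in $n^{\OO(1)}$ time.
\end{proof}
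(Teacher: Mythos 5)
Your proof is correct and takes essentially the same approach as the paper: observe that a binder consists of one augmented dual vertex, an interval, and two nerves, each specified by $O(1)$ vertices of $G^+$, hence $n^{\OO(1)}$ binders enumerable in the same time. You merely spell out the counting a bit more explicitly than the paper's one-line argument.
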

\begin{proof}
It suffices to observe that a binder has one augmented dual vertex, an interval of a terminal face, and two nerves, each of which can be described by a constant number of vertices of the augmented dual.
\end{proof}

Consider any vertex $l$ of the embedding-aware block-cut tree $L$ of the shrunken skeleton of an optimum solution $C^+$. Let $I_l$ be the union of intervals of the nerves in $C^+$ towards $F_l$ that attach to a block in $\mathcal{B}(L_l)$ and let $N^1_l$ and $N^2_l$ be the nerves whose intervals are a prefix and suffix of $I_l$ respectively. \rev{Since $C^+$ is splinted}, $C^+$ is a union of splints for each of the shrunken bones of $S$. Let $w^+$ be the vertex of $C^+$ that is an end of all splints for the bones in $\mathcal{B}(L_l)$ that are incident to $w =w(l)$. Then we call $w^+$, $I_l$, $N^1_l$, and $N^2_l$ an \emph{optimal binder}. Note that for each optimal binder, a stretcher does exist, which we call an \emph{optimal stretcher}. We may speak of `the' optimal binder and stretcher, because the minimum-weight solution and topology are uniquely defined.

Then for each binder $I_l,N^1_l,N^2_l,w^+$, define $A_l[I_l,N^1_l,N^2_l,w^+]$ as a minimum-weight stretcher for this binder. If no such stretcher exists, then we define $A_l[I_l,N^1_l,N^2_l,w^+]$ to be the set of all edges of the augmented dual. We argue that the table $A$ can be computed in a dynamic programming fashion.

In the next section, we prove the following.

\begin{lemma}\label{lem:binders:partial-dp}
For any BB-node $l$ of $L$ and the optimal binder $B$ for $l$, we can compute a minimum-weight stretcher for $l$ when given minimum-weight stretchers for all optimal binders of all children of $l$. Moreover, it can be computed in $n^{\OO(\sqrt{|\mathcal{B}(l)|})}$ time.
\end{lemma}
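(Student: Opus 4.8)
The plan is to expand the single (nontrivial) bridge block $\mathcal{B}(l)$ into the splint structure prescribed by the topology, using a sphere-cut decomposition of $\mathcal{B}(l)$ to guide a second-level dynamic program. Since $\mathcal{B}(l)$ is a bridge block of a shrunken skeleton, it is connected, bridgeless, and — by the handling of self-loops described in the overview and Lemma~\ref{lem:skeleton} — has no self-loops. Hence Theorem~\ref{thm:sc} applies and yields a faithful sphere-cut decomposition $(R,\eta,\delta)$ of $\mathcal{B}(l)$ of width $\sqrt{4.5\,|\mathcal{B}(l)|}=O(\sqrt{|\mathcal{B}(l)|})$, computable in polynomial time. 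We root it as described in Section~\ref{sec:sphere-cut}. Each edge $e$ of $R$ carries a noose $\delta(e)$ that meets $\mathcal{B}(l)$ in $\midset(e)$, a set of $O(\sqrt{|\mathcal{B}(l)|})$ vertices of the shrunken skeleton; since each such vertex is incident to at most three shrunken bones (Lemma~\ref{lem:skeleton}), the noose is crossed by $O(\sqrt{|\mathcal{B}(l)|})$ shrunken bones. Each shrunken bone that the noose crosses bounds one or two faces of $S$, and the noose also passes through $O(\sqrt{|\mathcal{B}(l)|})$ of the $k$ faces of $S$.

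The DP state at an edge $e$ of $R$ records, for the part of the solution that lies inside $\enc(\delta(e))$: for each shrunken bone crossed by $\delta(e)$, the relevant fragment of its broken bone — namely, following the intuition already described in Section~\ref{sec:bridge-block} and Section~\ref{sec:sc}, the first nerve encountered in each direction along that bone from the noose, together with the vertex of the augmented dual at which the bone fragment is currently cut and how far along its homotopy string $h(b)[\cdot]$ we have progressed; and for each face of $S$ that the noose enters, the binder-type information (a partial interval of augmented terminals already covered inside $\enc(\delta(e))$, and the boundary nerves of that partial interval) for that face, exactly as in Definition~\ref{def:stretcher}. By Proposition~\ref{prp:binders:number}-style counting, each individual face or bone contributes $n^{\OO(1)}$ possibilities, so the full state at $e$ has $n^{\OO(\sqrt{|\mathcal{B}(l)|})}$ possibilities. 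Leaf edges of $R$ correspond to single shrunken bones; here we invoke Lemma~\ref{lem:splinting-algo} (the splinting algorithm) to compute, in polynomial time, a minimum-weight splint consistent with the guessed broken-bone data, including the handling of the $|s(b)|=4$ region via Lemma~\ref{lem:MSTinside}. At an internal node with children $e_1,e_2$ and parent $e_0$, faithfulness of the decomposition tells us $\enc(e_0)$ is the disjoint union of $\enc(e_1)$, $\enc(e_2)$, and the sliver $(\delta(e_1)\cap\delta(e_2))\setminus\delta(e_0)$; we merge two children states into a parent state by checking consistency on $\midset(e_1)\cap\midset(e_2)$ (matching the augmented-dual cut vertices and homotopy progress of shared bone fragments) and concatenating the partial interval/nerve data for each shared face, while verifying that every terminal of $S$ that falls in a bounded face enclosed inside $\enc(e_0)$ but outside $\enc(e_1)\cup\enc(e_2)$ is correctly separated (cf.\ the $\stackrel{*}{\min}$ constraints in Algorithm~\ref{alg:splinting} and conditions (iii),(iv) of Definition~\ref{def:stretcher}). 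Finally, to incorporate the children of $l$ in $L$: each child $l'$ of $l$ attaches to $\mathcal{B}(l)$ at some cut vertex $w(l')$, which is a vertex of $\mathcal{B}(l)$ and hence appears in some middle set along $R$; at the point where $w(l')$ first enters $\enc(\cdot)$, we plug in the minimum-weight stretcher for the optimal binder of $l'$ (given to us by hypothesis), after an $n^{\OO(1)}$-time enumeration over which binder of $l'$ to use, and we fold its contribution into the running weight and into the face-coverage bookkeeping for $F_{l'}$.

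The correctness argument mirrors that of Lemma~\ref{lem:splinting-algo} and Lemma~\ref{lem:reduce-to-connected}: for the \emph{optimal} binder for $l$, the optimal splinted solution $C^+$ restricted to $\enc(\delta(e))$ is a legal partial object for each $e$, so the minimum in the DP is over a set containing the optimal value; conversely every object the DP assembles is, by construction of the merge checks, a genuine stretcher, using Lemma~\ref{lem:eabbt:internal} to see that $\mathcal{B}(L_l)$ is an internal set so that "enclosed terminals" is well-defined and Lemma~\ref{lem:enclosinghomotopy} to see that rerouting bone fragments within a fixed homotopy string does not change feasibility. The homotopy strings $h(b)[\cdot]$ being guessed as part of the topology, together with Frank--Schrijver's shortest-homotopic-path routine, ensure the crossings with the cut graph $K$ stay minimal, so no blow-up in bone length occurs. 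The running time is the number of edges of $R$ (polynomial) times the square of the number of states per edge ($n^{\OO(\sqrt{|\mathcal{B}(l)|})}$) times the $n^{\OO(1)}$ cost of each merge and of each leaf splinting call, which is $n^{\OO(\sqrt{|\mathcal{B}(l)|})}$ overall.

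The main obstacle I expect is the bookkeeping at the merge step: precisely specifying which terminals (both augmented terminals of the $k$ faces and the terminals sitting in bounded faces of $S$) have already been "accounted for" in a child state versus which must be separated by the sliver region at the parent, and proving that the consistency checks on $\midset(e_1)\cap\midset(e_2)$ — matching augmented-dual cut points of shared bone fragments and matching the two partial-interval descriptions of each shared face — are exactly strong enough that any pair of children states passing the checks assembles into a valid stretcher, with no double-counting of nerves or of interval coverage. This is the analogue, in the sphere-cut setting, of the $\stackrel{*}{\min}$ side conditions of Algorithm~\ref{alg:splinting}, but now it must be maintained simultaneously for $O(\sqrt{|\mathcal{B}(l)|})$ bones and faces, and it must interact cleanly with the stretchers plugged in for the children of $l$ in the EABB tree.
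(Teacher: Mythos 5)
Your high-level strategy matches the paper's: compute a sphere-cut decomposition of $\mathcal{B}(l)$ via Theorem~\ref{thm:sc}, run a bottom-up DP over it, splint at the leaves via Lemma~\ref{lem:splinting-algo}, and merge partial solutions at internal nodes. However, two elements of your plan reflect a misunderstanding and would not survive a careful write-up.

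First, your DP state includes, for ``each shrunken bone crossed by $\delta(e)$,'' the augmented-dual vertex at which ``the bone fragment is currently cut'' and ``how far along its homotopy string $h(b)[\cdot]$ we have progressed.'' This is not well-defined: a noose of a sphere-cut decomposition meets the embedding of $\mathcal{B}(l)$ only at vertices (of $\midset(e)$) and passes through each face at most once, so every shrunken bone lies \emph{entirely} in $\enc(\delta(e))$ or entirely in $\exc(\delta(e))$. There is no bone fragment and no notion of partial homotopy progress to record. The paper's partial binder (Definition~\ref{def:partial-stretcher}) is indexed only by the vertices $w_1,\dots,w_q$ of $\midset(e)$ (recording an augmented-dual vertex $w^+_j$ for each) and by the faces $F_1,\dots,F_q$ cut by the noose (recording an interval and its two extremal nerves). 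Your per-face part of the state is right; the per-bone ``fragment'' part is spurious and would have to be deleted for the argument to type-check.

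Second, your treatment of the C-node children of $l$ in the EABB tree --- ``at the point where $w(l')$ first enters $\enc(\cdot)$, we plug in the minimum-weight stretcher for the optimal binder of $l'$'' --- is not precise enough. The cut vertex $w(l')$ typically lies in the middle sets of many edges of $R$, so ``first enters $\enc(\cdot)$'' is not canonical, and without a fixed rule one risks either double-counting the child's stretcher or never incorporating it. What is actually needed is the paper's ``associated child'' mechanism (Section~\ref{sec:bridge-block}): each C-node child $l'$ is deterministically associated to one of the two bones incident to its cut vertex that bound the middle region of $\mathcal{B}(L_{l'})$, and its stretcher is folded in at the unique leaf of $R$ corresponding to that bone, via a ``splitter'' that enumerates how the interval of $F_{l'}$ is partitioned between the bone's splint and the child's stretcher (and which boundary nerves delimit the split). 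Your closing worry about merge bookkeeping is well-founded, and in the paper it is resolved precisely by the ``matching'' condition of Section~\ref{sec:bridge-block} and Propositions~\ref{prp:partial-binders:base}--\ref{prp:partial-binders:inductive}; but the associated-child/splitter layer must be in place before the merge bookkeeping can be made correct, and your plan as stated does not supply it.
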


We now describe how to compute a table entry for $I_l$ and $w^+$ if $l$ is a C-node. Let $l_1,\ldots,l_q$ denote the children of $l$ in $L$. Note that the children of $l$ are all BB-nodes. We assume that the bridge blocks appear in this order around $w$ in $S$. Then we compute $A_l[I_l,N^1_l,N^2_l,w^+]$ as the minimum-weight union of $A_{l_j}[I_{l_j},N^1_{l_j},N^2_{l_j},w^+]$ over all families $I_{l_1},\ldots,I_{l_q}$ of (possibly empty) intervals of $F_l$ that form a partition of $I_l$ and appear in this order on $I_l$ and over all nerves $N^1_{l_j},N^2_{l_j}$ whose interval is a prefix respectively suffix of $I_{l_j}$. During this computation, we discard any union that does not form a stretcher. If all unions are discarded in this way, we set $A_l[I_l,N^1_l,N^2_l,w^+]$ equal to the set of all edges of the augmented dual.

\begin{lemma}\label{lem:binders:dp}
For any C-node $l$ of $L$ and the optimal binder $B$ for $l$, we can compute a minimum-weight stretcher when given minimum-weight stretchers for all optimal binders of all children of $l$. Moreover, the table $A$ will store a stretcher of minimum weight for the optimal binder. Finally, it can be computed in $\OO(qn^{\OO(1)})$ time
\end{lemma}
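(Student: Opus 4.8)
The plan is to prove Lemma~\ref{lem:binders:dp} by a direct combinatorial argument that mirrors the structure established in Lemma~\ref{lem:nerves}, adapted to the EABB tree. First I would set up the situation: $l$ is a C-node with corresponding cut vertex $w = w(l)$, its children $l_1,\dots,l_q$ are all BB-nodes (by construction of the EABB tree), and the bridge blocks $\mathcal{B}(l_1),\dots,\mathcal{B}(l_q)$ appear in this cyclic order around $w$. Let $F_l$ be the terminal face in the middle region of the internal set $\mathcal{B}(L_l)$, which by Lemma~\ref{lem:eabbt:internal} is well-defined. The key structural claim is that in the optimal stretcher for the optimal binder of $l$, the edges split cleanly along the bridge blocks: the part of $C^+$ inside (the internal set induced by) $\mathcal{B}(L_{l_j})$ is exactly an optimal stretcher for some optimal binder of $l_j$, and these parts only share the vertex $w^+$.

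The main obstacle, as the text foreshadows (``Using a similar argument as in the proof of Lemma~\ref{lem:nerves}''), is arguing that the intervals $I_{l_j}$ of $F_l$ handled by the children partition $I_l$ and appear in the correct cyclic order, and that no terminal face is ``exposed'' or double-counted at $w^+$. I would handle this by planarity: since $C^+$ is connected (Lemma~\ref{lem:augmentedconnected}) and the bridge blocks of $S$ meet only at cut vertices, the nerves towards $F_l$ attached to blocks in distinct $\mathcal{B}(L_{l_j})$ cannot interleave along $F_l$ without their trees crossing; since each such nerve-tree is a connected subgraph of the planar $C^+$, two crossing trees would have to share a vertex, contradicting that they live in different bridge blocks (which meet only at $w$, and $w^+$ is not an augmented terminal). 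Hence the augmented terminals covered by blocks under $l_j$ form an interval $I_{l_j}$, and these intervals are consecutive in the cyclic order matching the order of the blocks around $w$; together with the ``leftover'' intervals handled by $l$'s parent block, they partition $I_l$. The prefix/suffix nerves $N^1_{l_j}, N^2_{l_j}$ are determined by this interval as in Definition~\ref{def:stretcher}\ref{def:stretcher:intervals}.

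Next I would verify that the claimed union is a valid stretcher for the optimal binder of $l$. Conditions (iv) and (v) of Definition~\ref{def:stretcher} for $l$ follow by taking the union of the corresponding conditions for each $l_j$ (each bone in $\mathcal{B}(L_l) = \bigcup_j \mathcal{B}(L_{l_j})$ lies in exactly one $\mathcal{B}(L_{l_j})$, and every bounded face of each child's subgraph of $S$ is a bounded face of $S$ restricted to $\mathcal{B}(L_l)$). Condition (iii) — each terminal $t \in T_l$ between $I_l$ gets its own bounded face of $D^*$ — follows because each such $t$ lies between exactly one $I_{l_j}$ (it is ``between'' that child's interval) and is handled there; the consecutiveness of the $I_{l_j}$ and the fact that the nerves' trees do not cross ensures the bounded faces do not merge. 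Condition (vi) is immediate: $w^+$ is the common end of all splints incident to $w$, and all children share this $w^+$ by the definition of their optimal binders. For correctness of the minimization, I would argue both directions: (a) the value $A_l[\cdot]$ computed is the weight of some valid stretcher (so $\le$ optimum when the binder is optimal, since we never under-count, and any union passing the stretcher check is genuinely feasible); (b) the optimal stretcher decomposes as above into optimal stretchers for optimal binders of the children with a compatible interval partition, so the minimization considers this decomposition and hence achieves a value $\le$ the optimal stretcher weight. Combining, $A_l$ stores a minimum-weight stretcher for the optimal binder. Here I would also note that edges are never double-counted: distinct $\mathcal{B}(L_{l_j})$ share no edges (bridge blocks partition the edges of $S$, hence the corresponding splints partition the relevant augmented-dual edges), so weights add correctly.

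Finally, the running time: the table entry $A_l[I_l,N^1_l,N^2_l,w^+]$ is fixed; we iterate over ordered families $I_{l_1},\dots,I_{l_q}$ forming a partition of $I_l$ into consecutive intervals, together with prefix/suffix nerves for each. An ordered partition of an interval into $q$ consecutive pieces is determined by $q-1$ ``cut points'', each an augmented dual vertex, so there are $n^{\OO(1)}$ choices when summed appropriately; more carefully, one processes the children one at a time, maintaining a running prefix of $I_l$ consumed so far, which gives an $\OO(q)$-step computation where each step ranges over $n^{\OO(1)}$ choices of the next child's interval and its two nerves, and the stretcher-feasibility check (a planarity/enclosure check on the union) takes $n^{\OO(1)}$ time. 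Hence the total is $\OO(q \cdot n^{\OO(1)})$, as claimed. I expect the interval-partition bookkeeping and the no-crossing planarity argument to be the only genuinely delicate points; everything else is assembling conditions (iii)--(vi) of Definition~\ref{def:stretcher} from the children.
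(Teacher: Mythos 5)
Your overall framework is right, but you have a concrete gap in the verification of condition (iii) of Definition~\ref{def:stretcher}, and it is exactly the point the paper's proof spends almost all its effort on. You claim that every terminal $t\in T_l$ between $I_l$ ``lies between exactly one $I_{l_j}$ and is handled there''. That is false for the terminal $t$ that is \emph{inbetween} two consecutive nonempty intervals $I_{l_j}$ and $I_{l_{j'}}$: by the paper's definitions, such a $t$ is not ``between'' $I_{l_j}$ nor ``between'' $I_{l_{j'}}$, so neither child's stretcher is required by condition (iii) (applied to $l_j$ or $l_{j'}$) to give $t$ its own bounded face. The enclosing cycle for $t$ is formed from pieces of \emph{both} children's stretchers---the nerve $N^2_{l_j}$, the nerve $N^1_{l_{j'}}$, and the nerve paths of the splints for the bones of $f_j$ lying in $\mathcal{B}(L_{l_{j''}})$ for $j\le j''\le j'$. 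The whole difficulty is that, in the dynamic program, these splints are not literal subpaths of $C^+$ but homotopically equivalent replacements; you must invoke Lemma~\ref{lemma:shortest homotopic path} to conclude that the replaced paths, having the same homotopy string and the same endpoints (the roots of $N^2_{l_j}$ and $N^1_{l_{j'}}$, which are fixed by the optimal binders), still yield a region enclosing $t$ and no other terminal. Your proof never mentions this argument, so it does not establish condition (iii) for the boundary terminals. Saying ``the bounded faces do not merge'' because the nerve trees do not cross is not a substitute: it does not produce the required closed region around $t$.

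A secondary inaccuracy: for a C-node $l$, the intervals $I_{l_1},\dots,I_{l_q}$ must partition $I_l$ exactly (that is how the algorithm enumerates), so there is no ``leftover'' portion of $I_l$ handled by $l$'s parent block. The parent-block intuition you refer to is for a BB-node whose children are C-nodes, not for the C-node recursion you are proving here. Also, your planarity/no-crossing argument for why the children's intervals along $F_l$ are consecutive and non-interleaving is plausible, but the paper does not need to prove this at this point---it is implicit in the definition of optimal binders and in how the algorithm enumerates partitions---so you are spending effort proving an unnecessary structural claim while omitting the necessary homotopy step. The running-time argument is essentially the same as the paper's.
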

\begin{proof}
Consider an optimal binder for $l$ and optimal binders for $l_1,\ldots,l_q$. By assumption, $A$ contains a minimum-weight stretcher for the optimal binder for $l_1,\ldots,l_q$. We only need to be concerned with the terminals $t$ inbetween consecutive non-empty intervals $I_{l_j}$ and $I_{l_{j'}}$, where $j < j'$. That is, $I_{l_j}$ and $I_{l_{j'}}$ are non-empty intervals and there is no $j < j'' < j'$ such that $I_{l_{j''}}$ is a non-empty interval. The remainder follows by definition. So consider such a terminal $t$. Consider the nerves $N^2_{l_j}$ and $N^1_{l_{j'}}$ of the optimal binders for $l_j$ and $l_{j'}$. Now follow the nerve paths of the splints belonging to bones bordering $f_j$ contained in $\mathcal{B}_{l_{j''}}$ for all $j \leq j'' \leq j'$ of the stretcher stored in $A$. This path has the same homotopy as in the optimal solution and goes between the same vertices as in the optimal solution, namely the roots of $N^2_{l_j}$ and $N^1_{l_{j'}}$. Together with the sides of $N^2_{l_j}$ and $N^1_{l_{j'}}$, which are as in the optimum, this path thus yields a region that encloses $t$ and no other terminals by Lemma~\ref{lemma:shortest homotopic path}. Hence, the algorithm computes a stretcher for the optimal binder.

Finally, it remains to argue that the computed stretcher has minimum weight. We note that the optimal stretcher (for the optimal binder) can be decomposed into optimal stretchers (for the optimal binders) for each of the children $l_1,\ldots,l_q$ by the definition of optimality. Then, it follows by the description of the algorithm that $A$ stores a minimum-weight stretcher for the optimal binder.

Note that a trivial implementation would compute this minimum in $\OO(n^{\OO(q)})$ time by enumerating all partitions of $I_l$ and all nerves. However, using a simple dynamic program using partial unions, this can be reduced to $\OO(qn^{\OO(1)})$ time.

\end{proof}

\subsection{Algorithm for a Bridge Block}\label{sec:bridge-block}
We now set out to prove Lemma~\ref{lem:binders:partial-dp}. 
Consider any BB-node $l$ of $L$. It is either a single edge or a connected and bridgeless graph without self-loops. Hence, it has a sphere-cut decomposition by Theorem~\ref{thm:sc}. Assuming that the table $A$ has been computed for all children of $l$ in $L$ (which are C-nodes), we compute the table entry for $l$. By Lemma~\ref{lem:eabbt:internal}, $\mathcal{B}(L_l)$ is an internal set. Let $F_l$ be the terminal face in the middle region of $\mathcal{B}(L_l)$. Consider a binder $I_l,N^1_l,N^2_l,w^+$ for $l$. We assume that the binder is valid. We now wish to compute a stretcher for this binder by a dynamic program over the sphere-cut decomposition.

An important part of the dynamic program is how to incorporate the solutions for the children of $l$ in $L$. Since we effectively consider $\mathcal{B}(l)$ as a collection of (shrunken) bones, we need to associate a bone with each child to ensure this. We now make this more formal.
For each C-node of $L$ that is child $l'$ of $l$, corresponding to a cut vertex $c$, consider the middle region $f_{l'}$ of $\mathcal{B}(L_{l'})$ and let $b,b'$ be two of the bones on the boundary of $\mathcal{B}(l)$ that are incident to $c$. Since $l'$ is a child of $l$, it follows from the definition of an EABB tree that $b$ and $b'$ are well defined (and possibly $b=b'$). Pick one of $b, b'$ in a consistent manner (say $b$) and associate $l'$ with this bone. We call $l'$ an \emph{associated child} of the bone $b$, the cut vertex $c$, and the middle region $f_{l'}$. Later, when we propose the algorithm, we will discuss how to specify the binders for associated children.

Now let $(R,\eta,\delta)$ be a sphere-cut decomposition of $\mathcal{B}(l)$;  refer back to Section~\ref{sec:sphere-cut} for the definitions. 
For every pair $x,y$ of adjacent vertices of $R$ such that $y$ is the parent of $x$, consider the noose $\noose = \delta(x,y)$. Let $\mathcal{F}_{\noose} \subseteq \mathcal{F}$ be the set of terminal faces for which the corresponding face of $S$ is intersected by $\noose$ and let $\mathcal{F}_{\enc(\noose)} \subseteq \mathcal{F}$ be the set of terminal faces for which the corresponding face of $S$ is enclosed by $\enc(\noose)$. We include in $\mathcal{F}_{\enc(\noose)}$ the bridge blocks of any associated children of any bone enclosed by $\enc(\noose)$. Note that $\mathcal{F}_{\enc(\noose)}$ is possibly empty, but $\mathcal{F}_{\noose}$ never is.

We now describe a bottom-up dynamic programming algorithm that aims to compute a stretcher of minimum weight for the given binder. To develop this algorithm, we first need a notion of what the partial solution is that we compute during the algorithm.

To this end, we need the notion of a partial binder and a partial stretcher. The intuition is that a partial binder is a dynamic programming state for the intersection of a noose $\noose$ with the hypothetical solution prescribed by the topology. For each vertex of the topology intersected by $\noose$, the state stores a corresponding vertex of the augmented dual. For each terminal face $F_\one$ in $\mathcal{F}_{\noose}$, we only see part of the hypothetical solution, which is described by an interval of $T_\one$ and the (possibly empty) first and last nerves that (possibly together with other nerves) cover the interval. Then a corresponding partial stretcher is the entry stored in the dynamic programming table for the partial binder, which essentially stores a solution for all terminal faces in $\mathcal{F}_{\enc(\noose)}$ and a partial solution for all terminals between the mentioned interval.

An important situation occurs when $F_l$, the face in the middle region of the internal set $\mathcal{B}(L_l)$, is in $\mathcal{F}_{\noose}$. Recall that the binder specifies an interval of $T_l$ (and certain nerves) that must be covered by the stretcher that we are computing. Hence, in this case, the partial solution needs to satisfy (at least, partially) these demands as well. This constraint is met in part~\ref{def:partial-stretcher:intervals} of Definition~\ref{def:partial-stretcher} below.

We now define this intuition more formally. Throughout, if $w_1,\ldots,w_q$ are the $q$ vertices of $\mathcal{B}(l)$ intersected by a noose $\noose$, then we assume that the faces of $\mathcal{F}_{\noose}$ are numbered $F_1,\ldots,F_q$, where $F_j$ is the terminal face corresponding to the face of $\mathcal{B}(l)$ intersected by the part of $\noose$ between $w_j$ and $w_{j+1}$ (or $w_1$ if $j=q$).

\begin{figure}[t]
    \centering
    \includegraphics[width=0.75\textwidth]{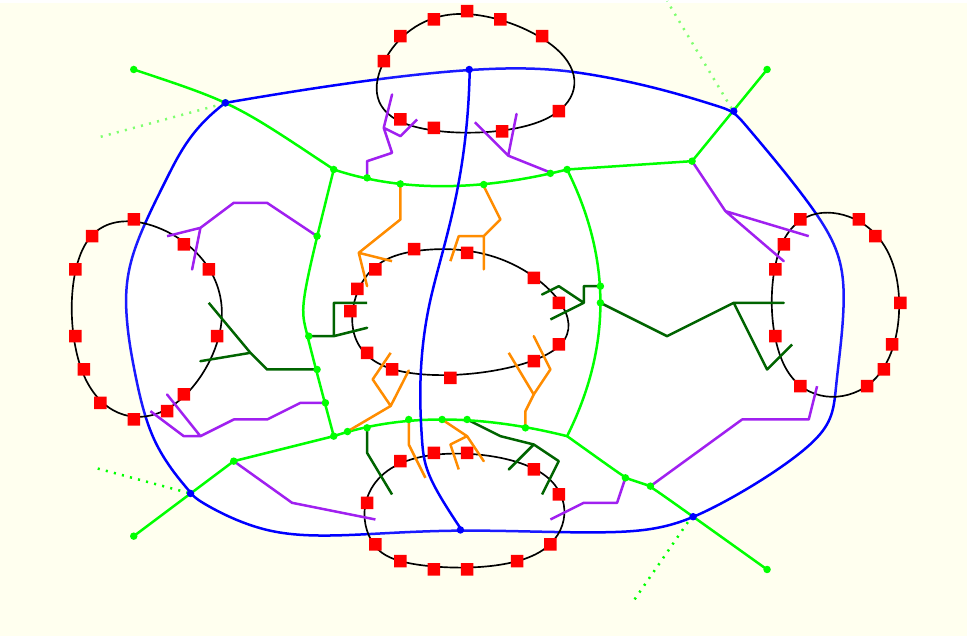}
    \caption{\rev{The figure illustrates an instance of \problemPMWC, with five terminal faces shown (the terminals are red squares). A part of the skeleton of the optimum solution is drawn in green. The blue cycles denote the nooses of the sphere-cut decomposition: for an edge $(x,y)$ of the decomposition with child edges $(y_1,x)$ and $(y_2,x)$, the outer blue cycle denotes the noose $\delta(x,y)$, the left blue cycle (say) the noose $\delta(y_1,x)$, and the right blue cycle the noose $\delta(y_2,x)$. For the noose $\delta(x,y)$, some of the information stored in a partial binder for it are shown: the outermost nerves $N_1$ and $N_2$ of each face intersected by the noose are shown in purple. For the nooses $\delta(y_1,x)$ and $\delta(y_2,x)$, the outermost nerves unique to their partial binders are shown in orange. The nerves in dark green are not part of partial binders of any of the three nooses.}}
    \label{fig:partial-stretcher}
\end{figure}

\begin{definition}\label{def:partial-stretcher}
Given a binder $I_l,N^1_l,N^2_l,w^+$ for $l$, the noose $\noose = \delta(x,y)$, vertices of the augmented dual $w^+_1,\ldots,w^+_q$, where $q = |\mathcal{F}_{\noose}|$, intervals $I_1,\ldots,I_q$ of $F_1,\ldots,F_q$ respectively, and (possibly empty) nerves $N^1_1,N^2_1,\ldots,N^1_q,N^2_q$ towards $F_1,\ldots,F_q$ respectively, a \defi{partial stretcher} for $(x,y)$ is a set $D^+$ of edges of the augmented dual such that:
\begin{enumerate}[label= \emph{(\Roman*)}]
\item\label{def:partial-stretcher:intervals} the intervals of $N^1_1,\ldots,N^1_q$ are a prefix of $I_1,\ldots,I_q$ (if non-empty) and the intervals of $N^2_1,\ldots,N^2_q$ are a suffix of $I_1,\ldots,I_q$ (if non-empty) respectively. For each $1 \leq i \leq q$, $N^1_i$ and $N^2_i$ are either both empty or both non-empty and cannot be empty if $I_i$ has more than one terminal. If $F_l \in \mathcal{F}_{\noose}$, say $F_l=F_j$, then: $I_j$ is a subinterval of $I_l$; if $I_j$ is a prefix of $I_l$, then $N^1_j = N^1_l$; if $I_j$ is a suffix of $I_l$, then $N^2_j = N^2_l$;
\item there is a (possibly empty) set of (possibly empty) nerves in $D^+$ towards $F_1,\ldots,F_q$ whose corresponding intervals jointly partition $I_1,\ldots,I_q$. $N^1_1,\ldots,N^1_q$ and $N^2_1,\ldots,N^2_q$ are among those nerves;
\item for each terminal $t$ between $I_1$, or $\ldots$, or $I_q$, there is a bounded face of $D^*$ that encloses only $t$ and no other terminals of $T$. Here $D^*$ is the set of dual edges corresponding to the edges of the augmented dual in $D^+$;
\item for any terminal $t \in T_\one$ of a terminal face $F_\one \in \mathcal{F}_{\enc(\noose)}$, there is a bounded face of $D^*$ that encloses only $t$ and no other terminals of $T$;

\item $D^+$ is the union of a set of splints, one for each bone enclosed by $\enc(\noose)$. $N^1_1,N^2_1,\ldots,N^1_q,N^2_q$ are included in these splints;

\item if $w$ is in $\enc(\noose)$, then $w^+$ is the vertex of $D^+$ that is an end of all splints for the bones in $\enc(\noose)$ that are incident to $w$;
\item $w^+_1,\ldots,w^+_q$ are the ends of splints in $D^+$ that correspond to $w_1,\ldots,w_q$.
\end{enumerate}
We call $w^+_1,\ldots,w^+_q$, $I_1,\ldots,I_q$, and $N^1_1,\ldots,N^1_q,N^2_1,\ldots,N^2_q$ a \defi{partial binder} of the partial stretcher and of $(x,y)$. We say a partial binder is \emph{valid} if it adheres to Property~\ref{def:partial-stretcher:intervals}.
\end{definition}
\rev{The definition of a partial binder is illustrated in Figure~\ref{fig:partial-stretcher}.} 

\begin{proposition}\label{prp:partial-binders:number}
For any binder and noose $\noose = \delta(x,y)$, there are $n^{\OO(q)}$ distinct partial binders, where $q = |\mathcal{F}_{\noose}|$. These can be enumerated in the same time.
\end{proposition}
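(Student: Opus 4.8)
The plan is to count directly the components that constitute a partial binder and observe that each of them is specified by $O(1)$ vertices of the augmented dual, of which there are $O(n)$ by Euler's formula. First I would recall the data in a partial binder for the noose $\noose = \delta(x,y)$: the augmented dual vertices $w^+_1,\ldots,w^+_q$, the intervals $I_1,\ldots,I_q$, and the nerves $N^1_1,N^2_1,\ldots,N^1_q,N^2_q$, where $q = |\mathcal{F}_{\noose}|$. Since the instance is transformed, $G^+$ has $O(n)$ vertices (by Euler's formula, as already observed in the proof of Lemma~\ref{lem:enumeratebones}). Each $w^+_i$ is a single augmented dual vertex, so there are $O(n)$ choices for each of the $q$ of them, contributing $O(n^q)$. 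Each interval $I_i$ of the terminal face $F_i$ is, by Definition~\ref{def:interval}, determined by its first and last augmented terminal (two augmented dual vertices), so again $O(n^2)$ choices per interval, i.e.\ $O(n^{2q})$ total. Each nerve, by Definition~\ref{def:nerve} together with the convention of the \emph{unique nerve} fixed after Lemma~\ref{lem:nerves-compute}, is determined by its attachment point and its interval, hence by three augmented dual vertices; there are $2q$ nerves, contributing $O(n^{6q})$. Multiplying, the number of partial binders is $n^{O(q)}$.

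To enumerate them within the same time bound, I would simply iterate over all tuples of the above form: choose the $q$ augmented dual vertices $w^+_1,\ldots,w^+_q$, the $2q$ triples defining the nerves (equivalently, their attachment points and endpoint pairs), and the $q$ endpoint pairs defining the intervals. For each such tuple we can check in polynomial time whether it is a valid partial binder in the sense of Property~\ref{def:partial-stretcher:intervals} (checking that each $N^1_i$'s interval is a prefix and each $N^2_i$'s interval is a suffix of $I_i$, that the emptiness conditions are met, and, if $F_l \in \mathcal{F}_{\noose}$, the compatibility with $I_l$, $N^1_l$, $N^2_l$). Discarding the invalid tuples, we are left with exactly the valid partial binders, and the whole enumeration runs in $n^{O(q)} \cdot n^{O(1)} = n^{O(q)}$ time.

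I do not anticipate a genuine obstacle here; this is a bookkeeping statement of the same flavour as Lemma~\ref{lem:enumeratebones} and Proposition~\ref{prp:binders:number}. The one point that needs a line of care is that a partial binder genuinely does not carry any further data scaling with $n$ — in particular that the ``set of splints'' or the ``bounded faces of $D^*$'' mentioned in Definition~\ref{def:partial-stretcher} are properties of the partial \emph{stretcher} $D^+$ and not part of the partial binder itself, so they do not enter the count. Once that is noted, the bound $n^{O(q)}$ is immediate, and I would keep the write-up to two or three sentences mirroring the proof of Proposition~\ref{prp:binders:number}.
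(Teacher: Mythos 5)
Your proof is correct and matches the paper's argument exactly: both observe that a partial binder consists of $\OO(q)$ objects (vertices, intervals, nerves), each determined by $\OO(1)$ augmented dual vertices, giving $n^{\OO(q)}$ in total. Your additional remark distinguishing binder data from stretcher data is a sensible sanity check but is not something the paper spells out, as the distinction is implicit in Definition~\ref{def:partial-stretcher}.
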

\begin{proof}
It suffices to observe that a partial binder has $\OO(q)$ vertices of the augmented dual, intervals of terminal faces, and nerves, each of which can be described by a constant number of vertices of the augmented dual.
\end{proof}

For the dynamic program, we define a function $Z$ that assigns to any partial binder a minimum-weight partial stretcher. If such a partial stretcher does not exist, then $Z$ assigns the set of all edges of the augmented dual. Below, we show how we compute the table $Z$ and argue that a minimum-weight stretcher is computed. To that end, we perform dynamic programming on the sphere-cut decomposition and describe two cases: a base case on a leaf of the branch decomposition, and an inductive case.

\paragraph{Dynamic Program: Base Case.}
Consider the base case, when $x$ is a leaf. Since $\mathcal{B}(l)$ is bridgeless and connected, $\midset(x,y)$ consists of the endpoints of a single bone $b = (w_1,w_2)$, enclosed by the noose $\noose = \delta(x,y)$. Let $w^+,I_l, N^1_l,N^2_l$ be a valid binder and $w^+_1,w^+_2$, $I_1,I_2$, and $N^1_1,N^1_2,N^2_1,N^2_2$ be a partial binder. We assume that the partial binder is valid, \ie~it satisfies property~\ref{def:partial-stretcher:intervals} of Definition~\ref{def:partial-stretcher}. We also assume that if $w$ is in $\enc(\noose)$, then $w^+ = w^+_j$ when $w = w_j$ for $j\in\{1,2\}$. Let $F_\one$ and $F_\two$ be the two terminal faces separated by $b$.

An important consideration here is how to deal with associated children of $b$. 

The partial binder makes the partial solution for $b$ responsible for covering the terminals between $I_1$ and $I_2$, but part of this responsibility can be delegated to the associated children of $b$. We only need to specify which subintervals of $I_1$ and $I_2$ is covered by the associated children by splitting it. We now formalize this intuition.

Suppose $l'$ is an associated child of $b$. Say it is associated with $w_1$ and face $F_\one$. Without loss of generality, $b$ follows $w_1$ in the face ordering. Then we enumerate all possible intervals $I_1'$ and $I_1''$ of $F_\one$ that partition $I_1$ (with $I_1'$ preceding $I_1''$) and enumerate all possible (possibly) nerves $N_1'$ and $N_1''$ towards $F_\one$, such that $w_1^+$, $I_1'$, $N^1_1$, and $N_1'$ is a valid binder for $l'$ (called a \emph{split-parameterized binder}) and $w^+_1,w^+_2$, $I_1'',I_2$, and $N_1'',N^1_2,N^2_1,N^2_2$ is a valid partial binder (a \emph{split partial binder}). Then we call $I_1'$, $I_1''$, $N_1'$, and $N_1''$ a \emph{splitter} for the partial binder with respect to the associated child $l'$. Tying this to our earlier intuitive understanding, the splitter effectively specifies which part of $I_1$ is covered by the bone and which part by the associated child. This specification leads to a binder for the associated child (the split-parameterized binder) and a new partial binder for $b$ (the split partial binder). We will later consider every possible splitter and take the best solution we find.

From now on, we assume that the partial binder is split with respect to all associated children of $b$ (we might call it split even if $b$ has no associated children). By abuse of notation, we still use the same variables for it.

We now enumerate all broken bones for $b$ for which $I_\one = I_1$, $I_\two = I_2$, $y_0^+ = w^+_1$, $x^+_{|s(b)|+1} = w^+_2$, and nerves $N_1,\ldots,N_{2|s(b)|}$ that correspond to $N^1_1, N^1_2$ and $N^2_1, N^2_2$. Here we mean by `correspond' that for the smallest $j \in \{1,\ldots,|s(b)|\}$ for which $s(b)[j] = \one$, it holds that $N_{2j-1} = N^1_1$ and for the largest $j \in \{1,\ldots,|s(b)|\}$ for which $s(b)[j] = \one$, it holds that $N_{2j} = N^2_1$; a similar condition holds with respect to $\two$. We say that such a broken bone \emph{specializes} the partial binder. Conversely, the partial binder \emph{generalizes} the broken bone. Note that there is a unique partial binder generalizing a broken bone, while this is not true for the converse. Then we use Algorithm~\ref{alg:splinting} to compute a minimum-weight splint for each broken bone that specializes the split partial binder. Finally, set $Z$ of the partial binder to be equal to a minimum-weight union of the splint found in this manner for the split partial binder and the stretchers stored in $A$ for the split-parameterized binders. This minimum is optimized over all splitters for which the aforementioned union forms a partial stretcher. If no such union yields a partial stretcher, set $Z$ to be equal to the set of all edges of the augmented dual.

For the optimal topology and the optimal binder for a node $l$ of $L(S)$, a partial binder is \emph{optimal} if $C^+$ contains a partial stretcher for this partial binder where each of the splints of the partial stretcher (mentioned in part V) are optimal. Since $C^+$ is splinted, we may assume that any optimum solution indeed consists of splints. We then call this partial stretcher \emph{optimal} as well. It follows that, for a bone $b$, a split partial binder is \emph{optimal} if it generalizes the optimal broken bone for $b$.
A splitter is \emph{optimal} if the resulting split-parameterized binders and split partial binder are optimal.

\begin{proposition}\label{prp:partial-binders:base}
Let $x$ be a leaf and $b$ the corresponding bone whose endpoints are in $\midset(x,y)$. Consider any broken bone specializing a valid partial binder $B$ for $(x,y)$ and if $w$ is in $\enc(\noose)$, then $w^+ = w^+_j$ when $w = w_j$ for $j\in\{1,2\}$. If $B$ is split or $b$ has no associated children, then any splint for the broken bone is a partial stretcher for $B$. If $b$ has associated children, $B$ is optimal, and the splitter is optimal, then the union of a splint for the broken bone and the stretchers stored in $A$ for the optimal split-parameterized binders is a partial stretcher.
Finally, for the optimal partial binder, the table entry $Z(B)$ will store a partial stretcher of minimum weight.
\end{proposition}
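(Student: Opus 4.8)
The plan is to establish the three claims in sequence, the first two by checking that the edge set produced by the algorithm satisfies every clause of Definition~\ref{def:partial-stretcher}, and the last by comparing weights in both directions. For the first claim, fix a broken bone for the single bone $b$ whose endpoints form $\midset(x,y)$ that specializes the valid (split) partial binder, and let $D^+$ be any splint for it. Since $x$ is a leaf, the bones enclosed by $\enc(\noose)$ are $b$ together with the bones inside the bridge blocks of the associated children of $b$; when $b$ has no associated children the set $\mathcal{F}_{\enc(\noose)}$ is empty and clause~(IV) is vacuous. I would then read off the remaining clauses directly: clauses~(I)--(III) follow from Definition~\ref{def:split} and the meaning of ``specializes'' -- the broken bone's intervals are exactly $I_1$ and $I_2$, its prefix and suffix nerves towards each of the two separated faces are exactly $N^1_1,N^2_1$ and $N^1_2,N^2_2$, its nerves tile $I_1,I_2$, and, as in Lemma~\ref{lem:nerves} and Figure~\ref{fig:no expose}, each ``between'' terminal lies in a private bounded face of $D^*$; clauses~(V)--(VII) hold because $D^+$ is a single splint whose ends are $y^+_0=w^+_1$ and $x^+_{|s(b)|+1}=w^+_2$, matching $w^+$ when $w\in\enc(\noose)$ by hypothesis. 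This argument is insensitive to whether the binder has already been split, which yields both sub-cases of the first claim.

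For the second claim the new ingredient is that $\mathcal{F}_{\enc(\noose)}$ additionally contains the terminal faces in the middle regions of the internal sets $\mathcal{B}(L_{l'})$ of the associated children $l'$ (Lemma~\ref{lem:eabbt:internal}). By the induction hypothesis of the dynamic program, each stretcher pulled from the table $A$ is a genuine stretcher for its optimal split-parameterized binder, so by Definition~\ref{def:stretcher} it already encloses privately every terminal of those faces and of the faces strictly enclosed by the child's bridge blocks, and it tiles the subinterval of $I_1$ (or $I_2$) that the splitter delegates to $l'$. Forming the union with the splint for the broken bone, the only terminals whose private face is not immediate are those lying inbetween the subinterval delegated to a child and the subinterval retained by $b$. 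For these I would run the region argument behind Lemma~\ref{lem:enclosinghomotopy}: the suffix nerve of the child's tiling and the prefix nerve of the bone's tiling -- both fixed by the (optimal) splitter and equal to the corresponding nerves in $C^+$ because the partial binder is optimal -- together with the subpath of the combined splints running through the shared cut vertex bound a region whose homotopy coincides with that of $C^+$, hence enclosing exactly that one terminal. Clauses~(V)--(VII) pass to the union by concatenating the children's splint decompositions with the splint of $b$.

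For the third claim, the two previous claims show that every union the algorithm keeps (after discarding those that are not partial stretchers) is a partial stretcher and that $Z(B)$ has the least weight among them, so $w(Z(B))$ is at least the minimum weight of a partial stretcher for $B$. For the reverse bound, take the optimal partial stretcher, which exists since $B$ is optimal and is of minimum weight for $B$ -- a lighter partial stretcher could be substituted for it inside $C^+$ without breaking feasibility, because a partial stretcher is a union of splints respecting the prescribed homotopy (Definition~\ref{def:partial-stretcher}), contradicting the optimality of $C$. Decomposing the optimal partial stretcher into the optimal splint for the optimal broken bone of $b$ and the optimal stretchers for the optimal split-parameterized binders of the children realizes the optimal splitter, which the algorithm enumerates; for it, Algorithm~\ref{alg:splinting} returns a splint of weight equal to the optimal splint (Lemma~\ref{lem:splinting-algo}) and the entries of $A$ used have weight at most those of the optimal children's stretchers (induction), while by the second claim their union is a partial stretcher. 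Hence the algorithm inspects a partial stretcher of weight at most that of the optimal one, so $Z(B)$ stores a partial stretcher of minimum weight.

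I expect the main obstacle to be the enclosure step in the second claim at the cut vertex shared by $b$ and an associated child: one must argue that the path leaving the child's bridge block through that cut vertex and continuing along the bone is homotopic to the matching path in $C^+$, so that Lemma~\ref{lem:enclosinghomotopy} guarantees the inbetween terminal -- and only it -- ends up in a single face; this is precisely where optimality of both the partial binder and the splitter is used, and it requires keeping the bookkeeping between splitters, split-parameterized binders, and split partial binders exactly aligned with the decomposition of $C^+$.
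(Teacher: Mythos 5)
Your proof follows essentially the same route as the paper's: it verifies each clause of Definition~\ref{def:partial-stretcher} directly against the definitions of a splint, a broken bone specializing a partial binder, and the binder for a split-parameterized child, and it uses Lemma~\ref{lemma:shortest homotopic path} to guarantee the enclosure of the terminal inbetween the subinterval delegated to a child and the subinterval retained by $b$. You correctly identify that the homotopy step at the shared cut vertex is the delicate point, and your resolution — the nerves $N_1'$, $N_1''$ and the combined path through $w_1^+$ are ``as in the optimum'' because the partial binder and splitter are optimal — is exactly the paper's argument.

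One place where you do a bit more work than the paper: for the third claim you explicitly argue both directions, and in particular supply a substitution argument to justify that the optimal partial stretcher sitting inside $C^+$ actually has \emph{minimum} weight among all partial stretchers for $B$. The paper's proof takes this for granted when it equates ``minimum weight'' with ``the weight of the optimal partial stretcher.'' Your substitution step is valid (a lighter partial stretcher, being a union of splints respecting the prescribed homotopy, would swap into $C^+$ without breaking feasibility) and makes the argument more self-contained. One small caveat: your remark that the first claim's verification ``is insensitive to whether the binder has already been split'' glosses over the fact that when $b$ has associated children, the split partial binder still has a nonempty $\mathcal{F}_{\enc(\noose)}$ (it contains the children's middle-region faces), so the splint alone cannot satisfy clauses~(IV) and~(V) for $B$; only the union with the children's stretchers can. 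This is, however, the same level of precision as the paper's statement and proof, which jump directly to ``If $b$ has associated children, then we augment\dots'' — so you are in good company, but it is worth flagging that the first claim should really be read as the ``no associated children'' case.
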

\begin{proof}
We verify that all properties of Definition~\ref{def:partial-stretcher} hold. Property~\ref{def:partial-stretcher:intervals} holds by definition. Property~(II) and (III) follow by the definition of a splint for a broken bone specializing the partial binder. Property~(IV) follows from the definition of a splint. Property~(V) follows from the definition of a splint for a broken bone specializing the partial binder. Property~(VI) follows by assumption and the definition of a splint. Finally, Property~(VII) follows by the definition of a splint for a broken bone specializing the partial binder.

If $b$ has associated children, then we augment the argument for Property~(III) and~(IV). Indeed, Property~(III) is satisfied for all terminals except possibly the terminal $t$ inbetween $I_1'$ and $I_1''$ of the optimal splitter. However, $w_1^+$, $N_1'$, and the attachment point of $N_1''$ are as in the optimum and thus the path between $w_1^+$ and the attachment point of $N_1''$ has the same homotopy as in the optimum (by the definition of a splint) and goes between the same vertices as the optimum. Together with the flanks of $N_1'$ and $N_1''$, which are as in the optimum, this path thus yields a region that encloses $t$ and no other terminals by Lemma~\ref{lemma:shortest homotopic path}. Property~(IV) follows by the definition of a partial stretcher applied to the associated children.

For the final part, we note that the preceding establishes that for the optimal (original) partial binder and optimal splitter, the algorithm yields a partial stretcher. Note that the optimal broken bone is among the broken bones specializing the optimal partial binder by definition. Then, by the optimality of the splinting algorithm, the resulting splint for the optimal split partial binder is optimal. Then, by the optimality of the table $A$ for each associated child and of the splinting algorithm, the resulting splint and partial stretcher will have minimum weight.
\end{proof}

\paragraph{Dynamic Program: Inductive Case.}
Now consider the inductive case, when $x$ is an internal vertex. Let $w^+, I_l, N^1_l,N^2_l$ be a valid binder. Let $w^+_1,\ldots,w^+_q$, $I_1,\ldots,I_q$, and $N^1_1,\ldots,N^1_q,N^2_1,N^2_q$ be a partial binder $B$ for $(x,y)$. We assume that the partial binder satisfies property~\ref{def:partial-stretcher:intervals} of Definition~\ref{def:partial-stretcher}. We also assume that if $w$ is on $\noose$, then $w^+ = w^+_j$ when $w = w_j$ for $j\in\{1,\ldots,q\}$.
Let $v^+_1,\ldots,v^+_r$, $J_1,\ldots,J_r$, and $M^1_1,\ldots,M^1_r,M^2_1,\ldots,M^2_r$ be a valid partial binder $B_1$ for $(y_1,x)$ and let
$u^+_1,\ldots,u^+_s$, $K_1,\ldots,K_s$, and $L^1_1,\ldots,L^1_s,L^2_1,\ldots,L^2_s$ be a valid partial binder $B_2$ for $(y_2,x)$. Let $\nooseone = \delta(y_1,x)$ and $\noosetwo = \delta(y_2,x)$. Note that the terminal faces in $\mathcal{F}_{\noose}$, $\mathcal{F}_{\nooseone}$, and $\mathcal{F}_{\noosetwo}$ can be categorized into four classes: those that appear in all three (of which there are at most two), those that appear in $\mathcal{F}_{\noose}$ and $\mathcal{F}_{\nooseone}$ but not $\mathcal{F}_{\noosetwo}$, those that appear in $\mathcal{F}_{\noose}$ and $\mathcal{F}_{\noosetwo}$, but not in $\mathcal{F}_{\nooseone}$, and those that appear in $\mathcal{F}_{\nooseone}$ and $\mathcal{F}_{\noosetwo}$, but not in $\mathcal{F}_{\noose}$.

We say that these partial binders \defi{match} if
\begin{itemize}
\item for each terminal face $F_j \in \mathcal{F}_{\noose}$ that is in $\mathcal{F}_{\nooseone}$ (say it is also numbered $j$ in $\mathcal{F}_{\nooseone}$) but not in $\mathcal{F}_{\noosetwo}$, the state for this face is the same in $B$ and $B_1$. Formally, $I_j = J_j$, $w^+_j = v^+_j$, $w^+_{j+1} = v^+_{j+1}$, $N^1_j = M^1_j$, and $N^2_j = M^2_j$;
\item for each terminal face $F_j \in \mathcal{F}_{\noose}$ that is in $\mathcal{F}_{\noosetwo}$ (say it is also numbered $j$ in $\mathcal{F}_{\noosetwo}$) but not in $\mathcal{F}_{\nooseone}$, the state for this face is the same in $B$ and $B_1$. Formally, $I_j = K_j$, $w^+_j = u^+_j$, $w^+_{j+1} = u^+_{j+1}$, $N^1_j = L^1_j$, and $N^2_j = L^2_j$;
\item for each terminal face $F_j \in \mathcal{F}_{\noose}$ that is in both $\mathcal{F}_{\nooseone}$ and $\mathcal{F}_{\noosetwo}$ (say it is also numbered $j$ in both $\mathcal{F}_{\nooseone}$ and $\mathcal{F}_{\noosetwo}$), the state for this face in $B$ is the `union' of the states stored in $B_1$ and $B_2$. Formally, $I_j = J_j \uplus K_j$, $w^+_j = v^+_j$, $v^+_{j+1} = u^+_j$, $w^+_{j+1} = u^+_{j+1}$, $N^1_j = M^1_j$, and $N^2_j = L^2_j$;
\item for each terminal face $F_j$ that is in both $\mathcal{F}_{\nooseone}$ and $\mathcal{F}_{\noosetwo}$ (say it is numbered $j$ in both $\mathcal{F}_{\nooseone}$ and $\mathcal{F}_{\noosetwo}$) but not in $\mathcal{F}_{\noose}$, the state for this face in $B_1$ and $B_2$ jointly covers all terminals. Formally, $J_j \uplus K_j = T^+_j$, $v^+_{j+1} = u^+_j$, and $v^+_j = u^+_{j+1}$.
\end{itemize}
To simplify the presentation, we did not consider index overflow, \eg, when $j+1 > |\mathcal{F}_{\noose}|$, we should use index~$1$ instead. The following is immediate from the definition above and that of optimal partial binders.

\begin{proposition}\label{prp:partial-binders:match}
The triple of optimal partial binders for $(x,y)$, $(y_1,x)$, and $(y_2,x)$ match.
\end{proposition}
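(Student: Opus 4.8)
The plan is to unwind the definition of an \emph{optimal} partial binder and observe that everything it records --- for each face $F_j \in \mathcal{F}_{\noose}$ the interval $I_j$, the outermost nerves $N^1_j,N^2_j$, and the augmented dual vertices $w^+_j$ --- is read off directly from the fixed splinted solution $C^+$ together with fixed combinatorial data (the topology, its shrunken skeleton $S$, and the sphere-cut decomposition $(R,\eta,\delta)$). Concretely, $w^+_j$ is simply the augmented dual vertex of $C^+$ at which the splints of all bones of $\mathcal{B}(l)$ incident to the vertex $w_j$ of $S$ meet; this depends only on $w_j$ and $C^+$, not on which noose $w_j$ lies on, so all ``$w^+_j = v^+_j$''-type clauses of \emph{match} hold whenever the underlying vertices of $S$ coincide. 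Likewise $I_j$ is the ordered union of the intervals of the nerves of $C^+$ towards $F_j$ attached to bones of $\mathcal{B}(l)$ enclosed by $\enc(\noose)$ (with the contributions of associated children folded in in the same way for all three nooses), and $N^1_j,N^2_j$ are the first and last of those nerves along $F_j$. Hence the statement reduces to one structural fact plus a short case analysis, which is why it is asserted to be immediate.

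First I would invoke faithfulness of the sphere-cut decomposition: since $x$ is internal with children $y_1,y_2$ and parent $y$, we have $\enc(\noose)=\enc(\nooseone)\sqcup\enc(\noosetwo)\sqcup\big((\nooseone\cap\noosetwo)\setminus\noose\big)$, and this last ``interface'' set contains no point of the interior of any edge of $\mathcal{B}(l)$, because every noose meets $\mathcal{B}(l)$ only in vertices. Consequently the edge interior of each bone enclosed by $\enc(\noose)$ lies entirely in $\enc(\nooseone)$ or entirely in $\enc(\noosetwo)$, so the set of bones of $\mathcal{B}(l)$ enclosed by $\enc(\noose)$ is the disjoint union of those enclosed by $\enc(\nooseone)$ and by $\enc(\noosetwo)$; the same holds after adjoining the bridge blocks of the associated children. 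Therefore, for every terminal face $F_j$, the collection of nerves of $C^+$ towards $F_j$ on bones enclosed by $\enc(\noose)$ is the disjoint union of the two corresponding collections for $\enc(\nooseone)$ and $\enc(\noosetwo)$; by Lemma~\ref{lem:nerves} and planarity these nerves have pairwise disjoint intervals occurring in a single consistent cyclic order along $F_j$, so the two sub-collections occupy two contiguous, adjacent arcs along $F_j$.

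Then I would run the case analysis over $F_j\in\mathcal{F}_{\noose}$ by its membership in $\mathcal{F}_{\nooseone}$ and $\mathcal{F}_{\noosetwo}$, matching the four bullets of the definition of \emph{match}. If $F_j\in\mathcal{F}_{\nooseone}\setminus\mathcal{F}_{\noosetwo}$, no bone bordering the face $f_j$ of $S$ is enclosed by $\enc(\noosetwo)$, so the arc of $\noose$ through $f_j$ coincides with the arc of $\nooseone$ through $f_j$; reading off endpoints gives $w_j=v_j$, $w_{j+1}=v_{j+1}$ (hence $w^+_j=v^+_j$, $w^+_{j+1}=v^+_{j+1}$), and the nerve collection towards $F_j$ in $\enc(\noose)$ equals that in $\enc(\nooseone)$, so $I_j=J_j$, $N^1_j=M^1_j$, $N^2_j=M^2_j$. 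The case $F_j\in\mathcal{F}_{\noosetwo}\setminus\mathcal{F}_{\nooseone}$ is symmetric. If $F_j$ lies in all three $\mathcal{F}$'s, the (unique) interface point in $f_j$ --- one of the two points of $\nooseone\cap\noosetwo\cap\noose$, whence there are at most two such faces --- splits the arc of $\noose$ through $f_j$ into its $\nooseone$-part and its $\noosetwo$-part; reading off endpoints gives $w_j=v_j$, $v_{j+1}=u_j$, $w_{j+1}=u_{j+1}$, and the nerve collection splits as $I_j=J_j\uplus K_j$ with the $w_j$-side piece $J_j$ contributing $N^1_j=M^1_j$ and the other piece contributing $N^2_j=L^2_j$. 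Finally, if $F_j\in\mathcal{F}_{\nooseone}\cap\mathcal{F}_{\noosetwo}\setminus\mathcal{F}_{\noose}$, then $f_j$ is not met by $\noose$, so all bones bordering $f_j$ --- and hence, since $f_j$ strictly encloses the terminal face $F_j$ by Lemma~\ref{lem:skeleton}, all nerves of $C^+$ towards $F_j$ --- are enclosed by $\enc(\nooseone)\cup\enc(\noosetwo)$; thus $J_j\uplus K_j=T^+_j$ and the shared endpoints give $v^+_{j+1}=u^+_j$, $v^+_j=u^+_{j+1}$.

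The part I expect to need the most care is not any single verification but making precise that the two sub-collections of nerves concatenate \emph{in the correct order} into the interval recorded by the parent partial binder --- i.e.\ that ``$I_j=J_j\uplus K_j$'' is literally the ordered union, which is where the cyclic ordering of nerves on a face from Lemma~\ref{lem:nerves} is essential --- and the bookkeeping of augmented dual vertices at the (up to two) interface faces, including degenerate possibilities such as $w_j=w_{j+1}$, $v_j=v_{j+1}$, or an interface point that is not a vertex of $\mathcal{B}(l)$ (in which case the face in question simply does not fall into the third or fourth case). Once faithfulness and this order structure are in place, every clause of \emph{match} is a direct rewriting of the definition of an optimal partial binder.
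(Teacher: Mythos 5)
Your proposal is correct and is essentially a careful unwinding of what the paper declares ``immediate from the definitions'': the paper gives no argument at all for Proposition~\ref{prp:partial-binders:match}, and your proof fills that gap by observing that every component of an optimal partial binder ($w^+_j$, $I_j$, $N^1_j$, $N^2_j$) is determined by $C^+$ and the noose alone, then using faithfulness of the sphere-cut decomposition to split the bones (and hence the nerves and their intervals) enclosed by $\enc(\noose)$ between $\enc(\nooseone)$ and $\enc(\noosetwo)$, and finally checking the four cases of the \emph{match} definition. This matches the intended reasoning; the degenerate cases you flag at the end (interface point not a vertex, $w_j=w_{j+1}$, etc.) are exactly the kind of bookkeeping the paper elides.
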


The algorithm now does the following. For each valid partial binder $B$ for $(x,y)$, we enumerate all matching, valid partial binders $B_1,B_2$ for $(y_1,x)$ and $(y_2,x)$ respectively and set $Z$ to be equal to the minimum-weight union of the corresponding partial stretchers stored in $Z$ for $B_1$ and $B_2$ that form a partial stretcher for $B$. If no such union forms a partial stretcher for $B$, then we set $Z$ to be equal to the set of all edges of the augmented dual.

\begin{proposition}\label{prp:partial-binders:inductive}
Consider the triple of optimal partial binders $B,B_1,B_2$ for $(x,y)$, $(y_1,x)$, and $(y_2,x)$ respectively. The union of any partial stretcher for $B_1$ and any partial stretcher for $B_2$ forms a partial stretcher for $B$. Moreover, the table entry $Z(B)$ will store a partial stretcher of minimum weight.
\end{proposition}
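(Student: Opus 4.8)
The plan is to verify the seven properties of Definition~\ref{def:partial-stretcher} for the union $D^+ = D_1^+ \cup D_2^+$, where $D_1^+$ is a partial stretcher for $B_1$ and $D_2^+$ a partial stretcher for $B_2$, and then argue optimality. The key topological fact underpinning everything is faithfulness of the sphere-cut decomposition: $\enc(\noose)$ is the disjoint union of $\enc(\nooseone)$, $\enc(\noosetwo)$, and the two-point set $(\nooseone \cap \noosetwo) \setminus \noose$. Hence every bone enclosed by $\enc(\noose)$ is enclosed by exactly one of $\enc(\nooseone)$, $\enc(\noosetwo)$, so the splints comprising $D_1^+$ and $D_2^+$ are over disjoint bone sets and their union is precisely a set of splints, one per bone enclosed by $\enc(\noose)$; this gives Property~(V). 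Property~(VII) and the relevant cases of Property~(VI) follow because the matching conditions force the shared augmented-dual vertices ($w^+_j = v^+_j$, etc.) to agree, so the ends of the splints on the two sides are identified consistently, and $w^+$ (if $w \in \enc(\noose)$) lies in exactly one of $\enc(\nooseone)$, $\enc(\noosetwo)$ where it already served as an end of splints.

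The interesting properties are (I), (II), (III), (IV). For a face in $\mathcal{F}_{\noose}$ that lies in only one of $\mathcal{F}_{\nooseone}$, $\mathcal{F}_{\noosetwo}$, the matching conditions copy the state verbatim, so the prefix/suffix/interval-partition conditions are inherited directly. For a face $F_j \in \mathcal{F}_{\noose}$ lying in both, the matching condition $I_j = J_j \uplus K_j$ with $N^1_j = M^1_j$, $N^2_j = L^2_j$, $v^+_{j+1} = u^+_j$ means the nerve collections of $D_1^+$ and $D_2^+$ towards $F_j$ concatenate along $F_j$ into a collection of intervals partitioning $I_j$, with the prescribed prefix and suffix nerves; this is Properties~(I) and~(II) for that face, and when $F_l = F_j$ we additionally use that $I_j$ is a subinterval of $I_l$ with the inherited prefix/suffix conditions. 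For a face $F_j$ in both $\mathcal{F}_{\nooseone}$ and $\mathcal{F}_{\noosetwo}$ but not $\mathcal{F}_{\noose}$, the condition $J_j \uplus K_j = T_j^+$ together with $v^+_{j+1} = u^+_j$, $v^+_j = u^+_{j+1}$ guarantees the nerves towards $F_j$ from both sides concatenate cyclically to cover all of $T_j^+$, so every terminal of $T_j$ is fully separated; this contributes to Property~(IV), as $F_j$ is then enclosed by $\enc(\noose)$.

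The main obstacle — and the step requiring the most care — is Property~(III): that every terminal $t$ \emph{between} some $I_j$ ($1 \le j \le q$ in $\mathcal{F}_{\noose}$) is enclosed by a bounded face of $D^*$ containing no other terminal. For terminals between $I_j$ where the whole interval comes from one side this is immediate. The subtle case is a terminal $t$ of $T_j$ that is \emph{inbetween} a suffix-nerve of $J_j$ (from $D_1^+$) and a prefix-nerve of $K_j$ (from $D_2^+$) — i.e.\ it sits at the seam where the two sides' nerve collections meet along $F_j$. Here one argues exactly as in Lemma~\ref{lem:binders:dp} and Proposition~\ref{prp:partial-binders:base}: the two bookend nerves have attachment points and sides exactly as in the optimum (since $B,B_1,B_2$ are the optimal partial binders, their nerves are optimal), and the part of the splinted bone joining their attachment points is present in $D_1^+$ or $D_2^+$ with the homotopy prescribed by the topology. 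By Lemma~\ref{lemma:shortest homotopic path}, this homotopic path together with the two nerve sides cordons off a region enclosing $t$ and no other terminal. This localizes the only genuinely topological argument to the seam terminals, and everything else is bookkeeping against the matching conditions. Finally, optimality: by Proposition~\ref{prp:partial-binders:match} the optimal triple matches, the optimal partial stretcher for $B$ decomposes (via faithfulness) into an optimal partial stretcher for $B_1$ and one for $B_2$, and by the inductive hypothesis $Z(B_1)$, $Z(B_2)$ store minimum-weight partial stretchers; since the algorithm minimizes over all matching pairs whose union is a partial stretcher, $Z(B)$ stores one of minimum weight.
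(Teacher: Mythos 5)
Your proposal is correct and follows essentially the same approach as the paper: a property-by-property verification of Definition~\ref{def:partial-stretcher}, using faithfulness of the sphere-cut decomposition for disjointness of $\enc(\nooseone)$ and $\enc(\noosetwo)$, the matching conditions from Proposition~\ref{prp:partial-binders:match} for state consistency, the homotopy argument via Lemma~\ref{lemma:shortest homotopic path} for the seam terminals, and decomposition plus induction for optimality. One small imprecision: for a face $F_j$ in both $\mathcal{F}_{\nooseone}$ and $\mathcal{F}_{\noosetwo}$ but not $\mathcal{F}_{\noose}$, you assert that $J_j \uplus K_j = T^+_j$ and the matched endpoints already ``fully separate'' every terminal of $T_j$ and hence give Property~(IV) --- but the one or two seam terminals inbetween $J_j$ and $K_j$ are not enclosed by any nerve of $D_1^+$ or $D_2^+$ alone, and require the same homotopy argument you give later for Property~(III); the paper makes this explicit in both cases, and your ``the only genuinely topological argument'' remark shows you see this, so it is a slip of wording rather than a missing idea.
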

\begin{proof}
Consider partial stretchers $P_1$ and $P_2$ for $B_1$ and $B_2$ respectively. It is important to remember the fact that the set of edges (bones) in $\enc(y_1,x)$ and in $\enc(y_2,x)$ are disjoint by the definition of a sphere-cut decomposition. Property~(VI) and~(VII) now follow using the fact that each vertex in $\midset(x,y)$ is in $\midset(y_1,x)$ or $\midset(y_2,x)$ or matched in the partial binders. It is also clear that $P_1 \cup P_2$ is a union of a set of splints, almost verifying Property~(V). Moreover, for each face in $\mathcal{F}_{\enc(\nooseone)}$ or $\mathcal{F}_{\enc{(\noosetwo)}}$, Property~(IV) is satisfied. We now verify the remaining (parts of the) properties.

By Proposition~\ref{prp:partial-binders:match}, the partial binders match. We then consider each of the cases:
\begin{itemize}
\item for each terminal face $F_j \in \mathcal{F}_{\noose}$ that is in $\mathcal{F}_{\nooseone}$ (say it is also numbered $j$ in $\mathcal{F}_{\nooseone}$) but not in $\mathcal{F}_{\noosetwo}$, $P_1$ contains edges of the augmented dual to satisfy Property~(II),~(III), and~(V) with respect to $F_j$;
\item for each terminal face $F_j \in \mathcal{F}_{\noose}$ that is in $\mathcal{F}_{\noosetwo}$ (say it is also numbered $j$ in $\mathcal{F}_{\noosetwo}$) but not in $\mathcal{F}_{\nooseone}$, $P_2$ contains edges of the augmented dual to satisfy Property~(II),~(III), and~(V) with respect to $F_j$;
\item for each terminal face $F_j \in \mathcal{F}_{\noose}$ that is in both $\mathcal{F}_{\nooseone}$ and $\mathcal{F}_{\noosetwo}$ (say it is also numbered $j$ in both $\mathcal{F}_{\nooseone}$ and $\mathcal{F}_{\noosetwo}$), $P_1$ and $P_2$ jointly contain edges of the augmented dual to satisfy Property~(II),~(III), and~(V) with respect to $F_j$. The only exception is the terminal inbetween $J_j$ and $K_j$. If such a terminal $t$ indeed exists, then follow the nerve paths stored in the splints of $P_1$ for the bones of face $f_j$ in $\enc(\nooseone)$ from the root of nerve $M^2_j$ (which must be non-empty since $t$ exists) to $v^+_{j+1}=u^+_{j}$, and then the nerve paths stored in the splints of $P_2$ for the bones of face $f_j$ in $\enc(\noosetwo)$ from $v^+_{j+1}=u^+_{j}$ to the root of nerve $L^1_j$ (which must be non-empty since $t$ exists). This path has the same homotopy as in the optimum, by the optimality of the topology and the splints. It also goes between the same vertices as in the optimum, by the optimality of the partial binders. Together with the sides of $M^2_j$ and $L^1_j$, which are as in the optimum, this path thus yields a region that encloses $t$ and no other terminals by Lemma~\ref{lemma:shortest homotopic path}.
\item for each terminal face $F_j$ that is in both $\mathcal{F}_{\nooseone}$ and $\mathcal{F}_{\noosetwo}$ (say it is numbered $j$ in both $\mathcal{F}_{\nooseone}$ and $\mathcal{F}_{\noosetwo}$) but not in $\mathcal{F}_{\noose}$, we note it is contained in $\mathcal{F}_{\enc{(\noose)}}$, but not in $\mathcal{F}_{\enc{(\nooseone)}}$ or $\mathcal{F}_{\enc{(\noosetwo)}}$. To verify Property~(IV), we note that it holds for all terminals in $T_j$ by Property~(III) applied to $P_1$ and $P_2$, except for the one or two terminals inbetween $J_j$ and $K_j$. Let $t$ be such a terminal. Then follow the nerve paths stored in the splints of $P_1$ for the bones of face $f_j$ in $\enc(\nooseone)$ from the root of nerve $M^1_j$ (or from $L^1_j$ if the nerve is empty, or from $v^+_{j+1}$ if both are empty) to $v^+_j=u^+_{j+1}$, and then the nerve paths stored in the splints of $P_2$ for the bones of face $f_j$ in $\enc(\noosetwo)$ from $v^+_j=u^+_{j+1}$ to the root of nerve $L^2_j$ (or to $M^2_j$ if this nerve is empty or to $u^+_{j}$ if both nerves are empty). This path has the same homotopy as in the optimum, by the optimality of the topology and the splints. It also goes between the same vertices as in the optimum, by the optimality of the partial binders. Together with the sides of $M^1_j$ (or of $L^1_j$) and $L^2_j$ (or of $M^2_j$), which are as in the optimum, this path thus yields a region that encloses $t$ and no other terminals by Lemma~\ref{lemma:shortest homotopic path}.
\end{itemize}
It follows that the union of $P_1$ and $P_2$ is a partial stretcher for $B$.

For the final part, we note that by the preceding, the table entry is indeed a partial stretcher. It remains to argue that it has minimum weight. It suffices to argue that an optimum partial stretcher for $B$ can be decomposed into (disjoint) partial stretchers for $B_1$ and $B_2$; to this end, we note that the optimum partial stretcher for $B$ is decomposed into the optimum partial stretchers for $B_1$ and $B_2$ by definition of optimality and matching. Then, it follows by induction and Proposition~\ref{prp:partial-binders:base} that $Z$ stores a minimum-weight partial stretcher for $B_1$ and $B_2$, and thus $Z$ will store a minimum-weight partial stretcher for $B$. 
\end{proof}

Finally, we observe that for the root of the sphere-cut decomposition, any partial binder is equal to the given binder and any partial stretcher for the partial binder is a stretcher for the given binder. This is immediate from the definitions. Thus, we can set $A$ for the given binder as the entry stored in $Z$ for the corresponding (equal) partial binder.

\begin{proof}[Proof of Lemma~\ref{lem:binders:dp}]
By Theorem~\ref{thm:sc}, $\mathcal{B}(l)$ has a sphere-cut decomposition of width $\OO(\sqrt{|\mathcal{B}(l)|})$. We then compute the table $Z$. By Proposition~\ref{prp:partial-binders:number}, we can enumerate all partial binders in $n^{\OO(\sqrt{|\mathcal{B}(l)|})}$ time. We can immediately see that the dynamic program takes $n^{\OO(\sqrt{|\mathcal{B}(l)|})}$ time as well. The base case relies on the splinting algorithm, which takes polynomial time by Lemma~\ref{lem:splinting-algo}. As observed previously, for the root of the sphere-cut decomposition, any partial binder is equal to the given binder and any partial stretcher for the partial binder is a stretcher for the given binder. Hence, the correctness of the algorithm follows from Proposition~\ref{prp:partial-binders:base} and Proposition~\ref{prp:partial-binders:inductive}.
\end{proof}

\section{Proof of Theorem~\ref{thm:main}}
\rev{We are now ready to prove Theorem~\ref{thm:main}}

\thmmain*
\begin{proof}
First, in time $2^{\OO(k)} n^{\OO(1)}$ through the algorithm of Bienstock and Monma~\cite{BienstockM88}, we compute a set of faces of size $k$ that covers all the terminals. This is the set $\mathcal{F}$. We then \rev{change} the instance \rev{as discussed} Section~\ref{sec:connected}, and particularly Lemma~\ref{lem:transformnew} \rev{to make the instance transformed}. We then apply the algorithm of Lemma~\ref{lem:reduce-to-connected} to reduce to the case when the dual of any optimum solution is connected.

Now, by Lemma~\ref{lem:topology-enumerate}, we can enumerate all topologies in $2^{\OO(k^2 \log k)}$ time. For each topology $(S,s,h)$, build an embedding-aware bridge block tree $L = L(S)$ in linear time by Lemma~\ref{lem:bridge-block:build}. Now we perform the dynamic program of Lemma~\ref{lem:binders:dp} and~\ref{lem:binders:partial-dp} on $L(S)$. Since the nodes corresponding to the same cut vertex of $H$ effectively form a subtree of $L$ through Lemma~\ref{lem:bridge-block:cnode}, the binders contain the same cut vertex $w$ and $w^+$ throughout. Then it follows that by induction on the depth of $l$ in $L$ that a minimum-weight stretcher is computed for each optimal binder. Finally, for the root node of $L$, we consider the minimum-weight stretcher that is found among all binders. One of those will be the optimal binder, and thus we find a stretcher of weight at most that of the optimal stretcher for the optimal binder. Recalling that a topology has $O(k)$ vertices, it follows from Lemma~\ref{lem:binders:dp} and~\ref{lem:binders:partial-dp} that the running time is $n^{\OO(\sqrt{k})}$.

Finally, we note here that the running-time our algorithm does not depend on the sum of edge-weights $W$ in a unit cost model of computation. It only linearly depends on $\log W$.
\end{proof}

\bibliographystyle{ACM-Reference-Format}
\bibliography{references}
\end{document}